\documentclass[11pt,letterpaper]{article}

\usepackage{latexsym,multirow}
\usepackage{amssymb,amsmath, bm}
\usepackage{graphicx}
\usepackage[color,all,import,arrow]{xy}
\usepackage{enumerate}
\usepackage{mhchem}
\usepackage{booktabs}
\usepackage{natbib}
\usepackage[pdftex, bookmarksopen=true, bookmarksnumbered=true,
pdfstartview=FitH, breaklinks=true, urlbordercolor={0 1 0}, citebordercolor={0 0 1}]{hyperref}
\usepackage{colortbl}
\usepackage{subcaption}

\usepackage{dcolumn}
\newcolumntype{.}{D{.}{.}{-1}}
\newcolumntype{d}[1]{D{.}{.}{#1}}
\usepackage{theorem}
\theoremstyle{plain}
\theoremheaderfont{\scshape}
\newtheorem{assumption}{Assumption}

\newtheorem{corollary}{Corollary}

\newtheorem{proposition}{Proposition}
\newtheorem{theorem}{Theorem}

\newtheorem{lemma}{Lemma}
\newtheorem{estimator}{Estimator}

\newcommand{\ind}{\mbox{$\perp\!\!\!\perp$}}

\DeclareMathOperator{\sgn}{sgn}

\usepackage{rotating}

\usepackage[compact]{titlesec}

\allowdisplaybreaks

\newcommand\spacingset[1]{\renewcommand{\baselinestretch}%
{#1}\small\normalsize}

\newcommand*{\QEDB}{\hfill\ensuremath{\square}}

\newcommand{\bone}{\mathbf{1}}

\newcommand{\E}{\mathbb{E}}

\newcommand{\Pb}{\mathbb{P}}
\newcommand{\Pn}{\mathbb{P}_n}

\begin{document} 

\newcommand{\blind}{0}

\newcommand{\tit}{Policy evaluation and learning  with partially identified utility under truncation by death}


\spacingset{1.25}

\if0\blind
{\title{\bf\tit}
  \author{
   Zi Huang\thanks{School of Mathematics, Sun Yat-sen University, Guangzhou,  Guangdong 510275, China. Email: \href{mailto:huangz5@mail2.sysu.edu.cn}{huangz5@mail2.sysu.edu.cn}}
    \hspace{.5in} Zhichao Jiang\thanks{School of Mathematics, Sun Yat-sen University, Guangzhou,  Guangdong 510275, China. Email: \href{mailto:jiangzhch7@mail.sysu.edu.cn}{jiangzhch7@mail.sysu.edu.cn}}  
  }
\date{\today}
\maketitle
}\fi

\if1\blind
\title{\bf \tit}
\maketitle
\fi

\pdfbookmark[1]{Title Page}{Title Page}

\thispagestyle{empty}
\setcounter{page}{0}
         
\begin{abstract}
Policy evaluation and learning aim to assess and optimize treatment assignment rules based on individual characteristics. A fundamental challenge arises when outcomes are  truncated by death, rendering  conventional policy utilities undefined.
To address this challenge, we study policy evaluation and learning under truncation by death within the principal stratification framework. We propose the survivor average utility and subgroup survival rates for evaluating treatment policies. Under treatment ignorability and monotonicity, the subgroup survival rates are point identified, whereas the survivor average utility is only partially identified through sharp bounds. We then develop semiparametrically efficient estimators for the subgroup survival rates and hybrid estimators for the survivor average utility bounds.  
 Building on this evaluation framework, we formulate a constrained minimax optimization problem for policy learning that minimizes worst-case regret relative to a benchmark policy while requiring the learned policy to achieve a survival rate no lower than that of the benchmark. We show that the optimal policy admits a threshold representation based on two priority scores. We develop an estimation procedure for the optimal policy and establish regret and feasibility guarantees. Simulation studies and an application to the MIMIC-III clinical dataset demonstrate the practical performance of the proposed methods.
 
\noindent {\bf Keywords:} efficient influence function, individualized treatment rules, minimax regret, principal stratification, survivor average causal effect
\end{abstract}


\clearpage
\spacingset{1.5}
\section{Introduction}

Treatment policies assign treatments to individuals  based on  their observed characteristics. Two fundamental tasks in this setting are policy evaluation, which assesses the performance of a given treatment policy, and policy learning, which seeks to construct policies that optimize a target utility. These problems arise in many fields. For example, in clinical practice, investigators are often interested both in evaluating the consequences of candidate treatment strategies and in determining individualized treatment assignments that optimize patient outcomes  \citep[e.g.,][]{derubeis2014personalized,kosorok2015adaptive,kosorok2019precision}. These applications have motivated a rich  literature on policy evaluation and learning across a wide range of settings \citep[e.g.,][]{Qian2011,dudik2011doubly,Zhao2012,Zhang2012,swaminathan2015batch,Kitagawa2018,Athey2021}.

A common complication arises when the primary outcome is well defined only for a subgroup of individuals and treatment assignment affects membership in that subgroup. For example, in medical studies, measures such as disability status, cognitive function, or functional ability are undefined after death \citep[e.g.,][]{kurland2005directly,kurland2009longitudinal,murphy2011treatment}. In evaluations of job training programs, wages are undefined for individuals who fail to obtain employment \citep{zhang2003estimation,zhang2008evaluating,zhang2009likelihood,frumento2012evaluating}. This phenomenon, commonly referred to as truncation by death \citep{frangakis1999addressing}, fundamentally complicates both policy evaluation and policy learning. Because the primary outcome is undefined for some individuals, conventional policy utilities for the entire population cannot be formulated. Consequently, the targets of policy evaluation are no longer identifiable, and policy learning can no longer be formulated using conventional utility functions.

We study policy evaluation and learning in the presence of truncation by death. 
Adopting the principal stratification framework of \citet{fran:rubi:02}, we develop an approach for evaluating and optimizing treatment policies based on two complementary metrics (Section~\ref{sec:notation}). 
 The first is the {\it survivor average utility}, defined as the average potential outcome under a policy among individuals who would survive under both treatment conditions. The second consists of {\it subgroup survival rates}, which characterize survival under the policy across principal strata and provide information complementary to the survivor average utility. Together, these metrics extend causal estimands in the truncation-by-death literature from treatment effect estimation to policy evaluation and learning \citep[e.g.,][]{imai2008sharp,ding2011identifiability,tchetgen2014identification,wang2017identification,yang2018using}.
 
We first study policy evaluation in Section~\ref{sec:policyevaluate}. Under treatment ignorability and monotonicity, the subgroup survival rates are point identified, whereas the survivor average utility is only partially identified. We therefore adopt a partial identification approach and derive sharp bounds on the survivor average utility. These bounds extend partial identification results for the survivor average causal effect in randomized experiments \citep{imai2008sharp,jiang:16} to policy utilities. We then develop semiparametrically efficient estimators for the subgroup survival rates based on their efficient influence functions. For the utility bounds, we propose a hybrid estimation strategy that combines efficient influence function-based adjustment with plug-in estimation to accommodate both the smooth and non-smooth components of the bounds.

We then study policy learning under truncation by death in Section~\ref{sec:policylearningintro}. We formulate the problem as maximizing the survivor average utility subject to a minimum survival rate constraint. Because the survivor average utility is only partially identified, we adopt a minimax approach that seeks policies minimizing the worst-case regret relative to a prespecified benchmark policy while guaranteeing a survival rate at least as large as that of the benchmark. This framework extends existing minimax approaches for partially identified utility functions \citep[e.g.,][]{stoye2012minimax,Kallus2018,Kallus2021,Pu2021,Cui2021_partial,DAdamo2023,ben2024policy,han2024optimal} to settings with survival constraints. Unlike existing methods for policy learning under truncation by death \citep{chu2023multiply,park2025evaluating}, which rely on additional assumptions to point identify the survivor average utility, our framework accommodates partial identification directly while incorporating survival rate constraints. This yields an optimization problem with a different structure from existing formulations, resulting in a new characterization of the optimal policy.

We derive an explicit form of the optimal policy that minimizes the worst-case regret subject to the survival rate constraint. A key feature of our formulation is that, the worst-case regret is piecewise linear, rather than linear, in the policy. 
As a result, the optimal policy admits a threshold representation based on two priority scores, whereas standard constrained policy typically learning relies on a single priority score \citep[e.g.,][]{luedtke2016optimal,qiu2021optimal,sun2021treatment,qiu2022individualized}. The two scores jointly characterize an individual's contributions to the worst-case regret and the survival rate constraint.

Estimating the optimal policy presents additional challenges because a naive plug-in estimator requires stringent nuisance estimation rates to achieve the desired convergence rate. To overcome this difficulty, we augment the plug-in procedure with efficient influence function corrections, thereby weakening the required nuisance estimation rates while preserving the same regret guarantees. We establish regret bounds for the learned policy and show that it asymptotically satisfies the survival rate constraint.

We evaluate the finite-sample performance of the proposed methods through simulation studies in Section~\ref{sec::simulation} and illustrate their application using the MIMIC-III clinical dataset in Section~\ref{sec:application}. Section~\ref{sec:discussion} concludes.

We use the following notation throughout. Let $\|r\|_2 = \left\{ \int r(v)^2 \, d\mathbb{P}(v) \right\}^{1/2}$ denote the $L_2(\mathbb{P})$ norm, where $\mathbb{P}$ is the distribution of the observed data. Write $b_n = O_\mathbb{P}(a_n)$ if $b_n/a_n$ is bounded in probability and $b_n = o_\mathbb{P}(a_n)$ if $b_n/a_n$ converges to zero in probability. Let $\bone(\cdot)$ denote the indicator function and $\Pn$  the empirical mean.

\section{Notation and assumptions}
\label{sec:notation}
For unit $i=1,\dots,n$, let $Z_i\in\left\{0,1\right\}$ denote the binary treatment, $S_i\in\left\{0,1\right\}$ be the survival indicator, $Y_i$ the binary outcome, and $X_i\in\mathcal{X} $ a vector of pre-treatment covariates. We adopt the potential outcomes framework and invoke the stable unit treatment value assumption. Let $S_i(z)$ denote the potential survival status under treatment $z$, and let $Y_i(z)$ denote the potential outcome when it is well defined. The observed value of outcome and survival status satisfy $Y_i=Y_i(Z_i)$ and $S_i=S_i(Z_i)$. We assume $\left\{X_i,Z_i,S_i(0),S_i(1),Y_i(1),Y_i(0)\right\}$ are independent and identically distributed across units. For notational simplicity, we suppress the subscript $i$ whenever no confusion can arise.

We use principal stratification \citep{fran:rubi:02} to classify units according to their joint potential survival status, denoted by $U = (S(1), S(0))$. With a binary survival indicator, $U$ takes values in $\{(1,1), (1,0), (0,1), (0,0)\}$. For notational convenience, we write these strata as $\{11, 10, 01, 00\}$, corresponding to always-survivors, protected units, harmed units, and never-survivors, respectively.

We use $\pi:\mathcal{X}\to[0,1]$ to denote a treatment policy that  maps covariates $X$ to a treatment probability. Deterministic policies are a special case when $\pi(X)\in\{0,1\}$. 

We next introduce quantities of interest for policy evaluation and learning. Because outcomes are well defined under both treatment and control only for always-survivors, we focus on the expected outcome within this stratum,
\begin{eqnarray*}
V(\pi)&=&\mathbb{E}\left\{Y(\pi(X))\mid U=11\right\}.
\end{eqnarray*}
We refer to this quantity as the survivor average utility.

The survivor average utility alone does not provide a complete assessment of a policy under truncation by death. A policy may achieve a high survivor average utility while simultaneously causing deaths in other strata. We therefore also consider how the policy affects survival across principal strata. For always-survivors ($U=11$) and never-survivors ($U=00$), the survival rates are fixed at 1 and 0, respectively, regardless of the policy. Consequently, the policy affects survival only for protected units ($U=10$) and harmed units ($U=01$). We define the corresponding subgroup survival rates as 
\begin{eqnarray*}
B_{10}(\pi)&=&\mathbb{E}\left\{S(\pi(X))\mid U=10\right\},\\
B_{01}(\pi)&=&\mathbb{E}\left\{S(\pi(X))\mid U=01\right\}.
\end{eqnarray*}
Alternatively, one could also consider the overall survival rate, $\mathbb{E}\{S(\pi(X))\}$, which is determined by $B_{10}(\pi)$, $B_{01}(\pi)$, and the proportions of the principal strata.

We now introduce two key assumptions.
\begin{assumption}[Unconfoundedness]\label{assum::treatig} 
    $
    Z \ind \left\{ S(1), S(0), Y(1), Y(0) \right\}\mid X.
    $
\end{assumption}
Assumption~\ref{assum::treatig} rules out latent confounding between treatment assignment and both survival status and the outcome. It holds automatically in randomized experiments. In observational studies, the assumption is plausible only if the covariates $X$ capture all common causes of treatment assignment, survival, and the outcome.

The next assumption states that the treatment has a non-negative effect on survival status for each unit.
\begin{assumption}[Monotonicity]\label{assum::monoto}
 For each unit $i$,   $S_i(1) \geq S_i(0).$
\end{assumption}
Assumption~\ref{assum::monoto} rules out the harmed stratum $U=01$, consisting of units who would survive under control but not under treatment. It is plausible in applications where the treatment is designed to maintain or improve survival. It might be violated if the treatment carries nontrivial risks or induces adverse effects that reduce survival. Under Assumption~\ref{assum::monoto}, the only policy-dependent survival parameter is the protected survival rate $B_{10}(\pi)$. We  simplify it as $B(\pi)$.

Assumptions~\ref{assum::treatig}~and~\ref{assum::monoto} are commonly adopted in the principal stratification  literature
\citep[e.g.,][]{wang2017identification,jiang2022multiply,luo2023causal,chu2023multiply,li2025causal} and form the basis of our policy evaluation and learning approach. When these assumptions are questionable, researchers may conduct sensitivity analyses to assess the robustness of the conclusions.

Importantly, we do not impose principal ignorability or rely on auxiliary variables satisfying additional structural assumptions, both of which are commonly used to achieve point identification in principal stratification problems  \citep[e.g.,][]{ding2011identifiability, wang2017identification, ding:lu:17,jiang2021identification,jiang2022multiply,chu2023multiply,park2025evaluating}.

\section{Policy evaluation}\label{sec:policyevaluate}

\subsection{Nonparametric identification}
We study the identification and estimation of the protected survival rate and the survivor average utility under a given policy $\pi$.

To simplify notation, we define
\begin{eqnarray*}
    e(X)&=&\mathbb{P}(Z=1\mid X),\quad p_z(X)\ =\ \mathbb{P}(S=1\mid Z=z,X),\quad \mu_z(X)\ =\ \mathbb{E}\left(Y\mid S=1,Z=z,X\right)
\end{eqnarray*}
for $z=0,1$, where  $e(X)$ is the propensity score, $p_z(X)$ is the observed conditional survival rate, and $\mu_z(X)$ is the conditional outcome mean among the observed survivors.
We further define
$ p_z=\mathbb E\{p_z(X)\}$,
which corresponds to the marginal survival probability under treatment $z$.

Under Assumptions~\ref{assum::treatig}~and~\ref{assum::monoto}, the proportion of each principal stratum given covariates can be identified by
 \begin{eqnarray*}
    \mathbb{P}(U=11\mid X)&=&p_0(X),\quad \mathbb{P}(U=00\mid X)\ =\ 1-p_1(X),\quad \mathbb{P}(U=10\mid X)\ =\ p_1(X)-p_0(X).
\end{eqnarray*}
This implies that the protected survival rate is point identified,  as shown in part (a) of the next theorem. In contrast, the survivor average utility is not point identified.
Under Assumptions~\ref{assum::treatig} and \ref{assum::monoto}, the observed survivors in the treatment group $(Z = 1, S = 1) $ consist of a mixture of always-survivors ($U=11$) and protected units ($U=10$), and these strata cannot be distinguished using the observed data. As a result, the distribution of potential outcomes among always-survivors cannot be recovered from the observed data alone. Part (b) of the theorem therefore derives the sharp bounds on the survivor average utility.

\begin{theorem}
\label{theo::constraint}
Suppose that Assumptions~\ref{assum::treatig}~and~\ref{assum::monoto} hold.
\begin{enumerate}[(a)]
    \item The protected survival rate $B(\pi) = \mathbb{E}\{S(\pi(X))\mid U=10\}$ is identified by
\begin{eqnarray*}
        B(\pi) &=& \mathbb{E}\left\{\frac{p_1(X) - p_0(X)}{p_1 - p_0} \pi(X)\right\}.
    \end{eqnarray*}
    \item The sharp bounds on the conditional mean $m_{11}(X)=\mathbb{E}\left[Y(1)\mid U=11,X\right]$ are $[L_{11}(X),U_{11}(X)]$, where
     \begin{eqnarray*}
    L_{11}(X)&=&\max\left\{\frac{p_1(X) \mu_1(X) - p_1(X) + p_0(X)}{p_0(X)},0\right\},\\
U_{11}(X)&=&\min \left\{\frac{\mu_1(X) p_1(X)}{p_0(X)},1\right\}.
\end{eqnarray*}
These sharp bounds imply that the survivor average utility $V(\pi)=\mathbb{E}\{Y(\pi(X))\mid U=11\}$ are sharply bounded by $[L(\pi),U(\pi)]$, where
\begin{eqnarray}
L(\pi) &=& \frac{1}{p_0} \mathbb{E}\left\{\pi(X) L_{11}(X)p_0(X) + (1 - \pi(X)) \mu_0(X) p_0(X)\right\}, \label{eqn::utility_lower_a}\\
        U(\pi) &=& \frac{1}{p_0} \mathbb{E}\left\{\pi(X) U_{11}(X)p_0(X) + (1 - \pi(X)) \mu_0(X) p_0(X)\right\}.
        \label{eqn::utility_upper_a}
    \end{eqnarray}
\end{enumerate}
\end{theorem}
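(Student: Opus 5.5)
For part (a) I would work conditionally on $X$ and then aggregate. Under Assumption~\ref{assum::monoto}, every unit with $U=10$ has $S(1)=1$ and $S(0)=0$. Hence for such a unit $S(\pi(X))$ has conditional mean $\pi(X)$: for a deterministic policy this is immediate, and for a stochastic policy I treat $\pi(X)$ as an independent randomization. This gives
\[
B(\pi)=\mathbb{E}\{\pi(X)\mid U=10\}=\frac{\mathbb{E}\{\pi(X)\,\mathbb{P}(U=10\mid X)\}}{\mathbb{P}(U=10)}.
\]
Next I identify $\mathbb{P}(U=10\mid X)=p_1(X)-p_0(X)$. Assumption~\ref{assum::treatig} gives $\mathbb{P}(S(z)=1\mid X)=p_z(X)$. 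Monotonicity gives $\mathbb{P}(S(1)=1\mid X)=\mathbb{P}(U=11\mid X)+\mathbb{P}(U=10\mid X)$ and $\mathbb{P}(S(0)=1\mid X)=\mathbb{P}(U=11\mid X)$. Integrating over $X$ then gives $\mathbb{P}(U=10)=p_1-p_0$, which yields the formula in part (a).

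For part (b), the first step is to decompose the observed treated survivors, conditional on $X$. By ignorability and monotonicity, $\{Z=1,S=1\}$ given $X$ is a mixture of strata $11$ and $10$ with weights $p_0(X)/p_1(X)$ and $\{p_1(X)-p_0(X)\}/p_1(X)$. Writing $m_{10}(X)=\mathbb{E}\{Y(1)\mid U=10,X\}$, we get
\[
\mu_1(X)p_1(X)=p_0(X)m_{11}(X)+\{p_1(X)-p_0(X)\}m_{10}(X).
\]
The observed data place no restriction on how the survivors' outcome distribution splits across the two strata, beyond this mixture identity. The only constraints are $m_{11},m_{10}\in[0,1]$, since $Y$ is binary. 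Solving the identity for $m_{11}$ and using $m_{10}\in[0,1]$ gives the range $[\{p_1\mu_1-p_1+p_0\}/p_0,\ p_1\mu_1/p_0]$. Intersecting with $[0,1]$ gives $[L_{11}(X),U_{11}(X)]$.

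Sharpness is where the real argument lies. For any $m\in[L_{11}(X),U_{11}(X)]$, I would construct a joint law consistent with the observed data. Give stratum $11$ under $Z=1$ a Bernoulli$(m)$ outcome. Give stratum $10$ a Bernoulli$(m')$ outcome, with $m'$ solving the mixture identity; the two constraints defining $L_{11},U_{11}$ are exactly what guarantee $m'\in[0,1]$. Put arbitrary, unidentified values on the cross-world components. Then check that this reproduces $e,p_z,\mu_z$.

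The conditional control mean is point identified. Units with $Z=0,S=1$ are exactly stratum $11$, so $\mathbb{E}\{Y(0)\mid U=11,X\}=\mu_0(X)$. Then
\[
V(\pi)=\frac{\mathbb{E}[\{\pi(X)m_{11}(X)+(1-\pi(X))\mu_0(X)\}p_0(X)]}{p_0}.
\]
This is monotone increasing in $m_{11}(\cdot)$ pointwise, because $\pi(X)p_0(X)\ge0$, and it is linear in $m_{11}$. So plugging in $L_{11}$ or $U_{11}$ gives $L(\pi)$ and $U(\pi)$. Every intermediate value is attained by a pointwise-convex combination of the two extreme constructions, since each $m_{11}(X)$ can be chosen independently across $X$. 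That makes the bounds on $V(\pi)$ sharp.

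The main obstacle is rigorously verifying sharpness of the conditional bounds: exhibiting, measurably in $X$, a full data-generating process satisfying both assumptions that attains each endpoint. This requires care when $p_0(X)=0$ or $p_1(X)=p_0(X)$. I would handle those cases by convention, since the weight $p_0(X)$ kills the first case and the second case makes $m_{11}=\mu_1$ with $L_{11}=U_{11}$.
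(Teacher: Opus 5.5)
Your proposal is correct and takes essentially the same route as the paper. For (a), you identify $\mathbb{P}(U=10\mid X)=p_1(X)-p_0(X)$ and apply Bayes' rule; for (b), you identify $\mathbb{E}\{Y(0)\mid U=11,X\}=\mu_0(X)$, write $\mu_1(X)$ as a mixture over strata $11$ and $10$, and let $\mathbb{E}\{Y(1)\mid U=10,X\}$ range over $[0,1]$. Your explicit intersection with $[0,1]$ and your constructed data-generating process make rigorous what the paper only asserts by ``setting $\mathbb{E}\{Y(1)\mid U=10,X\}$ to $0$ and $1$''; taken literally, that phrase does not even produce the truncations at $0$ and $1$.
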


Theorem~\ref{theo::constraint}(a) identifies the protected survival rate in terms of $p_z(X)$. One can also derive alternative identification formulas based on the propensity score. Because these derivations and the corresponding efficient estimation closely parallel those in the standard average treatment effect literature, we omit them for brevity. The remainder of this section focuses on the survivor average utility.

Theorem~\ref{theo::constraint}(b) expresses the sharp bounds on the survivor average utility in terms of $p_z(X)$ and $\mu_z(X)$. As with the protected survival rate, the bounds admit multiple equivalent identification formulas, which motivates the search for efficient estimation strategies. However, because the bounds involve non-smooth maximum and minimum operators, efficient estimation cannot be achieved through standard influence function based techniques alone. We address this challenge in the next subsection.

\subsection{Efficient estimation}
\label{sec:effpolicyevl}
We focus on the lower bound $L(\pi)$; results for the upper bound follow similarly and are provided in the Supplementary Material.

We can rewrite the lower bound in~\eqref{eqn::utility_lower_a} as
\begin{eqnarray*}
    L(\pi)&=&\frac{\Pb\left\{c(X)\gamma_1(X)+\gamma_2(X)\right\}}{p_0},
\end{eqnarray*}
where  
\begin{eqnarray*}
    c(X)&=&\bone\left\{p_1(X)\mu_1(X)-p_1(X)+p_0(X)\geq 0\right\},\\\gamma_1(X)&=&\pi(X)\left\{p_1(X)\mu_1(X)-p_1(X)+p_0(X)\right\},\\
    \gamma_2(X)&=&\{1-\pi(X)\}\mu_0(X)p_0(X).
\end{eqnarray*}
For ease of exposition, let $N(\pi)=\Pb\left\{c(X)\gamma_1(X)+\gamma_2(X)\right\}$ denote the numerator of $L(\pi)$, and let  $D=p_0$ denote the denominator. The denominator $D$ can be estimated efficiently using standard doubly robust techniques. The numerator $N(\pi)$, however, involves an indicator function $c(X)$ and is therefore non-smooth.  
As a result, $L(\pi)$  is not pathwise differentiable and therefore does not admit a regular influence function representation \citep{bickel1993efficient}.

To motivate our estimation strategy, we examine the error decomposition of a general plug-in estimator
$\mathbb{P}_n\left\{\hat{c}(X)\hat{\gamma}_1(X)+\hat{\gamma}_2(X)\right\}$ for the numerator $\Pb\left\{c(X)\gamma_1(X)+\gamma_2(X)\right\}$, where  $\hat{c}(X),\hat{\gamma}_z(X)$  denote the estimators of $c(X)$ and $\gamma_z(X)$, respectively.
The estimation error can be decomposed as
\begin{eqnarray*}
   \mathbb{P}_n\left\{\hat{c}(X)\hat{\gamma}_1(X)+\hat{\gamma}_2(X)\right\}-\Pb\left\{c(X)\gamma_1(X)+\gamma_2(X)\right\}&=& T_1+T_2+T_3,
\end{eqnarray*}
where 
\begin{eqnarray*}
 T_1&=&(\mathbb{P}_n-\Pb)\left\{c(X)\gamma_1(X)+\gamma_2(X)\right\},\\
 T_2&=&(\mathbb{P}_n-\Pb)\left\{\hat{c}(X)\hat{\gamma}_1(X)+\hat{\gamma}_2(X)-c(X)\gamma_1(X)-\gamma_2(X)\right\},\\
 T_3&=&\Pb\left[\hat{c}(X)\left\{\hat{\gamma}_1(X)-\gamma_1(X)\right\}\right]+\Pb\left\{\hat{\gamma}_2(X)-\gamma_2(X)\right\}+\Pb\left[\left\{\hat{c}(X)-c(X)\right\}\gamma_1(X)\right].
\end{eqnarray*}
The first term $T_1$ is an empirical average of a fixed function and therefore converges at root-$n$ rate. The second term $T_2$ is an empirical process term  and typically has a smaller order under standard regularity conditions. Thus,  the key challenge lies in controlling $T_3$, which arises from estimation of  the  smooth components $\gamma_1(X)$ and $\gamma_2(X)$, and the non-smooth indicator $c(X)$.

We employ different estimation strategies for the smooth and non-smooth components. For  $\gamma_1(X)$ and $\gamma_2(X)$, we construct estimators based on the efficient influence functions (EIFs) of $\E\{\gamma_1(X)\}$ and $\E\{\gamma_2(X)\}$. For $c(X)$, we use a plug-in approach that estimates the argument of the indicator function. This strategy parallels the recent approaches in estimating non-smooth functionals in other settings \citep[e.g.,][]{ben2024policy,levis2024intervention,levis2025covariate}.

We now  derive the EIFs for $\E\{\gamma_1(X)\}$ and $\E\{\gamma_2(X)\}$. To facilitate the derivations, define
\begin{eqnarray}
    \psi_{p_z\mu_z}&=&p_z(X)\mu_z(X)+\frac{\bone\left(Z=z\right)\left\{YS-p_z(X)\mu_z(X)\right\}}{\mathbb{P}(Z=z\mid X)},\label{eqn:notationpmu}\\
    \psi_{p_z}&=&p_z(X)+\frac{\bone\left\{Z=z\right\}\left\{S-p_z(X)\right\}}{\mathbb{P}(Z=z\mid X)}.\label{eqn:notationp}
\end{eqnarray}
Under Assumption~\ref{assum::treatig}, we have
$\mathbb{E}(\psi_{p_z\mu_z}\mid X) = p_z(X)\mu_z(X)$ and  $\mathbb{E}(\psi_{p_z}\mid X) = p_z(X).$
Moreover, the centered quantities \(\psi_{p_z\mu_z} - \mathbb{E}\{p_z(X)\mu_z(X)\}\) and \(\psi_{p_z} - p_z\) are  the EIFs of \(\mathbb{E}\{p_z(X)\mu_z(X)\}\) and \(p_z\), respectively.
From the chain rule, the EIF of $\mathbb{E}\left\{\gamma_1(X)\right\}$ is $\psi_{\gamma_1}-\mathbb{E}\left\{\gamma_1(X)\right\}$, where
\begin{eqnarray*}
\psi_{\gamma_1}&=&\pi(X)\left\{\psi_{p_1\mu_1}-\psi_{p_1}+\psi_{p_0}\right\}.
\end{eqnarray*}
Similarly, the EIF of $\mathbb{E}\left\{\gamma_2(X)\right\}$ is $\psi_{\gamma_2}-\mathbb{E}\left\{\gamma_2(X)\right\}$, where
\begin{eqnarray*}
\psi_{\gamma_2}&=&\left\{1-\pi(X)\right\}\psi_{p_0\mu_0}.
\end{eqnarray*}

Building on the preceding results, we propose the following ratio estimator for $L(\pi)$.
\begin{estimator}\label{eg:ratio} We propose a three-step estimator for the lower bound $L(\pi)$.
\begin{itemize}
    \item Step 1: Estimate the propensity score $\hat{e}(X)$, the observed survival rate $\hat{p}_z(X)$, and the outcome mean $\hat{\mu}_z(X)$ for $z=0,1$.
    \item Step 2: Construct $\hat{\psi}_{p_z\mu_z}$ and $\hat{\psi}_{p_z}$ for $z=0,1$ by plugging  $\hat{e}(X)$, $\hat{p}_z(X)$, and $\hat{\mu}_z(X)$ into~\eqref{eqn:notationpmu}~and~\eqref{eqn:notationp}, respectively.
    \item Step 3: Compute the estimators $\hat{N}(\pi)$ and $\hat{D}$ for the numerator and denominator using 
    \begin{eqnarray*}
    \hat{N}(\pi)&=&\mathbb{P}_n\left[\pi(X)\bone\left\{\hat{p}_1(X)\hat{\mu}_1(X)-\hat{p}_1(X)+\hat{p}_0(X)\geq 0\right\}\left\{\hat{\psi}_{p_1\mu_1}-\hat{\psi}_{p_1}+\hat{\psi}_{p_0}\right\}\right]\\
    &&+\mathbb{P}_n\left[\left\{1-\pi(X)\right\}\left(\hat{\psi}_{p_0\mu_0}\right)\right],\\
    \hat{D}&=&\mathbb{P}_n\hat{\psi}_{p_0}.
\end{eqnarray*}
The final estimator is given by: $\hat{L}(\pi)=\hat{N}(\pi)/\hat{D}$.
\end{itemize}
\end{estimator}
Although one may, in principle, substitute an EIF-based estimator inside the indicator, doing so may increase bias, as noted in \citet{DAdamo2023}.

We next establish the asymptotic properties of the estimator $\hat{L}({\pi})$. The main challenge arises from estimating the non-smooth indicator $c(X)$ near its boundary, where the argument of the indicator lies close to zero. To control the estimation error, we impose a margin condition that restricts the probability mass in a neighborhood of this threshold.

\begin{assumption}\label{a:margin_posclass1}
  There exist  $\alpha > 0$ and a constant $C$ such that for any $t \geq 0$,
  $$\mathbb{P}\left(\left|p_1(X)\mu_1(X)-p_1(X)+p_0(X)\right| \leq t\right) \leq Ct^\alpha.$$
\end{assumption}
Assumption~\ref{a:margin_posclass1} restricts the concentration of observations near the boundary of the indicator function. It rules out excessive probability mass in regions where the argument of the indicator approaches zero. The parameter $\alpha$ controls the strength of this assumption: larger values of $\alpha $ correspond to lighter concentration near the boundary.
This type of assumption has been proposed in the classification literature \citep[e.g.,][]{tsybakov2004optimal,Audibert2007} and has since been widely used in various semiparametric contexts involving non-smooth functionals \citep[e.g.,][]{luedtke2016optimal,Luedtke2016,kennedy2020sharp,qiu2021optimal,DAdamo2023,levis2025covariate}.

For ease of exposition, we introduce the following notation to characterize the estimation errors of the nuisance functions:
\begin{eqnarray*}
\left\|\hat{\lambda}_1 - \lambda_1\right\|_2^2 &=& \E\left[\left\{\frac{Z}{\hat{e}(X)}-\frac{Z}{e(X)}\right\}^2\right], \quad \left\|\hat{\lambda}_0 - \lambda_0\right\|_2^2 \ = \ \E\left[\left\{\frac{1-Z}{1-\hat{e}(X)}-\frac{1-Z}{1-e(X)}\right\}^2\right],\\
\|\hat{m} - m\|_\infty&=&\sup\left\{\|\hat{p}_1\hat{\mu}_1 - p_1\mu_1\|_\infty, \|\hat{p}_1 - p_1\|_\infty,\|\hat{p}_0 - p_0\|_\infty\right\}.
\end{eqnarray*}
The following theorem provides the asymptotic bias formulas of   $\hat{N}(\pi)$ and $\hat{D}$.

\begin{theorem}\label{thm:convergence_margin}
Suppose that Assumptions~\ref{assum::treatig}~to~\ref{a:margin_posclass1} hold, and the following conditions are satisfied for the nuisance functions: 
\begin{enumerate}[(a)] 
\item The  nuisance estimators are constructed using an independent sample.
\item $\|\hat{e}-e\|_2=o_\mathbb{P}(1)$, $\|\hat{p}_z-p_z\|_2=o_\mathbb{P}(1)$, and $\|\hat{\mu}_z-\mu_z\|_2=o_\mathbb{P}(1)$  for $z=0,1$.
\item $\delta<\left\{e(X),\hat{e}(X)\right\}<1-\delta$ for some $\delta\in(0,1)$ and all $X$.
\end{enumerate}
We have
\begin{eqnarray}
\nonumber \hat{N}(\pi)-N(\pi)
 &=&(\Pn - \Pb)\left\{c(X)\psi_{\gamma_1}+\psi_{\gamma_2}\right\} \\
 \nonumber &&+ O_\Pb\left(\left\|\hat{\lambda}_1-\lambda_1\right\|_2\left(\left\|\hat{p}_1\hat{\mu}_1-p_1\mu_1\right\|_2+\left\|\hat{p}_1-p_1\right\|_2\right)\right)\\
\nonumber &&+O_{\mathbb{P}}\left(\left\|\hat{\lambda}_0-\lambda_0\right\|_2\left(\left\|\hat{p}_0\hat{\mu}_0-p_0\mu_0\right\|_2+\left\|\hat{p}_0-p_0\right\|_2\right)\right)  \\
\label{eqn::biasN}   && +O_\Pb\left(\|\hat{m}-m\|_{\infty}^{1+\alpha} \right) + o_\Pb(n^{-1/2}),\\
\label{eqn::biasD} \hat{D}-D & =& (\Pn - \Pb)\left(\psi_{p_0}\right) + O_\Pb\left(\left\|\hat{\lambda}_0-\lambda_0\right\|_2\left\|\hat{p}_0-p_0\right\|_2\right)+o_\Pb(n^{-1/2}).
\end{eqnarray}
\end{theorem}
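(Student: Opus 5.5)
We will follow the decomposition $T_1+T_2+T_3$ given before Estimator~\ref{eg:ratio}, with the plug-ins $\hat\gamma_1,\hat\gamma_2$ replaced by the estimated influence-function pseudo-outcomes. Write $\hat c(X)=\bone\{\hat p_1\hat\mu_1-\hat p_1+\hat p_0\ge 0\}$, $\hat\psi_{\gamma_1}=\pi(X)\{\hat\psi_{p_1\mu_1}-\hat\psi_{p_1}+\hat\psi_{p_0}\}$ and $\hat\psi_{\gamma_2}=\{1-\pi(X)\}\hat\psi_{p_0\mu_0}$, so that $\hat N(\pi)=\Pn\{\hat c\hat\psi_{\gamma_1}+\hat\psi_{\gamma_2}\}$. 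We split
\begin{eqnarray*}
\hat N(\pi)-N(\pi)&=&(\Pn-\Pb)\{c\psi_{\gamma_1}+\psi_{\gamma_2}\}+(\Pn-\Pb)\{\hat c\hat\psi_{\gamma_1}+\hat\psi_{\gamma_2}-c\psi_{\gamma_1}-\psi_{\gamma_2}\}\\
&&+\Pb\{\hat c(\hat\psi_{\gamma_1}-\psi_{\gamma_1})+(\hat\psi_{\gamma_2}-\psi_{\gamma_2})\}+\Pb\{(\hat c-c)\gamma_1\}.
\end{eqnarray*}
The last line uses $\Pb(\hat c\psi_{\gamma_1}\mid X)=\hat c\gamma_1$ and $\Pb\psi_{\gamma_2}=N(\pi)-\Pb(c\gamma_1)$. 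The first term is the leading term in~\eqref{eqn::biasN}.

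\textbf{Empirical process term.} By sample splitting (condition (a)), conditional on the training sample the function inside the second term is fixed. Chebyshev's inequality then gives that this term is $O_\Pb(n^{-1/2}\|\hat c\hat\psi_{\gamma_1}+\hat\psi_{\gamma_2}-c\psi_{\gamma_1}-\psi_{\gamma_2}\|_2)$. We show the $L_2$ norm is $o_\Pb(1)$. Since $Y,S$ are binary and $\hat e,e$ are bounded away from $0,1$ by (c), the pseudo-outcomes are uniformly bounded, and $\|\hat\psi-\psi\|_2=o_\Pb(1)$ follows from (b). For the indicator part, $\|(\hat c-c)\psi_{\gamma_1}\|_2^2\lesssim\Pb(\hat c\ne c)$. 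On $\{\hat c\ne c\}$ we have $|p_1\mu_1-p_1+p_0|\le|\hat g-g|$, where $g$ is the indicator's argument. We bound $\Pb(\hat c\ne c)\le \Pb(|g|\le t)+\Pb(|\hat g-g|>t)$ and use Assumption~\ref{a:margin_posclass1} together with $L_2$-consistency, letting $t\to0$ slowly, to get $o_\Pb(1)$. Hence the second term is $o_\Pb(n^{-1/2})$.

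\textbf{Smooth remainders.} Since $|\hat c|\le1$ is $X$-measurable, the third term is controlled by the standard doubly robust remainder. A direct calculation gives, for instance,
\begin{eqnarray*}
\Pb(\hat\psi_{p_1\mu_1}-p_1\mu_1\mid X)=\left\{\frac{e(X)}{\hat e(X)}-1\right\}\{p_1\mu_1-\hat p_1\hat\mu_1\}(X),
\end{eqnarray*}
and similarly for $\hat\psi_{p_z}$. Note that $e/\hat e-1=\Pb\{Z/\hat e-Z/e\mid X\}$. Applying Cauchy--Schwarz, with Jensen to pass from the conditional expectation to $\|\hat\lambda_1-\lambda_1\|_2$, yields the product terms $\|\hat\lambda_1-\lambda_1\|_2(\|\hat p_1\hat\mu_1-p_1\mu_1\|_2+\|\hat p_1-p_1\|_2)$ and the analogous $z=0$ terms. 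Both the $\hat\psi_{p_0}$ piece of $\gamma_1$ and $\gamma_2$ contribute to the $z=0$ product term.

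\textbf{Indicator error.} This is the main obstacle. Since $\gamma_1=\pi g$ with $0\le\pi\le1$, we have $|\Pb\{(\hat c-c)\gamma_1\}|\le\Pb\{|g|\bone(\hat c\ne c)\}$. On $\{\hat c\ne c\}$, $\hat g$ and $g$ have opposite signs, so $|g|\le|\hat g-g|\le 3\|\hat m-m\|_\infty$. Therefore
\begin{eqnarray*}
|\Pb\{(\hat c-c)\gamma_1\}|\le 3\|\hat m-m\|_\infty\,\Pb(|g|\le 3\|\hat m-m\|_\infty)\lesssim\|\hat m-m\|_\infty^{1+\alpha}
\end{eqnarray*}
by Assumption~\ref{a:margin_posclass1}. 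This bound is applied conditionally on the training sample.

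\textbf{Denominator.} For~\eqref{eqn::biasD}, apply the same argument to $\hat D=\Pn\hat\psi_{p_0}$, which has no indicator. The decomposition is $(\Pn-\Pb)\psi_{p_0}$, plus an empirical process term that is $o_\Pb(n^{-1/2})$ by sample splitting, plus the bias $\Pb\{(e_0/\hat e_0-1)(p_0-\hat p_0)\}$, where $e_0=1-e$. Cauchy--Schwarz bounds this bias by $\|\hat\lambda_0-\lambda_0\|_2\|\hat p_0-p_0\|_2$.
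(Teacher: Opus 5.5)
Your proposal is correct and follows essentially the same route as the paper. You use the same split into the leading term $(\Pn-\Pb)\{c\psi_{\gamma_1}+\psi_{\gamma_2}\}$, a sample-split empirical process term, the doubly robust product remainders from $\hat c(\hat\psi_{\gamma_1}-\psi_{\gamma_1})$ and $\hat\psi_{\gamma_2}-\psi_{\gamma_2}$, and the margin-condition bound $\Pb\{|g|\bone(\hat c\neq c)\}\lesssim\|\hat m-m\|_\infty^{1+\alpha}$, and you handle the denominator the same way. Your one addition is explicitly showing $\Pb(\hat c\neq c)=o_\Pb(1)$ via the margin condition and Markov's inequality, so that the empirical process term is $o_\Pb(n^{-1/2})$; the paper leaves this step implicit when it invokes Lemma 2 of \citet{kennedy2020sharp}.
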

Theorem~\ref{thm:convergence_margin} expresses the asymptotic errors of $\hat{N}(\pi)$ and $\hat{D}$
 in terms of the estimation errors of the nuisance functions. Condition (a) allows us to avoid Donsker conditions \citep{chernozhukov2018double,kennedy2024semiparametric}. In practice, when only a single random sample is available, this condition can be implemented through sample splitting and cross-fitting. Condition (b) requires consistency of the nuisance estimators and is standard in the semiparametric literature.
 Condition (c) is analogous to the classic overlap condition \citep{rosenbaum1983central,d2021overlap}, ensuring that both the true and estimated propensity scores remain bounded away from zero and one.

The first term on the right hand side of~\eqref{eqn::biasN} is an empirical average of a fixed function of the true nuisances and therefore converges  at a root-$n$ rate. The second and third terms arise from the EIF-based estimation of $\gamma_1(X)$ and $\gamma_2(X)$. These two terms have the product structure of nuisance estimation errors that also appears in doubly robust estimation of the average treatment effect. The fourth term captures the error induced by plugging the estimated nuisance functions into the indicator $c(X)$. Its order depends on the exponent $\alpha$ in Assumption~\ref{a:margin_posclass1}.
By comparison, the asymptotic error of $\hat{D}$ in~\eqref{eqn::biasD} follows directly from the theory of doubly robust estimation of average treatment effect.

Theorem~\ref{thm:convergence_margin} implies the asymptotic normality of  $\hat{N}(\pi)$ and $\hat{D}$ under the following rate conditions 
on the nuisance estimators.
\begin{assumption}\label{assum:rates}
The estimated nuisance functions satisfy the following conditions:
\begin{enumerate}
\item [(a)] $\|\hat{\lambda}_z-\lambda_z\|_2\|\hat{p}_z-p_z\|_2 = o_\Pb(n^{-1/2})$ for $z=0,1$.
\item [(b)] $\|\hat{\lambda}_z-\lambda_z\|_2\|\hat{p}_z\hat{\mu}_z-p_z\mu_z\|_2= o_\Pb(n^{-1/2})$ for $z=0,1$.
\item [(c)] $\|\hat{m}-m\|_{\infty}^{1+\alpha} = o_\Pb(n^{-1/2})$.
\end{enumerate}
\end{assumption}
Assumption~\ref{assum:rates} requires sufficiently fast convergence of the nuisance estimators. Conditions (a)~and~(b) require that the products of  estimation errors from the propensity score, survival probabilities, and outcome means converge faster than the root-$n$ rate. Such rate requirements are well understood for many commonly used flexible estimation procedures. Condition (c) concerns the non-smooth indicator $c(X)$.  This condition requires  $\hat{m}$ to converge sufficiently quickly to control the bias arising from estimation of the non-smooth indicator. The required rate becomes more stringent as $\alpha$ decreases.

The next theorem establishes the asymptotic normality of  $\hat{L}(\pi)$.
\begin{theorem}\label{thm:convergence_eif}
Suppose that Assumptions~\ref{assum::treatig}~to~\ref{assum:rates} and Conditions (a), (b), and (c) in Theorem~\ref{thm:convergence_margin} hold.
 Then,
 \begin{eqnarray*}
    \sqrt{n}\{\hat{L}(\pi)-L(\pi)\}&\xrightarrow{\textup{d}}&\mathcal{N}\left(0,\mathbb{E}\left\{\frac{c(X)\psi_{\gamma_1}+\psi_{\gamma_2}-L(\pi)\psi_{p_0}}{p_0}\right\}^2 \right).
\end{eqnarray*}
\end{theorem}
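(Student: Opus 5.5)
The plan is to combine the two asymptotic expansions in Theorem~\ref{thm:convergence_margin} with a delta-method argument for the ratio $\hat{L}(\pi)=\hat{N}(\pi)/\hat{D}$. Under Assumption~\ref{assum:rates}, the remainder terms in \eqref{eqn::biasN} and \eqref{eqn::biasD} vanish at the required rate. The product terms in $\hat\lambda_z$ are $o_\Pb(n^{-1/2})$ by parts (a) and (b), and the margin term $\|\hat m-m\|_\infty^{1+\alpha}$ is $o_\Pb(n^{-1/2})$ by part (c). Hence
\begin{eqnarray*}
\hat N(\pi)-N(\pi)&=&(\Pn-\Pb)\{c(X)\psi_{\gamma_1}+\psi_{\gamma_2}\}+o_\Pb(n^{-1/2}),\\
\hat D-D&=&(\Pn-\Pb)\psi_{p_0}+o_\Pb(n^{-1/2}).
\end{eqnarray*}
These are linear expansions in fixed functions of the true nuisances. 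By Condition (c) of Theorem~\ref{thm:convergence_margin} and boundedness of $Y,S,\pi$, these functions have finite second moments.

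Next I linearize the ratio. I write
\begin{eqnarray*}
\hat L(\pi)-L(\pi)=\frac{\hat N(\pi)-N(\pi)}{\hat D}-\frac{N(\pi)(\hat D-D)}{\hat D D}.
\end{eqnarray*}
The expansion shows $\hat D-D=O_\Pb(n^{-1/2})$, so $\hat D\to_\Pb p_0$. Here I need $p_0>0$, which is implicit because $L(\pi)$ is defined through division by $p_0$. Replacing $\hat D$ by $D=p_0$ in the denominators costs only a product of two $O_\Pb(n^{-1/2})$ terms, which is $o_\Pb(n^{-1/2})$. Using $N(\pi)/D=L(\pi)$, I obtain
\begin{eqnarray*}
\hat L(\pi)-L(\pi)=(\Pn-\Pb)\left\{\frac{c(X)\psi_{\gamma_1}+\psi_{\gamma_2}-L(\pi)\psi_{p_0}}{p_0}\right\}+o_\Pb(n^{-1/2}).
\end{eqnarray*}

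Finally, I apply the central limit theorem to the i.i.d.\ average of this fixed influence function and then Slutsky's lemma. The limit variance is the variance of the bracketed function. To match the stated form, I check that this function has mean zero. Since $\E(\psi_{p_z\mu_z}\mid X)=p_z(X)\mu_z(X)$ and $\E(\psi_{p_z}\mid X)=p_z(X)$, and $c(X)$ and $\pi(X)$ are functions of $X$, we get
\begin{eqnarray*}
\E\{c(X)\psi_{\gamma_1}+\psi_{\gamma_2}\}=N(\pi),\qquad \E\psi_{p_0}=p_0.
\end{eqnarray*}
The numerator therefore has mean $N(\pi)-L(\pi)p_0=0$, so the variance equals the second moment displayed in the statement.

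The main obstacle is already absorbed into Theorem~\ref{thm:convergence_margin}, namely the nonsmooth indicator term controlled through the margin condition. What remains is routine. The only care needed is ensuring $\hat D$ is bounded away from zero with probability tending to one, so that the ratio linearization is valid. This follows from consistency of $\hat D$ together with $p_0>0$.
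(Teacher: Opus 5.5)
Your proposal is correct and follows essentially the same route as the paper. You plug Assumption~\ref{assum:rates} into the expansions of Theorem~\ref{thm:convergence_margin}, linearize the ratio $\hat N(\pi)/\hat D$ around $(N(\pi),D)$, and apply the central limit theorem. Your explicit ratio identity and your mean-zero check on the influence function spell out the Taylor-expansion step and the variance form that the paper states without detail.
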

We present an analogous result for the asymptotic normality of the upper bound in the Supplementary Material. Together, 
we can construct the following asymptotically valid $100(1-\zeta)\%$ confidence interval for the bounds $[L(\pi),U(\pi)]$ of survivor average utility with 
\begin{eqnarray}\label{eqn:cibounds}
    \left(\hat{L}(\pi)-z_{1-\zeta/2}\sqrt{\hat{\Omega}_L/n},\hat{U}(\pi)+z_{1-\zeta/2}\sqrt{\hat{\Omega}_U/n}\right),
\end{eqnarray}
where $\hat{\Omega}_L$, $\hat{\Omega}_U$ are consistent estimators of the asymptotic variances of $\hat{L}(\pi)$ and $\hat{U}(\pi)$, respectively, and $z_{1-\zeta/2}$ is the $1-\zeta/2$-th quantile of the standard normal distribution. 

\section{Policy learning under a survival rate constraint}\label{sec:policylearningintro}
\subsection{Formulation of optimization problem}
We now turn to policy learning under truncation by death. The notion of an optimal policy differs across principal strata. For always-survivors, the outcome is well defined under both treatment conditions, and a desirable policy maximizes the survivor average utility $V(\pi)$. For protected units, the outcome is well defined only under treatment, so treatment decisions for this group cannot be guided by outcome comparisons. Since these units survive only under treatment, a desirable policy should assign treatment. For never-survivors, the outcome is not well defined under either treatment or control, and treatment has no effect on survival.

These considerations suggest a tradeoff between improving outcomes among always-survivors and preserving survival among protected units. We formalize this tradeoff through a constrained optimization problem. Specifically, we maximize the survivor average utility $V(\pi)$ subject to a minimum protected survival rate $B(\pi)$.

However, Theorem~\ref{theo::constraint} shows that the survivor average utility $V(\pi)$ is not point identified. 
The source of this non-identification is the conditional mean outcome among always-survivors under treatment, $m_{11}(X) = \mathbb{E}\{Y(1) \mid U=11, X\}$, which is sharply bounded by $[L_{11}(X),U_{11}(X)]$ as established in Theorem~\ref{theo::constraint}.
We therefore adopt a minimax regret approach that seeks a policy that minimizes the worst-case regret relative to a benchmark policy $\varpi$. We formulate policy learning  as the following constrained minimax optimization problem:
\begin{eqnarray}
\label{eqn:optimiproblem}
    \pi^*\in \arg\min R_{\sup}(\pi,\varpi) \text{ subject to }B(\pi)\geq B(\varpi),
\end{eqnarray}
where \begin{eqnarray*}
    R_{\sup}(\pi,\varpi)=\max_{m_{11}(X)\in[L_{11}(X),U_{11}(X)]}\{V(\varpi)-V(\pi)\}.
\end{eqnarray*}
The quantity $ R_{\sup}(\pi,\varpi)$ is the worst-case regret of policy $\pi$ relative to  $\varpi$. It equals the largest difference in survivor average utility  between $\varpi$ and $\pi$ over all possible values of $m_{11}(X)$. Because the benchmark policy $\varpi$ always satisfies the constraint, it is feasible by construction. Therefore, the optimal policy $\pi^*$ is guaranteed to perform at least as well as $\varpi$.

The benchmark policy $\varpi$ must be specified in advance. Different choices of $\varpi$ can lead to different optimal policies \citep{Cui2021_partial}. Common choices include the always-treat policy $\varpi(x) = 1$ for all $x$ and the never-treat policy $\varpi(x) = 0$ for all $x$. \citet{ben2024policy} also propose using an oracle policy, which represents the optimal policy if the unidentified components were known.

Theorem~\ref{theo::constraint} implies the following identification formulas for the constraint threshold $B(\varpi)$ and the worst-case regret $ R_{\sup}(\pi,\varpi)$.
\begin{corollary}
\label{cor:formsup}
Under Assumptions~\ref{assum::treatig}~and~\ref{assum::monoto}, we have
\begin{eqnarray}
\label{eqn:orgB} B(\varpi)&=&\mathbb{E}\left\{\frac{p_1(X)-p_0(X)}{p_1-p_0}\varpi(X)\right\},\\
\nonumber R_{\sup}(\pi,\varpi) &=&\frac{1}{p_0}\mathbb{E}\left[\left\{\varpi(X)-\pi(X)\right\}\bone\left\{\pi(X)<\varpi(X)\right\}A_U
        (X)\right]\\
 \label{eqn:orgrsup}        &&+\frac{1}{p_0}\mathbb{E}\left[\left\{\varpi(X)-\pi(X)\right\}\bone\left\{\pi(X)\geq \varpi(X)\right\}A_L
        (X)\right],
\end{eqnarray}
where \begin{eqnarray*}
        A_L(X)&=&\max\left\{p_1(X) \mu_1(X) - p_1(X) + p_0(X),0\right\}-p_0(X)\mu_0(X),\\
        A_U(X)&=&\min \left\{\mu_1(X) p_1(X),p_0(X)\right\}-p_0(X)\mu_0(X).
    \end{eqnarray*}
\end{corollary}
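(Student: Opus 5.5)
The formula for $B(\varpi)$ in \eqref{eqn:orgB} is Theorem~\ref{theo::constraint}(a) applied with $\pi=\varpi$, so only the regret needs work. I will first write $V(\pi)$ as a linear functional of $m_{11}(\cdot)$, then maximize the regret pointwise over the sharp bounds $[L_{11}(X),U_{11}(X)]$.

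\emph{Step 1: express $V(\pi)$.} By iterated expectations over $X$ given $U=11$, and since $\pi(X)$ is a treatment probability,
\[
V(\pi)=\frac{1}{\mathbb{P}(U=11)}\mathbb{E}\left[\mathbb{P}(U=11\mid X)\left\{\pi(X)m_{11}(X)+(1-\pi(X))\,\mathbb{E}\{Y(0)\mid U=11,X\}\right\}\right].
\]
Under Assumptions~\ref{assum::treatig} and~\ref{assum::monoto}, $\mathbb{P}(U=11\mid X)=p_0(X)$, so $\mathbb{P}(U=11)=p_0$. The observed survivors among controls are exactly the always-survivors, so $\mathbb{E}\{Y(0)\mid U=11,X\}=\mu_0(X)$; this is the same step that produced \eqref{eqn::utility_lower_a}. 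Hence
\[
V(\varpi)-V(\pi)=\frac{1}{p_0}\mathbb{E}\left[\{\varpi(X)-\pi(X)\}\,p_0(X)\{m_{11}(X)-\mu_0(X)\}\right],
\]
which is linear in $m_{11}$ with pointwise weight $\{\varpi(X)-\pi(X)\}p_0(X)$.

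\emph{Step 2: maximize pointwise.} The sharp bounds in Theorem~\ref{theo::constraint}(b) hold conditionally on each $X$, and the bounds for different $X$ can be attained jointly. So the supremum over admissible $m_{11}$ is taken pointwise. Where $\varpi(X)>\pi(X)$, the weight is positive and the maximum is at $m_{11}=U_{11}$. Where $\varpi(X)\le\pi(X)$, the maximum is at $m_{11}=L_{11}$; when $\varpi(X)=\pi(X)$ the term vanishes, so either choice works and the indicator split in the statement is consistent. Substituting gives $p_0(X)\{U_{11}(X)-\mu_0(X)\}=\min\{\mu_1(X)p_1(X),p_0(X)\}-p_0(X)\mu_0(X)=A_U(X)$. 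Similarly, $p_0(X)\{L_{11}(X)-\mu_0(X)\}=\max\{p_1\mu_1-p_1+p_0,0\}(X)-p_0(X)\mu_0(X)=A_L(X)$. Both use $p_0(X)>0$ to move $p_0(X)$ inside the max and min. This yields \eqref{eqn:orgrsup}.

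\emph{Main obstacle.} The delicate point is justifying the interchange of the maximum with the expectation. It requires that any measurable selection $m_{11}(X)\in[L_{11}(X),U_{11}(X)]$ is compatible with the observed data law. Specifically, we need a joint distribution of $Y(1)$ across strata $11$ and $10$ within $\{Z=1,S=1,X\}$ that reproduces $\mu_1(X)$ and the mixture weights $p_0(X)/p_1(X)$ and $\{p_1(X)-p_0(X)\}/p_1(X)$. I would argue this directly: given such an $m_{11}(X)$, the protected-stratum mean is forced to be $\{p_1\mu_1-p_0m_{11}\}/(p_1-p_0)$, and this lies in $[0,1]$ exactly when $m_{11}\in[L_{11},U_{11}]$. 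Measurability of $X\mapsto U_{11}(X)$ and $X\mapsto L_{11}(X)$ then lets the pointwise extremal choices define admissible functions.
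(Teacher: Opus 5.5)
Your proposal is correct and follows the paper's proof essentially step for step: write $V(\varpi)-V(\pi)=p_0^{-1}\mathbb{E}[\{\varpi(X)-\pi(X)\}p_0(X)\{m_{11}(X)-\mu_0(X)\}]$, then maximize pointwise by taking $U_{11}$ where $\pi<\varpi$ and $L_{11}$ where $\pi\ge\varpi$. Your extra check that every measurable selection in $[L_{11},U_{11}]$ is compatible with the observed law goes beyond the paper, which simply maximizes over the pointwise band as in the definition of $R_{\sup}$; the check is harmless and not needed for the corollary as stated.
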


\subsection{Optimal policy based on priority scores}
\label{sec:policyoptimalform}
From Corollary~\ref{cor:formsup}, we can write the optimization problem in~\eqref{eqn:optimiproblem} as
\begin{eqnarray}
\label{eq:pi_opt_orig}
\pi^* &\in& \arg\max \mathbb{E}\left[\left\{\pi(X)-\varpi(X)\right\}\left\{\bone(\pi(X)<\varpi(X))A_U
        (X)+\bone(\pi(X)\geq \varpi(X)) A_L(X)\right\}\right]\\&&\nonumber\text{ subject to } \mathbb{E}\left[\pi(X)\left\{p_1(X)-p_0(X)\right\}\right] \geq \mathbb{E}\left[\varpi(X)\left\{p_1(X)-p_0(X)\right\}\right].
\end{eqnarray}
The objective function in~\eqref{eq:pi_opt_orig} is piecewise linear in $\pi(x)$, with a kink at $\pi(x)=\varpi(x)$, and the constraint is linear in $\pi(x)$.
To understand the structure of the optimization problem, consider an infinitesimal modification of the policy from $\pi(x)$ to $\pi(x)+\Delta \pi(x)$ at a covariate value $x$. 
The resulting marginal gain depends on whether the current policy lies below or above the benchmark policy $\varpi(x)$: it equals $A_U(x)\Delta\pi(x)$ when $\pi(x)<\varpi(x)$ and $A_L(x)\Delta\pi(x)$ when $\pi(x)\ge \varpi(x)$. The same modification also produces a marginal cost in the constraint equal to $\{p_1(x)-p_0(x)\}\Delta \pi(x)$. This motivates characterizing policy modifications through the marginal gain-to-cost ratios, given by
\begin{eqnarray}
\label{eq:ratio}
\rho_L(x) &=& \frac{A_L(x)}{p_1
(x)-p_0(x)},\quad \rho_U(x)\ =\  \frac{A_U(x)}{p_1
(x)-p_0(x)},
\end{eqnarray}
for $x$ with $p_1(x)-p_0(x)>0$.
We refer to them as priority scores, following related work on policy learning with constraints \citep[e.g.,][]{sun2021treatment,levis2024intervention}. Specifically, $\rho_L(x)$ characterizes an upward modification from $\varpi(x)$ toward 1. Such a modification increases the survival cost and improves the objective when $\rho_L(x)>0$. In contrast, $\rho_U(x)$ characterizes a downward modification from $\varpi(x)$ toward 0. Such a modification reduces the survival cost and improves the objective when $\rho_U(x)<0$. In the absence of the constraint, the optimal policy modifies the benchmark policy
only when doing so improves the objective. That is, it modifies $\varpi(x)$ upward to 1
for units with positive $\rho_L(x)$, modifies $\varpi(x)$ downward to $0$ for units with negative $\rho_U(x)$, and leaves it unchanged otherwise.
 
Under the constraint in \eqref{eq:pi_opt_orig}, however, this unconstrained solution may no longer be feasible. If the unconstrained solution already satisfies the constraint, then no further adjustment is needed. Otherwise, the policy must be further adjusted to satisfy the constraint, either by reversing some downward modifications from 0 back toward $\varpi(x)$ or further modifying some previously unchanged units from $\varpi(x)$ toward 1. These additional adjustments generally reduce the objective relative to the unconstrained solution. The policy  therefore proceeds  in decreasing order of priority scores, so that each additional unit of cost incurs the smallest possible loss in the objective. The adjustments stop when the required constraint level is satisfied, yielding a threshold-based rule.

To formalize this stopping rule, we introduce a threshold parameter $\eta\leq 0$ on the priority scores. 
For a given threshold $\eta$, units with $\rho_L(X)>\eta$ are assigned to 1, units with $\rho_L(X)\leq \eta<\rho_U(X)$ are assigned to the benchmark level $\varpi(X)$, and the remaining units are assigned to 0. Relative to the unconstrained solution, decreasing $\eta$ from 0 performs two types of adjustments. First, units with $\eta<\rho_L(X)\leq 0$ are additionally moved from $\varpi(X)$ to 1. Second, units with $\eta<\rho_U(X)<0$ have their downward modifications reversed from 0 back to $\varpi(X)$. 
Whether a threshold $\eta$ is feasible depends on the survival contribution generated by the resulting policy. Define
\begin{eqnarray*}
Q(\eta)
&=& \mathbb{E}\!\left[ \left\{\varpi(X)\,\mathbf{1}\{\rho_L(X)\leq \eta<\rho_U(X)\} + \mathbf{1}\{\rho_L(X)>\eta\}\right\}\{p_1(X)-p_0(X)\} \right],\\
\eta^E &=& \sup\{\eta : Q(\eta)\ge \mathbb{E}[\varpi(X)\{p_1(X)-p_0(X)\}]\}, \quad \eta^S\ =\ \min(0,\eta^E).
\end{eqnarray*}
The quantity $Q(\eta)$ is the left-hand side of the survival constraint evaluated under the threshold rule indexed by $\eta$, with $Q(0)$ corresponding to the unconstrained solution.  
The threshold $\eta^E$ is the largest value of $\eta$ for which the corresponding policy satisfies the survival constraint. 
The unconstrained solution corresponds to the threshold $0$. Therefore, $\eta^S=0$ when the unconstrained solution is feasible and $\eta^S=\eta^E<0$ otherwise. 
The next theorem provides the explicit form of the optimal policy.

\begin{theorem}\label{theo:pi_opt}
Suppose that Assumptions~\ref{assum::treatig} and~\ref{assum::monoto} hold, and $p_1(x)-p_0(x)>0$ for all  $x\in\mathcal{X}$.
Define
\begin{eqnarray*}
  R(\eta)&=&\mathbb{E}\left[ \left\{\varpi(X) \bone(\rho_U(X) = \eta) + (1-\varpi(X))\bone(\rho_L(X) = \eta)\right\} \left\{p_1(X)-p_0(X)\right\}\right].
\end{eqnarray*}
 Then the optimal policy $\pi^*(X)$ is given by
 
\noindent {\bf Case 1:} If $\eta^S = 0$, then
\begin{eqnarray*}
\pi^*(X) &=&  \varpi(X)\bone\{\rho_U(X)>0>\rho_L(X)\}+\bone\{\rho_L(X)>0\}.
\end{eqnarray*}
\noindent {\bf Case 2(a):}  If $\eta^S < 0$ and $R(\eta^S)=0$, then
\begin{eqnarray*}
\pi^*(X) &=&  \varpi(X)\bone\{\rho_U(X)>\eta^S>\rho_L(X)\}+\bone\{\rho_L(X)>\eta^S\}.
\end{eqnarray*}
\noindent {\bf Case 2(b):}  If $\eta^S < 0$ and $R(\eta^S)>0$, then
\begin{eqnarray*}
\pi^*(X) &=& \begin{cases}
\alpha \cdot \varpi(X), & \text{if }\  \rho_U(X) = \eta^S,\\
\varpi(X) + \alpha \cdot \{1-\varpi(X)\}, & \text{if }\  \rho_L(X) = \eta^S, \\
 \varpi(X)\bone\{\rho_U(X)>\eta^S>\rho_L(X)\}+\bone\{\rho_L(X)>\eta^S\}, & \text{otherwise},
\end{cases}
\end{eqnarray*}
where 
\begin{eqnarray*}
    \alpha &=& \frac{\mathbb{E}\left[\varpi(X)\left\{p_1(X)-p_0(X)\right\}\right] - Q(\eta^S)}{R(\eta^S)}.
\end{eqnarray*}
\end{theorem}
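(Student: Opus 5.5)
Write $w(x)=p_1(x)-p_0(x)>0$ and $\kappa=\mathbb{E}\{\varpi(X)w(X)\}$. The plan is a Lagrangian (weak duality) argument applied to the concave, piecewise-linear objective in \eqref{eq:pi_opt_orig}.

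\emph{Step 1 (concavity).} Note that $A_U(x)\ge A_L(x)$, because $U_{11}\ge L_{11}$. Hence, for each fixed $x$, the pointwise integrand $g_x(\pi)=(\pi-\varpi)\{\bone(\pi<\varpi)A_U+\bone(\pi\ge\varpi)A_L\}$ equals
\begin{eqnarray*}
g_x(\pi)&=&\min\{(\pi-\varpi)A_U(x),\,(\pi-\varpi)A_L(x)\}.
\end{eqnarray*}
So $g_x$ is concave and piecewise linear, with slope $A_U$ on $[0,\varpi]$ and slope $A_L$ on $[\varpi,1]$.

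\emph{Step 2 (pointwise Lagrangian maximization).} For a multiplier $\lambda=-\eta\ge 0$, consider the Lagrangian
\begin{eqnarray*}
\mathcal{L}(\pi,\lambda)&=&\mathbb{E}\left[g_X(\pi(X))+\lambda\{\pi(X)w(X)\}\right]-\lambda\kappa .
\end{eqnarray*}
Since $w>0$, it can be maximized pointwise by maximizing $\{g_x(\pi)/w(x)-\eta\pi\}$. This function has slope $\rho_U-\eta$ on $[0,\varpi]$ and slope $\rho_L-\eta$ on $[\varpi,1]$, with $\rho_U\ge\rho_L$. The pointwise maximizer is therefore:
\begin{itemize}
\item $1$ if $\rho_L>\eta$;
\item $\varpi$ if $\rho_L<\eta<\rho_U$;
\item $0$ if $\rho_U<\eta$;
\item any point of $[\varpi,1]$ if $\rho_L=\eta$, and any point of $[0,\varpi]$ if $\rho_U=\eta$.
\end{itemize}
This reproduces the threshold form. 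The tie sets contribute the quantity $R(\eta)$: moving from the lower endpoint to the upper endpoint raises the constraint by $\{\varpi\bone(\rho_U=\eta)+(1-\varpi)\bone(\rho_L=\eta)\}w$. Here $Q(\eta)$ is the constraint value when every tie takes its lower endpoint.

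\emph{Step 3 (sufficiency).} The standard argument applies. If $\tilde\pi$ maximizes $\mathcal{L}(\cdot,\lambda)$, is feasible, and satisfies complementary slackness $\lambda\{\mathbb{E}(\tilde\pi w)-\kappa\}=0$, then for any feasible $\pi$ we have
\begin{eqnarray*}
\mathbb{E}\,g(\pi)\;\le\;\mathcal{L}(\pi,\lambda)\;\le\;\mathcal{L}(\tilde\pi,\lambda)\;=\;\mathbb{E}\,g(\tilde\pi).
\end{eqnarray*}
So $\tilde\pi$ is optimal. We verify the three cases in turn.

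\textbf{Case 1.} Take $\lambda=0$. Then $\pi^*$ is a maximizer, choosing ties at the $\varpi$ level (allowed). Feasibility must be checked: $\eta^E\ge0$ together with monotonicity of $Q$ gives $Q(0)\ge\kappa$. The constraint value of $\pi^*$ is at least $Q(0)$, because ties at $\rho_L=0$ contribute with weight $\varpi\ge$ their $Q$-weight; this should be tracked carefully.

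\textbf{Case 2.} Take $\lambda=-\eta^S>0$. We need $\mathbb{E}(\pi^*w)=\kappa$ exactly.
\begin{itemize}
\item First, $Q$ is nonincreasing in $\eta$. As $\eta$ decreases, units move $0\to\varpi$ and $\varpi\to1$, which increases $Q$ because $w>0$.
\item Second, $Q$ is right-continuous. Its left limit at $\eta$ equals $Q(\eta)+R(\eta)$: the sets $\{\rho_L>\eta'\}$ increase to $\{\rho_L\ge\eta\}$ as $\eta'\uparrow\eta$, and the sets $\{\rho_U\le\eta'\}$ increase to $\{\rho_U<\eta\}$ ... the bookkeeping here needs care.
\item By the definition of $\eta^E$ as a supremum, these facts give $Q(\eta^S)\le\kappa\le Q(\eta^S)+R(\eta^S)$ (or $Q(\eta^S-)$).
\end{itemize}
In Case 2(a), $R=0$ forces equality, and the stated policy, which takes lower endpoints on the null tie sets, attains $\kappa$. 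In Case 2(b), the mixing weight $\alpha\in[0,1]$ interpolates on the tie sets so that the constraint value is exactly $Q(\eta^S)+\alpha R(\eta^S)=\kappa$. The policies are exactly the lower endpoint plus $\alpha$ times the tie-set increments: $\alpha\varpi$ on $\{\rho_U=\eta^S\}$ and $\varpi+\alpha(1-\varpi)$ on $\{\rho_L=\eta^S\}$.

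\emph{Main obstacle.} I expect the main difficulty to be the monotonicity and continuity analysis of $Q$. Specifically, we must pin down on which side the ties fall, so that $Q(\eta^S)\le\kappa\le Q(\eta^S)+R(\eta^S)$ holds with $Q$ defined using strict and weak inequalities exactly as written. We must also handle the overlap case where $\rho_L=\rho_U=\eta^S$ (when $A_L=A_U$). I would first prove the one-sided limit identities for $Q$ using monotone convergence, then read off the sandwich from the sup definition of $\eta^E$.
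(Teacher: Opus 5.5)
Your plan follows the paper's route. The paper proves the general version (Theorem~\ref{theo:pi_opt_ge0}, from which this theorem follows immediately) in three steps: it maximizes $(\pi-\varpi)\{A_U\bone(\pi<\varpi)+A_L\bone(\pi\ge\varpi)\}-\eta^S\pi w$ pointwise, checks $\mathbb{E}(\pi^*w)=Q(0)\ge\kappa$ or $\mathbb{E}(\pi^*w)=Q(\eta^S)+\alpha R(\eta^S)=\kappa$, and then runs exactly your weak-duality chain. Your sandwich $Q(\eta^E)\le\kappa\le Q(\eta^E)+R(\eta^E)$ is correct. It is cleanest after noting that, because $\rho_L\le\rho_U$, $Q(\eta)=\mathbb{E}[\{\varpi\bone(\rho_U>\eta)+(1-\varpi)\bone(\rho_L>\eta)\}w]$, which is right-continuous with left limit $Q(\eta)+R(\eta)$. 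This fills a hole in the paper, which asserts $\alpha\in[0,1]$ ``from the definition''.

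Two of your Case~1 steps fail as written. They fail because the statement itself is defective at ties, and the paper's proof passes over the same points.

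\emph{First gap.} $\eta^E\ge 0$ does not give $Q(0)\ge\kappa$. If $\eta^E=0$ is not attained, your own right-continuity argument yields only $Q(0)\le\kappa\le Q(0)+R(0)$. For example, take degenerate $X$ with $\rho_U=0>\rho_L$ and $\varpi\in(0,1)$. Then $Q(\eta)=0<\kappa$ for $\eta\ge0$ and $Q(\eta)\ge\kappa$ for $\eta<0$, so $\eta^S=0$. Yet the Case~1 policy is identically $0$ and infeasible. The remedy is to randomize on the ties at $\eta=0$ as in Case~2(b). With $\lambda=0$ complementary slackness is automatic, so your Step~3 still applies.

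\emph{Second gap.} Cases~1 and~2(a) use the strict inequality $\rho_U>\eta^S>\rho_L$, so on $\{\rho_L=\eta^S<\rho_U\}$ they assign $0$, not $\varpi$ as you assert. There $0$ is not a Lagrangian maximizer when $\varpi>0$. Whenever that set carries mass, the constraint value falls below $Q(\eta^S)$ rather than being at least $Q(\eta^S)$. Degenerate $X$ with $\rho_L=0<\rho_U$ and $\varpi=1/2$ already gives an infeasible Case~1 policy. In Case~2(a) this set need not be null, because $R(\eta^S)=0$ only forces $\varpi=1$ on it. You must read the middle indicator as $\rho_U>\eta^S\ge\rho_L$, which is what $Q$ encodes and what the paper's proof silently uses.

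\emph{Overlap set.} The set $\{\rho_L=\rho_U=\eta^S\}$ in Case~2(b) cannot be handled for the statement as written. There the base weight in $Q$ is $0$ and the tie weight in $R$ is $\varpi+(1-\varpi)=1$. With the common weight $\alpha$, the identity $\mathbb{E}(\pi^*w)=Q(\eta^S)+\alpha R(\eta^S)$ therefore requires $\pi^*=\alpha$ there, which is neither $\alpha\varpi$ nor $\varpi+\alpha(1-\varpi)$. The paper's proof adds both branches and gets $\varpi+\alpha$.

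With these three amendments to the statement, your argument is a complete proof.
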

The condition on $p_1(x)-p_0(x)$  ensures that the  two priority scores $\rho_U(x)$ and $\rho_L(x)$ are finite.
Under Assumptions~\ref{assum::treatig} and~\ref{assum::monoto},  this condition implies a non-zero proportion of protected units across all covariate values. Similar positivity conditions are common in the literature on constrained optimal treatment regimes \citep[e.g.,][]{qiu2021optimal,sun2021treatment,qiu2022individualized}. 
We emphasize, however,  that the assumption is invoked solely to streamline exposition. The optimal policy can still be derived without it, and the corresponding result is provided in the Supplementary Material.

In Case 1, the unconstrained solution already satisfies the required constraint level and is therefore optimal. We refer to this case as the non-binding case.

In Case 2, the unconstrained solution fails to satisfy the required constraint level and is therefore not optimal. We refer to this case as the binding case. In this case, the policy must adjust assignments among units with priority scores below 0 until the threshold $\eta^S$ is reached. The optimal policy first applies the threshold rule determined by $\eta^S$, assigning treatment whenever $\rho_L(X)>\eta^S$ and reversing downward modifications whenever $\rho_U(X)>\eta^S$. Whether additional modifications are required depends on the behavior of the priority scores at the threshold.
The quantity $R(\eta^S)$ measures the contribution of units whose priority score is exactly equal to $\eta^S$. 
 If $R(\eta^S)=0$, no units lie exactly at the threshold, so modifying all units with priority scores above $\eta^S$ suffices to satisfy the constraint with equality. This corresponds to Case 2(a).
If $R(\eta^S)>0$, then marginal units do exist. In this case, satisfying the constraint may require partial modification of the units at the threshold. The optimal policy therefore randomizes among these marginal units with probability $\alpha$, chosen so that the survival constraint holds with equality. This corresponds to Case 2(b).

The three cases in Theorem~\ref{theo:pi_opt} can be combined into the
following unified expression, which will be useful in subsequent developments.
\begin{eqnarray}
\label{eqn:combinedpi} \pi^*(X) &=&
\begin{cases}
\alpha \cdot \varpi(X), & \text{if } \eta^S<0, R(\eta^S)>0, \ \text{and}\  \rho_U(X) = \eta^S,\\
\varpi(X) + \alpha \cdot \{1-\varpi(X)\}, & \text{if } \eta^S<0, R(\eta^S)>0, \ \text{and}\   \rho_L(X) = \eta^S, \\
 \varpi(X)\bone\{\rho_U(X)>\eta^S>\rho_L(X)\}& \\
 \hspace{0.3cm}+\bone\{\rho_L(X)>\eta^S\}, & \text{otherwise}.
\end{cases}
\end{eqnarray}

In the Supplementary Material, we also derive the oracle policy, which assumes that the partially identified quantities are known, and establish the regret of \(\pi^*\) relative to the oracle policy.
\subsection{Estimation of optimal policy}
\label{sec:identification}
We estimate the optimal policy based on Theorem~\ref{theo:pi_opt}. We consider a plug-in estimation approach, which first estimates the priority scores $\rho_L(X)$ and $\rho_U(X)$, together with the threshold $\eta^S$, and then substitutes them into the analytic form of the optimal policy in \eqref{eqn:combinedpi}. The priority scores are obtained by plugging nuisance estimators into their defining formulas. The estimation of $\eta^S$, however, requires particular care. As Theorem~\ref{theo:pi_opt} shows, even a small estimation error in $\eta^S$ can lead to a discontinuous change in the estimated policy, particularly when the estimated and true thresholds lie on opposite sides of zero.

The threshold $\eta^S$ is determined by the constraint level 
$\mathbb{E}[\varpi(X)\{p_1(X)-p_0(X)\}]$ and the threshold functions $Q(\eta)$ and $R(\eta)$. A naive approach estimates $\eta^S$ by first constructing estimators of these components, and then solving the empirical version of the defining formula for $\eta^S$. However, such a plug-in estimator generally cannot achieve a root-$n$ convergence rate, as this would require the nuisance estimators themselves to converge at the root-$n$ rate, which is often unrealistic in practice.
 To address these issues, we propose two improvements for estimating $\eta^S$.
First, we construct an EIF-based estimator of 
the constraint level to  improve estimation accuracy.
Second, similar to Section~\ref{sec:effpolicyevl}, we adopt a hybrid estimation strategy for $Q(\eta)$ and $R(\eta)$, since they involve non-smooth indicator functions. This leads to the following estimators.
\begin{estimator}
\label{est:eif}
We propose a three-step plug-in estimator for the optimal policy $\pi^*(X)$.
\begin{itemize}
\item Step 1: Estimate the priority scores.
\begin{enumerate}[(a)]
\item Estimate the propensity score $\hat{e}(X)$, the observed survival probability $\hat{p}_z(X)$, and the outcome mean $\hat{\mu}_z(X)$ for $z=0,1$.
\item Compute the priority scores $\hat{\rho}_{L}(X)$ and $\hat{\rho}_{U}(X)$ by plugging $\hat{e}(X)$, $\hat{p}_z(X)$, and $\hat{\mu}_z(X)$ into~\eqref{eq:ratio}.
\end{enumerate}
\item Step 2: Estimate the marginal threshold.
\begin{enumerate}[(a)]
\item Estimate the constraint level $\mathbb{E}\left[\varpi(X)\left\{p_1(X)-p_0(X)\right\}\right]$ using the EIF-based estimator
\begin{eqnarray*}
\hat{C}_{\textnormal{eif}}&=&\frac{1}{n}\sum_{i=1}^n \varpi(X_i)\left\{\hat{\psi}_{p_1}(X_i)-\hat{\psi}_{p_0}(X_i)\right\}.
\end{eqnarray*}
    \item Construct the estimators of $Q(\eta)$ and $R(\eta)$. For any candidate threshold $\eta$, define the plug-in estimators for $Q(\eta)$ and $R(\eta)$ as 
    \begin{eqnarray*}
\hat{Q}(\eta) =\Pn  \hat{Q}(\cdot;\eta) , \quad \hat{R}(\eta)=\Pn \hat{R}(\cdot;\eta),
\end{eqnarray*}
where the pointwise estimating functions are given by
    \begin{eqnarray*}
    \hat{Q}(x;\eta) &=& \left[\varpi(x)\bone\{\hat{\rho}_{U}(x) > \eta\}+\{1-\varpi(x)\}\bone\{\hat{\rho}_{L}(x) > \eta\}  \right]\left\{\hat{\psi}_{p_1}(x)-\hat{\psi}_{p_0}(x)\right\},\\
    \hat{R}(x;\eta) &=&  \left[\varpi(x)\bone\{\hat{\rho}_{U}(x) = \eta\}+\{1-\varpi(x)\}\bone\{\hat{\rho}_{L}(x) = \eta\}  \right]\left\{\hat{\psi}_{p_1}(x)-\hat{\psi}_{p_0}(x)\right\}.
    \end{eqnarray*}
 Then, the  threshold estimators are given by $\hat{\eta}^E= \sup \{\eta : \hat{Q}(\eta) \geq \hat{C}_{\textup{eif}}\}$ and $\hat{\eta}^S = \min\{\hat{\eta}^E, 0\}$.
\end{enumerate}
\item Step 3: Compute the estimated optimal policy $\hat{\pi}^*(X) $ by plugging $\hat{\rho}_{L}(X)$, $\hat{\rho}_{U}(X)$, and $\hat{\eta}^S$ into~\eqref{eqn:combinedpi}.
\end{itemize}
\end{estimator}
Step 1(a) estimates the nuisance models using an independent sample, consistent with the evaluation procedure described in Section~\ref{sec:effpolicyevl}. Steps 2(a) and 2(b) implement the EIF correction and hybrid estimation strategy discussed above.

\subsection{Statistical guarantees for the estimated optimal policy}\label{sec:optimalbound}

To establish theoretical guarantees, we first introduce the following regularity condition.

\begin{assumption}\label{assum:c311}
\begin{enumerate}[(a)]
    \item The function $ Q(\eta) $  is differentiable with a nonzero derivative in a neighborhood of $\eta ^E$. 
   \item If $\eta^E<0$, then $\hat{\eta}^E$ solves $\hat{Q}(\hat{\eta}^E)\geq \hat{C}_{\textup{eif}}$ up to $o_{\mathbb{P}}(n^{-1/2})$ error, i.e., $\hat{Q}(\hat{\eta}^E)-\hat{C}_{\textup{eif}}=o_{\mathbb{P}}(n^{-1/2})$.
\end{enumerate}

\end{assumption}

Assumption \ref{assum:c311}(a) requires \(Q(\eta)\) to vary smoothly around the true threshold. This condition holds when the priority scores have positive density at the threshold and the relevant conditional expectations are continuous at the threshold. It is a standard regularity condition for the asymptotic distribution  in $Z$-estimation problems \citep{van2000asymptotic}.

Assumption~\ref{assum:c311}(b) requires the estimated threshold to solve the empirical constraint equation up to an error that is asymptotically negligible relative to the root-$n$ scale.  It places restrictions on both the numerical accuracy of the threshold search and the local behavior of the empirical threshold function near the cutoff. A similar condition has also been used in the analysis of threshold-based policies in a different context \citep{levis2024intervention}.

Next, we introduce a margin condition that controls the behavior of the true and estimated priority scores near their boundaries $\eta^S$ and $\hat{\eta}^S$, respectively.
\begin{assumption}\label{assum:margincondition1}
\begin{enumerate}[(a)]
   \item There exist $\beta > 0$ and $C$ such that for any $\varepsilon \geq 0$, 
   \begin{eqnarray*}
    \mathbb{P}\left(|\rho_{L}(X)-\eta^S|\leq \varepsilon\right)\leq C \varepsilon^\beta,\quad \mathbb{P}\left(|\rho_{U}(X)-\eta^S|\leq \varepsilon\right)\leq C \varepsilon^\beta.
    \end{eqnarray*}
    \item The boundary events for the estimated quantities satisfy
     \begin{eqnarray*}
 \mathbb{P}\left\{\hat{\rho}_L(X)=\hat{\eta}^S\right\}=O_{\mathbb{P}}(n^{-1/2}),  \quad \mathbb{P}\left\{\hat{\rho}_U(X)=\hat{\eta}^S\right\}=O_\mathbb{P}(n^{-1/2}).
\end{eqnarray*}
\end{enumerate}
\end{assumption}
Assumption~\ref{assum:margincondition1}(a) is a margin condition that limits
the probability mass of the priority scores $\rho_L(X)$ and $\rho_U(X)$
in a shrinking neighborhood of $\eta^S$.
This condition serves  two purposes. First, it controls the estimation error induced by plug-in estimation of the priority scores and the threshold. Second, it rules out the boundary events $\rho_L(X)=\eta^S$ and $\rho_U(X)=\eta^S$. Specifically, taking $\varepsilon=0$ in Assumption~\ref{assum:margincondition1}(a) implies $\Pb(\rho_{U}(X)=\eta^S)=\Pb(\rho_{L}(X)=\eta^S)=0$.
Consequently, the optimal policy $\pi^*(X)$ can be represented almost surely as
\begin{eqnarray*}
\varpi(X)\bone\{\rho_U(X)>\eta^S\}
+
(1-\varpi(X))\bone\{\rho_L(X)>\eta^S\}.
\end{eqnarray*}

Assumption~\ref{assum:margincondition1}(b) requires the boundary events
$\hat{\rho}_L(X)=\hat{\eta}^S$ and $\hat{\rho}_U(X)=\hat{\eta}^S$
to occur with probability converging to zero at root-$n$ rate.
Similar technical conditions have been adopted in other constrained policy learning settings \citep[e.g.,][]{qiu2021optimal,qiu2022individualized}.
As a consequence, with probability tending to one, the estimated
optimal policy reduces to
\begin{eqnarray*}
\varpi(X)\mathbf{1}\{\hat{\rho}_U(X)>\hat{\eta}^S\}
+
(1-\varpi(X))\mathbf{1}\{\hat{\rho}_L(X)>\hat{\eta}^S\},
\end{eqnarray*}
thereby avoiding explicit treatment of the boundary cases.

Given the assumptions above, we first establish the consistency of the estimated threshold $\hat{\eta}^S$.
\begin{proposition}\label{lem:consistent}
    Suppose that Assumptions~\ref{assum::treatig},~\ref{assum::monoto},~\ref{assum:rates}(a), ~\ref{assum:c311},~\ref{assum:margincondition1}(a), and Conditions (a), (b), and (c) in Theorem~\ref{thm:convergence_margin} hold, and $p_1(x)-p_0(x)>0$ for all  $x\in\mathcal{X}$. Then, $ \hat{\eta}^S\xrightarrow[]{p}\eta^S$.
\end{proposition}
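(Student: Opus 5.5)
Since $\hat\eta^S=\min(\hat\eta^E,0)$, $\eta^S=\min(\eta^E,0)$, and $\min(\cdot,0)$ is continuous, it suffices to show $\hat\eta^E\xrightarrow{p}\eta^E$ whenever $\eta^E<0$. When $\eta^E\ge 0$ we show only that $\Pb(\hat\eta^E<-\varepsilon)\to 0$ for each $\varepsilon>0$, since this already gives $\hat\eta^S\to 0=\eta^S$. The approach is the standard argmin/Z-estimator consistency argument: (i) uniform (or at least pointwise at the relevant points) convergence of $\hat Q(\eta)-\hat C_{\textup{eif}}$ to $Q(\eta)-C$, where $C=\E[\varpi(X)\{p_1(X)-p_0(X)\}]$; and (ii) identifiability of $\eta^E$ as a well-separated crossing point, which follows from Assumption~\ref{assum:c311}(a).

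\textbf{Step 1: $\hat C_{\textup{eif}}\to C$.} Write $\hat C_{\textup{eif}}-C=(\Pn-\Pb)\varpi(\psi_{p_1}-\psi_{p_0})+o_\Pb(n^{-1/2})+O_\Pb(\sum_z\|\hat\lambda_z-\lambda_z\|_2\|\hat p_z-p_z\|_2)$. This is the usual doubly robust decomposition, as in \eqref{eqn::biasD}, using sample splitting (condition (a)) and overlap (condition (c)). By Assumption~\ref{assum:rates}(a), $\hat C_{\textup{eif}}\xrightarrow{p} C$.

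\textbf{Step 2: $\hat Q(\eta)\to Q(\eta)$ for fixed $\eta$.} Decompose
\[
\hat Q(\eta)-Q(\eta)=(\Pn-\Pb)\hat Q(\cdot;\eta)+\Pb\{\hat Q(\cdot;\eta)-\tilde Q(\cdot;\eta)\}+\Pb\{\tilde Q(\cdot;\eta)\}-Q(\eta).
\]
Here $\tilde Q(\cdot;\eta)$ uses the estimated indicators but $p_1-p_0$ in place of $\hat\psi_{p_1}-\hat\psi_{p_0}$.

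The empirical term is $O_\Pb(n^{-1/2})$ conditionally on the training sample, because $\hat Q(\cdot;\eta)$ has bounded second moment by condition (c).

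The second term equals $\Pb[\text{indicator}\cdot\{\E(\hat\psi_{p_1}-\hat\psi_{p_0}\mid X)-(p_1-p_0)\}]$, which is bounded by the product-rate term and hence is $o_\Pb(1)$.

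The third term is bounded by $\Pb(\bone\{\hat\rho_L>\eta\}\neq\bone\{\rho_L>\eta\})+\Pb(\bone\{\hat\rho_U>\eta\}\neq\bone\{\rho_U>\eta\})$. For any $t>0$, each probability is at most $\Pb(|\rho_L-\eta|\le t)+\Pb(|\hat\rho_L-\rho_L|>t)$. The first piece is small for small $t$ as long as $\rho_L(X)$ has no atom at $\eta$. For $\eta=\eta^S$ this is guaranteed by Assumption~\ref{assum:margincondition1}(a). For $\eta$ near $\eta^E$ it follows from Assumption~\ref{assum:c311}(a), since differentiability of $Q$ excludes jumps, and jumps are exactly atoms weighted by $p_1-p_0>0$. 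The second piece tends to $0$ by condition (b): $\|\hat\rho-\rho\|_2=o_\Pb(1)$ because $p_1-p_0>0$. I would use continuity of the denominator to get convergence in probability of the ratio, which may require $p_1-p_0$ bounded away from zero, or truncation on a set of small mass.

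\textbf{Step 3: conclude.} By Assumption~\ref{assum:c311}(a), $Q$ is nonincreasing, as $\eta$ increases fewer units are moved up. It is strictly monotone near $\eta^E$, so for small $\varepsilon>0$ we have $Q(\eta^E-\varepsilon)>C>Q(\eta^E+\varepsilon)$.

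\emph{Upper deviation.} Although $\hat Q$ may be nonmonotone, $\hat Q(\eta^E+\varepsilon)\to Q(\eta^E+\varepsilon)<C$. To rule out $\hat\eta^E>\eta^E+\varepsilon$ we need $\hat Q(\eta)<\hat C_{\textup{eif}}$ for all $\eta>\eta^E+\varepsilon$, which requires a uniform statement. I would get it from a Glivenko–Cantelli argument over the class of half-line indicators composed with fixed $\hat\rho$, which is VC given the training sample, plus the monotone envelope. Alternatively, when $\eta^E<0$, I would use Assumption~\ref{assum:c311}(b): it gives $\hat Q(\hat\eta^E)-\hat C_{\textup{eif}}\to 0$, and combined with the uniform convergence this forces $Q(\hat\eta^E)\to C$, hence $\hat\eta^E\to\eta^E$ by well-separatedness.

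\emph{Lower deviation.} Pointwise convergence at $\eta^E-\varepsilon$ gives $\hat Q(\eta^E-\varepsilon)>\hat C_{\textup{eif}}$ with probability tending to one, so $\hat\eta^E\ge\eta^E-\varepsilon$.

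The main obstacle is the uniform convergence of $\hat Q(\eta)$ over $\eta$ and the handling of $\hat\rho$'s ratio form. I expect that conditioning on the independent training sample reduces it to a VC class of monotone threshold functions, so it should be routine once stated carefully.
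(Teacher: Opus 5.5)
Your proof is correct and shares the paper's skeleton: consistency of $\hat C_{\textup{eif}}$, convergence of $\hat Q$ at $\eta^E\pm\varepsilon$ with the plug-in indicator error controlled by continuity of $Q$ near $\eta^E$, and then bracketing. It differs from the paper in two useful ways.

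First, the paper splits the empirical term into $(\Pn-\Pb)$ applied to the true-indicator function plus a difference term. It bounds the difference term with an auxiliary lemma that invokes the margin condition of Assumption~\ref{assum:margincondition1}(a). Your conditional Chebyshev bound on $(\Pn-\Pb)\hat Q(\cdot;\eta)$ is simpler, and it shows that condition is not needed for consistency.

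Second, and more importantly, the paper proves only pointwise convergence, despite the heading ``uniform''. It then reads off $\hat\eta^E\le\eta^E+\epsilon$ from $\hat Q(\eta^E+\epsilon)<\hat C_{\textup{eif}}$, which tacitly treats $\hat Q$ as nonincreasing. It is not nonincreasing in general, because the weights $\hat\psi_{p_1}-\hat\psi_{p_0}$ can be negative; for example, they equal $\hat p_1(1-1/\hat e)-\hat p_0<0$ when $Z=1$, $S=0$. You correctly noticed this, and your uniform step is exactly what closes the gap.

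When you write that step, do not aim for $\sup_\eta|\hat Q(\eta)-Q(\eta)|\to 0$, as your alternative route via Assumption~\ref{assum:c311}(b) does. $Q$ is only known to be continuous near $\eta^E$, and near a jump of $Q$ elsewhere the plug-in indicator error need not vanish uniformly. A one-sided bound is enough:
\begin{itemize}
\item Given the training sample, $\sup_\eta|(\Pn-\Pb)\hat Q(\cdot;\eta)|=O_\Pb(n^{-1/2})$ by your VC argument.
\item The bias term is bounded uniformly in $\eta$ by the product-rate term.
\item The inequality $\bone\{\hat\rho>\eta\}\le\bone\{\rho>\eta-t\}+\bone\{|\hat\rho-\rho|>t\}$, combined with monotonicity of $Q$, then gives
\[
\sup_{\eta>\eta^E+\varepsilon}\hat Q(\eta)\le Q(\eta^E+\varepsilon/2)+\Pb\{|\hat\rho_U(X)-\rho_U(X)|>\varepsilon/2\}+\Pb\{|\hat\rho_L(X)-\rho_L(X)|>\varepsilon/2\}+o_\Pb(1).
\]
\end{itemize}
Here $Q(\eta^E+\varepsilon/2)<C$ by the definition of $\eta^E$, so no differentiability is needed on this side.

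Two minor fixes. Continuity of $Q$ only excludes atoms of $\rho_U$ and $\rho_L$ weighted by $\varpi(p_1-p_0)$ and $(1-\varpi)(p_1-p_0)$, so keep those weights in the mismatch bound, as Lemma~\ref{lem:indicator_convergence} does. Also, you only need $\Pb\{|\hat\rho(X)-\rho(X)|>t\}=o_\Pb(1)$. This follows from the continuous mapping theorem because $p_1-p_0>0$ pointwise, so you need neither a lower bound on $p_1-p_0$ nor an $L_2$ rate for $\hat\rho$.
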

Proposition~\ref{lem:consistent} implies that, when $\eta^S$ is negative, the estimated threshold $\hat{\eta}^S$ is also negative with probability tending to one.
This property is crucial for establishing the asymptotic behavior of the estimated optimal policy.

We next establish a guarantee regarding the feasibility of the estimated policy.
\begin{theorem}\label{thm:constraintviolation}
Suppose that Assumptions~\ref{assum::treatig},~\ref{assum::monoto},~\ref{assum:rates}(a),~\ref{assum:c311},~\ref{assum:margincondition1}, and Conditions (a), (b), and (c) in Theorem~\ref{thm:convergence_margin} hold, and $p_1(x)-p_0(x)>0$ for all  $x\in\mathcal{X}$. Then,
    \begin{eqnarray*}
        B(\hat{\pi}^*)-B(\varpi)&=&\begin{cases}
            \frac{Q(0)-\mathbb{E}\left[\varpi(X)\{p_1(X)-p_0(X)\}\right]}{p_1-p_0}+o_{\mathbb{P}}(1), & \eta^S=0,\\
            o_{\mathbb{P}}(1), & \eta^S<0.
        \end{cases}
    \end{eqnarray*}
\end{theorem}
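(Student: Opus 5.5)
By Theorem~\ref{theo::constraint}(a) we have $B(\pi)=\mathbb{E}[\pi(X)\{p_1(X)-p_0(X)\}]/(p_1-p_0)$ for every $\pi$. Hence
\[
(p_1-p_0)\{B(\hat{\pi}^*)-B(\varpi)\}=\mathbb{E}\bigl[\hat{\pi}^*(X)\{p_1(X)-p_0(X)\}\bigr]-\mathbb{E}\bigl[\varpi(X)\{p_1(X)-p_0(X)\}\bigr],
\]
where the expectation is over a new $X$, conditional on the training sample. By Assumption~\ref{assum:margincondition1}(b), the boundary events have vanishing probability. So with probability tending to one, up to an $O_\mathbb{P}(n^{-1/2})$ term,
\[
\hat{\pi}^*(X)=\varpi(X)\bone\{\hat\rho_U(X)>\hat\eta^S\}+\{1-\varpi(X)\}\bone\{\hat\rho_L(X)>\hat\eta^S\}.
\]
By Assumption~\ref{assum:margincondition1}(a) with $\varepsilon=0$, the same representation holds almost surely for $\pi^*$ with $\rho_U,\rho_L,\eta^S$. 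In this notation $Q(\eta)=\mathbb{E}[\{\varpi\bone(\rho_U>\eta)+(1-\varpi)\bone(\rho_L>\eta)\}(p_1-p_0)]$ whenever $\eta\le 0$. The indicator $\bone\{\rho_L\le\eta<\rho_U\}$ in the definition is equivalent to $\bone(\rho_U>\eta)-\bone(\rho_L>\eta)$, since $\rho_L\le\rho_U$ (because $A_L\le A_U$). Therefore the main term equals $\mathbb{E}[\hat\pi^*(X)\{p_1-p_0\}]=Q(\eta^S)+\Delta_n$, where
\[
\Delta_n=\mathbb{E}\bigl[\bigl\{\varpi(\bone(\hat\rho_U>\hat\eta^S)-\bone(\rho_U>\eta^S))+(1-\varpi)(\bone(\hat\rho_L>\hat\eta^S)-\bone(\rho_L>\eta^S))\bigr\}(p_1-p_0)\bigr].
\]

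The key step is to show $\Delta_n=o_\mathbb{P}(1)$. Since $p_1-p_0$ is bounded, it suffices to show that $\Pb(\bone(\hat\rho_U>\hat\eta^S)\ne\bone(\rho_U>\eta^S))\to 0$ in probability, and likewise for $\rho_L$. A disagreement can only occur if $|\rho_U(X)-\eta^S|\le|\hat\rho_U(X)-\rho_U(X)|+|\hat\eta^S-\eta^S|$. Fix $\varepsilon>0$ and split on whether the right-hand side exceeds $\varepsilon$. On the event where it does not, the margin condition Assumption~\ref{assum:margincondition1}(a) bounds the probability by $C\varepsilon^\beta$. The complementary event has probability at most
\[
\Pb\bigl(|\hat\rho_U-\rho_U|>\varepsilon/2\bigr)+\bone\bigl(|\hat\eta^S-\eta^S|>\varepsilon/2\bigr).
\]
The first term vanishes because $\hat\rho_U$ is consistent in $L_2$ (hence in probability). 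This follows from Condition (b) and the continuous-mapping argument: $A_U$ is built from $\min$ and products of $p_z,\mu_z$, and $p_1-p_0>0$. The second term is $o_\mathbb{P}(1)$ by Proposition~\ref{lem:consistent}. Letting $\varepsilon\downarrow0$ gives $\Delta_n=o_\mathbb{P}(1)$.

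\textbf{Main obstacle.} The step I expect to be delicate is exactly this one. The denominator $p_1(x)-p_0(x)$ is only assumed positive, not bounded away from zero, so consistency of $\hat\rho$ in probability must be argued pointwise through continuous mapping rather than through a uniform bound. I would handle this by restricting to $\{p_1-p_0>\kappa\}$ and noting that the complement contributes at most $\kappa$ to $\Delta_n$, because the integrand is multiplied by $p_1-p_0$.

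It remains to evaluate $Q(\eta^S)$ in each case. If $\eta^S=0$, then $Q(\eta^S)=Q(0)$, and dividing by $p_1-p_0$ gives the first line of the statement. If $\eta^S<0$, then $\eta^S=\eta^E$. By Assumption~\ref{assum:c311}(a), $Q$ is continuous near $\eta^E$. Since $Q$ is nonincreasing and $\eta^E$ is the supremum of feasible thresholds, this gives $Q(\eta^E)=\mathbb{E}[\varpi(p_1-p_0)]$, equivalently $R(\eta^S)=0$ under the margin condition. Hence $B(\hat\pi^*)-B(\varpi)=\Delta_n/(p_1-p_0)=o_\mathbb{P}(1)$, which is the second line. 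Finally, the $O_\mathbb{P}(n^{-1/2})$ boundary contributions from Assumption~\ref{assum:margincondition1}(b) are absorbed into the $o_\mathbb{P}(1)$ term.
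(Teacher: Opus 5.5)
Your proposal is correct and follows essentially the same route as the paper. Both arguments replace $\hat\pi^*$ and $\pi^*$ by their boundary-free versions using Assumption~\ref{assum:margincondition1}, show that $\E[\{\hat\pi^*_f(X)-\pi^*_f(X)\}\{p_1(X)-p_0(X)\}]=o_\Pb(1)$ from consistency of the scores and of $\hat\eta^S$, and identify the remaining term as the slack $Q(\eta^S)-\E[\varpi(X)\{p_1(X)-p_0(X)\}]$ of the true policy.

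The differences are minor:
\begin{itemize}
\item You bound the first term in one step using the margin condition. The paper splits it into a threshold-perturbation piece, handled by continuity of $Q$, and a score-perturbation piece, handled by Lemma~\ref{lem:indicator_convergence}.
\item In the binding case you obtain $Q(\eta^E)=\E[\varpi(X)\{p_1(X)-p_0(X)\}]$ directly from continuity of $Q$ at $\eta^E$ (Assumption~\ref{assum:c311}(a)). The paper instead cites this equality from the proof of Theorem~\ref{theo:pi_opt_ge0}, where it is attributed to the definition of $\alpha$. Your route is slightly more careful, because $\alpha$ is undefined when $R(\eta^S)=0$, and that is the case here under the margin condition.
\end{itemize}
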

Theorem~\ref{thm:constraintviolation} characterizes the difference between the constraint levels of the estimated optimal policy and that of benchmark policy. We refer to this quantity as the constraint slackness. Positive values indicate that the estimated policy satisfies the constraint with slack, whereas negative values correspond to constraint violation.
In the non-binding case, the optimal policy coincides with the unconstrained solution. By definition, $Q(0)$ exceeds the constraint level $\mathbb{E}[\varpi(X)\{p_1(X)-p_0(X)\}]$. Consequently, Theorem~\ref{thm:constraintviolation} shows that the constraint slackness converges to a strictly positive constant. The constraint slackness decreases as the constraint level increases. In the binding case, the unconstrained solution is infeasible. Theorem~\ref{thm:constraintviolation} states that the constraint slackness is $o_{\mathbb{P}}(1)$, implying that $\hat{\pi}^*$ is asymptotically feasible.

 Finally, we derive a rate bound on the excess worst-case regret  of the estimated policy $\hat{\pi}^*$ relative to the true optimal policy $\pi^*$.
\begin{theorem}\label{thm:regretbound}
Suppose that Assumptions~\ref{assum::treatig},~\ref{assum::monoto},~\ref{assum:rates}(a),~\ref{assum:c311},~\ref{assum:margincondition1}, and Conditions (a), (b), and (c) in Theorem~\ref{thm:convergence_margin} hold, and $p_1(x)-p_0(x)>0$ for all  $x\in\mathcal{X}$. Then, 
\begin{eqnarray*}
    R_{\sup}(\hat{\pi}^*,\varpi)-R_{\sup}(\pi^*,\varpi)&=&O_{\mathbb{P}}(n^{-1/2}+R_{1}+R_{2}),
\end{eqnarray*}
where \begin{eqnarray*}
    R_{1}&=&\frac{1}{p_0}(\|\hat{\rho}_{L}-\rho_{L}\|_{\infty}+|\hat{\eta}^S-\eta^S | )^{1+\beta},\\
    R_{2}&=&\frac{1}{p_0}(\|\hat{\rho}_{U}-\rho_{U}\|_{\infty}+|\hat{\eta}^S-\eta^S|)^{1+\beta}.
\end{eqnarray*}
\end{theorem}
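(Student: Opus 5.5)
Using Corollary~\ref{cor:formsup}, I would write $R_{\sup}(\pi,\varpi)=-p_0^{-1}\,\mathbb{E}[G(\pi;X)]$. Here $G(\pi;x)=\{\pi(x)-\varpi(x)\}\{\bone(\pi<\varpi)A_U(x)+\bone(\pi\ge\varpi)A_L(x)\}$, which is concave piecewise linear in $\pi(x)$ with slopes $A_U$ below $\varpi$ and $A_L$ above $\varpi$; note $A_L\le A_U$ since $L_{11}\le U_{11}$. Under Assumption~\ref{assum:margincondition1} both $\pi^*$ and, with probability tending to one, $\hat\pi^*$ take the threshold form
\[
\varpi\bone\{\rho_U>\eta\}+(1-\varpi)\bone\{\rho_L>\eta\},
\]
with $\eta=\eta^S$ and $\eta=\hat\eta^S$, respectively. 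The plan is a Lagrangian argument. Set $\Delta_p=p_1-p_0>0$ and define
\[
\mathcal{L}_\eta(\pi)=\mathbb{E}[G(\pi;X)]-\eta\,\mathbb{E}[\{\pi(X)-\varpi(X)\}\Delta_p(X)].
\]
Since $\eta\le0$, maximizing $\mathcal{L}_{\eta^S}$ pointwise is equivalent to maximizing the concave function $(\pi-\varpi)\{A-\eta^S\Delta_p\}$. Because $\Delta_p>0$, the pointwise maximizer sits at $1$ when $\rho_L>\eta^S$, at $0$ when $\rho_U<\eta^S$, and at $\varpi$ otherwise. This is exactly $\pi^*$, so $\pi^*$ maximizes $\mathcal{L}_{\eta^S}$ over all policies.

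The key decomposition is
\[
p_0\{R_{\sup}(\hat\pi^*)-R_{\sup}(\pi^*)\}=\{\mathcal{L}_{\eta^S}(\pi^*)-\mathcal{L}_{\eta^S}(\hat\pi^*)\}+\eta^S\{\mathbb{E}[(\pi^*-\hat\pi^*)\Delta_p]\}.
\]
\emph{Second term.} In the non-binding case $\eta^S=0$, so it vanishes. In the binding case, $\mathbb{E}[\pi^*\Delta_p]$ equals the constraint level $\mathbb{E}[\varpi\Delta_p]$, by definition of $\eta^S$ together with $R(\eta^S)=0$ and differentiability of $Q$. Its size is therefore governed by the constraint slack of $\hat\pi^*$. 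I would reuse the argument behind Theorem~\ref{thm:constraintviolation}, but sharpen it to a rate: write $\mathbb{E}[\hat\pi^*\Delta_p]-\hat Q(\hat\eta^S)$, add and subtract $\hat C_{\rm eif}$, and use Assumption~\ref{assum:c311}(b). Then $\hat C_{\rm eif}-\mathbb{E}[\varpi\Delta_p]=O_{\Pb}(n^{-1/2})$ by Assumption~\ref{assum:rates}(a) and the doubly robust expansion in Theorem~\ref{thm:convergence_margin}. The discrepancy between $\hat Q(\hat\eta^S)$ and its population version with estimated scores is an empirical-process term plus a doubly robust bias, both $O_{\Pb}(n^{-1/2})$ under sample splitting. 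Hence the second term is $O_{\Pb}(n^{-1/2})$ times $|\eta^S|$.

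\emph{First term (main obstacle).} By the pointwise structure, $\mathcal{L}_{\eta^S}(\pi^*)-\mathcal{L}_{\eta^S}(\hat\pi^*)$ equals $\mathbb{E}$ of nonnegative pointwise losses. Disagreement on the upper branch ($\hat\pi^*\ne\pi^*$ between $\varpi$ and $1$) costs $(1-\varpi)|A_L-\eta^S\Delta_p|=(1-\varpi)\Delta_p|\rho_L-\eta^S|$. This disagreement only occurs when $\bone\{\hat\rho_L>\hat\eta^S\}\ne\bone\{\rho_L>\eta^S\}$, which forces
\[
|\rho_L-\eta^S|\le|\hat\rho_L-\rho_L|+|\hat\eta^S-\eta^S|\le\|\hat\rho_L-\rho_L\|_\infty+|\hat\eta^S-\eta^S|=:t_L.
\]
Bounding $\Delta_p\le1$, this contribution is at most $t_L\,\Pb(|\rho_L-\eta^S|\le t_L)\le Ct_L^{1+\beta}$ by Assumption~\ref{assum:margincondition1}(a). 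The lower branch is handled the same way with $\rho_U$, giving $t_U^{1+\beta}$. The boundary events where $\hat\rho=\hat\eta^S$ (randomized assignments) contribute $O_{\Pb}(n^{-1/2})$ by Assumption~\ref{assum:margincondition1}(b), since the losses are bounded. Dividing by $p_0$ yields $R_1+R_2$.

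Combining the two terms gives $O_{\Pb}(n^{-1/2}+R_1+R_2)$. The hardest step is controlling the slack $\mathbb{E}[(\pi^*-\hat\pi^*)\Delta_p]$ at a $n^{-1/2}$ rate. Theorem~\ref{thm:constraintviolation} only states $o_{\Pb}(1)$, so I would have to re-derive its proof with explicit rates, expecting any residual nuisance-induced error to be absorbed into $|\hat\eta^S-\eta^S|$-dependent terms already appearing in $R_1,R_2$. If this fails, a fallback is to bound the slack via the margin condition by the same $t^{1+\beta}$ quantities plus $n^{-1/2}$.
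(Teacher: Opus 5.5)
Your argument is correct and is essentially the paper's proof. It uses the same Lagrangian split of $p_0\{R_{\sup}(\hat\pi^*,\varpi)-R_{\sup}(\pi^*,\varpi)\}$ into terms weighted by $A_U-\eta^S\Delta_p$ and $A_L-\eta^S\Delta_p$ (bounded through disagreement events and the margin condition, with boundary events costing $O_{\mathbb{P}}(n^{-1/2})$), plus $\eta^S\E[(\pi^*-\hat\pi^*)\Delta_p]$. The paper controls that last term in Lemma~\ref{lem:posterm} with your ingredients: Assumption~\ref{assum:c311}(b), the expansion of $\hat C_{\textup{eif}}$ in Lemma~\ref{lem:effkappa}, and a doubly robust bias bound.

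So your main line already delivers the $n^{-1/2}$ rate, and the fallback is unnecessary. It would not work anyway: the slack term carries no factor $|\rho-\eta^S|$, so it only gives $O(t^{\beta})$, not $t^{1+\beta}$.

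The one step to tighten is the empirical-process term $(\Pn-\Pb)[\hat\pi^*_f(\hat\psi_{p_1}-\hat\psi_{p_0})]$, where $\hat\pi^*_f$ is the threshold part of $\hat\pi^*$. Because $\hat\eta^S$ is computed from the evaluation sample, sample splitting alone does not bound it. You need a bound uniform over the threshold $\eta$; the class is VC/Donsker conditionally on the nuisance sample, as used in Lemma~\ref{lem:cfempr2term}.
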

The terms $R_{1}$ and $R_{2}$  arise from estimator error in the priority scores $\rho_L(X)$ and $\rho_U(X)$, as well as in the threshold $\eta^S$. The remaining contribution is of order $n^{-1/2}$, which results from the discrepancy between the simplified policy and the original optimal policy $\hat{\pi}^*$.

\section{Simulation}
\label{sec::simulation}
We conduct simulations to evaluate the finite-sample performance of the proposed methods for policy evaluation and learning.

We generate covariates $X\in\mathbb{R}^3$  independently as $X_1\sim U(-1,1)$, $X_2\sim \textup{Bernoulli}(0.5)$ and $X_3\sim N(0,1)$.
To satisfy Assumption~\ref{assum::monoto}, we assign each unit to one of three principal strata: protected units $(U=10)$, always-survivors $(U=11)$, or never-survivors $(U=00)$ according to a ordered logistic model with probabilities
\begin{eqnarray*}
\Pb(U=10\mid X) &=&\frac{1}{1+e^{-(2+X_1-X_2+0.5X_3)}} -\frac{1}{1+e^{-(X_1-X_2+0.5X_3)}},\\
\Pb(U=11\mid X) &=& \frac{1}{1+e^{-(X_1-X_2+0.5X_3)}},\\
\Pb(U=00\mid X) &=& 1-\Pb(U=10\mid X)-\Pb(U=11\mid X).
\end{eqnarray*}
Potential survival statuses are  determined by stratum membership.
For units in the always-survivor stratum, we generate both $Y(1)$ and $Y(0)$ according to $Y(z) \sim\mathrm{Bernoulli}\{\mu_z(X)\}$ where
\begin{eqnarray*}
\mu_z(X) &=&\frac{1}{1+e^{-\{(2z+1)X_1+0.2X_2-0.5X_3\}}}.
\end{eqnarray*}
For protected units, we generate only $Y(1)\sim\mathrm{Bernoulli}\{\mu_1(X)\}$. For never-survivors, neither potential outcome is defined.

We generate treatment assignments according to $Z\mid X\sim \textup{Bernoulli}\{e(X)\}$,
where 
\begin{eqnarray*}
e(X)&=&\frac{1}{1+e^{-(1.5X_1-0.5X_2+0.5X_3)}}.
\end{eqnarray*}
The observed survival indicator is  $S = ZS(1)+(1-Z)S(0)$. The observed outcome is  $Y=Y(Z)$ when $S=1$, and is undefined otherwise.

We consider three fixed policies, $\pi(x)= c$ with $c\in\{0.3,0.5,0.7\}$, which assign treatment with constant probability $c$. For each policy, we consider sample size $n\in\{500,1000, 5000\}$. We estimate all nuisance functions using both logistic regression and generalized additive models (GAMs) with 5-fold cross-fitting.

Figure~\ref{fig:bound-estimation-error} presents the mean absolute errors (MAEs) and root-mean-square errors (RMSEs) of the estimated lower and upper bounds under different sample sizes based on $500$ replications. For both bounds, the MAEs and RMSEs decrease steadily as the sample size increases across all three policies. Logistic regression and GAM yield very similar performance, although logistic regression tends to achieve slightly smaller errors in the smaller sample settings. 
\begin{figure}[htbp]
    \centering

    \begin{subfigure}{0.8\textwidth}
        \centering
        \includegraphics[width=\textwidth]{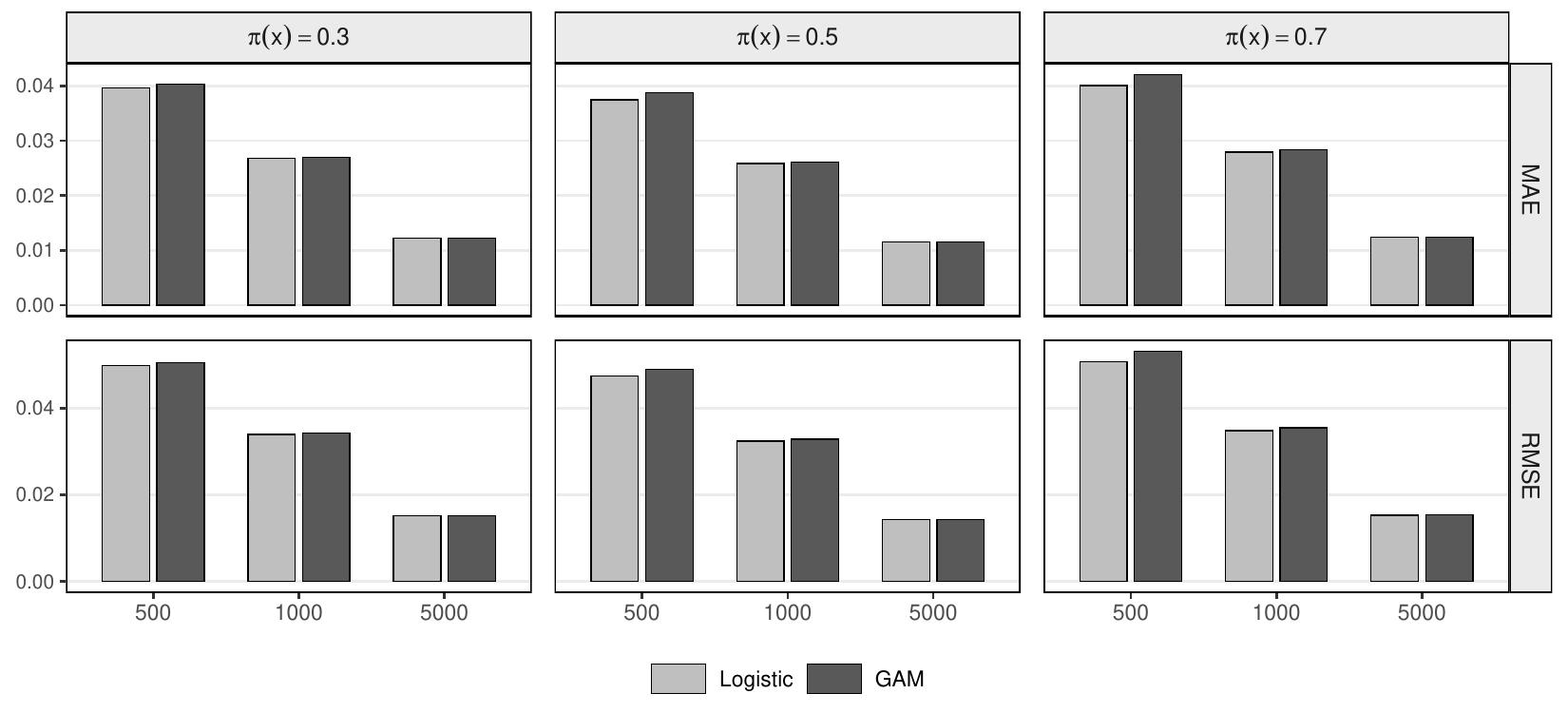}
        \caption{Lower bound.}
        \label{fig:error-lower-bound}
    \end{subfigure}

    \vspace{1em}

    \begin{subfigure}{0.8\textwidth}
        \centering
        \includegraphics[width=\textwidth]{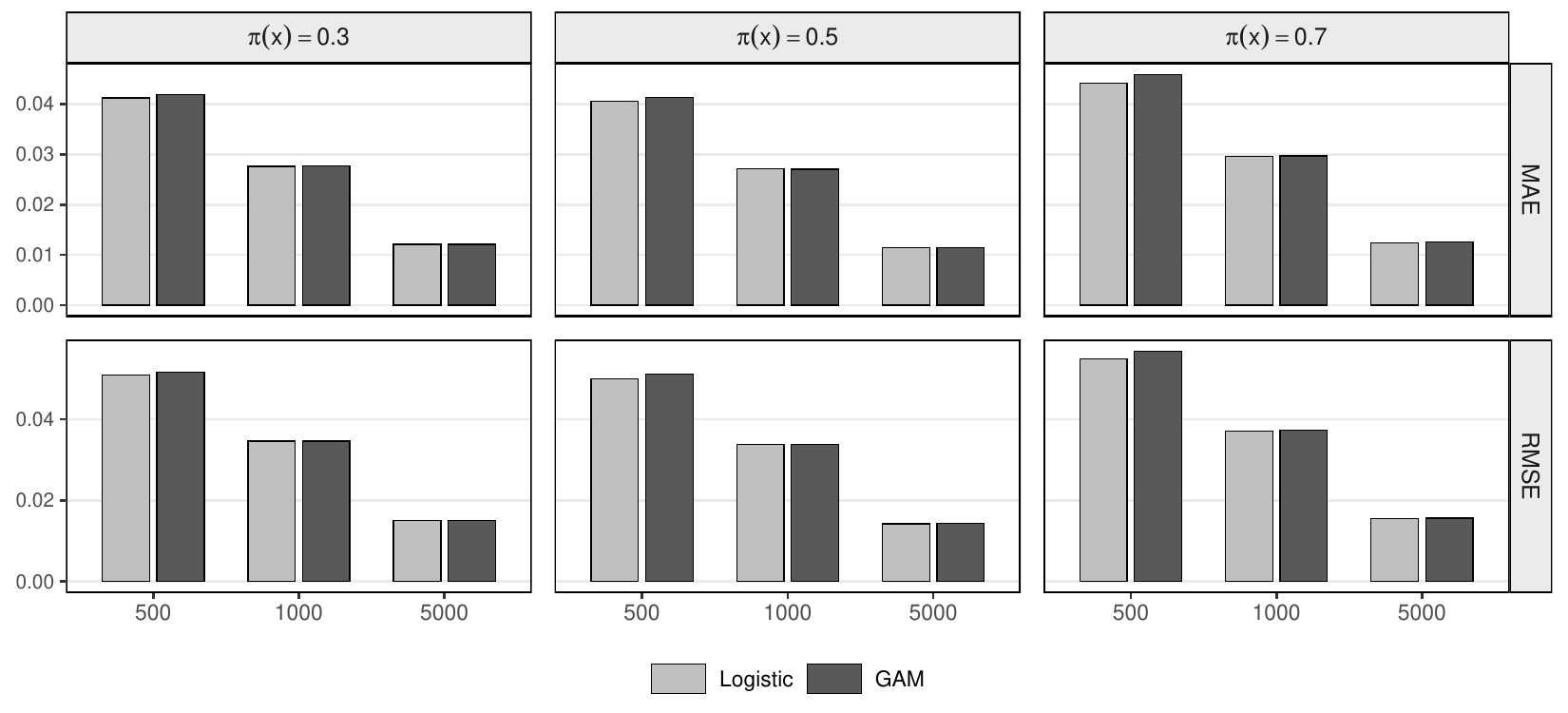}
        \caption{Upper bound.}
        \label{fig:error-upper-bound}
    \end{subfigure}

    \caption{
        MAEs and RMSEs of the estimated lower and upper bounds under different sample sizes and fixed policies. In each panel, light gray bars correspond to logistic regression nuisance estimation and dark gray bars correspond to GAM nuisance estimation.
    }
    \label{fig:bound-estimation-error}
\end{figure}

We also examine coverage rates of the confidence intervals for the bounds on the survivor average utility, constructed according to~\eqref{eqn:cibounds}. Table~\ref{tab:cp_ci} shows the result. Across all settings, the empirical coverage rates remain close to the nominal 95\% level.
\begin{table}[htbp]
\centering
\caption{Empirical coverage rates of the confidence interval for the bounds on the survivor average utility $[L(\pi),U(\pi)]$ under different nuisance estimation methods, sample sizes, and fixed policies.}
\label{tab:cp_ci}
\setlength{\tabcolsep}{6pt}
\begin{tabular}{llccc}
\hline
Nuisance estimator & Sample size 
& \(\pi(x)=0.3\) & \(\pi(x)=0.5\) & \(\pi(x)=0.7\) \\
\hline
\multirow{3}{*}{Logistic}
& \(n=500\)  & 0.958 & 0.958 & 0.952 \\
& \(n=1000\) & 0.952 & 0.954 & 0.958 \\
& \(n=5000\) & 0.954 & 0.960 & 0.966 \\
\hline
\multirow{3}{*}{GAM}
& \(n=500\)  & 0.954 & 0.964 & 0.954 \\
& \(n=1000\) & 0.956 & 0.964 & 0.960 \\
& \(n=5000\) & 0.956 & 0.964 & 0.962 \\
\hline
\end{tabular}
\end{table}

We next evaluate the finite-sample performance of the proposed estimator for the
optimal policy. We consider benchmark policies $\varpi(x)=c$ that assign treatment with constant probability $c$. We approximate true threshold $\eta^S$ by solving the constraint equation using a Monte Carlo sample of size 10,000. For $c \leq 0.58$, $\eta^S=0$, indicating that the unconstrained optimal policy already satisfies the constraint. As $c$ increases beyond 0.58, $\eta^S$ becomes negative, indicating that the constraint becomes binding.

Similar to the policy evaluation setting, we estimate the nuisance functions using both logistic regression and generalized additive models with 5-fold cross-fitting. For each estimated policy, we evaluate three performance metrics: (i) the constraint slackness, (ii) the misclassification rate relative to the true optimal policy, and (iii) the excess worst-case regret. To approximate these metrics, we generate an independent validation dataset of size $10,000$ and apply both the estimated and true optimal policies to this dataset.

We first fix the treatment assignment probability of benchmark policy at 
$c = 0.3$, $0.5$, and $0.7$, and vary the sample size over $
n \in \{500,1000,5000\}$. Figure~\ref{fig:sample_var} presents the metrics of the estimated policy. For the non-binding cases ($c=0.3,0.5$), the slackness metric converges to a positive value. For the binding case ($c=0.7$), the slackness approaches zero as the sample size increases, consistent with the asymptotic feasibility result in Theorem~\ref{thm:constraintviolation}. The misclassification rate and excess worst-case regret both decrease steadily with sample size, reaching relatively small values by $n=5000$. Logistic regression generally achieves lower misclassification rates and excess regret than GAM, particularly in the smaller sample settings.

We then fix the sample size at $n=1000$ and vary the treatment assignment probability $c$ of benchmark policies from 0 to 1. Figure~\ref{fig:varpi_vary} reports the results. The slackness metric is positive for smaller values of $c$, indicating that the estimated policy is the unconstrained solution, which already satisfies the constraint. As $c$ increases, the slackness metric decreases toward zero, reflecting that the constraint becomes increasingly binding. Once $c$ exceeds the boundary between the non-binding and binding regimes, the estimated policy remains close to the feasibility boundary, with only a small amount of slack remaining.  

The misclassification rate remains relatively stable for most values of $c$. At $c=0$ and $c=1$, however, it drops substantially because the optimal policy depends on only one priority score in these two extreme cases, eliminating one source of estimation error. The excess worst-case regret shows a different pattern, decreasing gradually as $c$ increases. At $c=1$, the estimated regret becomes slightly negative. This occurs because the estimated policy exhibits a small constraint violation, as reflected by the slightly negative slackness. 
By relaxing the constraint marginally, the estimated policy attains utility levels that are unavailable to the strictly feasible optimal policy. Across all values of $c$, logistic regression consistently outperforms GAM in terms of both misclassification rate and excess worst-case regret.


\begin{figure}[htbp]
     \centering
     \includegraphics[width=0.8\linewidth]{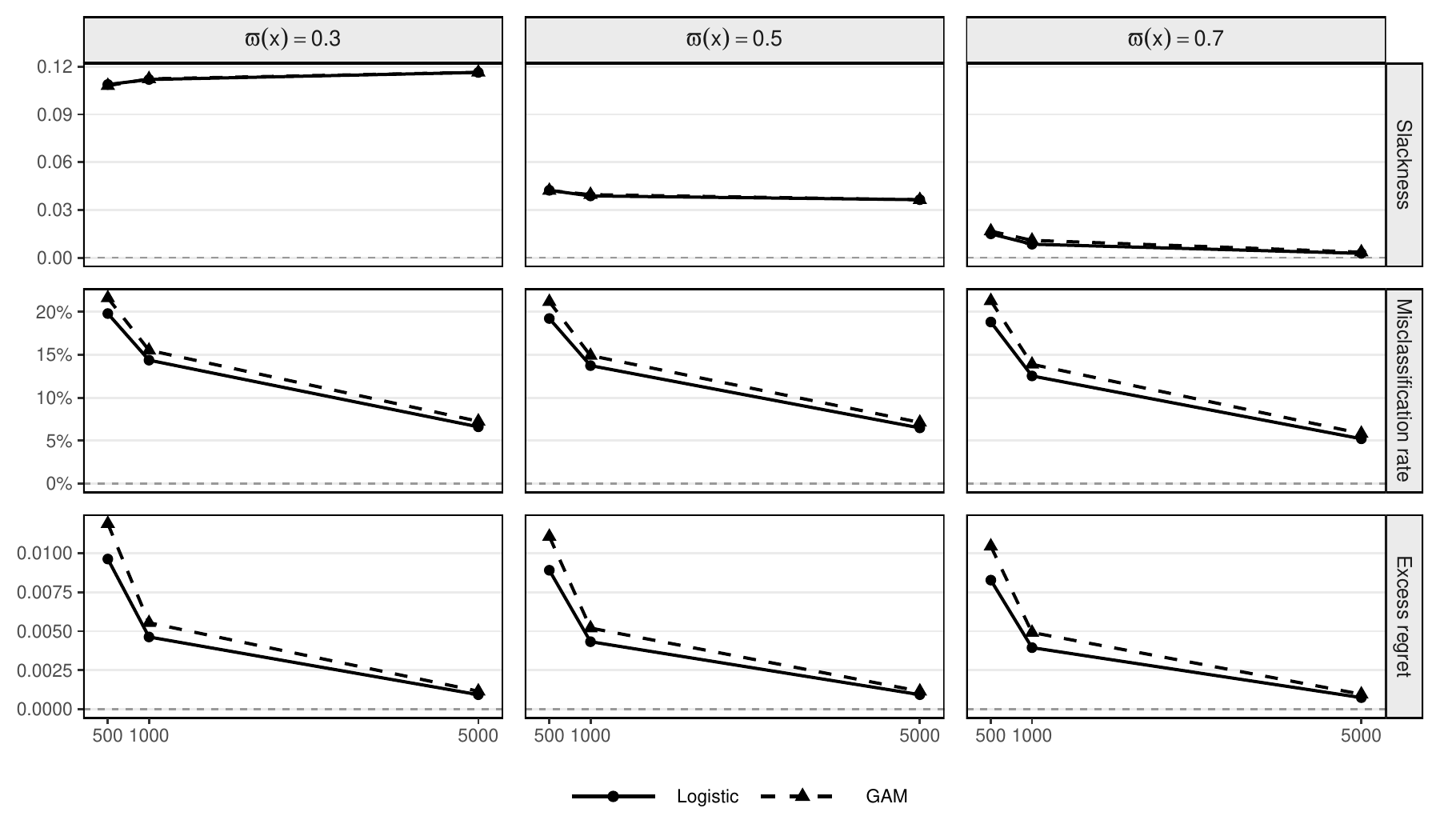}
     \caption{Performance metrics of the estimated optimal policy under different sample sizes and benchmark policies. Solid lines correspond to logistic-regression nuisance estimation and dashed lines correspond to GAM nuisance estimation. The true worst-case regret $R_{\sup}(\pi^*, \varpi)$ equals $0.047$, $0.036$, and $0.025$ under the three benchmark policies, respectively.}
     \label{fig:sample_var}
 \end{figure}
 
\begin{figure}[htbp]
    \centering
    \includegraphics[width=0.7\linewidth]{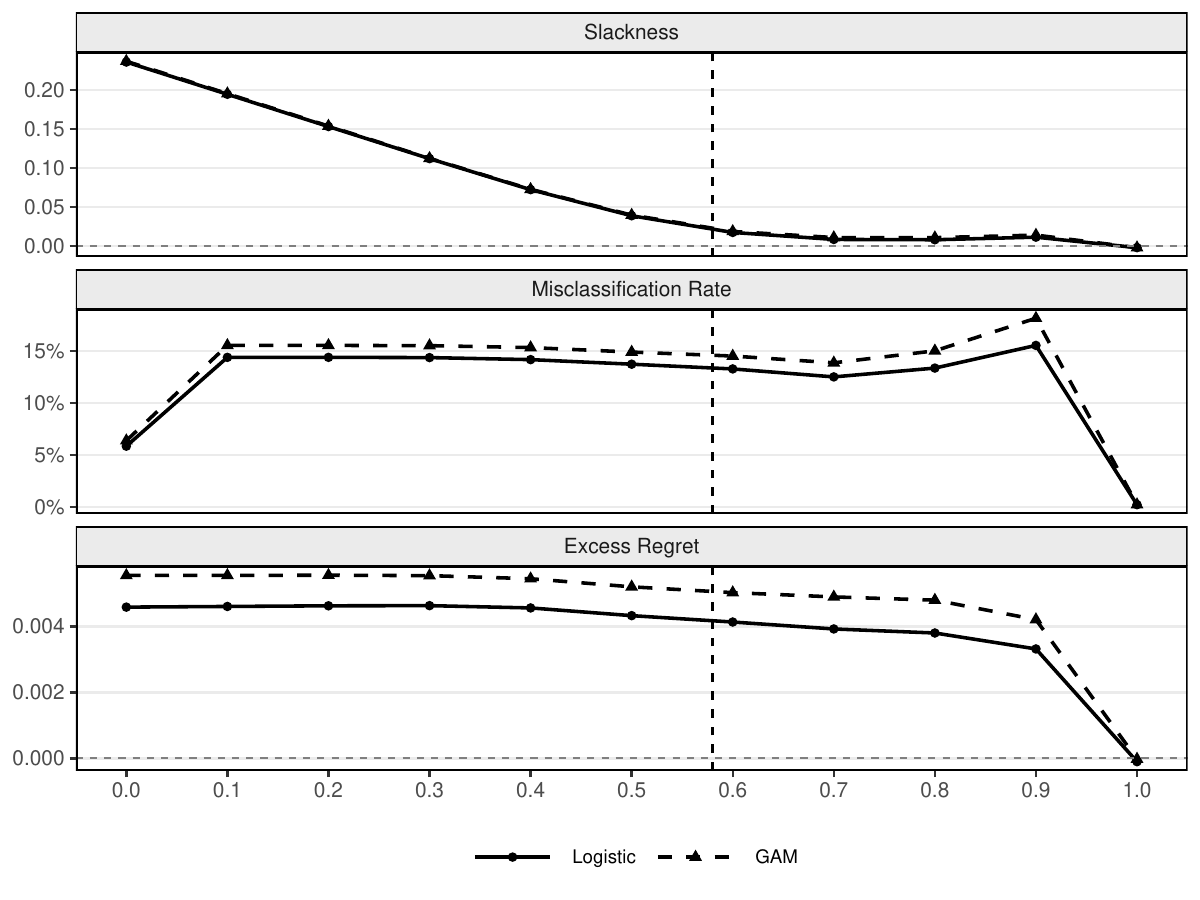}
    \caption{Performance metrics of the estimated optimal policy under benchmark policies $\varpi(x)= c$, with $c$ varying from 0 to 1 and the sample size fixed at $n=1000$. The dashed line marks the  approximate boundary between the non-binding and binding regimes at $c=0.58$.}
    \label{fig:varpi_vary}
\end{figure}

\section{Application}
\label{sec:application}

We apply the proposed method to the Medical Information Mart for Intensive Care III (MIMIC-III) v1.4 database, which contains de-identified health records from more than 40,000 patients admitted to critical care units at Beth Israel Deaconess Medical Center between 2001 and 2012. This database includes comprehensive patient information, including vital signs, laboratory results, medications, and survival outcomes. \citet{johnson2016mimic} provides a detailed description of the database. 

We study treatment policies for mechanical ventilation among severely ill patients with sepsis. The primary outcome is the Sequential Organ Failure Assessment (SOFA) score at 48 hours after the estimated onset of sepsis, which quantifies the severity of organ dysfunction. Lower SOFA scores indicate better clinical outcomes. Truncation by death arises because this outcome is only well defined for patients who survive to 48 hours and have a recorded  SOFA measurement at that time.

We focus on the sepsis cohort previously established by \citet{komorowski2018artificial} and \citet{killian2020empirical}. We restrict the study sample to those with baseline SOFA scores greater than 10 to focus on severely ill patients, where the baseline period is defined as the first 24 hours after the estimated onset of sepsis. The final sample consists of 1429 subjects. Among them, 861 received mechanical ventilation ($Z = 1$), while the remaining did not ($Z = 0$). We define the binary outcome $Y = 1$ if the 48‑hour SOFA score is unchanged or improved relative to baseline, and $Y = 0$ otherwise. The survival status $S$ indicates whether a patient survived the first 48 hours after ICU admission: $S = 1$ if the patient was alive with a recorded 48‑hour SOFA score, and $S = 0$ otherwise. We include the following baseline covariates: age (years); the baseline SOFA score; the Glasgow Coma Scale, where lower scores indicate more severe impairment of consciousness; creatinine (mg/dL), a marker of kidney function; and the \ce{PaO2}/\ce{FiO2} ratio, a measure of hypoxemia severity for which lower values indicate worse gas exchange.

We  estimate the nuisance functions using GAMs. We consider benchmark policies with constant treatment probability $\varpi(x)=c$ and the observed treatment policy  $\varpi(x)=e(x)$. To evaluate the estimated optimal policy, we examine both its treatment assignment distribution and its estimated worst-case regret. The latter is computed by plugging the estimated nuisance functions into \eqref{eqn:orgrsup} in Corollary~\ref{cor:formsup}. Table~\ref{tab:policy_performance} summarizes the results.

\begin{table}[htbp]
\centering
\caption{Estimated optimal policy under different benchmark policies $\varpi(x)$. The estimated optimal policy columns show counts of patients assigned $0$, $\varpi$, and $1$, respectively,     with boundary cases grouped into the corresponding adjacent categories. The $\hat{R}_{\sup}(\hat{\pi}^*,\varpi)$ column shows the estimated worst-case regret.}
\label{tab:policy_performance}
\begin{tabular}{ccccc|ccccc}
\toprule
$\varpi(x)$ & \multicolumn{3}{c}{Estimated optimal policy} & $\hat{R}_{\sup}(\hat{\pi}^*,\varpi)$
& $\varpi(x)$ & \multicolumn{3}{c}{Estimated optimal policy} & $\hat{R}_{\sup}(\hat{\pi}^*,\varpi)$ \\
\cline{2-4} \cline{7-9}
& 0 & unchanged & 1 & & & 0 & unchanged & 1 & \\
\midrule
0.0 & 1377 &  & 52 & -0.027
& 0.6 & 37 & 1334 & 58 & -0.026 \\

0.1 & 37 & 1340 & 52 & -0.027
& 0.7 & 37 & 1307 & 85 & -0.026 \\

0.2 & 37 & 1340 & 52 & -0.027
& 0.8 & 37 & 1282 & 110 & -0.025 \\

0.3 & 37 & 1340 & 52 & -0.027
& 0.9 & 37 & 1200 & 192 & -0.025 \\

0.4 & 37 & 1340 & 52 & -0.026
& 1.0 & 28 &  & 1401 & -0.024 \\

0.5 & 37 & 1338 & 54 & -0.026
& $\hat{e}(x)$  & 37 & 1335 & 57 & -0.025 \\
\bottomrule
\end{tabular}
\end{table}

 With $\varpi(x)=c$, the estimated policy coincides with the unconstrained solution for $c \leq 0.4$. In this case, the estimated policy modifies the benchmark policy for only a small subset of patients. This pattern reflects the conservative nature of the minimax-regret criterion.

As $c$ increases beyond $0.4$, the constraint becomes binding. The policy must increase the protected survival rate, even at some loss in utility. Correspondingly, the number of patients  assigned to treatment increases steadily, whereas the number assigned to control remains stable. This reflects that the additional survival is achieved primarily by expanding deterministic treatment to more patients rather than reducing the number assigned deterministic control. With $\varpi(x)=\hat{e}(x)$, the constraint is also binding. 

Across all considered benchmarks, the estimated worst-case regret $\hat{R}_{\sup}(\hat{\pi}^*, \varpi)$ remains negative, indicating that the estimated policy consistently improves upon the benchmark policy. The magnitude of the estimated regret is relatively stable across benchmark policies. When $\varpi(x)=1$, the estimated regret is slightly negative. This occurs because the estimated policy exhibits a small violation of the survival rate constraint, thereby producing gains that are unavailable to the feasible optimal policy.

\section{Discussion}
\label{sec:discussion}

We discuss several possible extensions of our framework. 

First, although we have focused on binary outcomes, many decision problems involve continuous outcomes. Extending our framework to this setting is therefore of substantial practical interest. The identification and estimation of utility bounds in the continuous outcome setting would likely connect to Lee bounds and their generalizations, which are commonly used to bound causal effects in the presence of outcome truncation and selection bias \citep{lee2009training,semenova2025generalized}. 

Second, although our framework avoids principal ignorability and other auxiliary assumptions that are often difficult to justify in practice, it still relies on monotonicity. This assumption may be violated in some applications.  Future work could develop sensitivity analysis to assess the robustness of policy evaluation and learning results to departures from monotonicity.

Third, our framework can be extended to accommodate more general utility functions. For example, 
treatment costs could be incorporated  into the utility definition, allowing policies to balance outcome improvement against treatment burden. More broadly, utilities may depend on the joint potential outcomes rather than a single observed outcome, thereby capturing preferences over different treatment-response profiles \citep{ben2024policy}.
\newpage
\pdfbookmark[1]{References}{References}
\spacingset{1.45}
\bibliographystyle{Chicago}
\bibliography{policytruncation-ref}

\newpage

\appendix

\setcounter{equation}{0}
\setcounter{figure}{0}
\setcounter{theorem}{0}
\setcounter{lemma}{0}
\setcounter{section}{0}
\setcounter{assumption}{0}
\setcounter{estimator}{0}
\renewcommand {\theequation} {S\arabic{equation}}
\renewcommand {\thefigure} {S\arabic{figure}}
\renewcommand {\thetheorem} {S\arabic{theorem}}
\renewcommand {\thelemma} {S\arabic{lemma}}
\renewcommand {\thesection} {S\arabic{section}}
\renewcommand {\theassumption} {S\arabic{assumption}}
\renewcommand {\theestimator} {S\arabic{estimator}}

\begin{center}
  \LARGE {\bf Supplementary Material}
\end{center}
\par Section~\ref{sec:addres} provides additional results. Section~\ref{sec:estimationupper} presents the efficient estimation of upper bound of survivor average utility. Section~\ref{app:ext_optimal} derives the form of optimal policy without the positivity condition. Section~\ref{app::regret-oracle} establishes the regret of estimated optimal policy to the oracle policy with known $m_{11}(X)$. 

Section~\ref{app::proof} contains the proofs of all results.
\section{Additional results}\label{sec:addres}
\subsection{Efficient estimation of the upper bound of survivor average utility}\label{sec:estimationupper}
We can rewrite the upper bound in~\eqref{eqn::utility_upper_a} as
\begin{eqnarray*}
    U(\pi)&=&\frac{\Pb\left\{c_U(X)\gamma_{U1}(X)+\gamma_{U2}(X)\right\}}{p_0},
\end{eqnarray*}
where  
\begin{eqnarray*}
    c_U(X)&=&\bone\left\{p_1(X)\mu_1(X)-p_0(X)\geq 0\right\},\\\gamma_{U1}(X)&=&\pi(X)\left\{p_0(X)-p_1(X)\mu_1(X)\right\},\\
    \gamma_{U2}(X)&=&\pi(X)p_1(X)\mu_1(X)+\{1-\pi(X)\}\mu_0(X)p_0(X).
\end{eqnarray*}
For ease of exposition, denote $N_U(\pi)=\Pb\left\{c_U(X)\gamma_{U1}(X)+\gamma_{U2}(X)\right\}$ and $D=p_0$.

Similar to the efficient estimation of lower bound $L(\pi)$, we employ different estimation strategies for the smooth and non-smooth components. For $\gamma_{U1}(X)$ and $\gamma_{U2}(X)$, we construct estimators based on the efficient influence functions of $\E\{\gamma_{U1}(X)\}$ and $\E\{\gamma_{U2}(X)\}$. For $c_U(X)$, we use a plug-in approach that estimates the function inside the indicator function.

From the chain rule, the EIF of $\mathbb{E}\left\{\gamma_{U1}(X)\right\}$ is $\psi_{\gamma_{U1}}-\mathbb{E}\left\{\gamma_{U1}(X)\right\}$, where
\begin{eqnarray*}
\psi_{\gamma_{U1}}&=&\pi(X)\left\{\psi_{p_0}-\psi_{p_1\mu_1}\right\}.
\end{eqnarray*}
Similarly, the EIF of $\mathbb{E}\left\{\gamma_{U2}(X)\right\}$ is $\psi_{\gamma_{U2}}-\mathbb{E}\left\{\gamma_{U2}(X)\right\}$, where
\begin{eqnarray*}
\psi_{\gamma_{U2}}&=&\pi(X)\psi_{p_1\mu_1}+\left\{1-\pi(X)\right\}\psi_{p_0\mu_0}.
\end{eqnarray*}

Building on these results, we propose the following ratio estimator for $U(\pi)$.
\begin{estimator}\label{eg:ratio_upper} We propose a three-step estimator for the upper bound $U(\pi)$.
\begin{itemize}
    \item Step 1: Estimate the propensity score $\hat{e}(X)$, the observed survival rate $\hat{p}_z(X)$, and the outcome mean $\hat{\mu}_z(X)$ for $z=0,1$.
    \item Step 2: Obtain the estimators $\hat{\psi}_{p_z\mu_z}$ and $\hat{\psi}_{p_z}$ for $z=0,1$ by plugging  $\hat{e}(X)$, $\hat{p}_z(X)$, and $\hat{\mu}_z(X)$ into~\eqref{eqn:notationpmu}~and~\eqref{eqn:notationp}, respectively.
    \item Step 3: Compute the estimators $\hat{N}_U(\pi)$ and $\hat{D}$ for the numerator and denominator using 
    \begin{eqnarray*}
    \hat{N}_U(\pi)&=&\mathbb{P}_n\left[\pi(X)\bone\left\{\hat{p}_1(X)\hat{\mu}_1(X)-\hat{p}_0(X)\geq 0\right\}\left\{\hat{\psi}_{p_0}-\hat{\psi}_{p_1\mu_1}\right\}\right]\\
    &&+\mathbb{P}_n\left[\pi(X)\left\{\hat{\psi}_{p_1}\hat{\psi}_{\mu_1}\right\}+(1-\pi(X))\left\{\hat{\psi}_{p_0\mu_0}\right\}\right],\\
    \hat{D}&=&\mathbb{P}_n\hat{\psi}_{p_0}.
\end{eqnarray*}
The final estimator is given by: $\hat{U}(\pi)=\hat{N}_U(\pi)/\hat{D}$.
\end{itemize}
\end{estimator}

We next establish the asymptotic properties of the estimator $\hat{U}(\pi)$. To control the estimation error of indicator function, we impose an assumption parallel to Assumption~\ref{a:margin_posclass1} that restricts the probability mass in a neighborhood of the threshold.

\begin{assumption}\label{a:margin_posclass1_upper}
  There exists an $\alpha > 0$ and a constant $C$ such that for any $t \geq 0$,
  $$\mathbb{P}\left(\left|p_1(X)\mu_1(X)-p_0(X)\right| \leq t\right) \leq Ct^\alpha.$$
\end{assumption}

Denote
\begin{eqnarray*}
\|\hat{m}_U - m_U\|_\infty&=&\sup\left\{\|\hat{p}_1\hat{\mu}_1 - p_1\mu_1\|_\infty, \|\hat{p}_0 - p_0\|_\infty\right\}.
\end{eqnarray*}
The following theorem provides the asymptotic bias formulas of $\hat{N}_U(\pi)$ and $\hat{D}$.

\begin{theorem}\label{thm:convergence_margin_upper}
Suppose that Assumptions~\ref{assum::treatig}~to~\ref{a:margin_posclass1},~\ref{a:margin_posclass1_upper} and Conditions (a), (b), and (c) in Theorem~\ref{thm:convergence_margin} hold.
We have
\begin{eqnarray}
\nonumber \hat{N}_U(\pi)-N_U(\pi)
 &=&(\Pn - \Pb)\left\{c_U(X)\psi_{\gamma_{U1}}+\psi_{\gamma_{U2}}\right\} \\
 \nonumber &&+ O_\Pb\left(\left\|\hat{\lambda}_1-\lambda_1\right\|_2\left\|\hat{p}_1\hat{\mu}_1-p_1\mu_1\right\|_2)\right)\\
\nonumber &&+O_{\mathbb{P}}\left(\left\|\hat{\lambda}_0-\lambda_0\right\|_2\left\{\left\|\hat{p}_0\hat{\mu}_0-p_0\mu_0\right\|_2+\left\|\hat{p}_0-p_0\right\|_2\right\}\right)  \\
\label{eqn::biasN_upper}   && +O_\Pb\left(\|\hat{m}_U-m_U\|_{\infty}^{1+\alpha} \right) + o_\Pb(n^{-1/2}),\\
\label{eqn::biasD_upper} \hat{D}-D & =& (\Pn - \Pb)\left(\psi_{p_0}\right) + O_\Pb\left(\left\|\hat{\lambda}_0-\lambda_0\right\|_2\left\|\hat{p}_0-p_0\right\|_2\right)+o_\Pb(n^{-1/2}).
\end{eqnarray}
\end{theorem}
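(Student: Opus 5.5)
The argument mirrors the one for Theorem~\ref{thm:convergence_margin}. Write $\hat c_U(X)=\bone\{\hat p_1(X)\hat\mu_1(X)-\hat p_0(X)\ge 0\}$. Let $\hat\psi_{\gamma_{U1}}=\pi(X)\{\hat\psi_{p_0}-\hat\psi_{p_1\mu_1}\}$ and $\hat\psi_{\gamma_{U2}}=\pi(X)\hat\psi_{p_1\mu_1}+\{1-\pi(X)\}\hat\psi_{p_0\mu_0}$; I read the product $\hat\psi_{p_1}\hat\psi_{\mu_1}$ in Estimator~\ref{eg:ratio_upper} as $\hat\psi_{p_1\mu_1}$. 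The estimation error then decomposes as
\begin{eqnarray*}
\hat N_U(\pi)-N_U(\pi)&=&(\Pn-\Pb)\{c_U\psi_{\gamma_{U1}}+\psi_{\gamma_{U2}}\}+(\Pn-\Pb)\{\hat c_U\hat\psi_{\gamma_{U1}}+\hat\psi_{\gamma_{U2}}-c_U\psi_{\gamma_{U1}}-\psi_{\gamma_{U2}}\}\\
&&+\Pb\{\hat c_U\hat\psi_{\gamma_{U1}}+\hat\psi_{\gamma_{U2}}\}-\Pb\{c_U\gamma_{U1}+\gamma_{U2}\}.
\end{eqnarray*}
The first term is the leading term in \eqref{eqn::biasN_upper}.

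For the second, empirical-process term, I would use condition (a), sample splitting. Conditionally on the training sample, this term is a centered mean of a fixed function, so it is $O_\Pb(\|\hat f-f\|_2/\sqrt n)$ by Chebyshev. I then need $\|\hat f-f\|_2=o_\Pb(1)$. Condition (c) bounds the inverse weights, so consistency of $\hat e,\hat p_z,\hat\mu_z$ (condition (b)) gives $L_2$-consistency of the $\hat\psi$'s. For the indicator part, $|\hat c_U-c_U|$ is bounded by $1$ and can be nonzero only when $|p_1\mu_1-p_0|\le|\hat p_1\hat\mu_1-p_1\mu_1|+|\hat p_0-p_0|$. I would show this probability tends to zero by splitting on whether the nuisance errors exceed a small $t$ and using the margin condition of Assumption~\ref{a:margin_posclass1_upper} together with Markov's inequality. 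So the second term is $o_\Pb(n^{-1/2})$.

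For the bias term, I condition on $X$ and use unconfoundedness. A direct calculation gives
\[
\E(\hat\psi_{p_z\mu_z}\mid X)-p_z\mu_z=\Big\{\tfrac{\Pb(Z=z\mid X)}{\hat\Pb(Z=z\mid X)}-1\Big\}\{p_z\mu_z-\hat p_z\hat\mu_z\},
\]
and the analogous identity holds for $\hat\psi_{p_z}$. The bias then splits into:
\begin{itemize}
\item $\Pb[\hat c_U\{\E(\hat\psi_{\gamma_{U1}}\mid X)-\gamma_{U1}\}]+\Pb\{\E(\hat\psi_{\gamma_{U2}}\mid X)-\gamma_{U2}\}$. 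Since $|\hat c_U|,|\pi|\le 1$ and the weight difference is controlled by $\hat\lambda_z-\lambda_z$ (up to constants from condition (c)), Cauchy--Schwarz bounds this by $\|\hat\lambda_1-\lambda_1\|_2\|\hat p_1\hat\mu_1-p_1\mu_1\|_2+\|\hat\lambda_0-\lambda_0\|_2(\|\hat p_0\hat\mu_0-p_0\mu_0\|_2+\|\hat p_0-p_0\|_2)$.
\item $\Pb\{(\hat c_U-c_U)\gamma_{U1}\}$.
\end{itemize}
Note that no $\hat p_1$ term appears here, unlike in the lower-bound case, because $\gamma_{U1}$ and $\gamma_{U2}$ involve $p_1$ only through $p_1\mu_1$.

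The main obstacle is the indicator term $\Pb\{(\hat c_U-c_U)\gamma_{U1}\}$. On the event $\hat c_U\ne c_U$, the estimated and true arguments have opposite signs, so
\[
|\gamma_{U1}|\le|p_1\mu_1-p_0|\le|\hat p_1\hat\mu_1-p_1\mu_1|+|\hat p_0-p_0|\le 2\|\hat m_U-m_U\|_\infty .
\]
Hence
\[
|\Pb\{(\hat c_U-c_U)\gamma_{U1}\}|\le 2\|\hat m_U-m_U\|_\infty\,\Pb\big(|p_1\mu_1-p_0|\le 2\|\hat m_U-m_U\|_\infty\big)\le C'\|\hat m_U-m_U\|_\infty^{1+\alpha},
\]
where the last step uses Assumption~\ref{a:margin_posclass1_upper}. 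Because of sample splitting, $\hat m_U$ is fixed when this probability is taken. This gives \eqref{eqn::biasN_upper}.

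The expansion \eqref{eqn::biasD_upper} for $\hat D$ is identical to \eqref{eqn::biasD} in Theorem~\ref{thm:convergence_margin}, since $\hat D$ is the same estimator, so I would cite that result or repeat the standard doubly robust argument.
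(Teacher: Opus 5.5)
Your proposal is correct and follows essentially the same route as the paper. Both split the error into the leading empirical average, a cross-fit empirical-process remainder made $o_\Pb(n^{-1/2})$ by the conditional Chebyshev argument (Lemma 2 of Kennedy), and a bias term that separates into doubly robust product-of-errors pieces plus an indicator term of order $\|\hat m_U-m_U\|_\infty^{1+\alpha}$ via the margin condition. Your explicit check that $\Pb(\hat c_U\neq c_U)\to 0$ for the empirical-process term, and your reading of $\hat\psi_{p_1}\hat\psi_{\mu_1}$ as $\hat\psi_{p_1\mu_1}$, match what the paper's proof implicitly uses and are slightly more careful than it.
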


The next theorem shows the asymptotic normality of  $\hat{U}(\pi)$.
\begin{theorem}\label{thm:convergence_eif_upper}
Suppose that Assumptions~\ref{assum::treatig}~to~\ref{a:margin_posclass1},~\ref{assum:rates}(a)(b),~\ref{a:margin_posclass1_upper} and Conditions (a), (b), and (c) in Theorem~\ref{thm:convergence_margin} hold, and the estimated nuisance functions satisfy $\|\hat{m}_U-m_U\|_{\infty}^{1+\alpha} = o_\Pb(n^{-1/2})$. Then,
 \begin{eqnarray*}
    \sqrt{n}\{\hat{U}(\pi)-U(\pi)\}&\xrightarrow{\textup{d}}&\mathcal{N}\left(0,\mathbb{E}\left\{\frac{c_U(X)\psi_{\gamma_{U1}}+\psi_{\gamma_{U2}}-U(\pi)\psi_{p_0}}{p_0}\right\}^2 \right).
\end{eqnarray*}
\end{theorem}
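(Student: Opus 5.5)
The argument is a delta-method step for the ratio $\hat{U}(\pi)=\hat{N}_U(\pi)/\hat{D}$, using the linear expansions in Theorem~\ref{thm:convergence_margin_upper}.

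\textbf{Step 1: reduce both remainders to $o_\Pb(n^{-1/2})$.} We substitute the rate conditions into those expansions:
\begin{itemize}
\item[(i)] Assumption~\ref{assum:rates}(a)(b) makes each product term $\|\hat{\lambda}_z-\lambda_z\|_2\|\hat{p}_z\hat{\mu}_z-p_z\mu_z\|_2$ and $\|\hat{\lambda}_0-\lambda_0\|_2\|\hat{p}_0-p_0\|_2$ equal to $o_\Pb(n^{-1/2})$.
\item[(ii)] The extra hypothesis $\|\hat{m}_U-m_U\|_\infty^{1+\alpha}=o_\Pb(n^{-1/2})$ disposes of the indicator-induced term in \eqref{eqn::biasN_upper}.
\end{itemize}
We then obtain
\[
\hat{N}_U(\pi)-N_U(\pi)=(\Pn-\Pb)\{c_U(X)\psi_{\gamma_{U1}}+\psi_{\gamma_{U2}}\}+o_\Pb(n^{-1/2}),
\qquad
\hat{D}-D=(\Pn-\Pb)\psi_{p_0}+o_\Pb(n^{-1/2}).
\]
We also need the centering to be right. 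Since $\E(\psi_{p_z\mu_z}\mid X)=p_z(X)\mu_z(X)$ and $\E(\psi_{p_z}\mid X)=p_z(X)$, and $c_U(X)$ is a fixed function of $X$, we have $\Pb\{c_U\psi_{\gamma_{U1}}+\psi_{\gamma_{U2}}\}=N_U(\pi)$ and $\Pb\psi_{p_0}=D$.

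\textbf{Step 2: linearize the ratio.} We write
\[
\hat{U}-U=\frac{\hat{N}_U-N_U}{\hat{D}}-\frac{N_U(\hat{D}-D)}{D\hat{D}}.
\]
Step 1 gives $\hat{D}\to D=p_0>0$ in probability, and both differences are $O_\Pb(n^{-1/2})$. Replacing $\hat{D}$ by $D$ in the denominators therefore costs only $o_\Pb(n^{-1/2})$. Using $N_U/D=U(\pi)$, we get
\[
\hat{U}(\pi)-U(\pi)=(\Pn-\Pb)\left\{\frac{c_U(X)\psi_{\gamma_{U1}}+\psi_{\gamma_{U2}}-U(\pi)\psi_{p_0}}{p_0}\right\}+o_\Pb(n^{-1/2}).
\]

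\textbf{Step 3: apply the CLT.} The influence function in the display above has mean zero. It also has finite variance: $Y,S$ are binary, the nuisances are bounded, and Condition (c) bounds $1/e$ and $1/(1-e)$. The classical central limit theorem and Slutsky's lemma then give the stated normal limit with variance $\E\{\cdot\}^2$.

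The only point needing care is Step 1, specifically that Theorem~\ref{thm:convergence_margin_upper} already absorbs the empirical-process term and the indicator bias. Since that theorem is assumed, the argument is otherwise routine.
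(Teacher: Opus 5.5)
Your proposal is correct and follows the paper's proof essentially step for step. The paper likewise substitutes Assumption~\ref{assum:rates}(a)(b) and the extra sup-norm rate into Theorem~\ref{thm:convergence_margin_upper} to obtain the two linear expansions, linearizes the ratio $\hat{N}_U(\pi)/\hat{D}$ by a first-order Taylor expansion, and concludes with the central limit theorem; your explicit checks of the centering and of $\hat{D}\to p_0>0$ just fill in details the paper leaves implicit.
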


\subsection{Optimal policy without the positivity condition} 
\label{app:ext_optimal}
Without requiring $p_1(x)-p_0(x)>0$ for all $x$, we may have $p_1(x)-p_0(x)=0$ for some $x$. Therefore, units can be divided into two regions according to the value of $p_1(x)-p_0(x)$. Define
\begin{eqnarray*}
\mathcal X_+ &=& \{x\in\mathcal X: p_1(x)-p_0(x)>0\},\\
\mathcal X_0 &=& \{x\in\mathcal X: p_1(x)-p_0(x)=0\}.
\end{eqnarray*}

For units with $X\in\mathcal X_0$, they do not contribute to the survival cost in the constraint. Therefore, whether these units should be modified, and in which direction, depends only on whether doing so increases the objective relative to the benchmark policy.

For units with $X\in\mathcal X_+$, the same priority scores based rule as in the main text applies. Since the priority scores $\rho_U$ and $\rho_L$ are only defined for units in $\mathcal X_+$, we define
\begin{eqnarray*}
Q_+(\eta)
&=&
\mathbb{E}\left[
\left\{
\varpi(X)\mathbf{1}\{\rho_L(X)\leq \eta<\rho_U(X)\}
+
\mathbf{1}\{\rho_L(X)>\eta\}
\right\}
\{p_1(X)-p_0(X)\}
\mathbf{1}\left(X\in \mathcal X_+\right)
\right],\\
\eta_+^E
&=&
\sup\left\{
\eta:
Q_+(\eta)
\geq
\mathbb{E}\left[\varpi(X)\{p_1(X)-p_0(X)\}\mathbf{1}\left(X\in \mathcal{X}_+\right)\right]
\right\},\quad\eta_+^S
\ =\ 
\min(0,\eta_+^E).
\end{eqnarray*}
The next theorem provides the explicit form of the optimal policy without the positivity condition.
\begin{theorem}
    \label{theo:pi_opt_ge0}
    Suppose that Assumptions~\ref{assum::treatig} and~\ref{assum::monoto} hold, define
\begin{eqnarray*}
  R_+(\eta)&=&\mathbb{E}\left[ \left\{\varpi(X)\bone(\rho_U(X) = \eta) + (1-\varpi(X))\bone(\rho_L(X) = \eta)\right\} \left\{p_1(X)-p_0(X)\right\}\mathbf{1}\left\{X\in \mathcal X_+\right\}\right].
\end{eqnarray*}
    \begin{itemize}
        \item On $\mathcal X_+$,
define $\pi_+^*(X)$ as

\noindent {\bf Case 1:} If $\eta_+^S = 0$, then
\begin{eqnarray*}
\pi_+^*(X) &=&  \varpi(X)\bone\{\rho_U(X)>0>\rho_L(X)\}+\bone\{\rho_L(X)>0\}.
\end{eqnarray*}
\noindent {\bf Case 2(a):}  If $\eta_+^S < 0$ and $R_+(\eta_+^S)=0$, then
\begin{eqnarray*}
\pi_+^*(X) &=&  \varpi(X)\bone\{\rho_U(X)>\eta^S_+>\rho_L(X)\}+\bone\{\rho_L(X)>\eta_+^S\}.
\end{eqnarray*}
\noindent {\bf Case 2(b):}  If $\eta_+^S < 0$ and $R_+(\eta_+^S)>0$, then
\begin{eqnarray*}
\pi_+^*(X) &=& \begin{cases}
\alpha_+ \cdot \varpi(X), & \text{if }\  \rho_U(X) = \eta_+^S,\\
\varpi(X) + \alpha_+ \cdot \{1-\varpi(X)\}, & \text{if }\  \rho_L(X) = \eta_+^S, \\
 \varpi(X)\bone\{\rho_U(X)>\eta_+^S>\rho_L(X)\}+\bone\{\rho_L(X)>\eta_+^S\}, & \text{otherwise},
\end{cases}
\end{eqnarray*}
where 
\begin{eqnarray*}
    \alpha_+ &=& \frac{\mathbb{E}\left[\varpi(X)\left\{p_1(X)-p_0(X)\right\}\right] - Q_+(\eta_+^S)}{R_+(\eta_+^S)}.
\end{eqnarray*}
        \item On $\mathcal X_0$, define $\pi_0^*(X)$ as 
        \begin{eqnarray}
            \label{eq:pi0_case_form}
            \pi_0^*(x)
            &=&
            \begin{cases}
            1, & \text{if } A_L(x)>0,\\[0.2cm]
            \varpi(x), & \text{if } A_U(x)\ge 0 \ge A_L(x),\\[0.2cm]
            0, & \text{if } A_U(x)<0.
            \end{cases}
            \end{eqnarray}
    \end{itemize}
    
    Then the optimal policy for~\eqref{eq:pi_opt_orig} is given by
    \begin{eqnarray}
    \label{eq:pi_star_ge0}
    \pi^*(X)
    &=&
    \pi^*_+(X)\bone\left(X\in\mathcal X_+\right)
    +
    \pi_0^*(X)\bone\left(X\in\mathcal X_0\right).
    \end{eqnarray}
    \end{theorem}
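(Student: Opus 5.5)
The plan is to split the optimization problem \eqref{eq:pi_opt_orig} over the partition $\mathcal X = \mathcal X_+ \cup \mathcal X_0$. Write the objective as $J(\pi)=\E[g(X,\pi(X))]$, where
\[
g(x,a)=(a-\varpi(x))\{\bone(a<\varpi(x))A_U(x)+\bone(a\ge\varpi(x))A_L(x)\}.
\]
For each fixed $x$, the map $a\mapsto g(x,a)$ is piecewise linear with a kink at $\varpi(x)$. It is concave because $A_L(x)\le A_U(x)$: indeed $\max\{p_1\mu_1-p_1+p_0,0\}\le\min\{p_1\mu_1,p_0\}$, since $\mu_1\le 1$ and $p_1\ge p_0$. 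The constraint involves only $\E[\pi(X)\{p_1(X)-p_0(X)\}]$, and this integrand vanishes on $\mathcal X_0$. Hence both the objective and the constraint decompose additively:
\[
J(\pi)=\E[g(X,\pi(X))\bone(X\in\mathcal X_0)]+\E[g(X,\pi(X))\bone(X\in\mathcal X_+)],
\]
and the constraint reads $\E[\pi(X)\{p_1(X)-p_0(X)\}\bone(X\in\mathcal X_+)]\ge \E[\varpi(X)\{p_1(X)-p_0(X)\}\bone(X\in\mathcal X_+)]$. (The right-hand side is the same with or without the indicator, since the integrand is zero on $\mathcal X_0$; this also reconciles the numerator of $\alpha_+$ with the definition of $\eta^E_+$.) The problem therefore separates into an unconstrained problem on $\mathcal X_0$ and a constrained problem on $\mathcal X_+$, and any maximizer is obtained by pasting together maximizers of the two pieces, as in \eqref{eq:pi_star_ge0}.

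On $\mathcal X_0$, I will maximize $g(x,a)$ pointwise over $a\in[0,1]$. By concavity, the right slope at $\varpi(x)$ is $A_L(x)$ and the left slope is $A_U(x)$. If $A_L(x)>0$, the function is increasing on $[\varpi(x),1]$ and, since $A_U\ge A_L>0$, also increasing on $[0,\varpi(x)]$, so $a=1$ is optimal. If $A_U(x)<0$, the symmetric argument gives $a=0$. Otherwise $A_U(x)\ge0\ge A_L(x)$, so $\varpi(x)$ is a maximizer. This yields \eqref{eq:pi0_case_form}; ties such as $A_L=0$ leave the objective unchanged, so the stated selection remains optimal.

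On $\mathcal X_+$, the subproblem is exactly \eqref{eq:pi_opt_orig} with the measure restricted to $\mathcal X_+$, where $p_1-p_0>0$ and the priority scores are finite. I will redo the proof of Theorem~\ref{theo:pi_opt} with $Q,R,\eta^E,\eta^S$ replaced by their $+$ versions; equivalently, I will invoke that theorem applied to the sub-measure $\E[\cdot\,\bone(X\in\mathcal X_+)]$, since its proof uses only the positivity of $p_1-p_0$ on the support being integrated over. If a self-contained argument is needed, the route is Lagrangian duality. For a multiplier $\lambda=-\eta\ge0$, maximize $g(x,a)+\lambda a\{p_1(x)-p_0(x)\}$ pointwise. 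Dividing by $p_1-p_0>0$, the slopes become $\rho_L-\eta$ on the right of $\varpi(x)$ and $\rho_U-\eta$ on the left, which reproduces the threshold rule $\varpi\bone\{\rho_U>\eta\ge\rho_L\}+\bone\{\rho_L>\eta\}$. Then $\eta^S_+$ is chosen so that complementary slackness holds, and randomization with $\alpha_+$ handles any atom at the threshold. Weak duality then certifies optimality: for any feasible $\pi$, $J(\pi)\le J(\pi)+\lambda(\text{slack})\le \max$ of the Lagrangian, which equals $J(\pi^*_+)$.

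The main obstacle is the $\mathcal X_+$ piece in Case 2(b). There I must check that $\alpha_+\in[0,1]$, using the right-continuity and monotonicity of $Q_+$ and the fact that $Q_+(\eta^{S}_+)$ and $Q_+(\eta^S_+)+R_+(\eta^S_+)$ straddle the constraint level. I must also check that the constraint holds with equality, so the complementary slackness term vanishes. I expect this step to carry over verbatim from Theorem~\ref{theo:pi_opt}.
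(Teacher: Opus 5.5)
Your route is the paper's: separating over $\mathcal X_0$ and $\mathcal X_+$, maximizing pointwise on $\mathcal X_0$, maximizing pointwise on $\mathcal X_+$ the Lagrangian $g(\pi)$ with multiplier $-\eta_+^S\ge 0$, checking feasibility, and concluding by the weak-duality chain $T_2\ge \eta_+^S\,\E[\{\pi^*(X)-\pi'(X)\}\{p_1(X)-p_0(X)\}]\ge 0$. Your explicit check $A_L\le A_U$ fills in something the paper uses silently: it makes the three cases defining $\pi_0^*$ exhaustive and each $g$ concave.

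Two adjustments to the write-up. First, you cannot fall back on Theorem~\ref{theo:pi_opt} applied to the sub-measure. The paper obtains that theorem as an immediate corollary of this one, so the Lagrangian argument must be written out in full. That includes your straddling argument for $\alpha_+\in[0,1]$, which the paper merely asserts. Second, keep the weak inequality in your rule $\varpi\bone\{\rho_U>\eta\ge\rho_L\}+\bone\{\rho_L>\eta\}$. This is what $Q_+$ and the paper's proof actually use. The displayed Cases 1 and 2(a), with the strict inequality $\eta>\rho_L$, assign $0$ on $\{\rho_L(X)=\eta_+^S<\rho_U(X)\}$, but the pointwise maximizers there form $[\varpi(X),1]$.

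The genuinely delicate step is not Case 2(b) but feasibility in Case 1, which your plan, like the paper's proof, passes over. The hypothesis $\eta_+^S=0$ only gives $\eta_+^E\ge 0$. If $\eta_+^E>0$, monotonicity gives $Q_+(0)\ge C:=\E[\varpi(X)\{p_1(X)-p_0(X)\}]$. If $\eta_+^E=0$, however, the supremum need not be attained. Since $Q_+$ is non-increasing and right-continuous, a jump across $C$ at $0$ (an atom of $\rho_U$ or $\rho_L$ at $0$) yields $Q_+(0)<C$, and then the Case 1 policy is infeasible.

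Here is a concrete instance. Take constant nuisances $p_1=0.9$, $p_0=0.5$, $\mu_1=0.9$, $\mu_0=1$, and $\varpi\equiv 1/2$. Then $A_U=0$ and $A_L=-0.09$, so $\rho_U=0$ and $\rho_L=-0.225$. We get $Q_+(\eta)=0.2=C$ on $[-0.225,0)$ but $Q_+(0)=0$, hence $\eta_+^S=0$. Case 1 then returns $\pi^*\equiv 0$, which violates the constraint; the actual optimum is $\pi\equiv 1/2$. The paper's sentence ``$\eta_+^E\ge 0$, which implies $Q_+(0)\ge\cdots$'' is exactly where this fails.

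Your Lagrangian framework repairs this at no cost. With multiplier $0$, every point of the tie sets is still a pointwise maximizer, so you can randomize on $\{\rho_U=0\}\cup\{\rho_L=0\}$ with $\alpha_+=\{C-Q_+(0)\}/R_+(0)$, exactly as in Case 2(b). Alternatively, assume no atoms of $\rho_U,\rho_L$ at $0$, as Assumption~\ref{assum:margincondition1}(a) later does.
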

To estimate the optimal policy, we need to estimate the regions $\mathcal X_+$ and $\mathcal X_0$, as well as the two policy components $\pi_+^*$ and $\pi_0^*$. The regions $\mathcal X_+$ and $\mathcal X_0$, together with $\pi_0^*$, can be estimated by a naive plug-in approach. The estimation of $\pi_+^*$ is similar to Estimator~\ref{est:eif}.

In finite samples, however, the estimated survival contrast $\hat p_1(x)-\hat p_0(x)$ will generally not be exactly zero. Thus, for implementation, one may introduce a small tolerance parameter $\tau_n>0$ and classify units according to
\begin{eqnarray*}
\widehat{\mathcal X}_0(\tau_n)
&=&
\{x:\hat p_1(x)-\hat p_0(x)\leq \tau_n\},\\
\widehat{\mathcal X}_+(\tau_n)
&=&
\{x:\hat p_1(x)-\hat p_0(x)> \tau_n\}.
\end{eqnarray*}
This tolerance-based classification serves as a truncation of the estimated nuisance function. A formal analysis of the choice of $\tau_n$ and its implications for constraint violation and excess worst-case regret are left for future work.
\subsection{Regret relative to the oracle policy}
\label{app::regret-oracle}
The oracle policy, denoted by $\pi_o^*$, is the optimal policy when the non-identified component $m_{11}(X)$ is known. Under this setting, the survivor average utility $V(\pi)$ is point-identified, and policy learning under truncation by death can be formulated as\begin{eqnarray}
\label{eqn:optimiproblem_knownm}
    \pi_o^*\in \arg\max V(\pi) \text{ subject to }B(\pi)\geq B(\varpi),
\end{eqnarray}
without adopting a minimax regret approach.

We can further write the optimization problem in \eqref{eqn:optimiproblem_knownm} as 
\begin{eqnarray*}
    \pi_o^*\in \arg \max \mathbb{E}\left\{\pi(X)A(X)\right\} \text{ subject to } \mathbb{E}\left[\pi(X)\{p_1(X)-p_0(X)\}\right]\geq \mathbb{E}\left[\varpi(X)\{p_1(X)-p_0(X)\}\right],
\end{eqnarray*}
where \begin{eqnarray*}
    A(X)&=&p_0(X)\{m_{11}(X)-\mu_0(X)\}.
\end{eqnarray*}
The objective function and the constraint are both linear in $\pi(x)$. Therefore, we can define the following priority score
\begin{eqnarray*}
    \rho(X)\ =\  \frac{A(X)}{p_1(X)-p_0(X)}
\end{eqnarray*}
for $x$ with $p_1(x)-p_0(x)>0$,
which characterizes both an upward modification from $\varpi(x)$ toward 1 and a downward modification from $\varpi(x)$ toward 0. When $\rho(x)>0$, an upward modification increases the survival cost and improves the objective, whereas a downward modification decreases both the survival cost and the objective. In the absence of the constraint, the optimal policy would modify the benchmark policy in the direction that increases the objective.

Under the constraint, however, this unconstrained solution may fail to satisfy the constraint. In that case, the policy must be further adjusted to satisfy the constraint by replacing some downward modifications with upward modifications toward 1.  These additional adjustments generally reduce the objective relative to the unconstrained solution. The policy therefore proceeds in decreasing order of priority score, stopping when the required constraint level is satisfied. This yields a threshold-based rule, similar to Theorem~\ref{theo:pi_opt}, but simpler because only a single priority score is involved.  Similarly, we define
\begin{eqnarray*}
Q_o(\eta)
&=& \mathbb{E}\!\left[ \mathbf{1}\{\rho(X)>\eta\}\{p_1(X)-p_0(X)\} \right],\\
\eta_o^E &=& \sup\{\eta : Q_o(\eta)\ge \mathbb{E}[\varpi(X)\{p_1(X)-p_0(X)\}]\}, \quad \eta_o^S\ =\ \min(0,\eta_o^E).
\end{eqnarray*}
The next theorem provides the explicit form of the oracle policy.
\begin{theorem}\label{thm:pi_opt_knownm}
Suppose that Assumptions~\ref{assum::treatig} and~\ref{assum::monoto} hold, and $p_1(x)-p_0(x)>0$ for all  $x\in\mathcal{X}$.
Define
\begin{eqnarray*}
  R_o(\eta)&=&\mathbb{E}\left[ \bone(\rho(X) = \eta) \left\{p_1(X)-p_0(X)\right\}\right].
\end{eqnarray*}
 Then the oracle policy $\pi_o^*(X)$ is given by
 
\noindent {\bf Case 1:} If $\eta_o^S = 0$, then
\begin{eqnarray*}
\pi_o^*(X) &=&  \bone\{\rho(X)>0\}.
\end{eqnarray*}
\noindent {\bf Case 2(a):}  If $\eta_o^S < 0$ and $R_o(\eta_o^S)=0$, then
\begin{eqnarray*}
\pi_o^*(X) &=&  \bone\{\rho(X)>\eta_o^S\}.
\end{eqnarray*}
\noindent {\bf Case 2(b):}  If $\eta_o^S < 0$ and $R_o(\eta_o^S)>0$, then
\begin{eqnarray*}
\pi_o^*(X) &=& \begin{cases}
\alpha_o, & \text{if }\  \rho(X) = \eta_o^S, \\
\bone\{\rho(X)>\eta_o^S\}, & \text{otherwise},
\end{cases}
\end{eqnarray*}
where 
\begin{eqnarray*}
    \alpha_o &=& \frac{\mathbb{E}\left[\varpi(X)\left\{p_1(X)-p_0(X)\right\}\right] - Q_o(\eta^S)}{R_o(\eta^S)}.
\end{eqnarray*}
\end{theorem}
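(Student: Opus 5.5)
The statement is a linear program in $\pi(\cdot)$: maximize $\mathbb{E}\{\pi(X)A(X)\}$ over measurable $\pi:\mathcal X\to[0,1]$ subject to $\mathbb{E}[\pi(X)\{p_1(X)-p_0(X)\}]\ge C_\varpi$, where $C_\varpi=\mathbb{E}[\varpi(X)\{p_1(X)-p_0(X)\}]$. I will argue by Lagrangian duality, in the style of a fractional knapsack (Neyman--Pearson) argument.

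\textbf{Step 1: reformulation.} Write $w(X)=p_1(X)-p_0(X)>0$. Dividing the integrand by $w$, the objective is $\mathbb{E}\{\pi(X)\rho(X)w(X)\}$. With the weight $w$, this becomes allocating a total ``mass'' of at least $C_\varpi$ to the units with the highest scores $\rho$.

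\textbf{Step 2: candidate policy and feasibility.} Let $\pi_o^*$ be the policy from the statement in each case. I will record the following facts:
\begin{itemize}
\item $Q_o$ is nonincreasing and right-continuous.
\item $Q_o(\eta^-)=Q_o(\eta)+R_o(\eta)$.
\item By the definition of the supremum, $Q_o(\eta_o^E)\le C_\varpi\le Q_o(\eta_o^E)+R_o(\eta_o^E)$ whenever $\eta_o^E$ is finite. Finiteness holds because $Q_o(-\infty)=\mathbb{E}\,w\ge C_\varpi$ (since $\varpi\le 1$) and $Q_o(+\infty)=0$; the degenerate case $C_\varpi=0$ is handled separately.
\end{itemize}
The case checks are then:
\begin{itemize}
\item \textbf{Case 1.} Here $\eta_o^E\ge 0$, so $Q_o(0)\ge C_\varpi$ holds by monotonicity (at least when $\eta_o^E>0$; when $\eta_o^E=0$ one uses right-continuity), and $\bone\{\rho>0\}$ is feasible.
\item \textbf{Case 2(a).} We have $R_o=0$ at $\eta_o^S$, so continuity gives $Q_o(\eta_o^S)=C_\varpi$.
\item \textbf{Case 2(b).} Since $\alpha_o\in[0,1]$, the expected weight is $Q_o(\eta_o^S)+\alpha_oR_o(\eta_o^S)=C_\varpi$.
\end{itemize}

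\textbf{Step 3: optimality via a multiplier.} Set $\lambda=-\eta_o^S\ge 0$. For any feasible $\pi$,
\[
\mathbb{E}\{\pi A\}\le \mathbb{E}\{\pi A\}+\lambda\left(\mathbb{E}\{\pi w\}-C_\varpi\right)=\mathbb{E}\left[\pi(X)w(X)\{\rho(X)-\eta_o^S\}\right]-\lambda C_\varpi .
\]
The right-hand side is maximized pointwise by any $\pi$ that equals $1$ where $\rho>\eta_o^S$, equals $0$ where $\rho<\eta_o^S$, and is arbitrary on $\{\rho=\eta_o^S\}$. The candidate $\pi_o^*$ has this form. It also satisfies complementary slackness:
\begin{itemize}
\item in Case 1, $\lambda=0$;
\item in Cases 2(a) and 2(b), the constraint holds with equality by Step 2.
\end{itemize}
Therefore $\mathbb{E}\{\pi_o^*A\}$ attains the upper bound, so $\mathbb{E}\{\pi_o^* A\}\ge\mathbb{E}\{\pi A\}$ for every feasible $\pi$.

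\textbf{Main obstacle.} The delicate step is Step 2: establishing the inequality $Q_o(\eta_o^E)\le C_\varpi\le Q_o(\eta_o^E)+R_o(\eta_o^E)$ at the supremum. This needs:
\begin{itemize}
\item the right-continuity argument (the strict inequality inside the indicator makes $Q_o$ right-continuous);
\item the treatment of $\eta_o^E=+\infty$ or $-\infty$;
\item the boundary case $\eta_o^E=0$ in Case 1.
\end{itemize}
In Case 1 with $\eta_o^E=0$, feasibility of $\bone\{\rho>0\}$ needs $Q_o(0)\ge C_\varpi$. This follows from right-continuity if the supremum is attained. If it is not attained, I would instead note that $\lambda=0$ is still valid: in that situation the constraint is exactly tight, since $Q_o(0)\le C_\varpi\le Q_o(0^-)$, and one can either randomize at $\rho=0$ at no cost to the objective or accept $\bone\{\rho>0\}$ as the formula's convention. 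I would resolve this by showing directly that $Q_o(0)\ge C_\varpi$ in that case.
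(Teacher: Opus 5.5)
Your argument (multiplier $\lambda=-\eta_o^S$, pointwise maximization of $\pi w(\rho-\eta_o^S)$, case-by-case feasibility, then complementary slackness) is essentially the paper's proof. Cases 2(a) and 2(b) go through as you describe. Your sandwich $Q_o(\eta_o^E)\le C_\varpi\le Q_o(\eta_o^E)+R_o(\eta_o^E)$ also supplies the justification of $\alpha_o\in[0,1]$ that the paper only asserts.

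The genuine gap is Case 1 with $\eta_o^E=0$. Right-continuity points the wrong way there: since $Q_o(\eta)<C_\varpi$ for every $\eta>0$, it yields $Q_o(0)\le C_\varpi$, not $\ge$. Your planned fix, showing directly that $Q_o(0)\ge C_\varpi$, cannot succeed. If the supremum is not attained, then $Q_o(0)<C_\varpi$ by definition, and this does happen. Take $X$ uniform on $\{a,b\}$ with $p_1-p_0\equiv 1/2$, $A(a)=0$, $A(b)=-1/2$ (so $\rho(a)=0$, $\rho(b)=-1$), and $\varpi\equiv 0.3$. Then:
\begin{itemize}
\item $C_\varpi=0.15$;
\item $Q_o=1/4$ on $[-1,0)$ and $Q_o=0$ on $[0,\infty)$, so $\eta_o^E=\eta_o^S=0$;
\item yet $\bone\{\rho(X)>0\}\equiv0$ gives $\mathbb{E}[\pi(X)\{p_1(X)-p_0(X)\}]=0<0.15$, so it is infeasible.
\end{itemize}
With $\lambda=0$, your Step 3 only certifies the unconstrained maximizer, so without feasibility it proves nothing.

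So Case 1 is false as stated in this edge case. The paper's proof has the same hole: it simply asserts that $\eta_o^E\ge0$ implies $Q_o(0)\ge C_\varpi$. Your aside about randomizing on $\{\rho=0\}$ is the correct repair, but it changes the statement rather than completing the proof. When $Q_o(0)<C_\varpi$, note that $R_o(0)\ge C_\varpi-Q_o(0)>0$, and use $\bone\{\rho>0\}+\alpha\bone\{\rho=0\}$ with $\alpha=\{C_\varpi-Q_o(0)\}/R_o(0)\in(0,1]$ (in the example, $\alpha=0.6$). This policy:
\begin{itemize}
\item meets the constraint with equality;
\item costs nothing in the objective, because $A=\rho\{p_1-p_0\}=0$ on $\{\rho=0\}$;
\item is certified optimal by $\lambda=0$.
\end{itemize}
Equivalently, apply the randomized formula of Case 2(b) also when $\eta_o^S=0$ and $Q_o(0)<C_\varpi$. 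Alternatively, add the hypothesis $R_o(0)=0$, which is implied by Assumption~\ref{assum:margincondition2} when $\eta_o^S=0$; under it $Q_o(0)=Q_o(0^-)\ge C_\varpi$. Declaring $\bone\{\rho>0\}$ a ``convention'' is not an option, since it is infeasible.

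A smaller slip: $\lim_{\eta\to-\infty}Q_o(\eta)=\mathbb{E}\{p_1(X)-p_0(X)\}\ge C_\varpi$ does not make $\eta_o^E$ finite. It is $-\infty$ when $\varpi\equiv1$ and $\rho$ is unbounded below. In that case $\pi\equiv1$ is the only feasible policy, so nothing is lost.
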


Next, we introduce a margin condition that controls the behavior of the true priority score $\rho(x)$ near its boundaries $\eta_o^S$.
\begin{assumption}\label{assum:margincondition2}
  There exists an $b > 0$ and a constant $C$ such that for any $\varepsilon \geq 0$, 
   \begin{eqnarray*}
    \mathbb{P}\left(|\rho(X)-\eta_o^S|\leq \varepsilon\right)\leq C \varepsilon^b.
    \end{eqnarray*}
\end{assumption}
The following theorem provides an upper bound of utility difference between the oracle policy $\pi_o^*$ relative to the optimal policy $\pi^*$ based on a minimax-regret approach.
\begin{theorem}\label{thm:dif_to_oracle}
Suppose that Assumption~\ref{assum::treatig},~\ref{assum::monoto},~\ref{assum:margincondition1}, and~\ref{assum:margincondition2} hold, and $p_1(x)-p_0(x)>0$ for all  $x\in\mathcal{X}$. Then 
    \begin{eqnarray*}
        R(\pi^*,\pi_o^*)&\leq& \frac{1}{p_0}C\left(\left\|\frac{p_0(X)\{m_{11}(X)-U_{11}(X)\}}{p_1(X)-p_0(X)}\right\|_{\infty}+|\eta_o^S-\eta^S|\right)^{1+b}\\
        &&+\frac{1}{p_0}C\left(\left\|\frac{p_0(X)\{m_{11}(X)-L_{11}(X)\}}{p_1(X)-p_0(X)}\right\|_{\infty}+|\eta_o^S-\eta^S|\right)^{1+b}\\
        &&-\frac{\eta_o^S}{p_0}\bone\{\sgn(\eta^S)\ne\sgn(\eta_o^S)\}\{Q(0)-\mathbb{E}\left[\varpi(X)\{p_1(X)-p_0(X)\}\right]\}.
    \end{eqnarray*}
\end{theorem}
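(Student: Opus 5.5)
The plan is to compute the utility difference directly, reduce it to a pointwise gain term plus a constraint term, and then handle each with a margin argument and the survival constraint, respectively. Here $R(\pi^*,\pi_o^*)=V(\pi_o^*)-V(\pi^*)$ is evaluated at the true $m_{11}(X)$.

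\textbf{Step 1: reduce to two terms.} Since $V(\pi)=p_0^{-1}\mathbb{E}[\pi(X)p_0(X)m_{11}(X)+\{1-\pi(X)\}p_0(X)\mu_0(X)]$, we have
\begin{eqnarray*}
R(\pi^*,\pi_o^*)=\frac{1}{p_0}\mathbb{E}\left[\{\pi_o^*(X)-\pi^*(X)\}A(X)\right].
\end{eqnarray*}
Write $\Delta(X)=p_1(X)-p_0(X)\in(0,1]$ and $A=\rho\Delta$, and add and subtract $\eta_o^S$. This splits the difference into
\begin{eqnarray*}
\frac{1}{p_0}\mathbb{E}\left[(\pi_o^*-\pi^*)(\rho-\eta_o^S)\Delta\right]+\frac{\eta_o^S}{p_0}\left\{\mathbb{E}(\pi_o^*\Delta)-\mathbb{E}(\pi^*\Delta)\right\}.
\end{eqnarray*}
By Assumption~\ref{assum:margincondition2} with $\varepsilon=0$, ties $\rho(X)=\eta_o^S$ have probability zero. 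Hence $\pi_o^*=\bone\{\rho>\eta_o^S\}$ almost surely. By Assumption~\ref{assum:margincondition1}(a), likewise $\pi^*=\varpi\bone\{\rho_U>\eta^S\}+(1-\varpi)\bone\{\rho_L>\eta^S\}$ almost surely.

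\textbf{Step 2: the constraint term.} Let $c=\mathbb{E}[\varpi(X)\Delta(X)]$ denote the constraint level.
\begin{itemize}
\item If $\eta_o^S=0$, the term vanishes.
\item If $\eta_o^S<0$, the oracle constraint binds, so $\mathbb{E}(\pi_o^*\Delta)=c$. The term then equals $-\eta_o^S\{\mathbb{E}(\pi^*\Delta)-c\}/p_0$.
\begin{itemize}
\item If also $\eta^S<0$, then $\mathbb{E}(\pi^*\Delta)=c$ by construction in Theorem~\ref{theo:pi_opt}, and the term is zero.
\item If $\eta^S=0$, then $\mathbb{E}(\pi^*\Delta)=Q(0)$, and the term is $-\eta_o^S\{Q(0)-c\}/p_0$. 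Since the signs of $\eta^S$ and $\eta_o^S$ differ exactly in this subcase, this is the third term of the bound.
\end{itemize}
\end{itemize}

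\textbf{Step 3: the margin term.} Pointwise, $(\pi_o^*-\pi)(\rho-\eta_o^S)\ge 0$ for any $\pi\in[0,1]$. So the integrand is nonnegative and can be nonzero only where $\pi^*\neq\pi_o^*$. Because $L_{11}\le m_{11}\le U_{11}$, we have $\rho_L\le\rho\le\rho_U$, with
\begin{eqnarray*}
\rho-\rho_L=\frac{p_0(m_{11}-L_{11})}{\Delta},\qquad \rho_U-\rho=\frac{p_0(U_{11}-m_{11})}{\Delta}.
\end{eqnarray*}
There are two ways the policies can disagree.
\begin{itemize}
\item Case $\rho>\eta_o^S$ and $\pi^*<1$. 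This forces $\rho_L\le\eta^S$, so $0<\rho-\eta_o^S\le(\rho-\rho_L)+(\eta^S-\eta_o^S)\le\varepsilon_L$, where $\varepsilon_L=\|p_0(m_{11}-L_{11})/\Delta\|_\infty+|\eta_o^S-\eta^S|$.
\item Case $\rho\le\eta_o^S$ and $\pi^*>0$. This forces $\rho_U>\eta^S$, so $0\le\eta_o^S-\rho\le\varepsilon_U$, with $\varepsilon_U$ defined analogously using $U_{11}$.
\end{itemize}
Using $|\pi_o^*-\pi^*|\le 1$ and $\Delta\le1$, each case contributes at most $\varepsilon\,\mathbb{P}(|\rho-\eta_o^S|\le\varepsilon)\le C\varepsilon^{1+b}$ by Assumption~\ref{assum:margincondition2}. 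This yields the first two terms after dividing by $p_0$.

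\textbf{Main obstacle.} The delicate step is Step 2: correctly tracking binding versus nonbinding regimes for both policies, and confirming that the sign of $\eta_o^S\{\mathbb{E}(\pi_o^*\Delta)-\mathbb{E}(\pi^*\Delta)\}$ gives exactly the indicator term. In particular, the case $\eta^S<0=\eta_o^S$ must give zero, which holds because the term is multiplied by $\eta_o^S$. The remaining care is ensuring that randomized boundary units are null sets, which the margin assumptions guarantee.
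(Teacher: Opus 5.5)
Your proof is correct and follows essentially the same route as the paper. It uses the same decomposition of $p_0R(\pi^*,\pi_o^*)=\mathbb{E}[\{\pi_o^*(X)-\pi^*(X)\}A(X)]$ into two margin terms, after subtracting $\eta_o^S\{p_1(X)-p_0(X)\}$, plus the constraint term $\eta_o^S\,\mathbb{E}[\{\pi_o^*(X)-\pi^*(X)\}\{p_1(X)-p_0(X)\}]$. It also uses the same three-case analysis of binding versus nonbinding regimes, and the same margin bound on the disagreement sets once both policies are replaced by their tie-free versions. Your pairing of the event $\pi_o^*>\pi^*$ with $\rho_L$, and of $\pi_o^*<\pi^*$ with $\rho_U$, is in fact the correct one. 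The paper's proof swaps these labels when bounding its $R_1$ and $R_2$, which is harmless because both terms appear in the final bound.
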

The first two terms arise from the discrepancy between the non-identified quantity $m_{11}(X)$ and its identifiable bounds $L_{11}(X)$ and $U_{11}(X)$. This discrepancy is also reflected in the difference between the stopping thresholds $\eta_o^S$ and $\eta^S$. The third term captures the difference of the binding status, which is zero when the two policies are both binding or both nonbinding. These terms become small when the interval $[L_{11}(X),U_{11}(X)]$ is narrow.

\section{Proofs}
\label{app::proof}

\subsection{Proof of Theorem~\ref{theo::constraint}}
Under Assumptions~\ref{assum::treatig} and \ref{assum::monoto}, the probabilities of principal strata conditional on the covariates $X$ are:
\begin{eqnarray*}
    \Pb(U=11 \mid X) &=& \Pb(S(0)=1 \mid Z=0, X)\ =\ \Pb(S=1 \mid Z=0, X) = p_0(X), \\
    \Pb(U=00 \mid X) &=& \Pb(S(1)=0 \mid Z=1, X)\ =\ \Pb(S=0 \mid Z=1, X) = 1 - p_1(X), \\
    \Pb(U=10 \mid X) &=& 1 - \Pb(U=11 \mid X) - \Pb(U=00 \mid X)\ =\ p_1(X) - p_0(X).
\end{eqnarray*}
By the law of total probability, the marginal probabilities are:
\begin{eqnarray*}
    \Pb(U=11) &=& p_0, \quad \Pb(U=00)\ =\ 1 - p_1, \quad \Pb(U=10)\ =\ p_1 - p_0.
\end{eqnarray*}
Thus, we have
\begin{eqnarray}
    B(\pi) &=& \mathbb{E}\left\{\pi(X) \mid U=10\right\} \ =\ \mathbb{E}\left\{ \frac{\Pb(U=10 \mid X)} {\Pb(U=10)}\pi(X)\right\}\ =\ \mathbb{E}\left\{\frac{p_1(X) - p_0(X)}{p_1 - p_0} \pi(X)\right\}. \label{eqn::b1pi_proof}
\end{eqnarray}
For the survivor average utility, we have 
\begin{eqnarray}
    V(\pi)&=&\nonumber\mathbb{E}\{\pi(X) Y(1) + (1 - \pi(X)) Y(0) \mid U=11\} \\
    &=&\nonumber \mathbb{E}[\{1 - \pi(X)\} \mathbb{E}\{Y(0) \mid U=11, X\} \mid U=11]\\
    && + \mathbb{E}\{\pi(X) \mathbb{E}\{Y(1) \mid U=11, X\} \mid U=11\}. \label{eqn::utility_decomp}
\end{eqnarray}
From Assumptions~\ref{assum::treatig} and \ref{assum::monoto}, we have 
\begin{eqnarray*}
    \mathbb{E}\{Y(0) \mid U=11, X\} &=& \mathbb{E}\{Y(0) \mid S(0)=1, S(1)=1, X\}\\
    &=& \mathbb{E}\{Y(0) \mid S(0)=1, X\} \\
    &=& \mathbb{E}\{Y(0) \mid S=1, Z=0, X\} \\
    &=& \mu_0(X).
\end{eqnarray*}
Thus, for the first term on the right side of \eqref{eqn::utility_decomp}, we obtain
\begin{eqnarray}
    \nonumber && \mathbb{E}[\{1 - \pi(X)\} \mathbb{E}\{Y(0) \mid U=11, X\} \mid U=11]\\
    \nonumber &=& \mathbb{E}[\{1 - \pi(X)\} \mu_0(X) \mid U=11]\\ 
    \nonumber &=& \mathbb{E}\left[\{1 - \pi(X)\} \mu_0(X) \frac{\Pb(U=11\mid X)}{\Pb(U=11)}\right]\\
    &=& \frac{1}{p_0} \mathbb{E}[\{1 - \pi(X)\} \mu_0(X) p_0(X)]. \label{eqn::second erm}
\end{eqnarray}
For the second term on the right side of \eqref{eqn::utility_decomp}, consider
\begin{eqnarray*}
    \mu_1(X) &=& \mathbb{E}(Y \mid Z=1, S=1, X)\\ 
    &=& \mathbb{E}\{Y(1) \mid S(1)=1, X\} \\
    &=& \frac{\Pb(U=11\mid X)}{\Pb\{S(1)=1\mid X\}} \mathbb{E}\{Y(1) \mid U=11, X\} + \frac{\Pb(U=10\mid X)}{\Pb\{S(1)=1\mid X\}}\mathbb{E}\{Y(1) \mid U=10, X\} \\
    &=& \frac{p_0(X)}{p_1(X)} \mathbb{E}\{Y(1) \mid U=11, X\} + \frac{p_1(X) - p_0(X)}{p_1(X)} \mathbb{E}\{Y(1) \mid U=10, X\}. \label{eqn::mu1_decomp}
\end{eqnarray*}
Because $Y$ is binary, we can obtain the following upper and lower bounds on \(\mathbb{E}\{Y(1) \mid U=11, X\}\) by setting $\mathbb{E}\{Y(1) \mid U=10, X\}$ to $0$ and $1$:
\begin{eqnarray}
    U_{11}(X) &=& \min \left\{\frac{\mu_1(X) p_1(X)}{p_0(X)},1\right\}, \quad L_{11}(X)\ =\ \max\left\{\frac{p_1(X) \mu_1(X) - p_1(X) + p_0(X)}{p_0(X)},0\right\}. \label{eqn::bounds_y1}
\end{eqnarray}
Plugging~\eqref{eqn::second erm}~and~\eqref{eqn::bounds_y1} into~\eqref{eqn::utility_decomp} yields the bounds.
\QEDB
\subsection{Proof of Theorem~\ref{thm:convergence_margin}}
We first introduce a lemma to streamline the proof.
\begin{lemma}\label{lem:bias_decomp}
The EIF-based estimator satisfies
\begin{eqnarray}
\mathbb{P}\left\{ \hat{\psi}_{p_z} - p_z(X)\right\}  &=&  \mathbb{P}\left[\left\{\hat{p}_z(X)- p_z(X)\right\}\left\{\frac{\bone(Z=z)}{\mathbb{P}(Z=z\mid X)} - \frac{\bone(Z=z)}{\widehat{\mathbb{P}}(Z=z\mid X)}\right\}\right] \label{eq:bias_pz}, \\
\nonumber\mathbb{P}\left\{\hat{\psi}_{p_z\mu_z} - p_z(X)\mu_z(X)\right\}
 &=&  \mathbb{P}\left[\left\{\hat{p}_z(X)\hat{\mu}_z(X) - p_z(X)\mu_z(X)\right\}\right.\\
 &&\quad\cdot\left.\left\{\frac{\bone(Z=z)}{\mathbb{P}(Z=z\mid X)} - \frac{\bone(Z=z)}{\widehat{\mathbb{P}}(Z=z\mid X)}\right\}\right], \label{eq:bias_pzmuz}
\end{eqnarray}
where $\widehat{\mathbb{P}}(Z=z\mid X)=\hat{e}(X)$ for $z=1$ and $1-\hat{e}(X)$ for $z=0$.
\end{lemma}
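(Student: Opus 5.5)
We prove the two identities by conditioning on $X$, using the tower property and Assumption~\ref{assum::treatig}. Here $\mathbb{P}$ denotes expectation over a new observation, with the nuisance estimators held fixed. This is legitimate because, by Condition (a) of Theorem~\ref{thm:convergence_margin}, they are built from an independent sample. We write $\hat\lambda_z(X)=\bone(Z=z)/\widehat{\mathbb{P}}(Z=z\mid X)$ and $\lambda_z(X)=\bone(Z=z)/\mathbb{P}(Z=z\mid X)$.

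For \eqref{eq:bias_pz}, substitute the estimated nuisances into \eqref{eqn:notationp}:
\begin{eqnarray*}
\hat\psi_{p_z}-p_z(X)&=&\hat p_z(X)-p_z(X)+\hat\lambda_z(X)\{S-\hat p_z(X)\}.
\end{eqnarray*}
Now condition on $X$. We have $\mathbb{E}\{\bone(Z=z)S\mid X\}=\mathbb{P}(Z=z\mid X)\,p_z(X)$ and $\mathbb{E}\{\bone(Z=z)\mid X\}=\mathbb{P}(Z=z\mid X)$. Hence
\begin{eqnarray*}
\mathbb{E}\{\hat\lambda_z(X)(S-\hat p_z(X))\mid X\}&=&\mathbb{E}\{\hat\lambda_z(X)\mid X\}\{p_z(X)-\hat p_z(X)\}.
\end{eqnarray*}
Adding back $\hat p_z(X)-p_z(X)$, and noting that $\mathbb{E}\{\lambda_z(X)\mid X\}=1$, gives the conditional mean
\begin{eqnarray*}
\{\hat p_z(X)-p_z(X)\}\,\mathbb{E}\{\lambda_z(X)-\hat\lambda_z(X)\mid X\}.
\end{eqnarray*}
The factor $\hat p_z(X)-p_z(X)$ is $X$-measurable, so it can be moved back inside the conditional expectation. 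Taking outer expectations then yields exactly \eqref{eq:bias_pz}.

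For \eqref{eq:bias_pzmuz} the argument is the same, with $S$ replaced by $YS$ and $p_z$ by $p_z\mu_z$. The only additional fact needed is
\begin{eqnarray*}
\mathbb{E}\{\bone(Z=z)YS\mid X\}&=&\mathbb{P}(Z=z\mid X)\,\mathbb{P}(S=1\mid Z=z,X)\,\mathbb{E}(Y\mid S=1,Z=z,X)\\
&=&\mathbb{P}(Z=z\mid X)\,p_z(X)\mu_z(X).
\end{eqnarray*}
This follows from the definitions of $p_z$ and $\mu_z$ alone; unconfoundedness is not even needed here. The algebra from the previous paragraph then goes through verbatim.

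There is no real obstacle. The only point requiring care is that the expectation is taken over a fresh data point while the hat functions stay fixed, which is what sample splitting provides. The quantities are well defined because Condition (c) keeps $\widehat{\mathbb{P}}(Z=z\mid X)$ bounded away from zero.
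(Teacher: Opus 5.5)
Your proof is correct and is essentially the paper's argument. The paper writes $\hat\psi_{p_z}-p_z(X)$ (and likewise $\hat\psi_{p_z\mu_z}-p_z(X)\mu_z(X)$) as a sum of three add-and-subtract terms, two of which have mean zero by the tower property after conditioning on $X$; you instead compute the conditional mean given $X$ directly and use $\mathbb{E}\{\bone(Z=z)/\mathbb{P}(Z=z\mid X)\mid X\}=1$, which is the same computation organized differently, and your remark that unconfoundedness is not needed is right for both identities, since $p_z$ and $\mu_z$ are defined as observed-data regressions.
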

\noindent {\it Proof of Lemma~\ref{lem:bias_decomp}. } 

For \eqref{eq:bias_pz}, we have 
\begin{eqnarray}\label{eqn:p1decomp}
 \nonumber\hat{\psi}_{p_z} - p_z(X) 
& = & \left\{ \hat{p}_z(X) -  p_z(X)\right\}\left(1 - \frac{\bone(Z=z)}{\mathbb{P}(Z=z\mid X)}\right) + \frac{\bone(Z=z)}{\widehat{\mathbb{P}}(Z=z\mid X)}\left\{ S - p_z(X)\right\} \\&&+\left\{\hat{p}_z(X) - p_z(X)\right\}\left\{\frac{\bone(Z=z)}{\mathbb{P}(Z=z\mid X)} - \frac{\bone(Z=z)}{\widehat{\mathbb{P}}(Z=z\mid X)}\right\}.
\end{eqnarray}
Then,~\eqref{eq:bias_pz} follows from taking expectation, because the first two terms on the right side of \eqref{eqn:p1decomp} vanish by tower property. 

For \eqref{eq:bias_pzmuz}, we have
\begin{eqnarray}
\label{eqn:p1mu1decomp}
&&\nonumber\hat{\psi}_{p_z\mu_z} - p_z(X)\mu_z(X)\\ &=&  \nonumber\left\{\hat{p}_z(X)\hat{\mu}_z(X) - p_z(X)\mu_z(X)\right\}\left(1 - \frac{\bone(Z=z)}{\mathbb{P}(Z=z\mid X)}\right) \\
&&\nonumber + \frac{\bone(Z=z)}{\widehat{\mathbb{P}}(Z=z\mid X)}\left\{YS - p_z(X)\mu_z(X)\right\} \\
&& +\left\{\hat{p}_z(X)\hat{\mu}_z(X) - p_z(X)\mu_z(X)\right\}\left(\frac{\bone(Z=z)}{\mathbb{P}(Z=z\mid X)} - \frac{\bone(Z=z)}{\widehat{\mathbb{P}}(Z=z\mid X)}\right).
\end{eqnarray}
Then,~\eqref{eq:bias_pzmuz} follows from taking expectation, because the first two terms on the right side of \eqref{eqn:p1mu1decomp} vanish by tower property.  
\QEDB

\subsubsection*{Asymptotic bias of the estimated numerator $\hat{N}(\pi)$}
We can write
\begin{eqnarray}
\hat{N}(\pi)-N(\pi)&=&
\nonumber\mathbb{P}_n\left\{\hat{c}(X)\hat{\psi}_{\gamma_1}+\hat{\psi}_{\gamma_2}\right\} - \mathbb{P}\left\{c(X)\psi_{\gamma_1}+\psi_{\gamma_2}\right\} \\
&=& R_1 +R_2 + (\Pn - \Pb)\{c(X)\psi_{\gamma_1}+\psi_{\gamma_2}\}, \label{eqn:Ndecomposen}
\end{eqnarray}
where
\begin{eqnarray*}
    R_1 &=&(\mathbb{P}_n - \mathbb{P})\{ \hat{c}(X)\hat{\psi}_{\gamma_1}+\hat{\psi}_{\gamma_2}-c(X)\psi_{\gamma_1}-\psi_{\gamma_2}\},\\
    R_2 &=& \mathbb{P}\{ \hat{c}(X)\hat{\psi}_{\gamma_1}+\hat{\psi}_{\gamma_2}-c(X)\psi_{\gamma_1}-\psi_{\gamma_2}\}.
\end{eqnarray*}
From Conditions (a), (b), and (c) together with Lemma 2 of \citet{kennedy2020sharp}, we have $R_1=o_{\mathbb{P}}(n^{-1/2})$. 

We then analyze the term $R_2$. We can decompose $R_2$ as $R_2=R_{21}+R_{22}+R_{23}$, where
\begin{eqnarray*}
    R_{21}&=&\Pb\left\{\hat{c}(X)\left(\hat{\psi}_{\gamma_1}-\psi_{\gamma_1}\right) \right\}, \quad R_{22}\ = \ \Pb\left(\hat{\psi}_{\gamma_2}-\psi_{\gamma_2}\right), \quad  R_{23}\ = \ \Pb\left[\psi_{\gamma_1}\left\{\hat{c}(X)-c(X)\right\}\right].
\end{eqnarray*}
For $R_{21}$, from Lemma~\ref{lem:bias_decomp}, we have 
\begin{eqnarray*}
\nonumber R_{21}&=& \Pb\left[\pi(X)\hat{c}(X)\left\{(\hat{\psi}_{p_1\mu_1}-\hat{\psi}_{p_1}+\hat{\psi}_{p_0})-(\psi_{p_1\mu_1}-\psi_{p_1}+\psi_{p_0})\right\} \right]\\
\nonumber &=& \Pb\left[\pi(X)\hat{c}(X)\left\{(\hat{\psi}_{p_1\mu_1}-\hat{\psi}_{p_1}+\hat{\psi}_{p_0})-(p_1(X)\mu_1(X)-p_1(X)+p_0(X))\right\} \right]\\
\nonumber &=&\mathbb{P}\left[\pi(X)\hat{c}(X)\left\{\hat{p}_1(X)\hat{\mu}_1(X)-\hat{p}_1(X)-p_1(X)\mu_1(X)+p_1(X)\right\}\left\{\frac{Z}{e(X)}-\frac{Z}{\hat{e}(X)}\right\}\right]\\&&+\mathbb{P}\left[\pi(X)\hat{c}(X)\left\{\hat{p}_0(X)-p_0(X)\right\}\left\{\frac{1-Z}{1-e(X)}-\frac{1-Z}{1-\hat{e}(X)}\right\}\right],
\end{eqnarray*} 
where the second equality follows from $\mathbb{E}(\psi_{p_z\mu_z}\mid X) = p_z(X)\mu_z(X)$ and  $\mathbb{E}(\psi_{p_z}\mid X) = p_z(X)$ for $z=0,1$, and the third equality follows from~\eqref{eq:bias_pz}~to~\eqref{eq:bias_pzmuz}.
By the Cauchy-Schwartz inequality, we have 
\begin{eqnarray}
\label{eqn:Ntermr2a} R_{21}&=&O_{\mathbb{P}}\left(\left\|\hat{\lambda}_1-\lambda_1\right\|_2\left\{\left\|\hat{p}_1\hat{\mu}_1-p_1\mu_1\right\|_2+\left\|\hat{p}_1-p_1\right\|_2\right\}\right).
\end{eqnarray}

Similarly, for $R_{22}$, we have
\begin{eqnarray*}
    R_{22}=\mathbb{P}\left[(1-\pi(X))\left\{\hat{p}_0(X)\hat{\mu}_0(X)-p_0(X)\mu_0(X)\right\}\left\{\frac{1-Z}{1-e(X)}-\frac{1-Z}{1-\hat{e}(X)}\right\}\right].
\end{eqnarray*}
By the Cauchy-Schwartz inequality, we have 
\begin{eqnarray}
\label{eqn:Ntermr2b} R_{22}&=&O_{\mathbb{P}}\left(\left\|\hat{\lambda}_0-\lambda_0\right\|_2\left\{\left\|\hat{p}_0\hat{\mu}_0-p_0\mu_0\right\|_2+\left\|\hat{p}_0-p_0\right\|_2\right\}\right).
\end{eqnarray}

For $R_{23}$, we have 
\begin{eqnarray*}
|R_{23}|&=&|\Pb\left[\left\{\hat{c}(X)-c(X)\right\}\left\{p_1(X)\mu_1(X)-p_1(X)+p_0(X)\right\}\right]|\\
&\leq&\Pb\left(|\hat{c}(X)-c(X) ||p_1(X)\mu_1(X)-p_1(X)+p_0(X)|\right)\\
& = &\E\left[\bone\left\{\hat{c}(X) \neq c(X)\right\}\left|p_1(X)\mu_1(X)-p_1(X)+p_0(X)\right|\right].
\end{eqnarray*}
For ease of exposition, denote 
\begin{eqnarray*}
    g(X)&=& p_1(X)\mu_1(X)-p_1(X)+p_0(X), \quad \hat{g}(X)\ = \ \hat{p}_1(X)\hat{\mu}_1(X)-\hat{p}_1(X)+\hat{p}_0(X).
\end{eqnarray*}
Then, we can write $\hat{c}(X)=\bone(\hat{g}(X)\geq 0)$ and $c(X)=\bone(g(X)\geq 0) $. If $\hat{c}(X)\ne c(X)$, then $\hat{g}(X)$ and $g(X)$ have opposite signs, which implies that $|\hat{g}(X)-g(X)|\geq |g(X)|$.
Therefore, we have 
\begin{eqnarray}
  \nonumber  |R_{23}|&\leq & \E \left\{ \bone(|\hat{g}(X)-g(X)|\geq |g(X)|)\cdot |g(X)| \right\}\\
  \nonumber  &\leq &\E \left\{ \bone(|\hat{g}(X)-g(X)|\geq |g(X)|)\cdot |\hat{g}(X)-g(X)| \right\}\\
 \nonumber   &\leq& ||\hat{g}(X)-g(X)||_\infty \Pb\{|\hat{g}(X)-g(X)|\geq |g(X)|\}\\
  \nonumber  &\leq&  C ||\hat{g}(X)-g(X)||_\infty^{1+\alpha}\\
  \nonumber  &=&C(\|\hat{p}_1\hat{\mu}_1 - p_1\mu_1\|_\infty + \|\hat{p}_1 - p_1\|_\infty+\|\hat{p}_0 - p_0\|_\infty)^{1+\alpha}\\
\label{eqn:Ntermr2c}    &\leq&C\|\hat{m}-m\|_{\infty}^{1+\alpha},
\end{eqnarray}
where the third inequality follows from Assumption~\ref{a:margin_posclass1}.
Combining~\eqref{eqn:Ntermr2a}~to~\eqref{eqn:Ntermr2c} yields 
\begin{eqnarray}
 \nonumber   R_2&=&O_\Pb\left(\left\|\hat{\lambda}_1-\lambda_1\right\|_2\left\{\left\|\hat{p}_1\hat{\mu}_1-p_1\mu_1\right\|_2+\left\|\hat{p}_1-p_1\right\|_2\right\}\right)\\
\nonumber &&+O_{\mathbb{P}}\left(\left\|\hat{\lambda_0}-\lambda_0\right\|_2\left\{\left\|\hat{p}_0\hat{\mu}_0-p_0\mu_0\right\|_2+\left\|\hat{p}_0-p_0\right\|_2\right\}\right)  \\
\label{eqn::biasN-R2}   && +O_\Pb\left(\|\hat{m}-m\|_{\infty}^{1+\alpha} \right).
\end{eqnarray}
Plugging~\eqref{eqn::biasN-R2}~and $R_1=o_\Pb(n^{-1/2})$ into~\eqref{eqn:Ndecomposen} yields the asymptotic bias of $\hat{N}(\pi)$.
\subsubsection*{Asymptotic bias of the estimated denominator $\hat{D}$}
We have 
\begin{eqnarray}
\nonumber\hat{D}-D & = &(\mathbb{P}_n - \mathbb{P})\{ \hat{\psi}_{p_0}-\psi_{p_0}\} + \mathbb{P}( \hat{\psi}_{p_0}-\psi_{p_0})  + (\mathbb{P}_n - \mathbb{P})\psi_{p_0}. \label{eqn:Ddecomposed}
\end{eqnarray}
The first term on the right hand side of~\eqref{eqn:Ddecomposed} satisfies $(\mathbb{P}_n - \mathbb{P})\{ \hat{\psi}_{p_0}-\psi_{p_0}\}=o_{\mathbb{P}}(n^{-1/2})$ by Conditions (a), (b), and (c) together with Lemma 2 of \citet{kennedy2020sharp}. For the second term, we have 
\begin{eqnarray*}
   \mathbb{P}( \hat{\psi}_{p_0}-\psi_{p_0})&=&  \mathbb{P}\left[\left\{\hat{p}_0(X)-p_0(X)\right\}\left\{\frac{1-Z}{1-\hat{e}(X)}-\frac{1-Z}{1-e(X)}\right\}\right]\ = \  O_\Pb\left(\left\|\hat{\lambda}_0-\lambda_0\right\|_2\left\|\hat{p}_0-p_0\right\|_2\right),
\end{eqnarray*}
where the last equality follows from  the Cauchy-Schwartz inequality.
\QEDB

\subsection{Proof of Theorem~\ref{thm:convergence_eif} }

From Theorem~\ref{thm:convergence_margin} and Assumption~\ref{assum:rates}, we have
\begin{eqnarray}\label{eqn:ndconvergence}
\hat{N}(\pi)-N(\pi)&=&(\mathbb{P}_n-\mathbb{P})\left\{c(X)\psi_{\gamma_1}+\psi_{\gamma_2}\right\}+o_\Pb(n^{-1/2})\nonumber\\
\hat{D}-D&=&(\mathbb{P}_n-\mathbb{P})\left(\psi_{p_0}\right)+o_\Pb(n^{-1/2}).
\end{eqnarray}
By Taylor expansion, we have
\begin{eqnarray}
\hat{L}(\pi) - L(\pi) 
& = & \frac{\hat{N}(\pi)}{\hat{D}} - \frac{N(\pi)}{D} \nonumber \\
& = & \frac{1}{D} \left\{\hat{N}(\pi) - N(\pi)\right\} - \frac{N(\pi)}{D^2} \left\{\hat{D} - D\right\} + o_\Pb(n^{-1/2}). \label{eqn:taylor_expansion}
\end{eqnarray}

Substituting~\eqref{eqn:ndconvergence} into \eqref{eqn:taylor_expansion}, we obtain
\begin{eqnarray*}
\hat{L}(\pi) - L(\pi) 
& = & (\mathbb{P}_n - \mathbb{P}) \left\{ \frac{c(X)\psi_{\gamma_1} + \psi_{\gamma_2}}{D} - \frac{N(\pi)}{D^2} \psi_{p_0} \right\} + o_\Pb(n^{-1/2}) \nonumber \\
& = & (\mathbb{P}_n - \mathbb{P}) \left\{ \frac{c(X)\psi_{\gamma_1} + \psi_{\gamma_2} - L(\pi)\psi_{p_0}}{p_0} \right\} + o_\Pb(n^{-1/2}).
\end{eqnarray*}
The asymptotic normality then follows from the central limit theorem.
\QEDB

\subsection{Proof of Theorem~\ref{thm:convergence_margin_upper}}

We only need to prove the the asymptotic bias formula of the estimated denominator $\hat{N}_U(\pi)$.

We can write
\begin{eqnarray}
\hat{N}_U(\pi)-N_U(\pi)&=&
\nonumber\mathbb{P}_n\left\{\hat{c}_U(X)\hat{\psi}_{\gamma_{U1}}+\hat{\psi}_{\gamma_{U2}}\right\} - \mathbb{P}\left\{c_U(X)\psi_{\gamma_{U1}}+\psi_{\gamma_{U2}}\right\}\\
& =& R_1 + R_2 + (\mathbb{P}_n - \mathbb{P})\left\{c_U(X)\psi_{\gamma_{U1}}+\psi_{\gamma_{U2}}\right\}.\label{eqn:decomposenu}
\end{eqnarray}
where \begin{eqnarray*}
    R_1&=&(\mathbb{P}_n - \mathbb{P})\{ \hat{c}_U(X)\hat{\psi}_{\gamma_{U1}}+\hat{\psi}_{\gamma_{U2}}-c_U(X)\psi_{\gamma_{U1}}-\psi_{\gamma_{U2}}\}\\
    R_2&=&\mathbb{P}\{ \hat{c}_U(X)\hat{\psi}_{\gamma_{U1}}+\hat{\psi}_{\gamma_{U2}}-c_U(X)\psi_{\gamma_{U1}}-\psi_{\gamma_{U2}}\}
\end{eqnarray*}
From Conditions (a), (b), and (c) in Theorem~\ref{thm:convergence_margin} together with Lemma~2 of \citet{kennedy2020sharp}, we have $R_1=o_{\mathbb{P}}(n^{-1/2})$. 

We then analyze the term $R_2$. We can decompose $R_2$ as $R_2=R_{21}+R_{22}+R_{23}$, where 
  \begin{eqnarray*}
    R_{21} & =&\Pb\left[\hat{c}_U(X)\left\{\hat{\psi}_{\gamma_{U1}}-\psi_{\gamma_{U1}}\right\} \right],\quad R_{22}\ =\ \Pb\left[\hat{\psi}_{\gamma_{U2}}-\psi_{\gamma_{U2}}\right],\quad R_{23}\ =\ \Pb\left[\psi_{\gamma_{U1}}\left\{\hat{c}_U(X)-c_U(X)\right\}\right].
  \end{eqnarray*}
  For $R_{21}$, from Lemma~\ref{lem:bias_decomp}, we have
\begin{eqnarray*}
\nonumber R_{21}&=& \Pb\left[\pi(X)\hat{c}_U(X)\left\{(\hat{\psi}_{p_0}-\hat{\psi}_{p_1\mu_1})-(\psi_{p_0}-\psi_{p_1\mu_1})\right\} \right]\\
\nonumber &=& \Pb\left[\pi(X)\hat{c}_U(X)\left\{(\hat{\psi}_{p_0}-\hat{\psi}_{p_1\mu_1})-(p_0(X)-p_1(X)\mu_1(X))\right\} \right]\\
\nonumber &=&-\mathbb{P}\left[\pi(X)\hat{c}(X)\left\{\hat{p}_1(X)\hat{\mu}_1(X)-p_1(X)\mu_1(X)\right\}\left\{\frac{Z}{e(X)}-\frac{Z}{\hat{e}(X)}\right\}\right]\\&&+\mathbb{P}\left[\pi(X)\hat{c}(X)\left\{\hat{p}_0(X)-p_0(X)\right\}\left\{\frac{1-Z}{1-\hat{e}(X)}-\frac{1-Z}{1-e(X)}\right\}\right],
\end{eqnarray*} 
where the second equality follows from $\mathbb{E}(\psi_{p_z\mu_z}\mid X) = p_z(X)\mu_z(X)$ and  $\mathbb{E}(\psi_{p_z}\mid X) = p_z(X)$ for $z=0,1$, and the third equality follows from~\eqref{eq:bias_pz}~and~\eqref{eq:bias_pzmuz}.
By the Cauchy-Schwartz inequality, we have 
\begin{eqnarray}
\label{eqn:Nutermr2a} R_{21}&=&O_{\mathbb{P}}\left(\left\|\hat{\lambda}_1-\lambda_1\right\|_2\left\|\hat{p}_1\hat{\mu}_1-p_1\mu_1\right\|_2\right)+O_{\mathbb{P}}\left(\left\|\hat{\lambda}_0-\lambda_0\right\|_2\left\|\hat{p}_0-p_0\right\|_2\right).
\end{eqnarray}
Similarly, for $R_{22}$, we have
\begin{eqnarray*}
    \nonumber R_{22}&=&\mathbb{P}\left[(1-\pi(X))\left\{\hat{p}_0(X)\hat{\mu}_0(X)-p_0(X)\mu_0(X)\right\}\left\{\frac{1-Z}{1-\hat{e}(X)}-\frac{1-Z}{1-e(X)}\right\}\right]\\&&+\mathbb{P}\left[\pi(X)\left\{\hat{p}_1(X)\hat{\mu}_1(X)-p_1(X)\mu_1(X)\right\}\left\{\frac{Z}{\hat{e}(X)}-\frac{Z}{e(X)}\right\}\right]\label{eqn:termr2b}.
\end{eqnarray*}
By the Cauchy-Schwartz inequality, we have 
\begin{eqnarray}
\label{eqn:Nutermr2b} R_{22}&=&O_{\mathbb{P}}\left(\left\|\hat{\lambda}_1-\lambda_1\right\|_2\left\|\hat{p}_1\hat{\mu}_1-p_1\mu_1\right\|_2\right)+O_{\mathbb{P}}\left(\left\|\hat{\lambda}_0-\lambda_0\right\|_2\left\|\hat{p}_0\hat{\mu}_0-p_0\mu_0\right\|_2\right).
\end{eqnarray}

For $R_{23}$, we have 
\begin{eqnarray*}
|R_{23}|&=&|\Pb\left(\left[\hat{c}_U(X)-c_U(X) \right]\left\{p_1(X)\mu_1(X)-p_0(X)\right\}\right)|\\&\leq&\Pb\left(|\hat{c}_U(X)-c_U(X) ||p_1(X)\mu_1(X)-p_0(X)|\right)\\& = &\E\left[\bone\left\{\hat{c}_U(X) \neq c_U(X)\right\}\left|p_1(X)\mu_1(X)-p_0(X)\right|\right].
\end{eqnarray*}
For ease of exposition, denote 
\begin{eqnarray*}
    g(X)&=& p_1(X)\mu_1(X)-p_0(X), \quad \hat{g}(X)\ = \ \hat{p}_1(X)\hat{\mu}_1(X)-\hat{p}_0(X).
\end{eqnarray*}
Then, we can write $\hat{c}_U(X)=\bone(\hat{g}(X)\geq 0)$ and $c_U(X)=\bone(g(X)\geq 0) $. If $\hat{c}_U(X)\ne c_U(X)$, then $\hat{g}(X)$ and $g(X)$ have opposite signs, which implies that $|\hat{g}(X)-g(X)|\geq |g(X)|$.
Therefore, we have 
\begin{eqnarray}
  \nonumber  |R_{23}|&\leq & \E \left\{ \bone(|\hat{g}(X)-g(X)|\geq |g(X)|)\cdot |g(X)| \right\}\\
  \nonumber  &\leq &\E \left\{ \bone(|\hat{g}(X)-g(X)|\geq |g(X)|)\cdot |\hat{g}(X)-g(X)| \right\}\\
 \nonumber   &\leq& ||\hat{g}(X)-g(X)||_\infty \Pb\{|\hat{g}(X)-g(X)|\geq |g(X)|\}\\
  \nonumber  &\leq&  C ||\hat{g}(X)-g(X)||_\infty^{1+\alpha}\\
  \nonumber  &=&C(\|\hat{p}_1\hat{\mu}_1 - p_1\mu_1\|_\infty +\|\hat{p}_0 - p_0\|_\infty)^{1+\alpha}\\
\label{eqn:Nutermr2c}    &\leq&C\|\hat{m}_U-m_U\|_{\infty}^{1+\alpha},
\end{eqnarray}
where the third inequality follows from Assumption~\ref{a:margin_posclass1_upper}. Combining~\eqref{eqn:Nutermr2a}~to~\eqref{eqn:Nutermr2c} yields 
\begin{eqnarray}
 \nonumber   R_2&=&O_\Pb\left(\left\|\hat{\lambda}_1-\lambda_1\right\|_2\left\|\hat{p}_1\hat{\mu}_1-p_1\mu_1\right\|_2\right)\\
\nonumber &&+O_{\mathbb{P}}\left(\left\|\hat{\lambda_0}-\lambda_0\right\|_2\left\{\left\|\hat{p}_0\hat{\mu}_0-p_0\mu_0\right\|_2+\left\|\hat{p}_0-p_0\right\|_2\right\}\right)  \\
\label{eqn::biasNu-R2}   && +O_\Pb\left(\|\hat{m}_U-m_U\|_{\infty}^{1+\alpha} \right).
\end{eqnarray}
Plugging~\eqref{eqn::biasNu-R2}~and $R_1=o_\Pb(n^{-1/2})$ into~\eqref{eqn:decomposenu} yields the asymptotic bias of $\hat{N}_U(\pi)$.
\QEDB

\subsection{Proof of Theorem~\ref{thm:convergence_eif_upper} }

From Theorem~\ref{thm:convergence_margin_upper},~Assumption~\ref{assum:rates}(a) and (b) together with condition of Theorem~\ref{thm:convergence_eif_upper}, we have\begin{eqnarray}\label{eqn:ndconvergence_upper}
\hat{N}_U(\pi)-N_U(\pi)&=&(\mathbb{P}_n-\mathbb{P})\left\{c_U(X)\psi_{\gamma_{U1}}+\psi_{\gamma_{U2}}\right\}+o_\Pb(n^{-1/2})\nonumber\\\hat{D}-D&=&(\mathbb{P}_n-\mathbb{P})\left(\psi_{p_0}\right)+o_\Pb(n^{-1/2}).
\end{eqnarray}
By Taylor expansion, we have
\begin{eqnarray}
\hat{U}(\pi) - U(\pi) 
& = & \frac{\hat{N}_U(\pi)}{\hat{D}} - \frac{N_U(\pi)}{D} \nonumber \\
& = & \frac{1}{D} \left\{\hat{N}_U(\pi) - N_U(\pi)\right\} - \frac{N_U(\pi)}{D^2} \left\{\hat{D} - D\right\} + o_\Pb(n^{-1/2}). \label{eqn:taylor_expansion_upper}
\end{eqnarray}

Substituting the results from \eqref{eqn:ndconvergence_upper} into \eqref{eqn:taylor_expansion_upper}, we obtain
\begin{eqnarray*}
\hat{U}(\pi) - U(\pi) 
& = & (\mathbb{P}_n - \mathbb{P}) \left\{ \frac{c_U(X)\psi_{\gamma_{U1}} + \psi_{\gamma_{U2}}}{D} - \frac{N_U(\pi)}{D^2} \psi_{p_0} \right\} + o_\Pb(n^{-1/2}) \nonumber \\
& = & (\mathbb{P}_n - \mathbb{P}) \left\{ \frac{c_U(X)\psi_{\gamma_{U1}} + \psi_{\gamma_{U2}} - U(\pi)\psi_{p_0}}{p_0} \right\} + o_\Pb(n^{-1/2}) .\label{eqn:l_convergence}
\end{eqnarray*}
 The asymptotic normality then follows from the central limit theorem.
\QEDB

\subsection{Proof of Corollary~\ref{cor:formsup}}
The formula for $B(\varpi)$ follows directly from Theorem~\ref{theo::constraint}. We now derive the formula for $ R_{\sup}(\pi,\varpi)$.

For any policy $\pi$, we have
\begin{eqnarray*}
    \nonumber V(\pi)&=&\mathbb{E}\left[Y(1)\pi(X)+Y(0)\{1-\pi(X)\} \mid U=11\right]\\
\nonumber&=&\mathbb{E}\left[\mathbb{E}\left\{Y(1)\pi(X)+Y(0)\{1-\pi(X)\} \mid U=11,X\right\}\mid U=11\right]\\
&=&\nonumber\mathbb{E}\left[\pi(X)m_{11}(X)+\{1-\pi(X)\}\mu_0(X)\mid U=11\right]\\
\nonumber&=&\mathbb{E}\left[\frac{p_0(X)}{p_0}\pi(X)m_{11}(X)+\frac{p_0(X)}{p_0}\{1-\pi(X)\}\mu_0(X)\right]\\
    &=&\mathbb{E}\left[\frac{p_0(X)}{p_0}\pi(X)\left\{m_{11}(X)-\mu_0(X)\right\}\right]+\mathbb{E}\left\{\frac{p_0(X)}{p_0}\mu_0(X)\right\}.
    \end{eqnarray*}
Therefore, the regret of $\pi$ relative to  $\varpi$ is
\begin{eqnarray*}
    R(\pi,\varpi)&=&\frac{1}{p_0}\mathbb{E}\left[\left\{\varpi(X)-\pi(X)\right\}p_0(X)\left\{m_{11}(X)-\mu_0(X)\right\}\right]
\end{eqnarray*}
To derive the worst-case regret, note that for each covariate value $x$, 
\begin{itemize}
\item if $\pi(x)<\varpi(x)$, then $\left\{\varpi(x)-\pi(x)\right\}p_0(x)\left\{m_{11}(x)-\mu_0(x)\right\}$  is maximized when $m_{11}(x)$ takes its upper bound $U_{11}(x)$;
\item if  $\pi(x)\geq \varpi(x)$, then $\left\{\varpi(x)-\pi(x)\right\}p_0(x)\left\{m_{11}(x)-\mu_0(x)\right\}$  is maximized when $m_{11}(x)$ takes its lower bound $L_{11}(x)$.
\end{itemize}
As a result, we have 
\begin{eqnarray*}
        R_{\sup}(\pi,\varpi)
    &=& \frac{1}{p_0} \mathbb{E} \left[ \left\{\varpi(X)-\pi(X)\right\} \left\{A_U(X)\bone(\pi(X)<\varpi(X))+A_L(X)\bone(\pi(X)\geq \varpi(X))\right\} \right],
    \end{eqnarray*}
where \begin{eqnarray*}
A_L(X) &= p_0(X)\{L_{11}(X) - \mu_0(X)\},\\
A_U(X) &= p_0(X)\{U_{11}(X) - \mu_0(X)\}.
\end{eqnarray*}
\QEDB

\subsection{Proof of Theorem~\ref{theo:pi_opt_ge0}}

We introduce some lemmas to simplify the proof.
\begin{lemma}
\label{lem:pointwise_x+}
For any fix $x\in\mathcal{X}_+$, define 
\begin{eqnarray*}
g(\pi) &=& \{\pi - \varpi(x)\} \left\{ A_U(x) \bone(\pi < \varpi(x)) + A_L(x) \bone(\pi \ge \varpi(x)) \right\} - \eta_+^S \pi\{p_1(x)-p_0(x)\},
\quad \pi\in[0,1].
\end{eqnarray*}
Then one maximizer of $g(\pi)$ over $\pi\in[0,1]$ is given by
\begin{eqnarray*}
\pi^*_+(x)
=
\begin{cases}
1, & \rho_L(x)>\eta_+^S,\\
\varpi(x), & \rho_U(x)> \eta_+^S > \rho_L(x),\\
0, & \rho_U(x)<\eta_+^S,\\
\alpha_+\varpi(x), & \rho_U(x)=\eta_+^S,\\
\varpi(x)+\alpha_+\left\{1-\varpi(x)\right\}, & \rho_L(x)=\eta_+^S.\\
\end{cases}
\end{eqnarray*}
\end{lemma}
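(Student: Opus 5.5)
Since $x$ is fixed, $g$ is a function of one real variable $\pi\in[0,1]$. It is piecewise linear with a single kink at $\pi=\varpi(x)$. The plan is to compute its slope on each piece and read off the maximizer.

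Write $d(x)=p_1(x)-p_0(x)>0$ for $x\in\mathcal X_+$. On $[0,\varpi(x))$ we have $g(\pi)=\{\pi-\varpi(x)\}A_U(x)-\eta_+^S\pi d(x)$, whose slope is $A_U(x)-\eta_+^S d(x)=d(x)\{\rho_U(x)-\eta_+^S\}$. On $[\varpi(x),1]$ we have $g(\pi)=\{\pi-\varpi(x)\}A_L(x)-\eta_+^S\pi d(x)$, whose slope is $d(x)\{\rho_L(x)-\eta_+^S\}$. Because $d(x)>0$, the sign of each slope is the sign of $\rho_U(x)-\eta_+^S$ or $\rho_L(x)-\eta_+^S$, respectively.

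The key structural fact is $A_L(x)\le A_U(x)$, equivalently $\rho_L(x)\le\rho_U(x)$. This holds because $L_{11}(x)\le U_{11}(x)$: indeed $\max\{p_1\mu_1-p_1+p_0,0\}\le\min\{p_1\mu_1,p_0\}$, since $p_1\mu_1-p_1+p_0\le p_1\mu_1$ and $p_1\mu_1-p_1+p_0\le p_0$ (using $\mu_1\le 1$), and both right-hand quantities are nonnegative. So $g$ is concave: the left slope is at least the right slope. The continuous function $g$ is also continuous at the kink, since both pieces vanish in the $A$ part there.

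The case analysis then follows.
\begin{itemize}
\item If $\rho_L(x)>\eta_+^S$, both slopes are positive, so $g$ is increasing on $[0,1]$ and $\pi=1$ is the maximizer.
\item If $\rho_U(x)<\eta_+^S$, both slopes are negative, so $\pi=0$ is the maximizer.
\item If $\rho_U(x)>\eta_+^S>\rho_L(x)$, $g$ increases up to $\varpi(x)$ and then decreases, so $\pi=\varpi(x)$ is the maximizer.
\item If $\rho_U(x)=\eta_+^S$, then $\rho_L(x)\le\eta_+^S$. The left piece is flat and the right piece is nonincreasing, so every $\pi\in[0,\varpi(x)]$ is a maximizer, and in particular $\alpha_+\varpi(x)$ is one, provided $\alpha_+\in[0,1]$.
\item If $\rho_L(x)=\eta_+^S$, then $\rho_U(x)\ge\eta_+^S$, so $g$ is nondecreasing and then flat on $[\varpi(x),1]$. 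Any point of $[\varpi(x),1]$, including $\varpi(x)+\alpha_+\{1-\varpi(x)\}$, is a maximizer.
\end{itemize}

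When $\rho_U(x)=\rho_L(x)=\eta_+^S$, the two boundary prescriptions differ, but both lie in the maximizing set, which is all of $[0,1]$ in that case. So the lemma's "one maximizer" claim is unaffected, although the displayed case list overlaps there.

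The main obstacle is guaranteeing $\alpha_+\in[0,1]$, so that the boundary values are feasible. For a purely pointwise statement it suffices to note that $\alpha_+$ enters only through flat pieces, where any value in $[0,1]$ works. I would prove $\alpha_+\in[0,1]$ in the main theorem's argument from the definition of $\eta_+^E$ as a supremum. Using right-continuity and monotonicity of $Q_+$ in $\eta$, the required constraint level lies between $Q_+(\eta_+^S)$ and $Q_+(\eta_+^S)+R_+(\eta_+^S)$, which gives $0\le\alpha_+\le 1$. In the present lemma I would simply record that $\alpha_+\in[0,1]$ is assumed, or is established there.
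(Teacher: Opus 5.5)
Your proof is correct and follows the same route as the paper: compute the slopes $\{p_1(x)-p_0(x)\}\{\rho_U(x)-\eta_+^S\}$ and $\{p_1(x)-p_0(x)\}\{\rho_L(x)-\eta_+^S\}$ of the two linear pieces, read off the maximizer case by case, and take $\alpha_+\in[0,1]$ as supplied by the definition of $\alpha_+$. Your version is slightly more careful than the paper's, which uses $\rho_L(x)\le\rho_U(x)$ without justification (and in the two boundary cases even assumes strict inequality), though your sketch of $\alpha_+\in[0,1]$ only holds when $\eta_+^S=\eta_+^E$ and $R_+(\eta_+^S)>0$, i.e.\ in the binding case where the $\alpha_+$ rows are actually used, because when $\eta_+^E>0$ one can have $Q_+(0)$ strictly above the constraint level and hence $\alpha_+<0$.
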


\noindent {\it Proof of Lemma~\ref{lem:pointwise_x+}. } We derive the maximizer for a fixed $x\in\mathcal X_+$ and thus omit $x$ in the argument of functions.

Because $g(\pi)$ is a piecewise linear function in $\pi$, its (one-sided) derivatives on the two linear segments are
\begin{eqnarray*}
g'(\pi) &=&
\begin{cases}
(\rho_U - \eta_+^S)(p_1-p_0) & \pi \in [0, \varpi), \\
(\rho_L - \eta_+^S)(p_1-p_0) & \pi \in (\varpi, 1].
\end{cases}
\end{eqnarray*}
At the non-differentiable point  $\varpi$,  the left and right derivatives are
\begin{eqnarray*}
g'_{-}(\varpi)&=& (\rho_U - \eta_+^S)(p_1-p_0),\\
g'_{+}(\varpi)&=& (\rho_L - \eta_+^S)(p_1-p_0).
\end{eqnarray*} 
We then analyze $g(\pi)$ case by case. 
\begin{itemize}
\item $\rho_L > \eta_+^S$. We have both one-sided derivatives to be positive on $[0,1]$. Therefore, $g(\pi)$ achieves its maximum at $\pi=1$.
\item $\rho_U > \eta_+^S > \rho_L$. We have $g'(\pi) >0$ on $[0, \varpi)$, $g'(\pi) <0$ on $(\varpi, 1]$, $g'_{-}(\varpi)>0$, and $g'_{+}(\varpi)<0$. Thus,  $g(\pi)$ achieves its maximum at $\pi=\varpi$.
\item $\rho_U < \eta_+^S$. We have both one-sided derivatives to be negative on $[0,1]$. Therefore, $g(\pi)$ achieves its maximum at $\pi=0$.
\item $ \rho_U = \eta_+^S$. We have $g'(\pi) =0$ on $[0, \varpi)$, $g'(\pi) <0$ on $(\varpi, 1]$, $g'_{-}(\varpi)=0$, and $g'_{+}(\varpi)<0$. 
Therefore, $g(\pi)$ is constant on $[0, \varpi]$ and decreasing on $[\varpi, 1]$.
As a result, $g(\pi)$ achieves its maximum at any value in $[0, \varpi]$.
\item  $\rho_L = \eta_+^S$. We have $g'(\pi) >0$ on $[0, \varpi)$, $g'(\pi) =0$ on $(\varpi, 1]$, $g'_{-}(\varpi)>0$, and $g'_{+}(\varpi)=0$. 
Therefore, $g(\pi)$ is increasing on $[0, \varpi]$ and constant on  $[\varpi, 1]$.
As a result, $g(\pi)$ achieves its maximum at any value in $[\varpi, 1]$.
\end{itemize}
From the definition of $\alpha_+$, we have $\alpha_+\in [0,1]$, therefore,  $ \pi_+^*$ maximizes $g(\pi)$ in all cases.
 \QEDB

\begin{lemma}
\label{lem:pointwise_x0}
For any fix $x\in\mathcal X_0$, define
\begin{eqnarray*}
h(\pi)
=
\{\pi-\varpi(x)\}
\left\{
A_U(x)\bone(\pi<\varpi(x))
+
A_L(x)\bone(\pi\ge\varpi(x))
\right\},
\quad \pi\in[0,1].
\end{eqnarray*}
Then one maximizer of $h(\pi)$ over $\pi\in[0,1]$ is given by
\begin{eqnarray*}
\pi^*_0(x)
=
\begin{cases}
1, & A_L(x)>0,\\
\varpi(x), & A_U(x)\ge 0 \ge A_L(x),\\
0, & A_U(x)<0.
\end{cases}
\end{eqnarray*}
\end{lemma}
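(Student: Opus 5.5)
The statement to prove is Lemma~\ref{lem:pointwise_x0}. We fix $x\in\mathcal X_0$ and suppress $x$ throughout. The plan mirrors the proof of Lemma~\ref{lem:pointwise_x+}. The function $h$ is piecewise linear in $\pi$, with a single kink at $\varpi$. Its slope is $A_U$ on $[0,\varpi)$ and $A_L$ on $(\varpi,1]$, and $h(\varpi)=0$. Because $x\in\mathcal X_0$, there is no multiplier term, so we only need the signs of $A_U$ and $A_L$.

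First we record the ordering $A_U\ge A_L$, which keeps the case split consistent. From the proof of Corollary~\ref{cor:formsup}, $A_U-A_L=p_0\{U_{11}-L_{11}\}\ge 0$ whenever $p_0>0$. On $\mathcal X_0$ we have $p_1=p_0$, so Theorem~\ref{theo::constraint} gives $L_{11}=\max\{\mu_1,0\}=\mu_1$ and $U_{11}=\min\{\mu_1,1\}=\mu_1$. Hence in fact $A_U=A_L$ there. If $p_0=0$, both sides vanish. Either way, $A_U\ge A_L$, so the three cases $A_L>0$, $A_U\ge0\ge A_L$ and $A_U<0$ are mutually exclusive and exhaustive. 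Exhaustiveness holds because if $A_L\le 0$ and $A_U\ge 0$ both fail in the second case, then either $A_L>0$ or $A_U<0$.

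Next we check each case using the one-sided slopes.
\begin{itemize}
\item If $A_L>0$, then $A_U\ge A_L>0$, so $h$ is strictly increasing on all of $[0,1]$ and is maximized at $\pi=1$.
\item If $A_U<0$, then $A_L\le A_U<0$, so $h$ is strictly decreasing on $[0,1]$ and is maximized at $\pi=0$.
\item If $A_U\ge 0\ge A_L$, then $h$ is nondecreasing on $[0,\varpi]$ and nonincreasing on $[\varpi,1]$. Its maximum, $h(\varpi)=0$, is therefore attained at $\pi=\varpi$.
\end{itemize}
One can also verify the claims directly. For $\pi<\varpi$ we have $h(\pi)=(\pi-\varpi)A_U$, and for $\pi\ge\varpi$ we have $h(\pi)=(\pi-\varpi)A_L$; comparing each with the proposed value gives the same conclusions.

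The only subtle point is the ordering $A_U\ge A_L$, needed to rule out a configuration such as $A_L>0>A_U$. That configuration would make the first and third cases overlap and would make $h$ have interior minima rather than maxima. We handle it by the reduction above to $L_{11}\le U_{11}$, which in fact gives equality on $\mathcal X_0$. The rest is routine.
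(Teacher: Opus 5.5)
Your proof is correct and takes the same route as the paper's: $h$ is piecewise linear with slopes $A_U$ on $[0,\varpi)$ and $A_L$ on $(\varpi,1]$, and the same three-case sign analysis identifies the maximizer. Your explicit check that $A_U\ge A_L$ (in fact $A_U=A_L$ on $\mathcal X_0$, since $L_{11}=U_{11}=\mu_1$ when $p_1=p_0$) is a useful addition. The paper uses this ordering only implicitly, when it asserts that $h$ is increasing on $[0,\varpi]$ whenever $A_L>0$.
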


\noindent {\it Proof of Lemma~\ref{lem:pointwise_x0}. } We derive the maximizer for a fixed $x\in\mathcal X_0$ and thus omit $x$ in the argument of functions.

Because $h(\pi)$ is a piecewise linear function in $\pi$, its (one-sided) derivatives on the two linear segments are
\begin{eqnarray*}
h'(\pi)
&=&
\begin{cases}
A_U, & \pi\in[0,\varpi),\\
A_L, & \pi\in(\varpi,1].
\end{cases}
\end{eqnarray*}
At the non-differentiable point  $\varpi$,  the left and right derivatives are
\begin{eqnarray*}
h'_{-}(\varpi)&=&A_U,\qquad h'_{+}(\varpi)=A_L.
\end{eqnarray*}
We then analyze $h(\pi)$ case by case.

\begin{itemize}
\item $A_L>0$. Then $h(\pi)$ is increasing on $[0,\varpi]$
and increasing on $[\varpi,1]$, so it achieves its maximum at $\pi=1$.

\item $A_U<0$. Then $h(\pi)$ is decreasing on $[0,\varpi]$
and decreasing on $[\varpi,1]$, so it achieves its maximum at $\pi=0$.

\item $A_U\ge 0 \ge A_L$. Then $h(\pi)$ is increasing on $[0,\varpi]$ and decreasing on
$[\varpi,1]$, so it achieves its maximum at $\pi=\varpi$.
\end{itemize}
In the boundary cases, the maximizer may not be unique. For example, if $A_L=0$, then any
$\pi\in[\varpi,1]$ is optimal; if $A_U=0$ then any $\pi\in[0,\varpi]$ is optimal. Nevertheless, in all cases, $\pi_0^*$ remains a valid maximizer.
\QEDB 

We then prove Theorem~\ref{theo:pi_opt_ge0}. We first show that $\pi^*(X)$ satisfies the constraint, i.e, \begin{eqnarray*}
    \E\left[\pi^*(X)\{p_1(X)-p_0(X)\}\right]\geq\E\left[\varpi(X)\{p_1(X)-p_0(X)\}\right]. 
\end{eqnarray*}

\noindent
\textbf{Case 1: $\eta_+^S = 0$.} By definition, we have $\eta_+^E \geq 0$, which implies \begin{eqnarray*}
    Q_+(0) \geq  \mathbb{E}\left[\varpi(X)\left\{p_1(X)-p_0(X)\right\}\bone(X\in\mathcal{X}_+)\right].
\end{eqnarray*}
Therefore, we have
\begin{eqnarray}
\nonumber&&\mathbb{E}\left[\pi_+^*(X)\left\{p_1(X)-p_0(X)\right\}\bone\left(X\in\mathcal{X}_+\right)\right] \\
\nonumber&=& \mathbb{E}\left[\bone(\rho_L(X) > 0)\left\{p_1(X)-p_0(X)\right\}\bone\left(X\in\mathcal{X}_+\right)\right] \\
\nonumber&&+ \mathbb{E}\left[\varpi(X)\left\{p_1(X)-p_0(X)\right\} \bone(\rho_U(X) > 0\ge \rho_L(X))\bone(X\in\mathcal{X}_+)\right] \\
\nonumber &=& Q_+(0)\\
&\geq& \mathbb{E}\left[\varpi(X)\left\{p_1(X)-p_0(X)\right\}\bone(X\in\mathcal{X}_+)\right].\label{eqn:proof-constraint_excess}
\end{eqnarray}

\noindent
\textbf{Case 2: $\eta_+^S < 0$.}  We have \begin{eqnarray*}
    \pi_+^*(X)\bone\left(X\in\mathcal{X}_+\right)&=&\bone(\rho_L(X) > \eta_+^S)\bone\left(X\in\mathcal{X}_+\right)+\varpi(X)\bone(\rho_U(X) > \eta_+^S > \rho_L(X))\bone\left(X\in\mathcal{X}_+\right)\\
    &&+\alpha_+ \varpi(X) \bone(\rho_U(X) = \eta_+^S)\bone\left(X\in\mathcal{X}_+\right)\\
    &&+\{\varpi(X) + \alpha_+ - \alpha_+\varpi(X)\}\bone(\rho_L(X) = \eta_+^S)\bone\left(X\in\mathcal{X}_+\right)\\
    &=&\left\{\varpi(X) \bone(\rho_U(X) > \eta_+^S\geq \rho_L(X)) + \bone(\rho_L(X) > \eta_+^S)\right\}\bone\left(X\in\mathcal{X}_+\right)\\
    &&+\alpha_+\left\{\varpi(X)\bone(\rho_U(X) = \eta_+^S) + (1 - \varpi(X)) \bone(\rho_L(X) = \eta_+^S)\right\}\bone\left(X\in\mathcal{X}_+\right).
\end{eqnarray*}
Thus, we have
\begin{eqnarray}
\nonumber &&\mathbb{E}\left[\pi_+^*(X)\left\{p_1(X)-p_0(X)\right\}\bone\left(X\in\mathcal{X}_+\right)\right] \\
\nonumber &=& \mathbb{E}\left[ \left\{\varpi(X) \bone(\rho_U(X) > \eta_+^S\geq \rho_L(X)) + \bone(\rho_L(X) > \eta_+^S)\right\}\left\{p_1(X)-p_0(X)\right\}\bone\left(X\in\mathcal{X}_+\right) \right] \\
\nonumber && + \alpha_+ \mathbb{E}\left[ \left\{\varpi(X)\bone(\rho_U(X) = \eta_+^S) + (1 - \varpi(X)) \bone(\rho_L(X) = \eta_+^S)\right\}\left\{p_1(X)-p_0(X)\right\}\bone\left(X\in\mathcal{X}_+\right) \right] \\
\nonumber &=&Q_+(\eta_+^S)+\alpha_+ R_+(\eta_+^S)\\
\label{eqn:proof-constraint}&=& \mathbb{E}\left[\varpi(X)\left\{p_1(X)-p_0(X)\right\}\bone\left(X\in\mathcal{X}_+\right)\right],
\end{eqnarray}
where the last equality follows from the definition of $\alpha_+$.
    
    By definition of $\mathcal X_0$, we have $p_1(x)-p_0(x)=0$ for any $x\in\mathcal{X}_0$.
    Hence
    \begin{eqnarray*}
    \mathbb{E}\!\left[\pi^*(X)\left\{p_1(X)-p_0(X)\right\}\right]
    &=&
    \mathbb{E}\!\left[\pi^*_+(X)\left\{p_1(X)-p_0(X)\right\}\bone\{X\in\mathcal X_+\}\right],
    \end{eqnarray*}
    and similarly,
    \begin{eqnarray*}
    \mathbb{E}\!\left[\varpi(X)\left\{p_1(X)-p_0(X)\right\}\right]
    &=&
    \mathbb{E}\!\left[\varpi(X)\left\{p_1(X)-p_0(X)\right\}\bone\{X\in\mathcal X_+\}\right].
    \end{eqnarray*}
Combining with \eqref{eqn:proof-constraint_excess} and \eqref{eqn:proof-constraint} yields \begin{itemize}
        \item if $\eta_+^S=0$,
    \begin{eqnarray}\label{eqn:proof_ext_x+1_new}
    \mathbb{E}\!\left[\pi^*(X)\left\{p_1(X)-p_0(X)\right\}\right]
    \ \ge\
    \mathbb{E}\!\left[\varpi(X)\left\{p_1(X)-p_0(X)\right\}\right];
    \end{eqnarray}
    \item if $\eta_+^S<0$, 
    \begin{eqnarray}\label{eqn:proof_ext_x+2_new}
    \mathbb{E}\!\left[\pi^*(X)\left\{p_1(X)-p_0(X)\right\}\right]
    \ =\
    \mathbb{E}\!\left[\varpi(X)\left\{p_1(X)-p_0(X)\right\}\right].
    \end{eqnarray}
    \end{itemize}
    Thus $\pi^*(X)$ satisfies the constraint.
    
    \vspace{0.15cm}
    Next, we prove that $R_{\sup}(\pi^*,\varpi)\le R_{\sup}(\pi',\varpi)$ for any policy
$\pi'(X)$ satisfying the constraint. 

\noindent For any feasible policy $\pi'$, by the definition of $\mathcal{X}_+$ and $\mathcal{X}_0$, it can further be expressed by \begin{eqnarray*}
    \pi'(X)
    &=&
    \pi'_+(X)\bone\{X\in\mathcal X_+\}
    +
    \pi'_0(X)\bone\{X\in\mathcal X_0\},
    \end{eqnarray*}
where 
\begin{eqnarray*}
    \mathbb{E}\left[\pi'_+(X)\left\{p_1(X)-p_0(X)\right\}\bone\{X\in\mathcal X_+\}\right]&\geq&\mathbb{E}\left[\varpi(X)\left\{p_1(X)-p_0(X)\right\}\right].
\end{eqnarray*}
Thus, we have
\begin{eqnarray}
   \nonumber   R_{\sup}(\pi',\varpi)-R_{\sup}(\pi^*,\varpi)  =T_1+T_2,\end{eqnarray}
where \begin{eqnarray}\nonumber T_1&=&\mathbb{E}\left[\left\{\pi_0^*(X)-\varpi(X) \right\} \left\{ A_U(X) \bone(\pi_0^*(X) < \varpi(X)) + A_L(X) \bone(\pi_0^*(X) \ge \varpi(X)) \right\}\bone\{X\in\mathcal X_0\}\right]\\
&&\nonumber -\mathbb{E}\left[\left\{\pi_0'(X)-\varpi(X) \right\} \left\{ A_U(X) \bone(\pi_0'(X) < \varpi(X)) + A_L(X) \bone(\pi_0'(X) \ge \varpi(X)) \right\}\bone\{X\in\mathcal X_0\}\right],\\
T_2&=&\nonumber\mathbb{E}\left[\left\{ \pi_+^*(X)-\varpi(X)\right\} \left\{ A_U(X) \bone(\pi_+^*(X) < \varpi(X)) + A_L(X) \bone(\pi_+^*(X) \ge \varpi(X)) \right\}\bone\{X\in\mathcal X_+\}\right]\\
&&\nonumber -\mathbb{E}\left[\left\{\pi_+'(X)-\varpi(X)  \right\} \left\{ A_U(X) \bone(\pi_+'(X) < \varpi(X)) + A_L(X) \bone(\pi_+'(X) \ge \varpi(X)) \right\}\bone\{X\in\mathcal X_+\}\right].
\end{eqnarray}

First, consider term $T_1$. From Lemma~\ref{lem:pointwise_x0}, for all $x\in\mathcal X_0$,
$\pi_0^*(x)$ maximizes $h(\pi)$ over $\pi\in[0,1]$.
Therefore, we have $
h(\pi_0^*(x)) \ge h(\pi_0'(x))$ for all $x\in\mathcal X_0$, which implies $\mathbb{E}\left\{h(\pi_0^*(X))\bone(X\in\mathcal X_0)\right\}\geq \mathbb{E}\left\{h(\pi_0'(X))\bone(X\in\mathcal X_0)\right\}$. As a result, $T_1\ge 0$.

Then, consider term $T_2$. From Lemma~\ref{lem:pointwise_x+}, for all $x\in\mathcal X_+$,
$\pi_+^*(x)$ maximizes $g(\pi)$ over $\pi\in[0,1]$. Therefore, we have $
g(\pi_+^*(x)) \ge g(\pi_+'(x))$ for all $x\in\mathcal X_+$, which implies $\mathbb{E}\left\{g(\pi_+^*(X))\bone(X\in\mathcal X_+)\right\}\geq \mathbb{E}\left\{g(\pi_+'(X))\bone(X\in\mathcal X_+)\right\}$.
As a result,
\begin{eqnarray}
\label{eq:compare_ge0_new}
\nonumber T_2&\geq& \eta_+^S\mathbb{E}\left[\left\{\pi_+^*(X)-\varpi(X)\right\}\left\{p_1(X)-p_0(X)\right\}\bone(X\in\mathcal{X}_1)\right]\\
\nonumber &&-\eta_+^S\mathbb{E}\left[\left\{\pi_+'(X)-\varpi(X)\right\}\left\{p_1(X)-p_0(X)\right\}\bone(X\in\mathcal{X}_1)\right]\\
\nonumber&=&\
\eta_+^S
\mathbb{E}\!\left[\{\pi_+^*(X)-\pi'_+(X)\}\left\{p_1(X)-p_0(X)\right\}\bone(X\in\mathcal{X}_1)\right]\\
\nonumber&=&\
\eta_+^S\mathbb{E}\!\left[\{\pi_+^*(X)-\pi'_+(X)\}\left\{p_1(X)-p_0(X)\right\}\bone(X\in\mathcal{X}_1)\right]\\
\nonumber&&+\eta_+^S
\mathbb{E}\!\left[\{\pi_0^*(X)-\pi'_0(X)\}\left\{p_1(X)-p_0(X)\right\}\bone(X\in\mathcal{X}_0)\right]\\
\nonumber&=&\ \eta_+^S
\mathbb{E}\!\left[\{\pi^*(X)-\pi'(X)\}\left\{p_1(X)-p_0(X)\right\}\right]\\
&\geq&\ \eta_+^S
\mathbb{E}\!\left[\{\pi^*(X)-\varpi(X)\}\left\{p_1(X)-p_0(X)\right\}\right].
\end{eqnarray}
where the second equality follows from the definition of $\mathcal{X}_0$, the third equality from the definition of $\pi^*(x)$ and $\pi'(x)$, and the last inequality from $\eta^S_+\leq0$ and $B(\pi')\geq B(\varpi)$. If $\eta_+^S=0$, then the right hand side of~\eqref{eq:compare_ge0_new} equals zero; if $\eta_+^S<0$, the the right hand side of~\eqref{eq:compare_ge0_new} equals zero by~\eqref{eqn:proof_ext_x+2_new}. In both cases, we have $T_2\geq 0$. Combining this with the result $T_1\geq 0$ yields the result. \QEDB 

\subsection{Proof of Theorem~\ref{theo:pi_opt}}

The result follows immediately from Theorem~\ref{theo:pi_opt_ge0}.\QEDB

\subsection{Proof of Proposition~\ref{lem:consistent}}
\subsubsection{Lemmas}
We introduce some lemmas to simplify the proof.
\begin{lemma}\label{lem:indicator_diff}
For any $a,b \in \mathbb{R}$,
\begin{eqnarray*}
    |\mathbf{1}(a > 0) - \mathbf{1}(b > 0)| \leq \mathbf{1}(\max\{|a|,|b|\} \leq |a - b|).
\end{eqnarray*}
\end{lemma}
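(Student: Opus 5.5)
The claim is Lemma~\ref{lem:indicator_diff}: for real $a,b$, $|\mathbf{1}(a>0)-\mathbf{1}(b>0)|\le \mathbf{1}(\max\{|a|,|b|\}\le |a-b|)$. I would argue by cases on whether the two indicators agree. The left-hand side takes only the values $0$ and $1$, and the right-hand side is nonnegative. So it suffices to show that whenever the left-hand side equals $1$, the event $\max\{|a|,|b|\}\le|a-b|$ holds.

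If $\mathbf{1}(a>0)=\mathbf{1}(b>0)$, the left-hand side is $0$ and there is nothing to prove. Otherwise, by symmetry of both sides under swapping $a$ and $b$, I can assume $a>0\ge b$. Then $a-b=a+(-b)$ is a sum of the two nonnegative numbers $|a|=a$ and $|b|=-b$. Hence
\begin{eqnarray*}
|a-b| \;=\; |a|+|b| \;\ge\; \max\{|a|,|b|\},
\end{eqnarray*}
so the indicator on the right equals $1$, and the inequality holds with equality.

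The only point needing care is the weak inequality at zero. When $b=0$ we have $\mathbf{1}(b>0)=0$, and the bound still holds because $|a-b|=|a|=\max\{|a|,|b|\}$. This is why the right-hand side uses $\le$ rather than $<$. I do not expect any real obstacle: the argument is the same sign-disagreement reasoning used for $R_{23}$ in the proof of Theorem~\ref{thm:convergence_margin}, applied symmetrically to both arguments.
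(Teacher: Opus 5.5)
Your proof is correct and takes essentially the same approach as the paper: reduce to the case where the indicators disagree, so that $a$ and $b$ lie weakly on opposite sides of zero and $|a-b|=|a|+|b|\ge\max\{|a|,|b|\}$. Using the swap symmetry collapses the paper's two explicit cases into one, and your remark on $b=0$ correctly explains why the right-hand side needs the weak inequality.
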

{\it Proof of Lemma~\ref{lem:indicator_diff}. }
If $|\mathbf{1}(a > 0) - \mathbf{1}(b > 0)| = 0$, the inequality holds trivially. Otherwise, if $a > 0$ and $b \leq 0$, then $|a| = a \leq a + |b| = a-b = |b - a|$. If $a \leq 0$ and $b > 0$, then $|a| = -a \leq b - a = |b - a|$. 
Symmetrically, $|b| \leq |a - b|$ in these two cases. Therefore, $\max\{|a|,|b|\} \leq |a- b|$ when $|\mathbf{1}(a > 0) - \mathbf{1}(b > 0)| = 1$.\QEDB

\begin{lemma}\label{lem:indicator_convergence}
Let  $T(\eta)= \mathbb{E}\left[\bone\left\{\xi_1(V)>\eta\right\}\Gamma_1(V)+\bone\left\{\xi_2(V)>\eta\right\}\Gamma_2(V)\right]$, where $0\leq \Gamma_1(V)\leq M$ and $0\leq \Gamma_2(V)\leq M$ almost surely for some constant $M < \infty$.
If $\|\hat{\xi}_1 - \xi_1\|_1 = o_{\mathbb{P}}(1)$, $\|\hat{\xi}_2 - \xi_2\|_1 = o_{\mathbb{P}}(1)$, and $T(\eta)$ is continuous at $\eta_0$, then
\begin{eqnarray}\label{eqn:fixthreshold}
    \nonumber&&\mathbb{E}\left[\left\{\bone(\hat{\xi}_1(V) > \eta_0) - \bone(\xi_1(V) > \eta_0)\right\}\Gamma_1(V)+\left\{\bone(\hat{\xi}_2(V) > \eta_0) - \bone(\xi_2(V) > \eta_0)\right\}\Gamma_2(V)\right]\\&=&o_{\mathbb{P}}(1).
\end{eqnarray} Moreover, for any sequence $\hat{\eta}$ satisfying $\hat{\eta}\overset{p}{\to}\eta_0$,
\begin{eqnarray}\label{eqn:randomthreshold}
\nonumber &&\mathbb{E}\left[\left\{\bone(\hat{\xi}_1(V)>\hat{\eta})-\bone(\xi_1(V)>\hat{\eta})\right\}\Gamma_1(V)+\left\{\bone(\hat{\xi}_2(V)>\hat{\eta})-\bone(\xi_2(V)>\hat{\eta})\right\}\Gamma_2(V)\right]\\&=&o_{\mathbb P}(1).
\end{eqnarray}
\end{lemma}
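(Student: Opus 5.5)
The plan is to bound the difference pointwise using the indicator inequality of Lemma~\ref{lem:indicator_diff}, and then reduce it to a measure-of-boundary statement. That statement is handled by a truncation argument combined with continuity of $T$. Throughout, $\Gamma_1,\Gamma_2$ are nonnegative and bounded by $M$, and the two pieces ($\xi_1$ and $\xi_2$) are treated separately and identically, so I describe the argument for one generic pair $(\hat\xi,\xi)$ with weight $\Gamma$.

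For \eqref{eqn:fixthreshold}, apply Lemma~\ref{lem:indicator_diff} with $a=\hat\xi(V)-\eta_0$ and $b=\xi(V)-\eta_0$. This gives
\[
\bigl|\bone(\hat\xi(V)>\eta_0)-\bone(\xi(V)>\eta_0)\bigr|\le \bone\bigl(|\xi(V)-\eta_0|\le|\hat\xi(V)-\xi(V)|\bigr).
\]
Fix $\varepsilon>0$ and split the event according to whether $|\hat\xi(V)-\xi(V)|\le\varepsilon$. The expectation, weighted by $\Gamma\le M$, is then at most
\[
\mathbb{E}\bigl[\Gamma(V)\bone\{|\xi(V)-\eta_0|\le\varepsilon\}\bigr]+M\,\mathbb{P}\bigl(|\hat\xi(V)-\xi(V)|>\varepsilon\bigr).
\]
By Markov's inequality, the second term is at most $M\|\hat\xi-\xi\|_1/\varepsilon=o_\Pb(1)$. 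Here the probability is over a fresh $V$ conditional on the estimated $\hat\xi$.

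The first term is the delicate step. I will control it by continuity of $T$ at $\eta_0$, using
\[
\mathbb{E}\bigl[\Gamma_1\bone\{\eta_0-\varepsilon<\xi_1\le\eta_0+\varepsilon\}+\Gamma_2\bone\{\eta_0-\varepsilon<\xi_2\le\eta_0+\varepsilon\}\bigr]=T(\eta_0-\varepsilon)-T(\eta_0+\varepsilon).
\]
Since both summands are nonnegative, each is bounded by this difference. Continuity of $T$ at $\eta_0$ makes the difference tend to $0$ as $\varepsilon\downarrow0$. The closed-interval version (including the endpoint $\xi=\eta_0-\varepsilon$) is covered by enlarging to $2\varepsilon$.

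Sending first $n\to\infty$ and then $\varepsilon\to0$ gives \eqref{eqn:fixthreshold}. Continuity of $T$ is the only hypothesis that rules out an atom of $\xi$ with positive weight at $\eta_0$, so this step is where the assumption is genuinely used.

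For \eqref{eqn:randomthreshold}, apply Lemma~\ref{lem:indicator_diff} with the threshold $\hat\eta$. This bounds the integrand by $\bone(|\xi(V)-\hat\eta|\le|\hat\xi(V)-\xi(V)|)$. On the event $|\hat\eta-\eta_0|\le\varepsilon$, whose probability tends to one, this event is contained in
\[
\bigl\{|\xi(V)-\eta_0|\le 2\varepsilon\bigr\}\cup\bigl\{|\hat\xi(V)-\xi(V)|>\varepsilon\bigr\}.
\]
The same two bounds then apply: continuity of $T$ at $\eta_0$ handles the first set, and $L_1$ consistency with Markov's inequality handles the second. Since $\hat\eta$ enters only through the event $|\hat\eta-\eta_0|\le\varepsilon$, no uniformity over thresholds is needed. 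This yields \eqref{eqn:randomthreshold}.
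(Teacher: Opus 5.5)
Your proof is correct and follows essentially the same route as the paper: Lemma~\ref{lem:indicator_diff} reduces each term to a boundary event, a Markov split on $\{|\hat\xi-\xi|>t\}$ handles the estimation error, continuity of $T$ controls the mass of $\xi_j$ near $\eta_0$, and restricting to $\{|\hat\eta-\eta_0|\le s\}$ handles the random threshold. The differences are cosmetic: the paper proves \eqref{eqn:randomthreshold} first and obtains \eqref{eqn:fixthreshold} by setting $\hat\eta\equiv\eta_0$, and your enlargement to $2\varepsilon$ correctly covers the closed-interval endpoint that the paper's bound by $T\{\eta_0-(t+s)\}-T\{\eta_0+(t+s)\}$ glosses over.
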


\noindent {\it Proof of Lemma~\ref{lem:indicator_convergence}.}
We only prove \eqref{eqn:randomthreshold}; \eqref{eqn:fixthreshold} follows directly from \eqref{eqn:randomthreshold} by taking $\hat\eta\equiv \eta_0$. For ease of exposition, let 
\begin{eqnarray*}
R_n&=&\left|\mathbb{E}\left[\left\{\bone(\hat{\xi}_1(V)>\hat{\eta})-\bone(\xi_1(V)>\hat{\eta})\right\}\Gamma_1(V)+
\left\{\bone(\hat{\xi}_2(V)>\hat{\eta})-\bone(\xi_2(V)>\hat{\eta})\right\}\Gamma_2(V)
\right]\right|.
\end{eqnarray*}
We have
\begin{eqnarray*}
R_n
&\leq&\left|\mathbb{E}\left[\left\{\bone(\hat{\xi}_1(V)>\hat\eta)-\bone(\xi_1(V)>\hat\eta)\right\}\Gamma_1(V)\right]\right|+\left|\mathbb{E}\left[\left\{\bone(\hat{\xi}_2(V)>\hat\eta)-\bone(\xi_2(V)>\hat\eta)\right\}\Gamma_2(V)\right]\right|\\
&\leq&\mathbb{E}\left[\left|\bone(\hat{\xi}_1(V)>\hat\eta)-\bone(\xi_1(V)>\hat\eta)\right|\Gamma_1(V)\right]+\mathbb{E}\left[\left|\bone(\hat{\xi}_2(V)>\hat\eta)-\bone(\xi_2(V)>\hat\eta)\right|\Gamma_2(V)
\right]\\
&\leq&\mathbb{E}\left[\bone\left\{|\xi_1(V)-\hat\eta|\leq |\hat{\xi}_1(V)-\xi_1(V)|\right\}\Gamma_1(V)\right]+
\mathbb{E}\left[\bone\left\{|\xi_2(V)-\hat\eta|\leq |\hat{\xi}_2(V)-\xi_2(V)|\right\}\Gamma_2(V)\right],
\end{eqnarray*}
where the last inequality follows from Lemma~\ref{lem:indicator_diff}. Thus, for any $t>0$,
\begin{eqnarray*}
R_n
&\leq&
\mathbb{E}\left\{
\bone\left(|\hat{\xi}_1(V)-\xi_1(V)|>t\right)\Gamma_1(V)
\right\}
+
\mathbb{E}\left\{
\bone\left(|\hat{\xi}_2(V)-\xi_2(V)|>t\right)\Gamma_2(V)
\right\}\\
&&+
\mathbb{E}\left\{
\bone\left(|\xi_1(V)-\hat\eta|\leq t\right)\Gamma_1(V)
+
\bone\left(|\xi_2(V)-\hat\eta|\leq t\right)\Gamma_2(V)
\right\}\\
&\leq&
\frac{M}{t}\|\hat{\xi}_1-\xi_1\|_1
+
\frac{M}{t}\|\hat{\xi}_2-\xi_2\|_1\\
&&+
\mathbb{E}\left\{
\bone\left(|\xi_1(V)-\hat\eta|\leq t\right)\Gamma_1(V)
+
\bone\left(|\xi_2(V)-\hat\eta|\leq t\right)\Gamma_2(V)
\right\},
\end{eqnarray*}
where the last inequality follows from the boundedness of $\Gamma_1(V)$ and $\Gamma_2(V)$ together with Markov's inequality.

Given $s>0$, define $A_n(s)=\left\{|\hat\eta-\eta_0|\leq s\right\}$. On the event $A_n(s)$, we have $
\left\{|\xi_j(V)-\hat\eta|\leq t\right\}\subseteq
\left\{|\xi_j(V)-\eta_0|\leq t+s\right\}$ for $j=1,2$.
Therefore, on $A_n(s)$,
\begin{eqnarray*}
R_n
&\leq&\frac{M}{t}\|\hat{\xi}_1-\xi_1\|_1+
\frac{M}{t}\|\hat{\xi}_2-\xi_2\|_1
\\&&+
\mathbb{E}\left\{
\bone\left(|\xi_1(V)-\eta_0|\leq t+s\right)\Gamma_1(V)
+
\bone\left(|\xi_2(V)-\eta_0|\leq t+s\right)\Gamma_2(V)
\right\}\\
&\leq&
\frac{M}{t}\|\hat{\xi}_1-\xi_1\|_1
+
\frac{M}{t}\|\hat{\xi}_2-\xi_2\|_1
+ T\{\eta_0-(t+s)\}-T\{\eta_0+(t+s)\}.
\end{eqnarray*}
 
For any fix $\epsilon>0$. By the continuity of $h$ at $\eta_0$, there exist $t_0>0$ and $s_0>0$ such that
$0\leq T\{\eta_0-(t_0+s_0)\}-T\{\eta_0+(t_0+s_0)\}<\epsilon/2$. Then
\begin{eqnarray*}
\mathbb{P}(R_n>\epsilon)
&\leq&
\mathbb{P}(R_n>\epsilon,\ A_n(s_0))
+
\mathbb{P}\{A_n(s_0)^c\}\\
&\leq&
\mathbb{P}\left(
\frac{M}{t_0}\|\hat{\xi}_1-\xi_1\|_1
+
\frac{M}{t_0}\|\hat{\xi}_2-\xi_2\|_1
>
\epsilon/2
\right)
+
\mathbb{P}\left(|\hat\eta-\eta_0|>s_0\right)\\
&\to& 0,
\end{eqnarray*}
where the last line follows from the consistency of $\hat\xi_1$, $\hat\xi_2$ and $\hat{\eta}$. Thus, we obtain $R_n=o_{\mathbb{P}}(1)$.
\QEDB

\begin{lemma}
\label{lem:effkappa}
Suppose that Conditions (a), (b), and (c) in Theorem~\ref{thm:convergence_margin} and Assumption~\ref{assum:rates}(a) hold,
    \begin{eqnarray*}
        \hat{C}_{\textup{eif}}-\mathbb{E}\left[\varpi(X)\left\{p_1(X)-p_0(X)\right\}\right]=(\mathbb{P}_n-\mathbb{P})\varpi(X)\left\{\psi_{p_1}-\psi_{p_0}\right\}+o_{\mathbb{P}}(n^{-1/2}).
    \end{eqnarray*}
\end{lemma}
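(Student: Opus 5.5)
The plan follows the same template as the analysis of $\hat{D}$ in the proof of Theorem~\ref{thm:convergence_margin}, applied now to the contrast $\varpi(X)(\psi_{p_1}-\psi_{p_0})$. Write $C=\mathbb{E}[\varpi(X)\{p_1(X)-p_0(X)\}]$. Since $\mathbb{E}(\psi_{p_z}\mid X)=p_z(X)$ under Assumption~\ref{assum::treatig}, we have $C=\mathbb{P}\{\varpi(X)(\psi_{p_1}-\psi_{p_0})\}$. I will decompose
\begin{eqnarray*}
\hat{C}_{\textup{eif}}-C &=& (\Pn-\Pb)\{\varpi(X)(\psi_{p_1}-\psi_{p_0})\} + (\Pn-\Pb)\{\varpi(X)(\hat{\psi}_{p_1}-\psi_{p_1}-\hat{\psi}_{p_0}+\psi_{p_0})\}\\
&& + \Pb\{\varpi(X)(\hat{\psi}_{p_1}-\psi_{p_1})\} - \Pb\{\varpi(X)(\hat{\psi}_{p_0}-\psi_{p_0})\}.
\end{eqnarray*}
The first term is exactly the leading term in the lemma, so it remains to show that the second and third lines are $o_{\mathbb{P}}(n^{-1/2})$.

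\emph{Empirical process term.} Because the nuisance functions are estimated on an independent sample (Condition (a)), Lemma 2 of \citet{kennedy2020sharp} reduces this term to $O_{\mathbb{P}}(n^{-1/2}\|\varpi(\hat{\psi}_{p_1}-\psi_{p_1}-\hat{\psi}_{p_0}+\psi_{p_0})\|_2)$. I will bound that $L_2$ norm using $0\le\varpi\le 1$, the overlap Condition (c), and the boundedness of $S$. Expanding $\hat{\psi}_{p_z}-\psi_{p_z}$ and using these bounds gives a quantity controlled by $\|\hat{p}_z-p_z\|_2+\|\hat{e}-e\|_2$, which is $o_{\mathbb{P}}(1)$ by Condition (b). So this term is $o_{\mathbb{P}}(n^{-1/2})$.

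\emph{Bias term.} For each $z$, I apply Lemma~\ref{lem:bias_decomp} with the bounded weight $\varpi(X)$ inserted. The identity still goes through because in \eqref{eqn:p1decomp} the first two terms have conditional mean zero given $X$, so multiplying them by a function of $X$ does not change that. This yields
\begin{eqnarray*}
\Pb\{\varpi(X)(\hat{\psi}_{p_z}-\psi_{p_z})\} &=& \Pb\left[\varpi(X)\{\hat{p}_z(X)-p_z(X)\}\left\{\frac{\bone(Z=z)}{\mathbb{P}(Z=z\mid X)}-\frac{\bone(Z=z)}{\widehat{\mathbb{P}}(Z=z\mid X)}\right\}\right].
\end{eqnarray*}
By Cauchy--Schwarz and $|\varpi|\le 1$, its absolute value is at most $\|\hat{\lambda}_z-\lambda_z\|_2\|\hat{p}_z-p_z\|_2$. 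This is $o_{\mathbb{P}}(n^{-1/2})$ by Assumption~\ref{assum:rates}(a).

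Combining the three pieces gives the claim. The only point needing any care is confirming that the weighted version of Lemma~\ref{lem:bias_decomp} holds, i.e., that the tower-property argument survives multiplication by $\varpi(X)$. Since $\varpi$ is a fixed, non-estimated function of $X$, I expect this to be immediate.
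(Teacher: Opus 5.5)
Your proposal is correct and matches the paper's proof step for step. It uses the same three-term decomposition. It handles the empirical-process remainder with sample splitting and Lemma~2 of \citet{kennedy2020sharp}, and the bias term with the $\varpi$-weighted version of Lemma~\ref{lem:bias_decomp}, Cauchy--Schwarz, and Assumption~\ref{assum:rates}(a).

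Your explicit check that the $L_2$ norm of $\varpi(X)\{\hat{\psi}_{p_1}-\psi_{p_1}-\hat{\psi}_{p_0}+\psi_{p_0}\}$ is controlled by $\|\hat{p}_z-p_z\|_2+\|\hat{e}-e\|_2$ under overlap fills in a detail the paper leaves implicit.
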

\noindent {\it Proof of Lemma~\ref{lem:effkappa}.}
We can write
\begin{eqnarray}
    \hat{C}_{\textup{eif}}-\mathbb{E}\left[\varpi(X)\left\{p_1(X)-p_0(X)\right\}\right]&=&
\nonumber\mathbb{P}_n\left[\varpi(X)\left\{\hat{\psi}_{p_1}-\hat{\psi}_{p_0}\right\}\right] - \mathbb{P}\left[\varpi(X)\left\{\psi_{p_1}-\psi_{p_0}\right\}\right] \\
&=& R_1 +R_2 + (\Pn - \Pb)\left[\varpi(X)\left\{\psi_{p_1}-\psi_{p_0}\right\}\right],\label{eqn:contermr}
\end{eqnarray}
where
\begin{eqnarray*}
    R_1 &=&(\mathbb{P}_n - \mathbb{P})\left[\varpi(X)\left\{\hat{\psi}_{p_1}-\hat{\psi}_{p_0}\right\}- \varpi(X)\left\{\psi_{p_1}-\psi_{p_0}\right\}\right],\\
    R_2 &=& \mathbb{P}\left[\varpi(X)\left\{\hat{\psi}_{p_1}-\hat{\psi}_{p_0}-\psi_{p_1}+\psi_{p_0}\right\}\right].
\end{eqnarray*}
From Conditions (a), (b), and (c) in Theorem~\ref{thm:convergence_margin} together with Lemma~2 of \citet{kennedy2020sharp}, we have $R_1 = o_{\Pb}(n^{-1/2})$. For $R_2$,
from Lemma~\ref{lem:bias_decomp}, we have 
\begin{eqnarray*}
\nonumber R_{2}&=& \Pb\left[\varpi(X)\left\{(\hat{\psi}_{p_1}-\hat{\psi}_{p_0})-(p_1(X)-p_0(X)\right\} \right]\\
\nonumber &=&\mathbb{P}\left[\varpi(X)\left\{\hat{p}_1(X)-p_1(X)\right\}\left\{\frac{Z}{e(X)}-\frac{Z}{\hat{e}(X)}\right\}\right]\\&&-\mathbb{P}\left[\varpi(X)\left\{\hat{p}_0(X)-p_0(X)\right\}\left\{\frac{1-Z}{1-e(X)}-\frac{1-Z}{1-\hat{e}(X)}\right\}\right].
\end{eqnarray*} 
By the Cauchy-Schwartz inequality, we have 
\begin{eqnarray}
\label{eqn:contermr2} R_{2}&=&O_{\mathbb{P}}\left(\left\|\hat{\lambda}_1-\lambda_1\right\|_2\left\|\hat{p}_1-p_1\right\|_2+\left\|\hat{\lambda}_0-\lambda_0\right\|_2\left\|\hat{p}_0-p_0\right\|_2\right).
\end{eqnarray}
Plugging \eqref{eqn:contermr2} and $R_1=o_{\mathbb{P}}(n^{-1/2})$ into \eqref{eqn:contermr} completes the proof. \QEDB

\begin{lemma}\label{lem:op_criterion}
Let $X_n$ be a sequence and $a$ a nonnegative constant. Suppose that, for every $\epsilon>0$, there exists a nonnegative sequence $Z_n^{(\epsilon)}=o_{\mathbb P}(1)$ such that $|X_n|\leq a+\epsilon+Z_n^{(\epsilon)}$. Then, $X_n=O_\mathbb{P}(1)$ and $\max(|X_n|-a,0)=o_{\mathbb P}(1)$. In particular, if $a=0$, then $X_n=o_{\mathbb P}(1)$.
\end{lemma}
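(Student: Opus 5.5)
The plan is to show the two conclusions in turn, using only the deterministic bound together with the definition of convergence in probability.

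\emph{Step 1: $X_n=O_{\mathbb P}(1)$.} Fix $\epsilon=1$. Then $|X_n|\le a+1+Z_n^{(1)}$. Since $Z_n^{(1)}=o_{\mathbb P}(1)$, for any $\delta>0$ we have $\mathbb P(Z_n^{(1)}>1)<\delta$ for all large $n$. Hence $\mathbb P(|X_n|>a+2)<\delta$ eventually. Enlarging the constant to cover the finitely many early $n$ (each $X_n$ is an almost surely finite random variable) gives tightness.

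\emph{Step 2: $\max(|X_n|-a,0)=o_{\mathbb P}(1)$.} Fix an arbitrary $\tau>0$ and choose $\epsilon=\tau/2$. On the event $\{\max(|X_n|-a,0)>\tau\}$ we have $|X_n|-a>\tau$. Combined with $|X_n|-a\le \tau/2+Z_n^{(\tau/2)}$, this forces $Z_n^{(\tau/2)}>\tau/2$. Therefore
\begin{eqnarray*}
\mathbb P\{\max(|X_n|-a,0)>\tau\}&\le&\mathbb P\left(Z_n^{(\tau/2)}>\tau/2\right)\ \to\ 0,
\end{eqnarray*}
because $Z_n^{(\tau/2)}=o_{\mathbb P}(1)$. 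Since $\tau$ is arbitrary, the claim follows.

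\emph{Step 3: the case $a=0$.} Here $\max(|X_n|,0)=|X_n|$, so Step 2 gives $X_n=o_{\mathbb P}(1)$ directly.

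There is no real obstacle in this argument. The one point needing care is that the sequence $Z_n^{(\epsilon)}$ depends on $\epsilon$, so $\epsilon$ must be fixed first (as a function of the target tolerance $\tau$) before letting $n\to\infty$. No uniformity in $\epsilon$ is required, and the argument above respects that order.
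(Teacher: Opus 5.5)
Your proposal is correct and follows essentially the same route as the paper: fix $\epsilon$ in terms of the target tolerance (you use $\tau/2$, the paper uses $\epsilon/2$), bound $\mathbb{P}(|X_n|>a+\tau)$ by $\mathbb{P}(Z_n^{(\tau/2)}>\tau/2)\to 0$, and read off both $X_n=O_{\mathbb{P}}(1)$ and $\max(|X_n|-a,0)=o_{\mathbb{P}}(1)$, with $a=0$ as the special case. Your separate Step 1 with $\epsilon=1$ and the remark about the finitely many early $n$ is a slightly more careful version of the paper's ``taking $\epsilon=1$'' step, and does not change the argument.
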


\noindent {\it Proof of Lemma~\ref{lem:op_criterion}. }
Fixing $\epsilon>0$, consider a non-negative sequence
$Z_n^{(\epsilon/2)}=o_{\mathbb P}(1)$ satisfying $
|X_n|\leq a+\epsilon/2+Z_n^{(\epsilon/2)}$.
Then,
\begin{eqnarray*}
\mathbb P\left\{|X_n|>a+\epsilon\right\}
&\leq&
\mathbb P\left\{a+\epsilon/2+Z_n^{(\epsilon/2)}>a+\epsilon\right\} \\
&=&
\mathbb P\left\{Z_n^{(\epsilon/2)}>\epsilon/2\right\}\to 0
\quad \text{as } n\to\infty,
\end{eqnarray*}
since $Z_n^{(\epsilon/2)}=o_{\mathbb P}(1)$. Taking $\epsilon=1$ yields $X_n=O_\mathbb{P}(1)$. Since $\left\{\max(|X_n|-a,0)>\epsilon\right\}=\left\{|X_n|>a+\epsilon\right\}$, we have
\begin{eqnarray*}
    \mathbb P \left\{\max(|X_n|-a,0)>\epsilon\right\}\to0
\end{eqnarray*}
for any $\epsilon>0$, which yields $\max(|X_n|-a,0)=o_{\mathbb P}(1)$. When $a=0$, we have $\max(|X_n|-a,0)=|X_n|$, and hence $X_n=o_{\mathbb P}(1)$.

\QEDB

\begin{lemma}\label{lem:auxconsisemp1}
    Suppose that Assumptions~\ref{assum:margincondition1}(a) and Conditions (a), (b), and (c) in Theorem~\ref{thm:convergence_margin} hold, and $p_1(x)-p_0(x)>0$ for all  $x\in\mathcal{X}$.  Then, for any fixed $\eta$, \begin{eqnarray}
        (\mathbb{P}_n-\mathbb{P})\left[\varpi(X)\left\{\mathbf{1}\left(\hat{\rho}_{U}(X)>\eta\right)-\mathbf{1}\left(\rho_{U}(X)>\eta\right)\right\}\left(\psi_{p_1}-\psi_{p_0}\right)\right]&=&O_{\mathbb{P}}(n^{-1/2}),\label{eqn:auxconsisemp1_U}\\
(\mathbb{P}_n-\mathbb{P})\left[\left\{1-\varpi(X)\right\}\left\{\mathbf{1}\left(\hat{\rho}_{L}(X)>\eta\right)-\mathbf{1}\left(\rho_{L}(X)>\eta\right)\right\}\left(\psi_{p_1}-\psi_{p_0}\right)\right]&=&O_{\mathbb{P}}(n^{-1/2})\label{eqn:auxconsisemp1_L}.
    \end{eqnarray}
Moreover, when $\eta=\eta^S$,
\begin{eqnarray}
        (\mathbb{P}_n-\mathbb{P})\left[\varpi(X)\left\{\mathbf{1}\left(\hat{\rho}_{U}(X)>\eta^S\right)-\mathbf{1}\left(\rho_{U}(X)>\eta^S\right)\right\}\left(\psi_{p_1}-\psi_{p_0}\right)\right]&=&o_{\mathbb{P}}(n^{-1/2}),\label{eqn:auxconsisemp1_U_etas}\\
(\mathbb{P}_n-\mathbb{P})\left[\left\{1-\varpi(X)\right\}\left\{\mathbf{1}\left(\hat{\rho}_{L}(X)>\eta^S\right)-\mathbf{1}\left(\rho_{L}(X)>\eta^S\right)\right\}\left(\psi_{p_1}-\psi_{p_0}\right)\right]&=&o_{\mathbb{P}}(n^{-1/2})\label{eqn:auxconsisemp1_L_etas}.
    \end{eqnarray}
\end{lemma}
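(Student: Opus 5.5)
The plan is to treat each display as an empirical-process term indexed by a function that, conditional on the independent nuisance sample, is fixed. This matches Condition (a) of Theorem~\ref{thm:convergence_margin}. For concreteness take \eqref{eqn:auxconsisemp1_U}. Define
\[
f_n(O)=\varpi(X)\left\{\mathbf{1}\left(\hat{\rho}_{U}(X)>\eta\right)-\mathbf{1}\left(\rho_{U}(X)>\eta\right)\right\}\left(\psi_{p_1}-\psi_{p_0}\right),
\]
where $O$ denotes the observed data. Conditional on the training sample, $f_n$ is a fixed measurable function, so $\mathbb{E}\{(\mathbb{P}_n-\mathbb{P})f_n\mid \text{training}\}=0$ and $\mathrm{Var}\{(\mathbb{P}_n-\mathbb{P})f_n\mid\text{training}\}\le \|f_n\|_2^2/n$. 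By Chebyshev's inequality applied conditionally, $(\mathbb{P}_n-\mathbb{P})f_n=O_{\mathbb{P}}(\|f_n\|_2/\sqrt{n})$. This is exactly the argument behind Lemma 2 of \citet{kennedy2020sharp}, which I would invoke directly. The whole proof therefore reduces to bounding $\|f_n\|_2$.

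For the $O_{\mathbb{P}}(n^{-1/2})$ claims, boundedness of $\|f_n\|_2$ suffices. We have $|\varpi|\le1$, the indicator difference is bounded by $1$, and $\psi_{p_1}-\psi_{p_0}$ is bounded. The last point holds because $S\in\{0,1\}$, $p_z(X)\in[0,1]$, and $e(X)\in(\delta,1-\delta)$ by Condition (c). Hence $\|f_n\|_2\le M$ for a constant $M$, which gives \eqref{eqn:auxconsisemp1_U}. The same argument with $1-\varpi$ and $\rho_L$ gives \eqref{eqn:auxconsisemp1_L}.

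For the $o_{\mathbb{P}}(n^{-1/2})$ claims at $\eta=\eta^S$, I need $\|f_n\|_2=o_{\mathbb{P}}(1)$. Since the $\psi$ factor is bounded,
\[
\|f_n\|_2^2\le M^2\,\mathbb{P}\{\mathbf{1}(\hat\rho_U(X)>\eta^S)\neq \mathbf{1}(\rho_U(X)>\eta^S)\}.
\]
By Lemma~\ref{lem:indicator_diff}, disagreement implies $|\rho_U(X)-\eta^S|\le|\hat\rho_U(X)-\rho_U(X)|$. Splitting at a level $t>0$ then gives the bound
\[
\mathbb{P}(|\hat\rho_U-\rho_U|>t)+\mathbb{P}(|\rho_U(X)-\eta^S|\le t)\le \mathbb{P}(|\hat\rho_U-\rho_U|>t)+Ct^\beta,
\]
where the second term uses Assumption~\ref{assum:margincondition1}(a). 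Equivalently, one can apply Lemma~\ref{lem:indicator_convergence} with $\Gamma\equiv1$ and the fixed threshold $\eta^S$; its continuity requirement holds because the margin condition gives zero mass at $\eta^S$. The first term then vanishes in probability provided $\hat\rho_U\to\rho_U$ in probability (in $L_1$). Choosing $t$ small and then $n$ large, and appealing to Lemma~\ref{lem:op_criterion}, yields $\|f_n\|_2=o_{\mathbb{P}}(1)$, and hence \eqref{eqn:auxconsisemp1_U_etas}. The identical argument for $\rho_L$ gives \eqref{eqn:auxconsisemp1_L_etas}.

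The main obstacle is establishing consistency of $\hat\rho_U$ and $\hat\rho_L$ from Condition (b) alone. These scores are ratios with denominator $\hat p_1-\hat p_0$. The numerators $A_U$ and $A_L$ are Lipschitz in $(p_1\mu_1,p_0,p_1,p_0\mu_0)$, since $\max$ and $\min$ are Lipschitz, so they converge in $L_2$. The difficulty is the denominator: it is only assumed positive, not bounded away from zero. I would handle it by truncation, restricting to the set $\{p_1(X)-p_0(X)>\kappa\}$. On that set the ratio map is Lipschitz once $\hat p_1-\hat p_0>\kappa/2$, an event whose complement has vanishing probability by Markov's inequality. The complement set $\{p_1(X)-p_0(X)\le\kappa\}$ has probability tending to zero as $\kappa\to0$, because $p_1-p_0>0$ everywhere. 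Sending $\kappa\to0$ at the end gives $\hat\rho\to\rho$ in probability, which is all the previous step needs. If bounded-away positivity is implicitly intended, this step becomes routine.
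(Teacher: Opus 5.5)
Your proposal is correct and follows essentially the same route as the paper: you reduce each term via Lemma~2 of \citet{kennedy2020sharp} (your conditional Chebyshev argument) to bounding $\|f_n\|_2$, and at $\eta=\eta^S$ you obtain $\|f_n\|_2=o_{\mathbb P}(1)$ from Lemma~\ref{lem:indicator_diff}, Assumption~\ref{assum:margincondition1}(a), Markov's inequality, and Lemma~\ref{lem:op_criterion}. The differences are minor improvements. For the $O_{\mathbb P}(n^{-1/2})$ claims you use plain boundedness of $f_n$, whereas the paper needlessly routes through the margin condition. Your truncation argument supplies the convergence in measure of $\hat\rho_U,\hat\rho_L$ that the proof actually uses; the paper simply asserts this follows from Condition (b), even though $p_1-p_0$ is not assumed bounded away from zero.
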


\noindent {\it Proof of Lemma~\ref{lem:auxconsisemp1}. } We only prove \eqref{eqn:auxconsisemp1_U} and \eqref{eqn:auxconsisemp1_U_etas}, the other two follow similarly. 

By the boundedness of $e(X)$, $p_1(X)$, $p_0(X)$ and $\varpi(X)$, there exists a sufficiently large constant $C_1<\infty$ such that
\begin{eqnarray*}
&& \| \varpi(X)\left\{\mathbf{1}(\hat{\rho}_U(X)>\eta)-\mathbf{1}(\rho_U(X)>\eta)\right\}\left\{\psi_{p_1}-\psi_{p_0}\right\}\|_2^2  \\
&\leq& C_1\mathbb{E} \left( \left| \mathbf{1}(\hat{\rho}_U(X)>\eta)-\mathbf{1}(\rho_U(X)>\eta) \right| \right)\\
&\leq & C_1\mathbb{P} \left( |\rho_U(X) - \eta|\leq |\hat{\rho}_U(X) - \rho_U(X)| \right)\\
&\leq & C_1\mathbb{P} \left( |\rho_U(X) - \eta^S|\leq |\hat{\rho}_U(X) - \rho_U(X)|+|\eta^S-\eta| \right),
\end{eqnarray*}
where the second inequality follows from Lemma~\ref{lem:indicator_diff} and the last inequality follows from triangle inequality. Then, for any $t>0$,
\begin{eqnarray*}
&&\| \varpi(X)\left\{\mathbf{1}(\hat{\rho}_U(X)>\eta)-\mathbf{1}(\rho_U(X)>\eta)\right\}\left\{\psi_{p_1}-\psi_{p_0}\right\}\|_2^2\\
&\leq& C_1\mathbb{P} \left( |\rho_U(X) - \eta^S|-|\eta^S-\eta| \leq t \right) + C_1\mathbb{P} \left( |\hat{\rho}_U(X) - \rho_U(X)| > t \right)  \\
&\leq & CC_1 (t+|\eta^S-\eta|)^\beta + C_1\frac{\|\hat{\rho}_U - \rho_U\|_2}{t},
\end{eqnarray*}
where the last line follows from Assumption~\ref{assum:margincondition1}(a) and Markov's inequality. For any $\epsilon>0$, choose
\begin{eqnarray*}
t_\epsilon=\frac{1}{2}\left[\left\{|\eta-\eta^S|^\beta+\frac{\epsilon}{CC_1}
\right\}^{1/\beta}-|\eta-\eta^S|\right],
\end{eqnarray*}
and define $Z_n^{(\epsilon)}=C_1\|\hat\rho_U-\rho_U\|/t_{\epsilon}$ and $a = CC_1|\eta-\eta^S|^\beta$.
Then 
\begin{eqnarray*}
\|\varpi(X)\left\{\mathbf{1}(\hat{\rho}_U(X)>\eta)-\mathbf{1}(\rho_U(X)>\eta)\right\}\left\{\psi_{p_1}-\psi_{p_0}\right\}\|_2^2
&\leq&
CC_1\left(t_\epsilon+|\eta-\eta^S|\right)^\beta+C_1\frac{\|\hat\rho_U-\rho_U\|}{t_\epsilon}\\
&\leq&CC_1|\eta-\eta^S|^\beta+\epsilon+C_1\frac{\|\hat\rho_U-\rho_U\|}{t_\epsilon}\\
&=&a+\epsilon+Z_n^{(\epsilon)}.
\end{eqnarray*} 
Under Condition (b) in Theorem~\ref{thm:convergence_margin}, $Z_n^{(\epsilon)}=o_{\mathbb P}(1)$.
By Lemma~\ref{lem:op_criterion}, we obtain that
\begin{eqnarray*}
\|\varpi(X)\left\{\mathbf{1}(\hat{\rho}_U(X)>\eta)-\mathbf{1}(\rho_U(X)>\eta)\right\}\left\{\psi_{p_1}-\psi_{p_0}\right\}\|_2^2
&=&O_{\mathbb P}(1).
\end{eqnarray*}
Applying Lemma~2 of \citet{kennedy2020sharp} proves \eqref{eqn:auxconsisemp1_U}. 

When $\eta=\eta^S$, the preceding argument applies with $a=CC_1|\eta^S-\eta^S|^\beta=0$. Hence, by Lemma~\ref{lem:op_criterion}, we obtain that 
\begin{eqnarray*}
\|\varpi(X)\left\{\mathbf{1}(\hat{\rho}_U(X)>\eta^S)-\mathbf{1}(\rho_U(X)>\eta^S)\right\}\left\{\psi_{p_1}-\psi_{p_0}\right\}\|_2^2
&=&o_{\mathbb P}(1).
\end{eqnarray*}
Applying Lemma~2 of \citet{kennedy2020sharp} proves \eqref{eqn:auxconsisemp1_U_etas}. \QEDB

\subsubsection{Uniform convergence of $\hat{Q}(\cdot)$}
We now prove that $\hat{Q}(\eta)-Q(\eta)=o_{\mathbb{P}}(1)$ uniformly for $\eta$ in a neighborhood of $\eta^E$. For a fixed $\eta$ in a neighborhood of $\eta^E$, we have
\begin{eqnarray}
    \nonumber&&|\hat{Q}(\eta)-Q(\eta)|\\
       \nonumber &\leq&\bigg|\Pn\!\left[ \varpi(X)\mathbf{1}\{\hat{\rho}_U(X)>\eta\} \left\{\hat{\psi}_{p_1}-\hat{\psi}_{p_0}\right\}\right]-\Pb\left[\varpi(X)\mathbf{1}\{\rho_U(X)>\eta\} \{p_1(X)-p_0(X)\} \right]\\
        \nonumber&&+\Pn\left[ \{1-\varpi(X)\}\mathbf{1}\{\hat{\rho}_L(X)>\eta\} \left\{\hat{\psi}_{p_1}-\hat{\psi}_{p_0}\right\}\right]\\
        \nonumber&&-\Pb\left[\{1-\varpi(X)\}\mathbf{1}\{\rho_L(X)>\eta\} \{p_1(X)-p_0(X)\} \right]\bigg|\\
        &\leq& R_1+R_2+R_3+R_4\label{eqn:cantellidecomp},
%
\end{eqnarray}
where
\begin{eqnarray*}
R_1 &=&\left|\mathbb{P}\left[\varpi(X)\left\{\mathbf{1}\left(\hat{\rho}_{U}(X)>\eta\right)-\mathbf{1}\left(\rho_{U}(X)>\eta\right)\right\}\{p_1(X)-p_0(X)\}\right]\right.\\
&&+\left.\mathbb{P}\left[\{1-\varpi(X)\}\left\{\mathbf{1}\left(\hat{\rho}_{L}(X)>\eta\right)-\mathbf{1}\left(\rho_{L}(X)>\eta\right)\right\}\{p_1(X)-p_0(X)\}\right]\right|,\\
R_2&=&\left|\mathbb{P}\left[\varpi(X)\mathbf{1}\left\{\hat{\rho}_{U}(X)>\eta\right\}\left\{(\hat{\psi}_{p_1}(X)-\hat{\psi}_{p_0}(X))-(p_1(X)-p_0(X))\right\}\right]\right|\\
&&+ \left|\mathbb{P}\left[\{1-\varpi(X)\}\mathbf{1}\left\{\hat{\rho}_{L}(X)>\eta\right\}\left\{(\hat{\psi}_{p_1}(X)-\hat{\psi}_{p_0}(X))-(p_1(X)-p_0(X))\right\}\right]\right|,\\
R_3&=&\left|(\mathbb{P}_n-\mathbb{P})\left[\left\{\varpi(X)\mathbf{1}\left(\rho_{U}(X)>\eta\right)+\left\{1-\varpi(X)\right\}\mathbf{1}\left(\rho_{L}(X)>\eta\right)\right\}\left\{\psi_{p_1}-\psi_{p_0}\right\}\right]\right|,\\
R_4&=&\left|(\mathbb{P}_n-\mathbb{P})\left[\varpi(X)\left\{\mathbf{1}\left(\hat{\rho}_{U}(X)>\eta\right)\left(\hat{\psi}_{p_1}-\hat{\psi}_{p_0}\right)-\mathbf{1}\left(\rho_{U}(X)>\eta\right)\left(\psi_{p_1}-\psi_{p_0}\right)\right\}\right]\right.\\
&&+\left.(\mathbb{P}_n-\mathbb{P})\left[\left\{1-\varpi(X)\right\}\left\{\mathbf{1}\left(\hat{\rho}_{L}(X)>\eta\right)\left(\hat{\psi}_{p_1}-\hat{\psi}_{p_0}\right)-\mathbf{1}\left(\rho_{L}(X)>\eta\right)\left(\psi_{p_1}-\psi_{p_0}\}\right)\right\}\right]\right|.
\end{eqnarray*}

The first term $R_1$ is $o_{\Pb}(1)$ from Condition (b) in Theorem~\ref{thm:convergence_margin} and Assumption~\ref{assum:c311}, and \eqref{eqn:fixthreshold}. 

For the second term $R_2$, from \eqref{eq:bias_pz} and the Cauchy-Schwartz inequality, we have 
\begin{eqnarray*}
  R_2  & = & O_{\mathbb{P}}\left(\left\|\hat{\lambda}_0-\lambda_0\right\|_2 \left\|\hat{p}_1-p_1\right\|_2+\left\|\hat{\lambda}_1-\lambda_1\right\|_2\left\|\hat{p}_0-p_0\right\|_2\right) =\ o_{\mathbb{P}}(n^{-1/2}),
\end{eqnarray*}
where the last equality follows from Assumption~\ref{assum:rates}(a). 

The third term $R_3$ is $O_{\Pb}(n^{-1/2})$ from central limit theorem.

For the forth term $R_4$, with the boundedness of $\varpi(X)$, $\mathbf{1}\left(\hat{\rho}_{U}(X)>\eta\right)$ and $\mathbf{1}\left(\hat{\rho}_{L}(X)>\eta\right)$, we can write \begin{eqnarray*}
   R_4
    &=&\bigg|(\mathbb{P}_n-\mathbb{P})\left[\varpi(X)\mathbf{1}\left(\hat{\rho}_{U}(X)>\eta\right)\left\{\left(\hat{\psi}_{p_1}-\hat{\psi}_{p_0}\right)-\left(\psi_{p_1}-\psi_{p_0}\right)\right\}\right]\nonumber\\
    &&+(\mathbb{P}_n-\mathbb{P})\left[\varpi(X)\left\{\mathbf{1}\left(\hat{\rho}_{U}(X)>\eta\right)-\mathbf{1}\left(\rho_{U}(X)>\eta\right)\right\}\left(\psi_{p_1}-\psi_{p_0}\right)\right]\nonumber\\
    &&+(\mathbb{P}_n-\mathbb{P})\left[\left\{1-\varpi(X)\right\}\mathbf{1}\left(\hat{\rho}_{L}(X)>\eta\right)\left\{\left(\hat{\psi}_{p_1}-\hat{\psi}_{p_0}\right)-\left(\psi_{p_1}-\psi_{p_0}\right)\right\}\right]\nonumber\\
    &&+(\mathbb{P}_n-\mathbb{P})\left[\left\{1-\varpi(X)\right\}\left\{\mathbf{1}\left(\hat{\rho}_{L}(X)>\eta\right)-\mathbf{1}\left(\rho_{L}(X)>\eta\right)\right\}\left(\psi_{p_1}-\psi_{p_0}\right)\right]\bigg|\nonumber\\
    &\leq& T_1+T_2+T_3,
\end{eqnarray*}
where \begin{eqnarray*}
    T_1&=&2\left|(\mathbb{P}_n-\mathbb{P})\left\{\left(\hat{\psi}_{p_1}-\hat{\psi}_{p_0}\right)-\left(\psi_{p_1}-\psi_{p_0}\right)\right\}\right|,\\
    T_2&=&\left|(\mathbb{P}_n-\mathbb{P})\left[\varpi(X)\left\{\mathbf{1}\left(\hat{\rho}_{U}(X)>\eta\right)-\mathbf{1}\left(\rho_{U}(X)>\eta\right)\right\}\left(\psi_{p_1}-\psi_{p_0}\right)\right]\right|,\\
    T_3&=&\left|(\mathbb{P}_n-\mathbb{P})\left[\left\{1-\varpi(X)\right\}\left\{\mathbf{1}\left(\hat{\rho}_{L}(X)>\eta\right)-\mathbf{1}\left(\rho_{L}(X)>\eta\right)\right\}\left(\psi_{p_1}-\psi_{p_0}\right)\right]\right|.
\end{eqnarray*}
The first term $T_1$ is $o_{\mathbb{P}}(n^{-1/2})$ from Conditions (a), (b), and (c) in Theorem~\ref{thm:convergence_margin} together with Lemma~2 of \citet{kennedy2020sharp}. The other two terms $T_2$ and $T_3$ are $O_{\mathbb{P}}(n^{-1/2})$ from \eqref{eqn:auxconsisemp1_U} and \eqref{eqn:auxconsisemp1_L}. Thus, we have $R
_4=O_\mathbb{P}(n^{-1/2})$.

Plugging $R_1=o_\Pb(1)$, $R_2=o_{\mathbb{P}}(n^{-1/2})$, $R_3=O_{\mathbb{P}}(n^{-1/2})$ and $R_4=O_{\mathbb{P}}(n^{-1/2})$ into \eqref{eqn:cantellidecomp}, we have $\hat{Q}(\eta)-Q(\eta)=o_{\mathbb{P}}(1)$.

\subsubsection{Consistency of $\hat{\eta}^S$}

The uniform consistency of $\hat{Q}(\cdot)$ implies
\begin{eqnarray*}
    \hat{Q}(\eta^E - \epsilon) &=& Q(\eta^E - \epsilon) + o_{\mathbb{P}}(1), \\
    \hat{Q}(\eta^E + \epsilon) &=& Q(\eta^E + \epsilon) + o_{\mathbb{P}}(1),
\end{eqnarray*}
for a sufficiently small $\epsilon>0$. We have $\hat{C}_{\textnormal{eif}}=\mathbb{E}\left[\varpi(X)\left\{p_1(X)-p_0(X)\right\}\right]+o_{\mathbb{P}}(1)$ by Lemma~\ref{lem:effkappa}. Assumption~\ref{assum:c311} implies that $Q(\eta^E - \epsilon) > \mathbb{E}\left[\varpi(X)\left\{p_1(X)-p_0(X)\right\}\right] > Q(\eta^E + \epsilon)$.
Therefore, we have 
\begin{eqnarray*}
    \hat{Q}(\eta^E - \epsilon) >\hat{C}_{\textnormal{eif}}> \hat{Q}(\eta^E + \epsilon).
\end{eqnarray*}
with probability tending to one.
By the definition of $\hat{C}_{\textnormal{eif}}$, we obtain 
\begin{eqnarray*}
    \eta^E - \epsilon \leq \hat{\eta}^E \leq \eta^E + \epsilon
\end{eqnarray*}
with probability tending to one. Letting $\epsilon\rightarrow0$, we have $\hat{\eta}^E \xrightarrow{p} \eta^E$. By continuous mapping theorem, we obtain $\hat{\eta}^S \xrightarrow{p} \eta^S$. \QEDB

\subsection{Proof of Theorem~\ref{thm:constraintviolation}}
\subsubsection{Lemmas}
We introduce some lemmas to simplify the proof.   
Denote 
\begin{eqnarray*}
\pi^*_{f}(X) &=& (1 - \varpi(X)) \bone(\rho_{L }(X) > \eta ^S) + \varpi(X) \bone(\rho_{U }(X) > \eta ^S),
\end{eqnarray*}
which represents a simplified version of the optimal policy $\pi^*(X)$ that excludes the boundary cases  $\rho_L(X)=\eta^S$ and $\rho_U(X)=\eta^S$. Similarly, 
denote 
\begin{eqnarray*}
\hat{\pi}^*_{f}(X)&=& (1 - \varpi(X)) \bone(\hat{\rho}_{L }(X) > \hat{\eta} ^S) + \varpi(X) \bone(\hat{\rho}_{U }(X) > \hat{\eta} ^S),
\end{eqnarray*}
which is the corresponding simplified version of the estimated optimal policy
$\hat{\pi}^*(X)$, excluding the boundary cases $\hat{\rho}_L(X)=\hat{\eta}^S$ and $\hat{\rho}_U(X)=\hat{\eta}^S$.
\begin{lemma}
\label{lem:pi-diff}
Under Assumption~\ref{assum:margincondition1}(a), $\mathbb{P}\big(\pi^*(X)\neq \pi_f^*(X)\big) = 0$.
\end{lemma}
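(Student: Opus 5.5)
The plan is to compare $\pi^*$ and $\pi^*_f$ pointwise and show that they can differ only on the boundary set
\[
\mathcal{B}=\{x:\rho_L(x)=\eta^S\}\cup\{x:\rho_U(x)=\eta^S\}.
\]
The margin condition in Assumption~\ref{assum:margincondition1}(a) gives $\mathbb{P}(X\in\mathcal{B})=0$.

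First, I would rewrite the ``otherwise'' branch of the unified expression~\eqref{eqn:combinedpi}. It reads $\varpi(X)\bone\{\rho_U(X)>\eta^S>\rho_L(X)\}+\bone\{\rho_L(X)>\eta^S\}$, and this should be compared with
\[
\pi^*_f(X)=\varpi(X)\bone\{\rho_U(X)>\eta^S\}+\{1-\varpi(X)\}\bone\{\rho_L(X)>\eta^S\}.
\]
The key structural fact is $\rho_U(x)\ge\rho_L(x)$ for every $x$. This holds because $A_U(x)-A_L(x)=p_0(x)\{U_{11}(x)-L_{11}(x)\}\ge 0$, since the sharp bounds in Theorem~\ref{theo::constraint}(b) satisfy $L_{11}\le U_{11}$, and because $p_1(x)-p_0(x)>0$.

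With this fact, I would split into cases for $x\notin\mathcal{B}$.
\begin{enumerate}[(i)]
\item If $\rho_L(x)>\eta^S$, then also $\rho_U(x)>\eta^S$. Both expressions equal $1$: the first is $0+1$, and the second is $\varpi+1-\varpi$.
\item If $\rho_U(x)>\eta^S>\rho_L(x)$, both expressions equal $\varpi(x)$.
\item If $\rho_U(x)<\eta^S$, then also $\rho_L(x)<\eta^S$, and both expressions equal $0$.
\end{enumerate}
Because $x\notin\mathcal{B}$ means neither score equals $\eta^S$, these cases are exhaustive. Also, on the complement of $\mathcal{B}$ the first two branches of~\eqref{eqn:combinedpi} never apply, since they require $\rho_U(X)=\eta^S$ or $\rho_L(X)=\eta^S$. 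Hence $\pi^*(x)=\pi^*_f(x)$ for all $x\notin\mathcal{B}$, and this holds in every case (Case 1, 2(a) and 2(b)) of Theorem~\ref{theo:pi_opt}.

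Finally, $\{\pi^*(X)\neq\pi^*_f(X)\}\subseteq\{X\in\mathcal{B}\}$. Taking $\varepsilon=0$ in Assumption~\ref{assum:margincondition1}(a) gives $\mathbb{P}(\rho_L(X)=\eta^S)\le C\cdot 0^\beta=0$, and similarly for $\rho_U$. A union bound then yields $\mathbb{P}(\pi^*(X)\neq\pi^*_f(X))=0$.

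The only step needing care is the ordering $\rho_U\ge\rho_L$, which case (i) relies on. It also requires checking that the $\max/\min$ definitions of $A_L$ and $A_U$ in Corollary~\ref{cor:formsup} match $p_0(X)\{L_{11}(X)-\mu_0(X)\}$ and $p_0(X)\{U_{11}(X)-\mu_0(X)\}$, which is immediate from multiplying the bounds by $p_0(X)$.
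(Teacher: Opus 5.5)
Your proof is correct and follows the same route as the paper: show that $\pi^*$ and $\pi^*_f$ can differ only where $\rho_L(X)=\eta^S$ or $\rho_U(X)=\eta^S$, then apply Assumption~\ref{assum:margincondition1}(a) with $\varepsilon=0$ together with a union bound. Your explicit check that $\rho_U\ge\rho_L$ (from $A_U\ge A_L$ and $p_1-p_0>0$) fills in a step the paper leaves implicit under ``by the definitions,'' and case (i) of the off-boundary agreement genuinely depends on it.
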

\noindent {\it Proof of Lemma~\ref{lem:pi-diff}.} By the definitions of $\pi_f^*(\cdot)$ and $\pi^*(\cdot)$, the only cases where $\pi^*(X)$ may differ from $\pi_f^*(X)$ are the boundary cases
$\rho_L(X)=\eta^S$ or $\rho_U(X)=\eta^S$. Therefore, we have 
\begin{eqnarray*}
\mathbb{P}\big(\pi^*(X)\neq \pi_f^*(X)\big)
&\leq & \mathbb{P}\big(\rho_L(X)=\eta^S\big)+\mathbb{P}\big(\rho_U(X)=\eta^S\big)\ =\ 0,
\end{eqnarray*}
where the last equality follows from Assumption~\ref{assum:margincondition1}(a). \QEDB

 \begin{lemma}\label{lem:pihat-diff}
 Under Assumption~\ref{assum:margincondition1}(b), $\mathbb{P}(\hat\pi^*(X)\neq \hat\pi_f^*(X)) = O_{\mathbb P}(n^{-1/2})$.
\end{lemma}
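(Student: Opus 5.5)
This mirrors Lemma~\ref{lem:pi-diff}, with the true quantities replaced by their estimates. The probability is taken over a new draw $X$, with $\hat\rho_L$, $\hat\rho_U$ and $\hat\eta^S$ held fixed. Because the nuisances are fitted on an independent sample, these functions are treated as deterministic when we compute $\mathbb{P}(\cdot)$.

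\textbf{Step 1: locate where the two policies can differ.} Compare $\hat\pi^*(X)$, obtained by plugging into \eqref{eqn:combinedpi}, with $\hat\pi_f^*(X)$. Off the boundary set $\{\hat\rho_L(X)=\hat\eta^S\}\cup\{\hat\rho_U(X)=\hat\eta^S\}$, the third branch of \eqref{eqn:combinedpi} applies, giving
\[
\varpi\bone\{\hat\rho_U>\hat\eta^S>\hat\rho_L\}+\bone\{\hat\rho_L>\hat\eta^S\}.
\]
To see that this equals $\varpi\bone\{\hat\rho_U>\hat\eta^S\}+(1-\varpi)\bone\{\hat\rho_L>\hat\eta^S\}$, split into cases:
\begin{itemize}
\item if $\hat\rho_L>\hat\eta^S$, both expressions equal $1$;
\item if $\hat\rho_L<\hat\eta^S<\hat\rho_U$, both equal $\varpi$;
\item if $\hat\rho_U<\hat\eta^S$, both equal $0$.
\end{itemize}
This case split implicitly uses $\hat\rho_L\le\hat\rho_U$. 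That holds because $A_L\le A_U$ follows from $L_{11}\le U_{11}$, and the plug-in estimates inherit this ordering provided the estimated bounds are ordered. If the ordering ever failed, the first case would still give agreement whenever $\hat\rho_L>\hat\eta^S$. If the plug-in ordering is not automatic, I would check it, or simply note that the max/min construction of $\hat L_{11}$ and $\hat U_{11}$ preserves it.

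\textbf{Step 2: bound the probability of disagreement.} Disagreement can therefore occur only on the boundary set. Hence
\[
\mathbb{P}(\hat\pi^*(X)\ne\hat\pi_f^*(X))\le \mathbb{P}\{\hat\rho_L(X)=\hat\eta^S\}+\mathbb{P}\{\hat\rho_U(X)=\hat\eta^S\},
\]
and each term on the right is $O_{\mathbb P}(n^{-1/2})$ by Assumption~\ref{assum:margincondition1}(b). The sum of two $O_{\mathbb P}(n^{-1/2})$ terms is again $O_{\mathbb P}(n^{-1/2})$, which proves the lemma.

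\textbf{Main point of care.} The only delicate step is the case check in Step 1, in particular the ordering $\hat\rho_L\le\hat\rho_U$. The probabilistic part is immediate.
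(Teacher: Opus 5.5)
Your proposal follows the paper's proof. Disagreement between $\hat\pi^*$ and $\hat\pi_f^*$ is confined to $\{\hat\rho_L(X)=\hat\eta^S\}\cup\{\hat\rho_U(X)=\hat\eta^S\}$, and a union bound with Assumption~\ref{assum:margincondition1}(b) finishes it; your case check spells out what the paper leaves implicit.

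Two of your side remarks on the ordering are inaccurate, but $\hat\rho_L\le\hat\rho_U$ holds anyway for $[0,1]$-valued estimators.
\begin{itemize}
\item The max/min construction does not give $\hat L_{11}\le\hat U_{11}$ where $\hat p_1(x)<\hat p_0(x)$; there the bounds are reversed.
\item Your fallback claim fails when $\hat\rho_U<\hat\eta^S<\hat\rho_L$, because the two policies then equal $1$ and $1-\varpi$.
\item In the region $\hat p_1(x)<\hat p_0(x)$ one gets $\hat A_L-\hat A_U=\hat p_0-\hat p_1$, and dividing by $\hat p_1-\hat p_0<0$ gives $\hat\rho_L-\hat\rho_U=-1$.
\end{itemize}
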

\noindent {\it Proof of Lemma~\ref{lem:pihat-diff}. }
By the definitions of $\hat\pi_f^*(\cdot)$ and $\hat\pi^*(\cdot)$, the only cases where $\hat\pi^*(X)$ may differ from $\hat\pi_f^*(X)$ are the boundary cases
$\hat\rho_L(X)=\hat\eta^S$ or $\hat\rho_U(X)=\hat\eta^S$. Therefore, we have
\begin{eqnarray*}
\mathbb{P}\big(\hat\pi^*(X)\neq \hat\pi_f^*(X)\big)
&\le& \mathbb{P}\big(\hat\rho_L(X)=\hat\eta^S\big)+\mathbb{P}\big(\hat\rho_U(X)=\hat\eta^S\big)
\ =\ O_{\mathbb P}(n^{-1/2}).
\end{eqnarray*}
where the last equality follows from Assumption~\ref{assum:margincondition1}(b).
\QEDB

\subsubsection{Constraint slackness}
From Lemmas~\ref{lem:pi-diff}~and~\ref{lem:pihat-diff}, we have
\begin{eqnarray}
    \nonumber &&(p_1-p_0)\left\{B(\hat{\pi}^*)-B(\varpi)\right\}\\
   \nonumber &=&(p_1-p_0)\left\{B(\hat{\pi}^*)-B(\pi^*)\right\}+(p_1-p_0)\left\{B(\pi^*)-B(\varpi)\right\}\\
    \nonumber&=&\mathbb{E}\left[\left\{\hat{\pi}^*(X)-\pi^*(X)\right\}\{p_1(X)-p_0(X)\}\right]+\mathbb{E}\left[\left\{\pi^*(X)-\varpi(X)\right\}\{p_1(X)-p_0(X)\}\right]\\
    \nonumber&=&\mathbb{E}\left[\left\{\hat{\pi}_f^*(X)-\pi_f^*(X)\right\}\{p_1(X)-p_0(X)\}\right]+\mathbb{E}\left[\left\{\hat{\pi}^*(X)-\hat{\pi}_f^*(X)\right\}\{p_1(X)-p_0(X)\}\right]\\
    \nonumber&&-\mathbb{E}\left[\left\{\pi^*(X)-\pi_f^*(X)\right\}\{p_1(X)-p_0(X)\}\right]+\mathbb{E}\left[\left\{\pi^*(X)-\varpi(X)\right\}\{p_1(X)-p_0(X)\}\right]\\
    &=&R_1+R_2+o_{\mathbb P}(1),\label{eqn:con_vio_decomp}
    \end{eqnarray}
where \begin{eqnarray*}
R_1&=&\mathbb{E}\left[\left\{\hat{\pi}_f^*(X)-\pi_f^*(X)\right\}\{p_1(X)-p_0(X)\}\right],\\
R_2&=&\mathbb{E}\left[\left\{\pi^*(X)-\varpi(X)\right\}\{p_1(X)-p_0(X)\}\right].
\end{eqnarray*}

For the first term $R_1$, we have
\begin{eqnarray*}
    \nonumber R_1&=&\mathbb{P}\left[\left\{\hat{\pi}_f^*(X)-\pi_f^*(X)\right\}\left\{p_1(X)-p_0(X)\right\}\right]\\
  \nonumber &=&\left|\mathbb{P}\left[\varpi(X)\left\{\mathbf{1}\left(\hat{\rho}_{U}(X)>\hat{\eta}^S\right)-\mathbf{1}\left(\rho_{U}(X)>\eta^S\right)\right\}\{p_1(X)-p_0(X)\}\right]\right.\\
\nonumber &&+\left.\mathbb{P}\left[\{1-\varpi(X)\}\left\{\mathbf{1}\left(\hat{\rho}_{L}(X)>\hat{\eta}^S\right)-\mathbf{1}\left(\rho_{L}(X)>\eta^S\right)\right\}\{p_1(X)-p_0(X)\}\right]\right|\\
&\leq &T_1+T_2,
\end{eqnarray*}
where \begin{eqnarray*}
T_1&=&\left|\mathbb{P}\left[\varpi(X)\left\{\mathbf{1}\left(\rho_{U}(X)>\hat{\eta}^S\right)-\mathbf{1}\left(\rho_{U}(X)>\eta^S\right)\right\}\{p_1(X)-p_0(X)\}\right]\right.\\
\nonumber &&+\left.\mathbb{P}\left[\{1-\varpi(X)\}\left\{\mathbf{1}\left(\rho_{L}(X)>\hat{\eta}^S\right)-\mathbf{1}\left(\rho_{L}(X)>\eta^S\right)\right\}\{p_1(X)-p_0(X)\}\right]\right|,\\
T_2&=&\left|\mathbb{P}\left[\varpi(X)\left\{\mathbf{1}\left(\hat{\rho}_{U}(X)>\hat{\eta}^S\right)-\mathbf{1}\left(\rho_{U}(X)>\hat{\eta}^S\right)\right\}\{p_1(X)-p_0(X)\}\right]\right.\\
\nonumber &&+\left.\mathbb{P}\left[\{1-\varpi(X)\}\left\{\mathbf{1}\left(\hat{\rho}_{L}(X)>\hat{\eta}^S\right)-\mathbf{1}\left(\rho_{L}(X)>\hat{\eta}^S\right)\right\}\{p_1(X)-p_0(X)\}\right]\right|.
\end{eqnarray*}
Consider $T_1$, when $\eta^E\leq 0$, we have $\eta^S=\eta^E$. Thus, we have $T_1=o_{\Pb}(1)$ from Assumption~\ref{assum:c311} and Proposition~\ref{lem:consistent} together with continuous mapping theorem. When $\eta^E>0$, we have $\eta^S=0$. Thus, for every $\epsilon>0$, we have \begin{eqnarray*}
    \mathbb P(T_1>\epsilon)\leq\mathbb P(\hat\eta^S\neq0)=\mathbb P(\hat\eta^E<0)\to0
\end{eqnarray*} from the consistency of $\hat\eta^E$. Hence $T_1=o_\Pb(1)$. Consider $T_2$, we have $T_2=o_{\Pb}(1)$ from Proposition~\ref{lem:consistent} and \eqref{eqn:randomthreshold}.  As a result, we obtain $R_1=o_{\mathbb{P}}(1)$. 

For the second term $R_2$, when $\eta^S<0$, we have $R_2=0$ from \eqref{eqn:proof_ext_x+2_new}. When $\eta^S=0$, we have \begin{eqnarray*}
    R_2&=&Q(0)-\mathbb{E}\left[\varpi(X)\{p_1(X)-p_0(X)\}\right]
\end{eqnarray*}
from \eqref{eqn:proof-constraint_excess}.

Combining the results for $R_1$ and $R_2$ together with \eqref{eqn:con_vio_decomp} yields the result.
\QEDB
\subsection{Proof of Theorem~\ref{thm:regretbound}}
\subsubsection{Lemmas}
Denote the positive and negative parts of $x$ by 
 $\phi_{+}(x) = x \bone(x>0)$ and  $\phi_{-}(x) = x \bone(x<0)$, respectively.

\begin{lemma}\label{lem:replace_pi_by_pif}
Suppose that Assumption~\ref{assum:margincondition1}
holds.
Let $G(x)$ be any function satisfying $|G(x)|\le M$ for some constant $M<\infty$ and all $x$. Then
\begin{eqnarray}
\nonumber &&\E  \left[G(X)  \left\{\phi_{+}(\varpi(X)-\hat{\pi} ^*(X)) - \phi_{+}(\varpi(X)-\pi ^*(X))\right\} \right]\\
\label{eq:main_claim_replace_u}  &=&\E  \left[G(X)  \left\{\phi_{+}(\varpi(X)-\hat{\pi}_f ^*(X)) - \phi_{+}(\varpi(X)-\pi^*_f(X))\right\} \right] +\ O_{\mathbb P}\big(n^{-1/2}\big),\\
\nonumber &&\E  \left[G(X)  \left\{\phi_{-}(\varpi(X)-\hat{\pi} ^*(X)) - \phi_{-}(\varpi(X)-\pi ^*(X))\right\} \right]\\
\label{eq:main_claim_replace_l}  &=&\E  \left[G(X)  \left\{\phi_{-}(\varpi(X)-\hat{\pi}_f ^*(X)) - \phi_{-}(\varpi(X)-\pi^*_f(X))\right\} \right] +\ O_{\mathbb P}\big(n^{-1/2}\big).
\end{eqnarray}
\end{lemma}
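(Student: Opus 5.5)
The plan is to reduce both claims to a bound on the probability that the policies differ from their simplified versions. Fix $x$. The maps $t\mapsto\phi_{+}(\varpi(x)-t)$ and $t\mapsto\phi_{-}(\varpi(x)-t)$ are $1$-Lipschitz and take values in $[-1,1]$, since $\varpi(x),t\in[0,1]$. Consider \eqref{eq:main_claim_replace_u}. Take the difference between its left-hand side and the first term on its right-hand side, and regroup it as
\begin{eqnarray*}
\E\left[G(X)\left\{\phi_{+}(\varpi(X)-\hat{\pi}^*(X))-\phi_{+}(\varpi(X)-\hat{\pi}_f^*(X))\right\}\right]
-\E\left[G(X)\left\{\phi_{+}(\varpi(X)-\pi^*(X))-\phi_{+}(\varpi(X)-\pi_f^*(X))\right\}\right].
\end{eqnarray*}
Each integrand vanishes unless the two policies inside it differ. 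By the Lipschitz property and boundedness, each integrand is bounded in absolute value by $M\,|\hat{\pi}^*(X)-\hat{\pi}_f^*(X)|\le M\,\bone\{\hat{\pi}^*(X)\neq\hat{\pi}_f^*(X)\}$, and respectively by $M\,\bone\{\pi^*(X)\neq\pi_f^*(X)\}$.

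I would then bound the two pieces separately. The second expectation is at most $M\,\mathbb{P}(\pi^*(X)\neq\pi_f^*(X))$, and this equals $0$ by Lemma~\ref{lem:pi-diff}, which uses Assumption~\ref{assum:margincondition1}(a). The first expectation is taken over a new draw of $X$ with the estimated quantities held fixed; this is legitimate because the nuisance estimators are built on an independent sample. It is therefore at most $M\,\mathbb{P}(\hat{\pi}^*(X)\neq\hat{\pi}_f^*(X))$, which is $O_{\mathbb P}(n^{-1/2})$ by Lemma~\ref{lem:pihat-diff}, relying on Assumption~\ref{assum:margincondition1}(b). Combining the two pieces gives \eqref{eq:main_claim_replace_u}. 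The proof of \eqref{eq:main_claim_replace_l} is identical with $\phi_{-}$ in place of $\phi_{+}$, since $\phi_{-}$ is also $1$-Lipschitz and bounded.

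The only delicate point is the meaning of $\mathbb{P}$ and $\E$ once $\hat{\pi}^*$ is random. Throughout, I would read these as conditional on the training sample used for $\hat{\rho}_L$, $\hat{\rho}_U$ and $\hat{\eta}^S$, consistent with Assumption~\ref{assum:margincondition1}(b). Under that reading the pointwise bounds integrate directly. Measurability of the boundary events is routine.

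A secondary check concerns the boundary cases where $\hat{\pi}^*$ takes the randomized values $\alpha\varpi$ or $\varpi+\alpha(1-\varpi)$. In those cases $|\hat{\pi}^*-\hat{\pi}_f^*|\le 1$ still holds, so the indicator bound is unaffected. Hence no step should present a real obstacle.
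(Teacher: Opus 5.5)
Your proposal is correct and follows essentially the same route as the paper. You regroup into the $\hat{\pi}^*$-versus-$\hat{\pi}_f^*$ and $\pi^*$-versus-$\pi_f^*$ discrepancies, bound each integrand by a constant times the indicator that the two policies differ, and invoke Lemma~\ref{lem:pi-diff} (giving $0$) and Lemma~\ref{lem:pihat-diff} (giving $O_{\mathbb P}(n^{-1/2})$). The only cosmetic difference is the constant: you use the $1$-Lipschitz property of $\phi_{\pm}$ to get $M$, whereas the paper uses $|\phi_+(x)-\phi_+(y)|\le(|\phi_+(x)|+|\phi_+(y)|)\bone(x\neq y)$ and gets $2M$.
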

\noindent {\it Proof of Lemma~\ref{lem:replace_pi_by_pif}. } We only prove the first equality; the second equality follows similarly.

By the definition of $\phi_{+}(x)$, we have 
\begin{eqnarray*}
|\phi_{+}(x)-\phi_{+}(y)|
&=&  | \phi_{+}(x)-\phi_{+}(y)| \cdot \bone(x\neq y)
\ \le \  (|\phi_{+}(x)|+|\phi_{+}(y)|)\cdot \bone(x\neq y).
\label{eq:phi_diff_bound}
\end{eqnarray*}
Therefore, we have 
\begin{eqnarray}
\nonumber &&\left|\mathbb{E}\left[G(X)\{\phi_{+}(\varpi(X)-\pi^*(X))-\phi_{+}(\varpi(X)-\pi_f^*(X))\}\right]\right|\\
&\le&
\mathbb{E}\big[|G(X)| \{|\phi_{+}(\varpi(X)-\pi^*(X))| + |\phi_{+}(\varpi(X)-\pi_f^*(X))|\} \cdot  \bone(\pi^*(X)\neq \pi_f^*(X)) \big]
\nonumber\\
&\le&
2\,\mathbb{E}\big[|G(X)|\,\bone(\pi^*(X)\neq \pi_f^*(X))\big]
\nonumber\\
&\le&
\nonumber 2M\,\mathbb{P}\big(\pi^*(X)\neq \pi_f^*(X)\big)\\
&=& 
0,
\label{eq:second_term_rate}
\end{eqnarray}
where the second inequality follows from $|\phi_{+}(\varpi(X)-\pi^*(X))|\leq 1$ and $|\phi_{+}(\varpi(X)-\pi_f^*(X))|\leq 1$, and the last equality follows from Lemma~\ref{lem:pi-diff}.
Similarly, we have 
\begin{eqnarray}
\nonumber &&\left |\mathbb{E}\big[G(X)\{\phi_{+}(\varpi(X)-\hat\pi^*(X))-\phi_{+}(\varpi(X)-\hat\pi_f^*(X))\}\big]\right|\\
\nonumber &\le&
2M\,\mathbb{P}\left(\hat\pi^*(X)\neq \hat\pi_f^*(X)\right)\\
&=& O_{\mathbb P}\left(n^{-1/2}\right),
\label{eq:first_term_rate}
\end{eqnarray}
where the last equality follows from Lemma~\ref{lem:pihat-diff}.
Combining~\eqref{eq:second_term_rate}~and~\eqref{eq:first_term_rate} yields
\begin{eqnarray*}
&&\left|\E  \left[G(X)  \left\{\phi_{+}(\varpi(X)-\hat{\pi} ^*(X)) - \phi_{+}(\varpi(X)-\pi ^*(X))\right\} \right] \right.\\
&&\left.-\E  \left[G(X)  \left\{\phi_{+}(\varpi(X)-\hat{\pi}_f ^*(X)) - \phi_{+}(\varpi(X)-\pi^*_f(X))\right\} \right]  \right|\\
&\leq&\left |\mathbb{E}\big[G(X)\{\phi_{+}(\varpi(X)-\hat\pi^*(X))-\phi_{+}(\varpi(X)-\hat\pi_f^*(X))\}\big]\right|\\
&&+\left|\mathbb{E}\left[G(X)\{\phi_{+}(\varpi(X)-\pi^*(X))-\phi_{+}(\varpi(X)-\pi_f^*(X))\}\right]\right|\\
&\leq& O_{\mathbb P}\big(n^{-1/2}\big).
\end{eqnarray*} \QEDB

\begin{lemma}\label{lem:cfempr2term}
    Suppose that Assumptions~\ref{assum:c311},~\ref{assum:margincondition1} and Conditions (a), (b), and (c) in Theorem~\ref{thm:convergence_margin} hold, and $p_1(x)-p_0(x)>0$ for all  $x\in\mathcal{X}$. Then
    \begin{eqnarray}
(\mathbb{P}_n-\mathbb{P})\left[\varpi(X)\left\{\mathbf{1}(\hat{\rho}_U(X)>\hat{\eta}^S)-\mathbf{1}(\hat{\rho}_U(X)>\eta^S)\right\}\left\{\psi_{p_1}-\psi_{p_0}\right\}\right]&=&o_\Pb(n^{-1/2}),\label{eqn:cfempr2term1}\\
(\mathbb{P}_n-\mathbb{P})\left[\left\{1-\varpi(X)\right\}\left\{\mathbf{1}(\hat{\rho}_L(X)>\hat{\eta}^S)-\mathbf{1}(\hat{\rho}_L(X)>\eta^S)\right\}\left\{\psi_{p_1}-\psi_{p_0}\right\}\right]&=&o_\Pb(n^{-1/2}).\label{eqn:cfempr2term2}
    \end{eqnarray}
\end{lemma}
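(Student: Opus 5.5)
The plan is to prove \eqref{eqn:cfempr2term1}; \eqref{eqn:cfempr2term2} follows by the same argument with $\rho_L$ and $1-\varpi$ in place of $\rho_U$ and $\varpi$. The difficulty is that the threshold $\hat\eta^S$ is data dependent and is computed from the same sample on which $\Pn$ is evaluated. So Lemma~2 of \citet{kennedy2020sharp} cannot be applied directly. Only $\hat\rho_U$, which is built from the independent nuisance sample, may be treated as fixed.

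\textbf{Step 1: pass to a deterministic class.} Conditional on the nuisance sample, consider the class
\[
\mathcal F_n=\bigl\{f_\eta=\varpi(X)\{\bone(\hat\rho_U(X)>\eta)-\bone(\hat\rho_U(X)>\eta^S)\}(\psi_{p_1}-\psi_{p_0}):|\eta-\eta^S|\le\delta\bigr\}.
\]
Indicators of half-lines composed with the fixed function $\hat\rho_U$ form a VC class of index 2. Multiplying by the fixed bounded function $\varpi(\psi_{p_1}-\psi_{p_0})$ preserves the uniform entropy bound; boundedness uses condition (c) and the boundedness of $S$. The envelope is a constant $C_1$.

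\textbf{Step 2: maximal inequality.} By Proposition~\ref{lem:consistent}, $\hat\eta^S\to\eta^S$ in probability. Hence, with probability tending to one, the left side of \eqref{eqn:cfempr2term1} is bounded by $\sup_{f\in\mathcal F_n}|(\Pn-\Pb)f|$. A standard maximal inequality for uniformly-entropy-bounded classes with small $L_2$ radius (e.g. van der Vaart--Wellner, Thm.~2.14.1 or its local version) then gives
\[
\sqrt n\,\E^*\sup_{\mathcal F_n}|(\Pn-\Pb)f|\lesssim J(\sigma_n)\to0 \quad\text{whenever}\quad \sigma_n^2=\sup_{\mathcal F_n}\|f\|_2^2\to0 .
\]
Here $J(\sigma)=\int_0^\sigma\sqrt{1+\log(1/\epsilon)}\,d\epsilon$ vanishes as $\sigma\to 0$.

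\textbf{Step 3: the main obstacle, controlling $\sigma_n$.} I expect the main work to be showing that $\sigma_n\to0$ when $\delta=\delta_n\downarrow0$ slowly. We have
\[
\|f_\eta\|_2^2\le C_1\,\Pb\{\hat\rho_U(X)\text{ lies between }\eta\text{ and }\eta^S\}.
\]
Replacing $\hat\rho_U$ by $\rho_U$ costs at most $\Pb(|\hat\rho_U-\rho_U|>t)\le \|\hat\rho_U-\rho_U\|_1/t$, using Markov's inequality and the consistency of the priority scores from condition (b) together with the positivity of $p_1-p_0$. After this replacement, what remains is at most $\Pb(|\rho_U(X)-\eta^S|\le\delta+t)\le C(\delta+t)^\beta$ by Assumption~\ref{assum:margincondition1}(a).

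To finish, choose $t=t_n\to0$ slowly and $\delta_n\to0$ with $\Pb(|\hat\eta^S-\eta^S|>\delta_n)\to0$, which is possible by consistency. Then $\sigma_n=o_\Pb(1)$. Combining this with Step 2 and conditioning on the nuisance sample (the argument used in the proof of Lemma~\ref{lem:auxconsisemp1}, together with Lemma~\ref{lem:op_criterion}) yields $o_\Pb(n^{-1/2})$.

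If the entropy route proves awkward, a fallback is available because $\hat\rho_U$ is fixed. Sorting by $\hat\rho_U$ turns the supremum over $\eta$ into the maximum of a one-dimensional partial-sum process, and a Kolmogorov/Ottaviani-type inequality then gives the same conclusion.
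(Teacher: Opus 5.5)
Your proof is correct and follows the paper's overall plan. You condition on the nuisance sample so that $\hat\rho_U$ is fixed, and you treat the summand as a member of a class indexed by the threshold $\eta$. You show the relevant $L_2$ norm is $o_\Pb(1)$ using Lemma~\ref{lem:indicator_diff}, Markov's inequality, Assumption~\ref{assum:margincondition1}(a) and the consistency of $\hat\eta^S$ from Proposition~\ref{lem:consistent}. You then finish with empirical-process equicontinuity.

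You differ only in the last step. The paper bounds $\|f_{\hat\eta^S}\|_2^2$ by $CC_1t^\beta+C_1(\|\hat\rho_U-\rho_U\|_2+|\hat\eta^S-\eta^S|)/t$ and turns this into $o_\Pb(1)$ with Lemma~\ref{lem:op_criterion}. It then cites the Donsker property of the $\eta$-indexed class together with Lemma~19.24 of \citet{van2000asymptotic}, applied conditionally on the nuisance sample. You instead localize to a deterministic neighborhood $|\eta-\eta^S|\le\delta_n$, which contains $\hat\eta^S$ with probability tending to one, and bound the supremum over it with a local maximal inequality for VC classes.

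The paper's route is shorter, but Lemma~19.24 is stated for a fixed Donsker class and a fixed limit function. Here both the class and the centering function $\varpi\bone(\hat\rho_U>\eta^S)(\psi_{p_1}-\psi_{p_0})$ change with $n$ through $\hat\rho_U$. Your bound has constants that depend only on the VC index and the constant envelope, so it holds uniformly over the realized $\hat\rho_U$ and handles this point properly.

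Two small remarks:
\begin{enumerate}
\item The local maximal inequality has an extra term of order $J(\sigma)^2/(\sigma^2\sqrt n)\asymp\log(1/\sigma)/\sqrt n$, which you omitted. It vanishes once you use an upper bound such as $\max(\sigma_n,n^{-1})$ in place of $\sigma_n$.
\item In Step 3 you only need $\Pb\{|\hat\rho_U(X)-\rho_U(X)|>t\}=o_\Pb(1)$ for each fixed $t$, that is, convergence in measure. This follows from Condition (b) and pointwise positivity of $p_1-p_0$. By contrast, $L_1$ consistency of the ratio would need a uniform lower bound on $p_1-p_0$; the paper makes the same step with $\|\hat\rho_U-\rho_U\|_2$.
\end{enumerate}
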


\noindent {\it Proof of Lemma~\ref{lem:cfempr2term}. } We only prove \eqref{eqn:cfempr2term1}, the other one follows similarly. 

For ease of exposition, denote the independent sample used for estimating the nuisance functions by $\mathcal{I}_n$, and let \begin{eqnarray*}
    R_n&=&(\mathbb{P}_n-\mathbb{P})\left[\varpi(X)\left\{\mathbf{1}(\hat{\rho}_U(X)>\hat{\eta}^S)-\mathbf{1}(\hat{\rho}_U(X)>\eta^S)\right\}\left\{\psi_{p_1}-\psi_{p_0}\right\}\right].
\end{eqnarray*}By the boundedness of $e(X)$, $p_1(X)$, $p_0(X)$ and $\varpi(X)$, there exists a sufficiently large constant $C_1<\infty$ such that
\begin{eqnarray*}
&& \| \varpi(X)\left\{\mathbf{1}(\hat{\rho}_U(X)>\hat{\eta}^S)-\mathbf{1}(\hat{\rho}_U(X)>\eta^S)\right\}\left\{\psi_{p_1}-\psi_{p_0}\right\}\|_2^2  \\
&\leq& C_1\mathbb{E} \left( \left| \mathbf{1}(\hat{\rho}_U(X)>\hat{\eta}^S)-\mathbf{1}(\hat{\rho}_U(X)>\eta^S) \right| \right)\\
&\leq & C_1\mathbb{P} \left( |\hat{\rho}_U(X) - \eta^S|\leq |\hat{\eta}^S-\eta^S| \right)\\
&\leq& C_1\mathbb{P}\left(|\rho_U(X)-\eta^S|\leq |\hat{\eta}^S-\eta^S|+|\hat{\rho}_U(X)-\rho_U(X)|\right),
\end{eqnarray*}
where the second inequality follows from Lemma~\ref{lem:indicator_diff} and the last inequality follows
by the triangle inequality. Then, for any $t>0$,
\begin{eqnarray}
&&\| \varpi(X)\left\{\mathbf{1}(\hat{\rho}_U(X)>\hat{\eta}^S)-\mathbf{1}(\hat{\rho}_U(X)>\eta^S)\right\}\left\{\psi_{p_1}-\psi_{p_0}\right\}\|_2^2\nonumber  \\
&\leq&  C_1\mathbb{P} \left( |\rho_U(X) - \eta^S| \leq t \right) + C_1\mathbb{P} \left( |\hat{\eta}^S - \eta^S| + |\hat{\rho}_U(X)-\rho_U(X)| > t \right)\nonumber\\
&\leq & CC_1t^\beta+C_1\frac{\|\hat\rho_U-\rho_U\|_2+|\hat{\eta}^S-\eta^S|}{t}\nonumber,
\end{eqnarray} 
where the second inequality follows from Assumption~\ref{assum:margincondition1} and Markov's inequality. For any $\epsilon>0$, choose $t_\epsilon=\left(\frac{\epsilon}{CC_1}
\right)^{1/\beta}$, and define $Z_n^{(\epsilon)}=C_1\left\{\|\hat\rho_U-\rho_U\|+|\hat{\eta}^S-\eta^S|\right\}/t_{\epsilon}$ and $a=0$.
Then 
\begin{eqnarray*}
\| \varpi(X)\left\{\mathbf{1}(\hat{\rho}_U(X)>\hat{\eta}^S)-\mathbf{1}(\hat{\rho}_U(X)>\eta^S)\right\}\left\{\psi_{p_1}-\psi_{p_0}\right\}\|_2^2
&\leq&
CC_1t_\epsilon^\beta+C_1\frac{\|\hat\rho_U-\rho_U\|+|\hat{\eta}^S-\eta^S|}{t_\epsilon}\\
&=&a+\epsilon+Z_n^{(\epsilon)}.
\end{eqnarray*} 
Under Condition (b) in Theorem~\ref{thm:convergence_margin} and Proposition~\ref{lem:consistent}, $Z_n^{(\epsilon)}=o_{\mathbb P}(1)$.
From Lemma~\ref{lem:op_criterion}, we obtain that
\begin{eqnarray}
\| \varpi(X)\left\{\mathbf{1}(\hat{\rho}_U(X)>\hat{\eta}^S)-\mathbf{1}(\hat{\rho}_U(X)>\eta^S)\right\}\left\{\psi_{p_1}-\psi_{p_0}\right\}\|_2^2
&=&o_{\mathbb P}(1).\label{eqn:cfnorm}
\end{eqnarray} 

Note that the function class $\left\{ \mathbf{1}(\beta(\cdot)>\eta) : \eta\in \mathbb{R} \right\}$ is Donsker in $\eta$ for any fixed function $\beta:\mathcal{X}\to\mathbb{R}$ (see Example 2.5.4 in \citet{van1996weak}).
Since $\varpi(X)$ and $\psi_{p_1}-\psi_{p_0}$ are uniformly bounded, the class $\left\{ \varpi(\cdot)\mathbf{1}(\beta(\cdot)>\eta)\left\{\psi_{p_1}-\psi_{p_0}\right\} : \eta\in(-\infty,0]) \right\}$ is also Donsker
(see Example 2.10.10 in \citet{van1996weak}). Conditionally on the independent sample $\mathcal{I}_n$, we can view $\hat{\rho}_U$ as fixed and $\left\{ \varpi(\cdot)\mathbf{1}(\hat{\rho}_U(\cdot)>\eta)\left\{\psi_{p_1}-\psi_{p_0}\right\} : \eta\in (-\infty,0] \right\}$ is Donsker. Combining \eqref{eqn:cfnorm} and applying Lemma 19.24 of \citet{van2000asymptotic} conditionally on $\mathcal{I}_n$ yields \begin{eqnarray*}
\mathbb{P}\left(\sqrt{n}R_n>\epsilon\mid \mathcal{I}_n\right)\to 0
\end{eqnarray*}
for every $\epsilon>0$. By the law of iterated expectations,
\begin{eqnarray*}
\mathbb{P}\left\{\sqrt{n}R_n>\epsilon \right\}=\mathbb{E}\left[\mathbb{P}\left(\sqrt{n}R_n>\epsilon \mid \mathcal{I}_n\right)\right]\to 0.
\end{eqnarray*}
Therefore,
$R_n=o_\mathbb{P}(n^{-1/2})$. \QEDB

\begin{lemma}\label{lem:cfempr2}Suppose that Assumptions~\ref{assum:c311},~\ref{assum:margincondition1} and Conditions (a), (b), and (c) in Theorem~\ref{thm:convergence_margin} hold, and $p_1(x)-p_0(x)>0$ for all  $x\in\mathcal{X}$. Then
    \begin{eqnarray*}
        (\mathbb{P}_n-\mathbb{P})\left[\left\{\hat{\pi}_f^*(X)-\pi_f^*(X)\right\}\left\{\psi_{p_1}-\psi_{p_0}\right\}\right]&=&o_{\mathbb{P}}(n^{-1/2}).
    \end{eqnarray*}
\end{lemma}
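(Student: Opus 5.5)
The plan is to expand $\hat{\pi}_f^*-\pi_f^*$ by its definition and split each indicator difference into a threshold-shift piece and a score-estimation piece, so that both previous lemmas apply directly. Since
\[
\hat{\pi}_f^*(X)-\pi_f^*(X)=\varpi(X)\left\{\bone(\hat{\rho}_U(X)>\hat{\eta}^S)-\bone(\rho_U(X)>\eta^S)\right\}+\{1-\varpi(X)\}\left\{\bone(\hat{\rho}_L(X)>\hat{\eta}^S)-\bone(\rho_L(X)>\eta^S)\right\},
\]
I add and subtract $\bone(\hat{\rho}_U(X)>\eta^S)$ in the first bracket and $\bone(\hat{\rho}_L(X)>\eta^S)$ in the second. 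By linearity of $\Pn-\Pb$, the target quantity then becomes the sum of four empirical-process terms.

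The first group consists of the two terms
\[
(\Pn-\Pb)\left[\varpi(X)\left\{\bone(\hat{\rho}_U(X)>\hat{\eta}^S)-\bone(\hat{\rho}_U(X)>\eta^S)\right\}(\psi_{p_1}-\psi_{p_0})\right]
\]
and its analogue with $\rho_L$ and $1-\varpi$. These are exactly \eqref{eqn:cfempr2term1} and \eqref{eqn:cfempr2term2} of Lemma~\ref{lem:cfempr2term}, so each is $o_\Pb(n^{-1/2})$. The hypotheses of Lemma~\ref{lem:cfempr2term} coincide with those assumed here; in particular they provide consistency of $\hat{\eta}^S$ through Proposition~\ref{lem:consistent}.

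The second group consists of the two terms
\[
(\Pn-\Pb)\left[\varpi(X)\left\{\bone(\hat{\rho}_U(X)>\eta^S)-\bone(\rho_U(X)>\eta^S)\right\}(\psi_{p_1}-\psi_{p_0})\right]
\]
and its $L$-analogue, where the threshold is now the fixed value $\eta^S$. These are precisely \eqref{eqn:auxconsisemp1_U_etas} and \eqref{eqn:auxconsisemp1_L_etas} of Lemma~\ref{lem:auxconsisemp1}, which gives $o_\Pb(n^{-1/2})$. Adding the four terms yields the claim.

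The one step that needs care is checking that the hypotheses of both lemmas are actually in force. Lemma~\ref{lem:auxconsisemp1} uses only the margin condition in Assumption~\ref{assum:margincondition1}(a), Conditions (a) to (c) of Theorem~\ref{thm:convergence_margin}, and the positivity of $p_1(x)-p_0(x)$. Lemma~\ref{lem:cfempr2term} additionally needs Assumption~\ref{assum:c311} and the consistency $\hat{\eta}^S\to\eta^S$, both of which are available here. I also need $\psi_{p_1}-\psi_{p_0}$ to be built from the true nuisances, which holds by construction in both lemmas. If that identification failed, I would instead rerun the $L_2$-norm plus sample-splitting argument of Lemma~2 in \citet{kennedy2020sharp} directly on the combined difference, bounding its squared norm by $\Pb(|\rho_U-\eta^S|\le|\hat{\rho}_U-\rho_U|+|\hat{\eta}^S-\eta^S|)$ plus the same quantity for $\rho_L$, and then using the margin condition.
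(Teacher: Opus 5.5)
Your proposal is correct and follows the paper's proof: the paper splits the term into the same four pieces $R_1$--$R_4$, handling the fixed-threshold pieces with \eqref{eqn:auxconsisemp1_U_etas}--\eqref{eqn:auxconsisemp1_L_etas} and the threshold-shift pieces with \eqref{eqn:cfempr2term1}--\eqref{eqn:cfempr2term2}. One minor caveat, which the paper shares: the consistency of $\hat{\eta}^S$ obtained through Proposition~\ref{lem:consistent} formally also requires Assumption~\ref{assum:rates}(a), which is not among the hypotheses listed in the lemma.
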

{\it Proof of Lemma~\ref{lem:cfempr2}. }
With the definition of $\pi^*_f$ and $\hat{\pi}^*_f$, we can write \begin{eqnarray}
    &&(\mathbb{P}_n-\mathbb{P})\left[\left\{\hat{\pi}_f^*(X)-\pi_f^*(X)\right\}\left\{\psi_{p_1}-\psi_{p_0}\right\}\right]\nonumber\\
    &=&(\mathbb{P}_n-\mathbb{P})\left[\left\{\varpi(X)\mathbf{1}(\hat{\rho}_U(X)>\hat{\eta}^S)+(1-\varpi(X))\mathbf{1}(\hat{\rho}_L(X)>\hat{\eta}^S)\right\}\left\{\psi_{p_1}-\psi_{p_0}\right\}\right]\nonumber\\
    &&-(\mathbb{P}_n-\mathbb{P})\left[\left\{\varpi(X)\mathbf{1}(\rho_U(X)>\eta^S)+(1-\varpi(X))\mathbf{1}(\rho_L(X)>\eta^S)\right\}\left\{\psi_{p_1}-\psi_{p_0}\right\}\right]\nonumber\\
    &=&R_1+R_2+R_3+R_4,\label{eqn:cfempr2}
\end{eqnarray}
where \begin{eqnarray*}
    R_1&=&(\mathbb{P}_n-\mathbb{P})\left[\varpi(X)\left\{\mathbf{1}(\hat{\rho}_U(X)>\eta^S)-\mathbf{1}(\rho_U(X)>\eta^S)\right\}\left\{\psi_{p_1}-\psi_{p_0}\right\}\right],\\
    R_2&=&(\mathbb{P}_n-\mathbb{P})\left[\left\{1-\varpi(X)\right\}\left\{\mathbf{1}(\hat{\rho}_L(X)>\eta^S)-\mathbf{1}(\rho_L(X)>\eta^S)\right\}\left\{\psi_{p_1}-\psi_{p_0}\right\}\right],\\
    R_3&=&(\mathbb{P}_n-\mathbb{P})\left[\varpi(X)\left\{\mathbf{1}(\hat{\rho}_U(X)>\hat{\eta}^S)-\mathbf{1}(\hat{\rho}_U(X)>\eta^S)\right\}\left\{\psi_{p_1}-\psi_{p_0}\right\}\right],\\
    R_4&=&(\mathbb{P}_n-\mathbb{P})\left[\left\{1-\varpi(X)\right\}\left\{\mathbf{1}(\hat{\rho}_L(X)>\hat{\eta}^S)-\mathbf{1}(\hat{\rho}_L(X)>\eta^S)\right\}\left\{\psi_{p_1}-\psi_{p_0}\right\}\right].
\end{eqnarray*}

The first two terms $R_1$ and $R_2$ are $o_\mathbb{P}(n^{-1/2})$ from \eqref{eqn:auxconsisemp1_U_etas} and \eqref{eqn:auxconsisemp1_L_etas}. The last two terms $R_3$ and $R_4$ are $o_\mathbb{P}(n^{-1/2})$ from \eqref{eqn:cfempr2term1} and \eqref{eqn:cfempr2term2}. Plugging $R_1=o_\mathbb{P}(n^{-1/2})$, $R_2=o_\mathbb{P}(n^{-1/2})$, $R_3=o_\mathbb{P}(n^{-1/2})$ and $R_4=o_\mathbb{P}(n^{-1/2})$ into \eqref{eqn:cfempr2} completes the proof. \QEDB

\begin{lemma}\label{lem:posterm}
Suppose that Assumptions~\ref{assum:rates}(a),~\ref{assum:c311}, and~\ref{assum:margincondition1}, and Conditions (a), (b), and (c) in Theorem~\ref{thm:convergence_margin} hold, and $p_1(x)-p_0(x)>0$ for all  $x\in\mathcal{X}$. If $\eta^S <0$, then
    \begin{eqnarray}
  \nonumber  && \mathbb{P}\left[\left\{\hat{\pi}_f^*(X)-\pi_f^*(X)\right\}\left\{p_1(X)-p_0(X)\right\}\right]\\
  \label{eqn:posterm}  
  &=&(\mathbb{P}_n-\mathbb{P})\left[ \{\varpi(X)-\pi_f^*(X)\}\left(\psi_{p_1}-\psi_{p_0}\right)\right]+o_\mathbb{P}(n^{-1/2}).
    \end{eqnarray}

\end{lemma}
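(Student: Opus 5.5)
The idea is to exploit the fact that, when $\eta^S<0$, both the true and the estimated policies satisfy their constraints with (near) equality. We then convert the population quantity on the left of \eqref{eqn:posterm} into empirical-process terms. By Proposition~\ref{lem:consistent}, $\hat\eta^S<0$ with probability tending to one, so $\hat\eta^S=\hat\eta^E$ and we may work on that event.

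First I would use the identity $\mathbb{P}[\pi_f^*(X)\{p_1(X)-p_0(X)\}]=Q(\eta^S)=\mathbb{E}[\varpi(X)\{p_1(X)-p_0(X)\}]$. This holds in the binding case by \eqref{eqn:proof-constraint} together with Lemma~\ref{lem:pi-diff}, since $R(\eta^S)=0$ under Assumption~\ref{assum:margincondition1}(a). On the estimated side, Assumption~\ref{assum:c311}(b) gives $\hat Q(\hat\eta^S)=\hat C_{\textup{eif}}+o_\Pb(n^{-1/2})$. Here $\hat Q(\hat\eta^S)=\Pn[\hat\pi_f^*(X)(\hat\psi_{p_1}-\hat\psi_{p_0})]$, because $\hat Q(\cdot;\eta)$ is exactly $\hat\pi_f^*$ times $\hat\psi_{p_1}-\hat\psi_{p_0}$. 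By Lemma~\ref{lem:effkappa},
\[
\hat C_{\textup{eif}}=\mathbb{E}[\varpi(X)\{p_1(X)-p_0(X)\}]+(\Pn-\Pb)[\varpi(X)(\psi_{p_1}-\psi_{p_0})]+o_\Pb(n^{-1/2}).
\]

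Next I would expand
\[
\Pn[\hat\pi_f^*(\hat\psi_{p_1}-\hat\psi_{p_0})]-\Pb[\hat\pi_f^*(p_1-p_0)]
\]
into three pieces. The first is $(\Pn-\Pb)[\hat\pi_f^*\{(\hat\psi_{p_1}-\hat\psi_{p_0})-(\psi_{p_1}-\psi_{p_0})\}]$. It is $o_\Pb(n^{-1/2})$ by sample splitting and Lemma~2 of \citet{kennedy2020sharp}, because $\hat\pi_f^*$ is bounded; the only subtlety is that $\hat\pi_f^*$ depends on $\hat\eta^S$, which is computed from the main sample. The second is the bias term $\Pb[\hat\pi_f^*\{(\hat\psi_{p_1}-\hat\psi_{p_0})-(p_1-p_0)\}]$. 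By Lemma~\ref{lem:bias_decomp}, Cauchy--Schwarz and Assumption~\ref{assum:rates}(a), it is $o_\Pb(n^{-1/2})$. The third is $(\Pn-\Pb)[\hat\pi_f^*(\psi_{p_1}-\psi_{p_0})]$, which I would write as
\[
(\Pn-\Pb)[\pi_f^*(\psi_{p_1}-\psi_{p_0})]+(\Pn-\Pb)[(\hat\pi_f^*-\pi_f^*)(\psi_{p_1}-\psi_{p_0})].
\]
The second summand is $o_\Pb(n^{-1/2})$ by Lemma~\ref{lem:cfempr2}.

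Combining these pieces gives
\[
\Pb[\hat\pi_f^*(p_1-p_0)]=\mathbb{E}[\varpi(p_1-p_0)]+(\Pn-\Pb)[\{\varpi-\pi_f^*\}(\psi_{p_1}-\psi_{p_0})]+o_\Pb(n^{-1/2}).
\]
Subtracting $\Pb[\pi_f^*(p_1-p_0)]=\mathbb{E}[\varpi(p_1-p_0)]$ then yields \eqref{eqn:posterm}.

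The main obstacle is the first piece, because $\hat\pi_f^*$ depends on the full-sample $\hat\eta^S$ and so is not fixed given the nuisance sample. I would handle it as in Lemma~\ref{lem:cfempr2term}. Conditionally on $\mathcal I_n$, the class indexed by $\eta\le0$ is Donsker: indicators times a bounded function with $L_2$ norm $\to0$, namely $\hat\psi-\psi$. Lemma 19.24 of \citet{van2000asymptotic} then gives uniform-in-$\eta$ negligibility, which covers the random $\hat\eta^S$.
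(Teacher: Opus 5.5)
Your proposal is correct and follows the paper's proof essentially step for step: the same decomposition of $\Pn[\hat\pi_f^*(\hat\psi_{p_1}-\hat\psi_{p_0})]-\Pb[\pi_f^*(p_1-p_0)]$, the identity $\hat Q(\hat\eta^S)=\hat C_{\textup{eif}}+o_\Pb(n^{-1/2})$ from Assumption~\ref{assum:c311}(b) combined with Lemma~\ref{lem:effkappa}, the bias term via Lemma~\ref{lem:bias_decomp}, and Lemma~\ref{lem:cfempr2} for the indicator-difference term. The one place you depart from the paper is an improvement: the paper controls $(\Pn-\Pb)[\hat\pi_f^*\{(\hat\psi_{p_1}-\hat\psi_{p_0})-(\psi_{p_1}-\psi_{p_0})\}]$ by dropping $\hat\pi_f^*$ and applying Lemma~2 of \citet{kennedy2020sharp}, which does not account for $\hat\pi_f^*$ depending on the main-sample $\hat\eta^S$, whereas your uniform-in-$\eta\le 0$ argument conditional on $\mathcal I_n$ handles this properly (it is cleanest via a maximal inequality for the VC-type class with envelope $|(\hat\psi_{p_1}-\hat\psi_{p_0})-(\psi_{p_1}-\psi_{p_0})|$, whose $L_2$ norm vanishes).
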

\noindent {\it Proof of Lemma~\ref{lem:posterm}. } 
We have
 \begin{eqnarray*}
&&\nonumber\mathbb{P}_n\left\{\hat{\pi}_f^*(X)\left(\hat{\psi}_{p_1}-\hat{\psi}_{p_0}\right)\right\}-\mathbb{P}\left[\pi_f^*(X)\left\{p_1(X)-p_0(X)\right\}\right]\\
&=&(\mathbb{P}_n-\mathbb{P})\left[\hat{\pi}_f^*(X)\left\{\hat{\psi}_{p_1}-\hat{\psi}_{p_0}\right\}-\pi_f^*(X)\left\{\psi_{p_1}-\psi_{p_0}\right\}\right]+(\mathbb{P}_n-\mathbb{P})\left[\pi_f^*(X)\left\{\psi_{p_1}-\psi_{p_0}\right\}\right]\\
&&+\Pb\left\{\hat{\pi}_f^*(X)\left(\hat{\psi}_{p_1}-\hat{\psi}_{p_0}\right)\right\}-\mathbb{P}\left[\pi_f^*(X)\left\{p_1(X)-p_0(X)\right\}\right]\\
&=&(\mathbb{P}_n-\mathbb{P})\left[\hat{\pi}_f^*(X)\left\{\hat{\psi}_{p_1}-\hat{\psi}_{p_0}\right\}-\pi_f^*(X)\left\{\psi_{p_1}-\psi_{p_0}\right\}\right]+(\mathbb{P}_n-\mathbb{P})\left[\pi_f^*(X)\left\{\psi_{p_1}-\psi_{p_0}\right\}\right]\\
&&+\mathbb{P}\left[\hat{\pi}_f^*(X)\left\{\hat{\psi}_{p_1}-\hat{\psi}_{p_0}-p_1(X)+p_0(X)\right\}\right]+\mathbb{P}\left[\left\{\hat{\pi}_f^*(X)-\pi_f^*(X)\right\}\left\{p_1(X)-p_0(X)\right\}\right].
\end{eqnarray*}
Therefore, we obtain 
\begin{eqnarray}
\nonumber&&\mathbb{P}\left[\left\{\hat{\pi}_f^*(X)-\pi_f^*(X)\right\}\left\{p_1(X)-p_0(X)\right\}\right]\\
\label{eqn:proofthm5-lem1}&=&R_1-R_2-R_3-(\mathbb{P}_n-\mathbb{P})\left[\pi_f^*(X)\left\{\psi_{p_1}-\psi_{p_0}\right\}\right],
\end{eqnarray}
where
\begin{eqnarray*}
R_1&=&\mathbb{P}_n\left[\hat{\pi}_f^*(X)\left\{\hat{\psi}_{p_1}-\hat{\psi}_{p_0}\right\}\right]-\mathbb{P}\left[\pi_f^*(X)\left\{p_1(X)-p_0(X)\right\}\right],\\
R_2&=&(\mathbb{P}_n-\mathbb{P})\left[\hat{\pi}_f^*(X)\left\{\hat{\psi}_{p_1}-\hat{\psi}_{p_0}\right\}-\pi_f^*(X)\left\{\psi_{p_1}-\psi_{p_0}\right\}\right],\\
R_3&=&\mathbb{P}\left[\hat{\pi}_f^*(X)\left\{\hat{\psi}_{p_1}-\hat{\psi}_{p_0}-p_1(X)+p_0(X)\right\}\right].
\end{eqnarray*}
For the first term $R_1$, we have  $\hat{\eta}^E=\hat{\eta}^S<0$ with probability tending to one from Proposition~\ref{lem:consistent}. Therefore, we have 
\begin{eqnarray*}
\mathbb{P}_n\left[\hat{\pi}_f^*(X)\left\{\hat{\psi}_{p_1}-\hat{\psi}_{p_0}\right\}\right]&=&\hat{Q}(\hat{\eta}^S)\ =\ \hat{Q}(\hat{\eta}^E)+o_\mathbb{P}(n^{-1/2})\ =\ \hat{C}_{\textup{eif}}+o_\mathbb{P}(n^{-1/2}),
\end{eqnarray*}
where the last equality follows from Assumption~\ref{assum:c311}(b).
In addition, since $\eta^S<0$, we have
\begin{eqnarray*}
    \mathbb{P}\left[\pi_f^*(X)\left\{p_1(X)-p_0(X)\right\}\right]&=&\mathbb{P}\left[\pi^*(X)\left\{p_1(X)-p_0(X)\right\}\right]\ =\ \mathbb{E}\left[\varpi(X)\left\{p_1(X)-p_0(X)\right\}\right],
\end{eqnarray*} 
where the first equality follows from Lemma~\ref{lem:pi-diff} and the last equality follows from \eqref{eqn:proof_ext_x+2_new}.
As a result, from Lemma~\ref{lem:effkappa}, we obtain 
\begin{eqnarray}
\label{eqn:proofthm5-lem1-R1}
R_1&=&(\mathbb{P}_n-\mathbb{P})\{\varpi(X)\left(\psi_{p_1}-\psi_{p_0}\right)\}+o_{\mathbb{P}}(n^{-1/2}).
\end{eqnarray}

For the second term $R_2$, we can write \begin{eqnarray}
    |R_2|
    &=&\left|(\mathbb{P}_n-\mathbb{P})\left[\hat{\pi}^*_f(X)\left\{\hat{\psi}_{p_1}-\hat{\psi}_{p_0}-\psi_{p_1}+\psi_{p_0}\right\}\right]+(\mathbb{P}_n-\mathbb{P})\left[\left\{\hat{\pi}_f^*(X)-\pi_f^*(X)\right\}\left\{\psi_{p_1}-\psi_{p_0}\right\}\right]\right|\nonumber\\
    &\leq &T_1+T_2,\label{eqn:cfemp}
\end{eqnarray}
where \begin{eqnarray*}
    T_1&=&\left|(\mathbb{P}_n-\mathbb{P})\left\{\hat{\psi}_{p_1}-\hat{\psi}_{p_0}-\psi_{p_1}+\psi_{p_0}\right\}\right|,\\
    T_2&=&\left|(\mathbb{P}_n-\mathbb{P})\left[\left\{\hat{\pi}_f^*(X)-\pi_f^*(X)\right\}\left\{\psi_{p_1}-\psi_{p_0}\right\}\right]\right|.
\end{eqnarray*}
From Lemma~2 of \citet{kennedy2020sharp}, we have $T_1=o_\mathbb{P}(n^{-1/2})$. From Lemma~\ref{lem:cfempr2}, we obtain $T_2=o_\mathbb{P}(n^{-1/2})$. Plugging $T_1=o_\mathbb{P}(n^{-1/2})$ and $T_2=o_\mathbb{P}(n^{-1/2})$ into \eqref{eqn:cfemp} yields 
\begin{eqnarray}
\label{eqn:proofthm5-lem1-R2} R_2&=& o_\mathbb{P}(n^{-1/2}).
\end{eqnarray}

For the third term $R_3$, from~\eqref{eq:bias_pz} and the Cauchy-Schwartz inequality, we have 
\begin{eqnarray}
\label{eqn:proofthm5-lem1-R3}   R_3  & = & O_{\mathbb{P}}\left(\left\|\hat{\lambda}_0-\lambda_0\right\|_2\left\{\left\|\hat{p}_1-p_1\right\|_2+\left\|\hat{p}_0-p_0\right\|_2\right\}\right) =\ o_{\mathbb{P}}(n^{-1/2}).
\end{eqnarray} 

Substituting~\eqref{eqn:proofthm5-lem1-R1},~\eqref{eqn:proofthm5-lem1-R2} and \eqref{eqn:proofthm5-lem1-R3} into~\eqref{eqn:proofthm5-lem1} yields \eqref{eqn:posterm}. \QEDB

\begin{lemma}\label{lem::relaxkey}
For any $x$ such that $p_1(x)-p_0(x)>0$, the following statements hold.
\begin{enumerate}[(a)]
    \item  If  $\phi_{+}(\varpi(x) - \pi^*_{f}(x))\neq \phi_{+}(\varpi(x) - \hat{\pi}^*_{f}(x))$,
    then 
    \begin{eqnarray}\label{eqn:relaxresult1}
       |\rho_{U }(x) - \eta ^S | &\leq& |\rho_{U }(x) - \eta ^S  - \hat{\rho}_{U }(x) + \hat{\eta} ^S|.
    \end{eqnarray}
    
    \item If  $\phi_{-}(\varpi(x) - \pi^*_{f}(x))\neq \phi_{-}(\varpi(x) - \hat{\pi}^*_{f}(x))$
    then 
    \begin{eqnarray}\label{eqn:relaxresult2}
           |\rho_{L }(x) - \eta ^S | &\leq& |\rho_{L }(x) - \eta ^S   - \hat{\rho}_{L }(x) + \hat{\eta} ^S |.
    \end{eqnarray}
\end{enumerate}
\end{lemma}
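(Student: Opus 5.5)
The plan is a direct pointwise case analysis using the explicit form of $\pi_f^*$ and $\hat\pi_f^*$, followed by Lemma~\ref{lem:indicator_diff}. Fix $x$ with $p_1(x)-p_0(x)>0$ and write $\varpi=\varpi(x)$. By definition,
\begin{eqnarray*}
\varpi-\pi_f^*(x) &=& \varpi\{1-\bone(\rho_U(x)>\eta^S)\}-(1-\varpi)\bone(\rho_L(x)>\eta^S),
\end{eqnarray*}
and $\varpi-\hat\pi_f^*(x)$ has the same form with hats. The first step is to compute the positive and negative parts explicitly. This requires knowing which indicator configurations are possible.

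The key structural fact is $\rho_U(x)\ge\rho_L(x)$. This holds because $A_U(x)\ge A_L(x)$ (since $U_{11}\ge L_{11}$), and we divide by the same positive quantity $p_1(x)-p_0(x)$. The same ordering holds for the plug-in scores $\hat\rho_U\ge\hat\rho_L$, since they use the same formulas.

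Given this ordering, the event $\{\rho_L>\eta^S\}$ implies $\{\rho_U>\eta^S\}$. So $\pi_f^*$ takes exactly one of three values:
\begin{itemize}
\item $1$ when $\rho_L>\eta^S$;
\item $\varpi$ when $\rho_U>\eta^S\ge\rho_L$;
\item $0$ when $\rho_U\le\eta^S$.
\end{itemize}
The corresponding values of the difference are:
\begin{itemize}
\item $\phi_+(\varpi-\pi_f^*)=\varpi\,\bone(\rho_U(x)\le\eta^S)$;
\item $\phi_-(\varpi-\pi_f^*)=-(1-\varpi)\bone(\rho_L(x)>\eta^S)$.
\end{itemize}
The identical identities hold with hats.

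For part (a), if the two positive parts differ, then necessarily $\varpi>0$ and $\bone(\rho_U(x)\le\eta^S)\ne\bone(\hat\rho_U(x)\le\hat\eta^S)$. Equivalently, $\bone(\rho_U(x)-\eta^S>0)\ne\bone(\hat\rho_U(x)-\hat\eta^S>0)$. Apply Lemma~\ref{lem:indicator_diff} with $a=\rho_U(x)-\eta^S$ and $b=\hat\rho_U(x)-\hat\eta^S$. It gives $|a|\le\max\{|a|,|b|\}\le|a-b|$, which is exactly \eqref{eqn:relaxresult1}.

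Part (b) is symmetric. A difference in the negative parts forces $\varpi<1$ and $\bone(\rho_L(x)>\eta^S)\ne\bone(\hat\rho_L(x)>\hat\eta^S)$. Lemma~\ref{lem:indicator_diff}, applied with $a=\rho_L(x)-\eta^S$ and $b=\hat\rho_L(x)-\hat\eta^S$, then yields \eqref{eqn:relaxresult2}.

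The only potentially delicate step is justifying the closed forms of $\phi_\pm$. The expression $\varpi-\pi_f^*$ mixes both indicators, so without the ordering $\rho_U\ge\rho_L$ one could have, say, $\rho_L>\eta^S\ge\rho_U$. In that configuration $\varpi-\pi_f^*=\varpi-(1-\varpi)$, whose sign depends on $\varpi$, and this would entangle $\rho_L$ in the positive part.

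I will therefore verify $\hat\rho_U\ge\hat\rho_L$ carefully. It follows from $\min\{\hat\mu_1\hat p_1,\hat p_0\}\ge\max\{\hat p_1\hat\mu_1-\hat p_1+\hat p_0,0\}$, which holds whenever $\hat\mu_1\in[0,1]$, $\hat p_1\ge0$, and $\hat p_0\ge 0$; these are natural for probability estimates. If needed, I would state this mild range requirement on the estimates explicitly.

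With the ordering in hand, both the true and estimated policies lie in $\{0,\varpi,1\}$. The positive part then depends only on the $\rho_U$ indicator, and the negative part only on the $\rho_L$ indicator, which completes the argument.
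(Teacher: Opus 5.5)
Your main line is the paper's argument. The paper notes that $\phi_+(\varpi-\pi_f^*)$ and $\phi_+(\varpi-\hat\pi_f^*)$ both lie in $\{0,\varpi\}$, deduces from a disagreement that $\rho_U-\eta^S$ and $\hat\rho_U-\hat\eta^S$ have opposite signs, and concludes; the $\phi_-$, $\rho_L$ case is symmetric. Your closed forms $\phi_+(\varpi-\pi_f^*)=\varpi\,\bone(\rho_U\le\eta^S)$ and $\phi_-(\varpi-\pi_f^*)=-(1-\varpi)\,\bone(\rho_L>\eta^S)$, followed by Lemma~\ref{lem:indicator_diff}, are a tidier version of the same step. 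They also correctly give the non-strict $\hat\rho_U\le\hat\eta^S$ where the paper writes a strict inequality.

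You are right that the ordering $\rho_U\ge\rho_L$, $\hat\rho_U\ge\hat\rho_L$ is genuinely needed. The paper uses it silently when it asserts the value set $\{0,\varpi\}$, and without it the lemma can fail. For example, take $\rho_U\le\eta^S$, $\hat\rho_L>\hat\eta^S\ge\hat\rho_U$ and $\varpi\in(0,1)$. Then the positive parts differ, but both $\rho_U$-indicators are zero, so the signs need not be opposite.

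The step that is wrong as written is your verification of $\hat\rho_U\ge\hat\rho_L$.
\begin{itemize}
\item The inequality $\min\{\hat\mu_1\hat p_1,\hat p_0\}\ge\max\{\hat p_1\hat\mu_1-\hat p_1+\hat p_0,0\}$ does not follow from $\hat\mu_1\in[0,1]$ and $\hat p_1,\hat p_0\ge0$ alone. Comparing $\hat\mu_1\hat p_1$ with $\hat p_1\hat\mu_1-\hat p_1+\hat p_0$ requires $\hat p_1\ge\hat p_0$. The positivity hypothesis is imposed on $p_1-p_0$, not on its estimate.
\item Passing from numerators to ratios ``by dividing by the same positive quantity'' likewise needs $\hat p_1-\hat p_0>0$.
\end{itemize}

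The conclusion survives anyway. If $\hat p_1<\hat p_0$, then $\hat\mu_1\hat p_1\le\hat p_1<\hat p_0$ and $\hat p_1\hat\mu_1-\hat p_1+\hat p_0\ge\hat p_0-\hat p_1>0$. Hence $\hat A_U-\hat A_L=\hat p_1-\hat p_0<0$, and so $\hat\rho_U-\hat\rho_L=1$: the numerator inequality and the sign of the denominator flip together.

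So either add $\hat p_1>\hat p_0$ to your range requirement, or split on the sign of $\hat p_1-\hat p_0$ as above. With $\hat p_1\neq\hat p_0$, which is needed anyway for the estimated scores to be defined, your argument is complete.
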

\noindent {\it Proof of Lemma~\ref{lem::relaxkey}. }
We prove \eqref{eqn:relaxresult1} and \eqref{eqn:relaxresult2} for a fixed $x\in\mathcal X$, and thus omit $x$ in the argument of functions. 

We first prove Lemma~\ref{lem::relaxkey}(a).
Observe that  both $\phi_{+}(\varpi - \pi^*_{f})$ and $\phi_{+}(\varpi - \hat{\pi}^*_{f})$ take values in the set $\{0, \varpi\}$.
We consider two cases depending on the value of $\phi_{+}(\varpi - \pi^*_{f})$.
\begin{itemize}
    \item $\phi_{+}(\varpi - \pi^*_{f})= 0$. We have $\phi_{+}(\varpi - \hat{\pi}^*_{f}) = \varpi$. Therefore, $\pi^*_{f} \geq \varpi$ and $\hat{\pi}^*_{f} = 0$. By the definitions of $\pi^*_{f}$ and $\hat{\pi}^*_{f} $, we obtain $\rho_{U } - \eta ^S > 0$ and $\hat{\rho}_{U } - \hat{\eta} ^S < 0$. As a result,~\eqref{eqn:relaxresult1} holds.

    \item  $\phi_{+}(\varpi - \pi^*_{f}) =  \varpi$. We have $\phi_{+}(\varpi - \hat{\pi}^*_{f}) = 0$. Therefore,  $\pi^*_{f} = 0$ and $\hat{\pi}^*_{f} \geq \varpi$. By definition, we obtain $\rho_{U } - \eta ^S < 0$ and $\hat{\rho}_{U } - \hat{\eta} ^S > 0$. Again,~\eqref{eqn:relaxresult1} holds.
\end{itemize}

We then prove  Lemma~\ref{lem::relaxkey}(b).
Observe that  both $\phi_{-}(\varpi - \pi^*_{f})$ and $\phi_{-}(\varpi - \hat{\pi}^*_{f})$ take values in the set $\{0, \varpi-1\}$.
We consider two cases depending on the value of $\phi_{-}(\varpi - \pi^*_{f})$.
\begin{itemize}
    \item $\phi_{-}(\varpi - \pi^*_{f}) = 0$. We have $\phi_{-}(\varpi - \hat{\pi}^*_{f}) = \varpi-1$. Therefore, $\pi^*_{f} \leq \varpi$ and $\hat{\pi}^*_{f} = 1$. By the definitions of $\pi^*_{f}$ and $\hat{\pi}^*_{f} $, we obtain $\rho_{L} - \eta ^S  < 0$ and $\hat{\rho}_{L } - \hat{\eta} ^S  > 0$. As a result,~\eqref{eqn:relaxresult2} holds.

    \item  $\phi_{-}(\varpi - \pi^*_{f}) =  \varpi(x)-1$. We have $\phi_{-}(\varpi - \hat{\pi}^*_{f}) = 0$. Therefore,  $\pi^*_{f} = 1$ and $\hat{\pi}^*_{f} \leq \varpi$. By definition, we obtain $\rho_{L } - \eta ^S   > 0$ and $\hat{\rho}_{L } - \hat{\eta} ^S  < 0$. Again,~\eqref{eqn:relaxresult2} holds.
\end{itemize} \QEDB

\subsubsection{Rate bound on the excess worst-case regret}
\begin{eqnarray}
\nonumber&&p_0\{R_{\sup}(\hat{\pi} ^*,\varpi)-R_{\sup}(\pi ^*,\varpi)\}\\
\nonumber&=&\mathbb{E} \left[ \left\{\varpi(X)-\hat{\pi} ^*(X)\right\} \left\{A_{U }(X)\bone(\hat{\pi} ^*(X)<\varpi(X))+A_{L }(X)\bone(\hat{\pi} ^*(X)> \varpi(X))\right\} \right]\\
\nonumber&&- \mathbb{E} \left[ \left\{\varpi(X)-\pi ^*(X)\right\} \left\{A_{U }(X)\bone(\pi ^*(X)<\varpi(X))+A_{L }(X)\bone(\pi ^*(X)> \varpi(X))\right\} \right]\\
\nonumber&=& \mathbb{E}\left[ A_{U }(X)\left\{\phi_{+}(\varpi(X)-\hat{\pi} ^*(X))-\phi_{+}(\varpi(X)-\pi ^*(X))\right\}\right]\\
\nonumber&&+\mathbb{E}\left[ A_{L }(X)\left\{ \phi_{-}(\varpi(X)-\hat{\pi} ^*(X))-\phi_{-}(\varpi(X)-\pi ^*(X))\right\}\right]\\
\nonumber&=& \E\left[ \left\{A_{U }(X)-\eta ^S\left(p_1(X)-p_0(X)\right)\right\}\left\{ \phi_{+}(\varpi(X)-\hat{\pi} ^*(X))-\phi_{+}(\varpi(X)-\pi ^*(X))\right\}\right]\\
\nonumber&&+\E\left[ \left\{A_{L }(X)-\eta ^S\left(p_1(X)-p_0(X)\right)\right\}\left\{\phi_{-}(\varpi(X)-\hat{\pi} ^*(X))-\phi_{-}(\varpi(X)-\pi ^*(X))\right\}\right]\\
\nonumber&&+\mathbb{E}\Big[ \eta ^S\left(p_1(X)-p_0(X)\right)\big\{(\varpi(X)-\hat{\pi} ^*(X))\bone(\hat{\pi}^*(X)\ne \varpi(X))-(\varpi(X)-\pi ^*(X))\bone(\pi^*(X)\ne \varpi(X))\big\}\Big]\\
&=&R_1+R_2+R_3,\label{eqn:res}
\end{eqnarray}
where 
\begin{eqnarray*}
R_1 &=&  \E\left[ \left\{A_{U }(X)-\eta ^S\left(p_1(X)-p_0(X)\right)\right\}\left\{ \phi_{+}(\varpi(X)-\hat{\pi} ^*(X))-\phi_{+}(\varpi(X)-\pi ^*(X))\right\}\right], \\
R_2&=&\E\left[ \left\{A_{L }(X)-\eta ^S\left(p_1(X)-p_0(X)\right)\right\}\left\{\phi_{-}(\varpi(X)-\hat{\pi} ^*(X))-\phi_{-}(\varpi(X)-\pi ^*(X))\right\}\right],\\
R_3&=&\eta ^S\,\mathbb{E}\Big[ \left\{\pi ^*(X)-\hat{\pi} ^*(X)\right\}\left(p_1(X)-p_0(X)\right)\Big].
\end{eqnarray*}

For the first term $R_1$,  from Lemma~\ref{lem:replace_pi_by_pif}, we have \begin{eqnarray*}
   R_1&=&T_1+O_\mathbb{P}(n^{-1/2}),
\end{eqnarray*}
where
\begin{eqnarray*}
    T_1&=&\E\left[ \left\{A_{U }(X)-\eta ^S\left(p_1(X)-p_0(X)\right)\right\}\left\{ \phi_{+}(\varpi(X)-\hat{\pi}_f ^*(X))-\phi_{+}(\varpi(X)-\pi_f^*(X))\right\}\right].
\end{eqnarray*}
For the term $T_1$, we have
\begin{eqnarray*}
|T_1|
\nonumber &\leq&\E\left[ \left| p_1(X)-p_0(X)\right| \left| \rho_{U }(X)-\eta ^S\right| \bone\left\{ \phi_{+}(\varpi(X)-\hat{\pi}_f ^*(X))\neq \phi_{+}(\varpi(X)-\pi_f^*(X))\right\}\right]\\
&\leq&\E\left[ \left| \rho_{U }(X)-\eta ^S\right| \bone\left\{  |\rho_{U }(x) - \eta ^S | \leq |\rho_{U }(x) - \eta ^S  - \hat{\rho}_{U }(x) + \hat{\eta} ^S|\right\}\right],
\end{eqnarray*}
where the last inequality follows from $p_1(X)-p_0(X)\leq 1$ and  Lemma~\ref{lem::relaxkey}.  Under Assumption~\ref{assum:margincondition1}(a), we further have
\begin{eqnarray*}
|T_1|
\nonumber&\leq& \E\left[ \left|\rho_{U }(x) - \eta ^S  - \hat{\rho}_{U }(x) + \hat{\eta} ^S\right| \bone\left\{  |\rho_{U }(x) - \eta ^S | \leq |\rho_{U }(x) - \eta ^S  - \hat{\rho}_{U }(x) + \hat{\eta} ^S|\right\}\right]\\
\nonumber&\leq&  ||\rho_{U }(x) - \eta ^S  - \hat{\rho}_{U }(x) + \hat{\eta} ^S||_{\infty}\Pb( |\rho_{U }(x) - \eta ^S | \leq |\rho_{U }(x) - \eta ^S  - \hat{\rho}_{U }(x) + \hat{\eta} ^S|)\\
\nonumber&\leq& C||\rho_{U }(x) - \eta ^S  - \hat{\rho}_{U }(x) + \hat{\eta} ^S||_{\infty}^{1+\beta}\\
&\leq& C(\|\hat{\rho}_{U }-\rho_{U }\|_{\infty}+|\hat{\eta} ^S-\eta ^S|)^{1+\beta}. 
\end{eqnarray*}
As a result, we obtain 
\begin{eqnarray}
\label{eqn:resr1} |R_1|&\leq&C(\|\hat{\rho}_{U }-\rho_{U }\|_{\infty}+|\hat{\eta} ^S-\eta ^S|)^{1+\beta}+O_\mathbb{P}(n^{-1/2}).
\end{eqnarray}

For the second term $R_2$,  from Lemma~\ref{lem:replace_pi_by_pif}, we have
\begin{eqnarray*}
\nonumber R_2
&=&T_2+O_\mathbb{P}(n^{-1/2}),
\end{eqnarray*}
where \begin{eqnarray*}
    T_2&=&\E\left[ \left\{A_{L }(X)-\eta ^S\left(p_1(X)-p_0(X)\right)\right\}\left\{ \phi_{-}(\varpi(X)-\hat{\pi}_f ^*(X))-\phi_{-}(\varpi(X)-\pi_f^*(X))\right\}\right].
\end{eqnarray*}
For the term $T_2$, we have
\begin{eqnarray*}
|T_2|\nonumber &\leq&\E\left[ \left| p_1(X)-p_0(X)\right| \left| \rho_{L }(X)-\eta ^S\right| \bone\left\{ \phi_{-}(\varpi(X)-\hat{\pi}_f ^*(X))\neq \phi_{-}(\varpi(X)-\pi_f^*(X))\right\}\right]\\
&\leq&\E\left[ \left| \rho_{L }(X)-\eta ^S\right| \bone\left\{  |\rho_{L}(x) - \eta ^S | \leq |\rho_{L}(x) - \eta ^S  - \hat{\rho}_{L}(x) + \hat{\eta} ^S|\right\}\right],
\end{eqnarray*}
where the last inequality follows from $p_1(X)-p_0(X)\leq 1$ and  Lemma~\ref{lem::relaxkey}.  Under Assumption~\ref{assum:margincondition1}(a), we further have
\begin{eqnarray*}
|T_2|\nonumber&\leq& \E\left[ \left|\rho_{L }(x) - \eta ^S  - \hat{\rho}_{L }(x) + \hat{\eta} ^S\right| \bone\left\{  |\rho_{L }(x) - \eta ^S | \leq |\rho_{L }(x) - \eta ^S  - \hat{\rho}_{L }(x) + \hat{\eta} ^S|\right\}\right]\\
\nonumber&\leq&  ||\rho_{L }(x) - \eta ^S  - \hat{\rho}_{L }(x) + \hat{\eta} ^S||_{\infty}\Pb( |\rho_{L}(x) - \eta ^S | \leq |\rho_{L }(x) - \eta ^S  - \hat{\rho}_{L }(x) + \hat{\eta} ^S|)\\
\nonumber&\leq& C||\rho_{L}(x) - \eta ^S  - \hat{\rho}_{L }(x) + \hat{\eta} ^S||_{\infty}^{1+\beta}\\
&\leq& C(\|\hat{\rho}_{L }-\rho_{L }\|_{\infty}+|\hat{\eta} ^S-\eta ^S|)^{1+\beta}.\
\end{eqnarray*}
As a result, we obtain 
\begin{eqnarray}
\label{eqn:resr2} |R_2|&\leq&C(\|\hat{\rho}_{L }-\rho_{L }\|_{\infty}+|\hat{\eta} ^S-\eta ^S|)^{1+\beta}+O_\mathbb{P}(n^{-1/2}).
\end{eqnarray}

For the third term $R_{3}$,
we have
\begin{eqnarray}\label{eqn:resr3}
 R_{3}&=&\nonumber\eta^S\mathbb{E}\left[ \left\{\pi_f ^*(X)-\hat{\pi}_f ^*(X)\right\}\left\{p_1(X)-p_0(X)\right\}\right]\nonumber 
\\&=&-\eta ^S(\mathbb{P}_n-\mathbb{P})\left[\left\{\varpi(X)-\pi^*_f(X)\right\}(\psi_{p_1}-\psi_{p_0})\right]+o_\mathbb{P}(n^{-1/2})\nonumber
\\&=&O_\mathbb{P}(n^{-1/2}),
\end{eqnarray}
where the second equality follows from Lemma~\ref{lem:posterm}.

Substituting \eqref{eqn:resr1}-\eqref{eqn:resr3} into \eqref{eqn:res} completes the proof. \QEDB

\subsection{Proof of Theorem~\ref{thm:pi_opt_knownm} }
\begin{lemma}
    \label{lem:pointwise_oracle}
For any fix $x$ such that $p_1(x)-p_0(x)>0$, define
\begin{eqnarray*}
g_o(\pi) &=& \{\pi - \varpi(x)\} A(x)-\eta_o^S \pi\left\{p_1(x)-p_0(x)\right\},\quad \pi\in[0,1].
\end{eqnarray*}
Then one maximizer of $g_o(\pi)$ over $\pi\in[0,1]$ is given by
\begin{eqnarray*}
\pi^*_o(x)
=
\begin{cases}
1, & \rho(x)>\eta^S_o,\\
\alpha_o, & \rho(x)= \eta^S_o,\\
0, & \rho(x)<\eta^S_o.
\end{cases}
\end{eqnarray*}
\end{lemma}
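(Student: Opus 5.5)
Since $g_o$ is affine in $\pi$ on $[0,1]$, I will treat the claim as a one-variable linear maximization, in the same way as Lemma~\ref{lem:pointwise_x+} but with a single slope. Write
\begin{eqnarray*}
g_o(\pi) &=& \pi\left[A(x)-\eta_o^S\{p_1(x)-p_0(x)\}\right]-\varpi(x)A(x).
\end{eqnarray*}
The constant term $-\varpi(x)A(x)$ does not involve $\pi$ and can be dropped. The slope is
\begin{eqnarray*}
g_o'(\pi) &=& \{p_1(x)-p_0(x)\}\{\rho(x)-\eta_o^S\},
\end{eqnarray*}
which uses the definition $\rho(x)=A(x)/\{p_1(x)-p_0(x)\}$. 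This is legitimate because $p_1(x)-p_0(x)>0$ by hypothesis.

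Since $p_1(x)-p_0(x)>0$, the sign of the slope is the sign of $\rho(x)-\eta_o^S$, and I split into three cases.
\begin{itemize}
\item If $\rho(x)>\eta_o^S$, then $g_o$ is strictly increasing on $[0,1]$, so it is maximized at $\pi=1$.
\item If $\rho(x)<\eta_o^S$, then $g_o$ is strictly decreasing, so it is maximized at $\pi=0$.
\item If $\rho(x)=\eta_o^S$, then $g_o$ is constant on $[0,1]$, so every $\pi\in[0,1]$ is a maximizer. In particular $\alpha_o$ is one, provided $\alpha_o\in[0,1]$.
\end{itemize}

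The only point needing care is therefore to check that $\alpha_o\in[0,1]$ whenever it is used. If $\eta_o^S=0$ or $R_o(\eta_o^S)=0$, the boundary case either has probability-zero weight in the constraint or $\alpha_o$ can be taken arbitrary in $[0,1]$, so I will just set it to any value in $[0,1]$.

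When $\eta_o^S<0$ and $R_o(\eta_o^S)>0$, I will use the definition of $\eta_o^E=\eta_o^S$ as a supremum together with monotonicity of $Q_o$. The function $Q_o$ is nonincreasing and right-continuous, and $Q_o(\eta^-)=Q_o(\eta)+R_o(\eta)$. From these:
\begin{itemize}
\item For $\eta>\eta_o^S$ we have $Q_o(\eta)$ below the constraint level, so right-continuity gives $Q_o(\eta_o^S)\le \mathbb{E}[\varpi(X)\{p_1(X)-p_0(X)\}]$. Hence $\alpha_o\ge0$.
\item Feasibility of $\eta<\eta_o^S$ gives $Q_o(\eta_o^S)+R_o(\eta_o^S)\ge \mathbb{E}[\varpi(X)\{p_1(X)-p_0(X)\}]$. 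Hence $\alpha_o\le1$.
\end{itemize}
This mirrors the claim "$\alpha_+\in[0,1]$" used in Lemma~\ref{lem:pointwise_x+}.

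Thus $\pi_o^*(x)$ maximizes $g_o$ in all cases.
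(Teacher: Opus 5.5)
Your proof is correct and follows the paper's route exactly: $g_o$ is affine in $\pi$ with slope $\{p_1(x)-p_0(x)\}\{\rho(x)-\eta_o^S\}$, and the three sign cases give the maximizers $1$, $0$, and any point of $[0,1]$. You go further than the paper in Case 2(b), where you actually verify $\alpha_o\in[0,1]$ using right-continuity of $Q_o$ and $Q_o(\eta^-)=Q_o(\eta)+R_o(\eta)$, whereas the paper only asserts this ``from the definition of $\alpha_o$''; your remark that in the other cases the boundary value may be any point of $[0,1]$ is also the right reading of the statement.
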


\noindent {\it Proof of Lemma~\ref{lem:pointwise_oracle}.} 
We derive the maximizer for a fixed $x\in\mathcal X$ and thus omit $x$ in the argument of functions.

Because $g_o(\pi)$ is a linear function in $\pi$, its derivatives is
\begin{eqnarray*}
g_o'(\pi) &=&
(p_1-p_0)(\rho - \eta_o^S).
\end{eqnarray*}
We then analyze $g_o(\pi)$ case by case. 
\begin{itemize}
\item $\rho > \eta_o^S$. We have derivatives to be positive on $[0,1]$. Therefore, $g_o(\pi)$ achieves its maximum at $\pi=1$.
\item $\rho < \eta_o^S$. We have derivatives to be negative on $[0,1]$. Therefore, $g_o(\pi)$ achieves its maximum at $\pi=0$.
\item $ \rho = \eta_o^S$. We have derivatives to be zero on $[0,1]$. Therefore, $g_o(\pi)$ is constant on $[0, 1]$. As a result, $g_o(\pi)$ achieves its maximum at any value in $[0, 1]$.
\end{itemize}

From the definition of $\alpha_o$, we have $\alpha_o\in [0,1]$, therefore,  $\pi_o^*$ maximizes $g_o(\pi)$ in all cases. \QEDB

We then prove Theorem~\ref{thm:pi_opt_knownm}. We first show that $\pi_o^*(X)$ satisfies the constraint, i.e, \begin{eqnarray*}\mathbb{E}[\pi_o^*(X)\left\{p_1(X)-p_0(X)\right\}]\geq \mathbb{E}\left[\varpi(X)\left\{p_1(X)-p_0(X)\right\}\right].\end{eqnarray*}

\noindent
\textbf{Case 1: $\eta_o^S = 0$.} By definition, we have $\eta_o^E \geq 0$, which implies $Q_o(0) \geq  \mathbb{E}\left[\varpi(X)\left\{p_1(X)-p_0(X)\right\}\right]$. Therefore, we have
\begin{eqnarray*}
\mathbb{E}[\pi_o^*(X)\left\{p_1(X)-p_0(X)\right\}] 
&=& \mathbb{E}[\left\{p_1(X)-p_0(X)\right\} \bone(\rho(X) > 0)] \\
&=& Q_o(0)\\
&\geq& \mathbb{E}\left[\varpi(X)\left\{p_1(X)-p_0(X)\right\}\right].
\end{eqnarray*}

\noindent
\textbf{Case 2: $\eta_o^S < 0$.}  We have 
\begin{eqnarray}
\nonumber &&\mathbb{E}[\pi_o^*(X)\left\{p_1(X)-p_0(X)\right\}] \\
&=& \mathbb{E}[\left\{p_1(X)-p_0(X)\right\} \bone(\rho(X) > \eta_o^S)]
\nonumber + \mathbb{E}\left[\alpha_o \left\{p_1(X)-p_0(X)\right\} \bone(\rho(X) = \eta_o^S)\right] \\
\label{eqn:proof-constraint_knownm}&=& \mathbb{E}\left[\varpi(X)\left\{p_1(X)-p_0(X)\right\}\right].\end{eqnarray}
where the last equality follows from the definition of $\alpha_o$. 

Thus $\pi^*_o$ satisfies the constraint.

Next, we prove that  $R(\pi^*,\varpi) \leq R(\pi',\varpi)$ for any  $\pi'(X)$ satisfying the constraint $B(\pi')\geq B(\varpi)$.

 From Lemma~\ref{lem:pointwise_oracle}, we have $g_o(\pi_o^*(x))\geq g_o(\pi'(x))$ for all $x$, which implies $\E\{g_o(\pi_o^*(X))\}\geq \E\{g_o(\pi'(X))\}$.
As a result,
\begin{eqnarray}
   \nonumber  && R(\pi',\varpi)-R(\pi_o^*,\varpi)\\
   \nonumber  &=&\mathbb{E}\left[\left\{\pi_o^*(X) - \varpi(X)\right\} A(X) \right]-\mathbb{E}\left[\left\{\pi'(X) - \varpi(X)\right\} A(X) \right]\\
    \nonumber &\geq& \eta_o^S\mathbb{E}\left[\left\{\pi_o^*(X)-\varpi(X)\right\}\left\{p_1(X)-p_0(X)\right\}\right]-\eta_o^S\mathbb{E}\left[\left\{\pi'(X)-\varpi(X)\right\}\left\{p_1(X)-p_0(X)\right\}\right]\\\nonumber&=& \eta_o^S\mathbb{E}\left[\left\{\pi_o^*(X)-\pi'(X)\right\}\left\{p_1(X)-p_0(X)\right\}\right]\\
  \label{eqn::proofthm4_knownm}  &\geq & \eta_o^S\mathbb{E}\left[\left\{\pi_o^*(X)-\varpi(X)\right\}\left\{p_1(X)-p_0(X)\right\}\right],
\end{eqnarray}
where the last inequality follows from $\eta_o^S\leq 0$ and $B(\pi')\geq B(\varpi)$. If $\eta_o^S=0$, then the right hand side of~\eqref{eqn::proofthm4_knownm} equals zero; if $\eta_o^S<0$, the the right hand side of~\eqref{eqn::proofthm4_knownm} equals zero by~\eqref{eqn:proof-constraint_knownm}. In both cases, we obtain $R(\pi_o^*,\varpi)\leq R(\pi',\varpi)$. \QEDB

\subsection{Proof of Theorem~\ref{thm:dif_to_oracle}}
Analogously, denote 
\begin{eqnarray*}
\pi^*_{o,f}(X) &=& \bone(\rho_{}(X) > \eta_o ^S),
\end{eqnarray*}
which represents a simplified version of the oracle policy $\pi_o^*(X)$ that excludes the boundary cases  $\rho(X)=\eta_o^S$.

\begin{lemma}
\label{lem:pi-difforacle}
Under Assumption~\ref{assum:margincondition2}, $\mathbb{P}\big(\pi_o^*(X)\neq \pi_{o,f}^*(X)\big) = 0$.
\end{lemma}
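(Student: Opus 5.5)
This is the oracle analogue of Lemma~\ref{lem:pi-diff}, and I would prove it the same way, by locating the set where the two policies can differ. Compare the definitions case by case from Theorem~\ref{thm:pi_opt_knownm}.

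\begin{itemize}
\item In Case~1 ($\eta_o^S=0$), $\pi_o^*(X)=\bone\{\rho(X)>0\}=\bone\{\rho(X)>\eta_o^S\}=\pi_{o,f}^*(X)$ for every $X$.
\item In Case~2(a), $\pi_o^*(X)=\bone\{\rho(X)>\eta_o^S\}=\pi^*_{o,f}(X)$ identically.
\item In Case~2(b), $\pi_o^*(X)=\alpha_o$ on $\{\rho(X)=\eta_o^S\}$ and $\pi_o^*(X)=\bone\{\rho(X)>\eta_o^S\}$ otherwise, so it can differ from $\pi_{o,f}^*$ only on that boundary set.
\end{itemize}

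In every case, then, $\{\pi_o^*(X)\neq\pi_{o,f}^*(X)\}\subseteq\{\rho(X)=\eta_o^S\}$, and monotonicity of probability gives
\[
\mathbb{P}\big(\pi_o^*(X)\neq \pi_{o,f}^*(X)\big)\le \mathbb{P}\big(\rho(X)=\eta_o^S\big)=\mathbb{P}\big(|\rho(X)-\eta_o^S|\le 0\big)\le C\cdot 0^b=0.
\]
The last step applies Assumption~\ref{assum:margincondition2} with $\varepsilon=0$; since $b>0$, the bound $C\varepsilon^b$ vanishes at $\varepsilon=0$.

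There is no substantive obstacle. The only point needing care is that $\rho(X)$ is well defined, which uses $p_1(x)-p_0(x)>0$ as in the setting of Theorem~\ref{thm:pi_opt_knownm}.
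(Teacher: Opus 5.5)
Your proposal is correct and follows the paper's proof: both observe that $\pi_o^*$ and $\pi_{o,f}^*$ can differ only on the boundary set $\{\rho(X)=\eta_o^S\}$, then apply Assumption~\ref{assum:margincondition2} with $\varepsilon=0$. Your case-by-case check and your use of $\le$ rather than the paper's $=$ for the first step make the argument slightly more precise.
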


\noindent {\it Proof of Lemma~\ref{lem:pi-difforacle}.} By the definitions of $\pi_{o,f}^*(\cdot)$ and $\pi_o^*(\cdot)$, the only case where $\pi_o^*(X)$ may differ from $\pi_{o,f}^*(X)$ are the boundary cases
$\rho(X)=\eta_o^S$. Therefore, we have 
\begin{eqnarray*}
\mathbb{P}\big(\pi_o^*(X)\neq \pi_{o,f}^*(X)\big)
&= & \mathbb{P}\big(\rho(X)=\eta_o^S\big)\ =\ 0,
\end{eqnarray*}
where the last equality follows from Assumption~\ref{assum:margincondition2}. \QEDB

\begin{lemma}\label{lem:replace_pio_by_pifo} Suppose that Assumption~\ref{assum:margincondition1} and \ref{assum:margincondition2} holds. Let $G(x)$ be any function satisfying $|G(x)|\leq M$ for some constant $M<\infty$ and all $x$.
    Then \begin{eqnarray*}
        \mathbb{E}\left[G(X)\phi_+(\pi_o^*(X)-\pi^*(X))\right]&=&\mathbb{E}\left[G(X)\phi_+(\pi_{o,f}^*(X)-\pi^*_{f}(X))\right],\\
        \mathbb{E}\left[G(X)\phi_-(\pi_o^*(X)-\pi^*(X))\right]&=&\mathbb{E}\left[G(X)\phi_-(\pi_{o,f}^*(X)-\pi^*_{f}(X))\right].
    \end{eqnarray*}
\end{lemma}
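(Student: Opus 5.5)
The plan is to compare the two expectations through the \emph{event} on which the original and simplified policies disagree, exactly as in the proof of Lemma~\ref{lem:replace_pi_by_pif}, but now with no estimated quantities involved. The difference of the two sides is then bounded by the probability of that event times a constant, and this probability will be shown to be zero.

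\textbf{Step 1: reduce to a pointwise bound.} For the first identity I would use the bound from the proof of Lemma~\ref{lem:replace_pi_by_pif},
\[
|\phi_+(x)-\phi_+(y)|\le (|\phi_+(x)|+|\phi_+(y)|)\,\bone(x\ne y).
\]
Take $x=\pi_o^*(X)-\pi^*(X)$ and $y=\pi_{o,f}^*(X)-\pi_f^*(X)$. Since all policies take values in $[0,1]$, both $|\phi_+(x)|$ and $|\phi_+(y)|$ are at most $1$. This gives
\[
\bigl|\E[G(X)\{\phi_+(\pi_o^*(X)-\pi^*(X))-\phi_+(\pi_{o,f}^*(X)-\pi_f^*(X))\}]\bigr|\le 2M\,\Pb(x\neq y).
\]

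\textbf{Step 2: show the disagreement event is null.} The inclusion
\[
\{x\ne y\}\subseteq\{\pi_o^*(X)\ne\pi_{o,f}^*(X)\}\cup\{\pi^*(X)\ne\pi_f^*(X)\}
\]
gives, by a union bound,
\[
\Pb(x\neq y)\le \Pb\bigl(\pi_o^*(X)\ne\pi_{o,f}^*(X)\bigr)+\Pb\bigl(\pi^*(X)\ne\pi_f^*(X)\bigr).
\]
The first probability is zero by Lemma~\ref{lem:pi-difforacle}, which uses Assumption~\ref{assum:margincondition2}. The second is zero by Lemma~\ref{lem:pi-diff}, which uses Assumption~\ref{assum:margincondition1}(a). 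So the difference vanishes and the first identity holds exactly. The second identity follows by the same argument with $\phi_-$ in place of $\phi_+$, since $|\phi_-|\le 1$ as well.

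\textbf{Where care is needed.} Both equalities are exact rather than up to $O_\Pb(n^{-1/2})$, because no estimated policy appears. The one point to verify is that $\pi^*$ and $\pi_f^*$ agree off the boundary events $\{\rho_L(X)=\eta^S\}\cup\{\rho_U(X)=\eta^S\}$. On that complement the ``otherwise'' branch of \eqref{eqn:combinedpi} can be rewritten as $(1-\varpi)\bone\{\rho_L>\eta^S\}+\varpi\bone\{\rho_U>\eta^S\}$. This rewriting requires $\rho_U\ge\rho_L$, which holds because $A_U\ge A_L$: indeed $U_{11}\ge L_{11}$. The analogous check is also needed for the oracle policy off the event $\{\rho(X)=\eta_o^S\}$. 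I would state these checks briefly, since Lemmas~\ref{lem:pi-diff} and~\ref{lem:pi-difforacle} already encapsulate them.
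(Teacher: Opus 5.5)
Your proof is correct and follows the paper's argument: the same pointwise bound $|\phi_+(x)-\phi_+(y)|\le(|\phi_+(x)|+|\phi_+(y)|)\bone(x\ne y)$, a union bound over the two disagreement events, and Lemmas~\ref{lem:pi-diff} and~\ref{lem:pi-difforacle} to conclude that the difference is exactly zero. Your extra remark is also correct: the off-boundary rewriting of $\pi^*$ as $\pi^*_f$ relies on $\rho_U\ge\rho_L$, which follows from $A_U\ge A_L$. The paper leaves this detail implicit in Lemma~\ref{lem:pi-diff}.
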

{\it Proof of Lemma~\ref{lem:replace_pio_by_pifo}. }
We only prove the first equality; the second equality follows similarly.

By the definition of $\phi_{+}(x)$, we have 
\begin{eqnarray*}
|\phi_{+}(x)-\phi_{+}(y)|
&=&  | \phi_{+}(x)-\phi_{+}(y)| \cdot \bone(x\neq y)
\ \le \  (|\phi_{+}(x)|+|\phi_{+}(y)|)\cdot \bone(x\neq y).
\label{eq:phi_diff_bound}
\end{eqnarray*}
Therefore, we have 
\begin{eqnarray*}
\nonumber &&\left |\mathbb{E}\big[G(X)\{\phi_+(\pi^*(X)-\pi^*_o(X))-\phi_+(\pi_f^*(X)-\pi^*_{o,f}(X))\big]\right|\\
&\le&
\mathbb{E}\big[|G(X)| \{|\phi_{+}(\pi^*(X)-\pi^*_o(X))| + |\phi_{+}(\pi_f^*(X)-\pi^*_{o,f}(X))|\} \cdot  \bone(\pi^*(X)-\pi^*_o(X)\neq \pi_f^*(X)-\pi^*_{o,f}(X)) \big]
\nonumber\\
&\le&
2\,\mathbb{E}\big[|G(X)|\,\{\bone(\pi^*(X)\neq \pi_f^*(X))+\bone(\pi^*_o(X)\neq \pi^*_{o,f}(X))\}\big]
\nonumber\\
\nonumber &\le&
2M\,\{\mathbb{P}\left(\pi^*(X)\neq \pi_f^*(X)\right)+\mathbb{P}\left(\pi_o^*(X)\neq \pi_{o,f}^*(X)\right)\}\\
&=& 0,
\label{eq:first_term_rate1}
\end{eqnarray*}
where the second inequality follows from $|\phi_{+}(\varpi(X)-\hat\pi^*(X))|\leq 1$, $|\phi_{+}(\varpi(X)-\hat\pi_f^*(X))|\leq 1$ and De Morgan's laws, and the last equality follows from Lemma~\ref{lem:pi-diff} and Lemma~\ref{lem:pi-difforacle}. \QEDB

\begin{lemma}\label{lem:thm5relaxcondition}
For any $x$ such that $p_1(x)-p_0(x)>0$, the following statements hold.
    \begin{enumerate}[(a)]
        \item If  $\pi^*_f(x)>\pi^*_{o,f}(x)$, then \begin{eqnarray}\label{eqn:thm5relaxcondition1}
            |\rho(x)-\eta_o^S|\leq |\rho(x)-\eta_o^S-\rho_U(x)+\eta^S|.
        \end{eqnarray}
        \item If  $\pi^*_f(x)<\pi^*_{o,f}(x)$, then \begin{eqnarray}\label{eqn:thm5relaxcondition2}
            |\rho(x)-\eta_o^S|\leq |\rho(x)-\eta_o^S-\rho_L(x)+\eta^S|.
        \end{eqnarray}
    \end{enumerate}
\end{lemma}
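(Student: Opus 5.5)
The plan is a pointwise case analysis mirroring the proof of Lemma~\ref{lem::relaxkey}. Fix $x$ with $p_1(x)-p_0(x)>0$ and suppress the argument. The basic observation is that $\pi^*_f=\varpi\bone(\rho_U>\eta^S)+(1-\varpi)\bone(\rho_L>\eta^S)$ takes values only in $\{0,\varpi,1\}$. Since $A_L\le A_U$ (because $L_{11}\le U_{11}$), we have $\rho_L\le\rho_U$, so $\bone(\rho_L>\eta^S)\le \bone(\rho_U>\eta^S)$. Meanwhile $\pi^*_{o,f}=\bone(\rho>\eta_o^S)\in\{0,1\}$. We also use $A=p_0(m_{11}-\mu_0)$ with $m_{11}\in[L_{11},U_{11}]$, which gives $\rho_L\le\rho\le\rho_U$ pointwise.

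For part (a), suppose $\pi^*_f>\pi^*_{o,f}$. Since $\pi^*_f\le 1$ and $\pi^*_{o,f}\in\{0,1\}$, this forces $\pi^*_{o,f}=0$, hence $\rho\le\eta_o^S$. It also forces $\pi^*_f>0$, which requires $\rho_U>\eta^S$: if $\rho_U\le\eta^S$, then also $\rho_L\le\eta^S$, so $\pi^*_f=0$. (The degenerate case $\varpi=0$, where $\pi^*_f$ could be positive only via $\rho_L>\eta^S$, still yields $\rho_U\ge\rho_L>\eta^S$.)

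Setting $a=\rho-\eta_o^S\le 0$ and $b=\rho_U-\eta^S>0$, the quantities $a$ and $b$ have opposite signs (or $a=0$). Then $|a-b|=|a|+|b|\ge|a|$, which is exactly \eqref{eqn:thm5relaxcondition1}. This is the same mechanism as in Lemma~\ref{lem:indicator_diff}.

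For part (b), suppose $\pi^*_f<\pi^*_{o,f}$. This forces $\pi^*_{o,f}=1$, so $\rho>\eta_o^S$, and $\pi^*_f<1$. Now $\pi^*_f<1$ implies $\rho_L\le\eta^S$. Otherwise $\rho_L>\eta^S$, hence $\rho_U>\eta^S$ as well, and $\pi^*_f=\varpi+(1-\varpi)=1$, a contradiction. (When $\varpi=1$ we have $\pi^*_f=\bone(\rho_U>\eta^S)$, and $\pi^*_f<1$ gives $\rho_U\le\eta^S$, hence $\rho_L\le\eta^S$ too.)

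With $a=\rho-\eta_o^S>0$ and $b=\rho_L-\eta^S\le 0$, the signs are again opposite, so $|a-b|\ge|a|$, giving \eqref{eqn:thm5relaxcondition2}.

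The only delicate step is establishing the ordering $\rho_L\le\rho_U$, which collapses $\pi^*_f$ to the three values $\{0,\varpi,1\}$ with monotone structure. It follows from $\max\{p_1\mu_1-p_1+p_0,0\}\le\min\{p_1\mu_1,p_0\}$: we have $p_1\mu_1-p_1+p_0\le p_1\mu_1$, $p_1\mu_1-p_1+p_0\le p_0$ (since $\mu_1\le 1$), and $0$ is at most both right-hand terms. Everything else is sign bookkeeping. Notably, neither argument needs the sandwich $\rho_L\le\rho\le\rho_U$ itself.
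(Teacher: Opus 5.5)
Your proof is correct and follows essentially the same sign-based case analysis as the paper: from $\pi^*_{o,f}\in\{0,1\}$ and $\pi^*_f\in\{0,\varpi,1\}$ you get $\rho-\eta_o^S\le 0<\rho_U-\eta^S$ in (a) and $\rho_L-\eta^S\le 0<\rho-\eta_o^S$ in (b), and the opposite signs give the inequality. Your explicit check that $\rho_L\le\rho_U$ (where $p_1\mu_1-p_1+p_0\le p_1\mu_1$ uses $p_0(x)<p_1(x)$) and your handling of the degenerate cases $\varpi(x)\in\{0,1\}$ fill in steps the paper leaves implicit when it asserts that $\pi^*_f$ takes values in $\{0,\varpi,1\}$.
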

{\it Proof of Lemma~\ref{lem:thm5relaxcondition}. }
We prove \eqref{eqn:thm5relaxcondition1} and \eqref{eqn:thm5relaxcondition2} for a fixed $x\in\mathcal X$, and thus omit $x$ in the argument of functions. Observe that $\pi_f^*$ takes values in the set $\{0,\varpi,1\}$ and $\pi_{o,f}^*$ takes values in the set $\{0,1\}$. 

We first prove Lemma~\ref{lem:thm5relaxcondition}(a). If $\pi^*_f>\pi^*_{o,f}$, then $\pi^*_f\geq \varpi$ and $\pi_{o,f}^*=0$. By the definitions of $\pi^*_f$ and $\pi^*_{o,f}$, we obtain $\rho-\eta_o^S\leq 0$ and $\rho_U-\eta^S>0$. As a result, \eqref{eqn:thm5relaxcondition1} holds. 

We then prove Lemma~\ref{lem:thm5relaxcondition}(b). If $\pi^*_f<\pi^*_{o,f}$, then $\pi^*_f\leq \varpi$ and $\pi_{o,f}^*=1$. By the definitions of $\pi^*_f$ and $\pi^*_{o,f}$, we obtain $\rho-\eta_o^S> 0$ and $\rho_L-\eta^S\leq 0$. As a result, \eqref{eqn:thm5relaxcondition2} holds. 

\QEDB 

\noindent {\bf Regret bound relative to the oracle policy }
\begin{eqnarray}
    \nonumber p_0R(\pi^*,\pi_o^*)&=&\mathbb{E}\left[\{\pi_o^*(X)-\pi^*(X)\}A(X)\right]\\
    \nonumber &=&\mathbb{E}\left[\{\pi_o^*(X)-\pi^*(X)\}A(X)\bone\{\pi_o^*(X)>\pi^*(X)\}\right]\\
    \nonumber &&+\mathbb{E}\left[\{\pi_o^*(X)-\pi^*(X)\}A(X)\bone\{\pi_o^*(X)<\pi^*(X)\}\right]\\
    \label{eqn:thm5_decomp}&=& R_1+R_2+R_3,
\end{eqnarray}
where
\begin{eqnarray*}
    R_1&=&\mathbb{E}\left[\phi_+(\pi_o^*(X)-\pi^*(X))\{A(X)-\eta_o^S(p_1(X)-p_0(X))\}\right],\\
    R_2&=&\mathbb{E}\left[\phi_-(\pi_o^*(X)-\pi^*(X))\{A(X)-\eta_o^S(p_1(X)-p_0(X))\}\right],\\
    R_3&=&\eta_o^S\mathbb{E}\left[\{\pi_o^*(X)-\pi^*(X)\}(p_1(X)-p_0(X))\}\right].
\end{eqnarray*}
For $R_1$, from Lemma~\ref{lem:replace_pio_by_pifo}, we have \begin{eqnarray*}
    |R_1|&=& |\mathbb{E}\left[\phi_+(\pi_{o,f}^*(X)-\pi_{f}^*(X))\{A(X)-\eta_o^S(p_1(X)-p_0(X))\}\right]|\\
    &\leq &\mathbb{E}\left[\bone\{\pi_{o,f}^*(X)>\pi_{f}^*(X)\}|p_1(X)-p_0(X)||\rho(X)-\eta_o^S|\right]\\
    &\leq &\mathbb{E}\left[\bone\{|\rho(X)-\eta_o^S|\leq|\rho(X)-\eta_o^S-\rho_U(X)+\eta^S|\}|\rho(X)-\eta_o^S|\right],
\end{eqnarray*}
where the last inequality follows from $p_1(X)-p_0(X)\leq 1$ and Lemma~\ref{lem:thm5relaxcondition}. Then, we have \begin{eqnarray}
    \nonumber |R_1|
    &\leq& \mathbb{E}\left[\bone \{|\rho(X)-\eta_o^S|\leq|\rho(X)-\eta_o^S-\rho_U(X)+\eta^S|\}|\rho(X)-\eta_o^S-\rho_U(X)+\eta^S|\right]\\
    \nonumber&\leq & \|\rho(X)-\eta_o^S-\rho_U(X)+\eta^S\|_{\infty}\mathbb{P}\{|\rho(X)-\eta_o^S|\leq\|\rho(X)-\eta_o^S-\rho_U(X)+\eta^S\|_{\infty}\}\\
    \nonumber&\leq & C\|\rho(X)-\eta_o^S-\rho_U(X)+\eta^S\|_{\infty}^{1+b}\\
    \nonumber&\leq &C(\|\rho(X)-\rho_U(X)\|_{\infty}+|\eta_o^S-\eta^S|)^{1+b}\\
    & = & C\left(\left\|\frac{p_0(X)\{m_{11}(X)-U_{11}(X)\}}{p_1(X)-p_0(X)}\right\|_{\infty}+|\eta_o^S-\eta^S|\right)^{1+b},\label{eqn:thm5r1}
\end{eqnarray}
where the last equality follows from the definitions of $\rho(X)$ and $\rho_U(X)$.

For $R_2$, from Lemma~\ref{lem:replace_pio_by_pifo}, we have \begin{eqnarray*}
    |R_2|&=& |\mathbb{E}\left[\phi_-(\pi_{o,f}^*(X)-\pi_{f}^*(X))\{A(X)-\eta_o^S(p_1(X)-p_0(X))\}\right]|\\
    &\leq &\mathbb{E}\left[\bone\{\pi_{o,f}^*(X)<\pi_{f}^*(X)\}|p_1(X)-p_0(X)||\rho(X)-\eta_o^S|\right]\\
    &\leq &\mathbb{E}\left[\{|\rho(X)-\eta_o^S|\leq|\rho(X)-\eta_o^S-\rho_L(X)+\eta^S|\}|\rho(X)-\eta_o^S|\right],
\end{eqnarray*}
where the last inequality follows from $p_1(X)-p_0(X)\leq 1$ and Lemma~\ref{lem:thm5relaxcondition}.
Then, we have \begin{eqnarray}
    \nonumber |R_2|
    &\leq& \mathbb{E}\left[\bone \{|\rho(X)-\eta_o^S|\leq|\rho(X)-\eta_o^S-\rho_L(X)+\eta^S|\}|\rho(X)-\eta_o^S-\rho_L(X)+\eta^S|\right]\\
    \nonumber&\leq & \|\rho(X)-\eta_o^S-\rho_L(X)+\eta^S\|_{\infty}\mathbb{P}\{|\rho(X)-\eta_o^S|\leq\|\rho(X)-\eta_o^S-\rho_L(X)+\eta^S\|_{\infty}\}\\
    \nonumber&\leq & C\|\rho(X)-\eta_o^S-\rho_L(X)+\eta^S\|_{\infty}^{1+b}\\
    \nonumber&\leq & C(\|\rho(X)-\rho_L(X)\|_{\infty}+|\eta_o^S-\eta^S|)^{1+b}\\
    & = & C\left(\left\|\frac{p_0(X)\{m_{11}(X)-L_{11}(X)\}}{p_1(X)-p_0(X)}\right\|_{\infty}+|\eta_o^S-\eta^S|\right)^{1+b},\label{eqn:thm5r2}
\end{eqnarray}
where the last equality follows from the definitions of $\rho(X)$ and $\rho_L(X)$.

For term $R_3$, we consider three cases depending on the value of $\eta_o^S$ and $\eta^S$: \begin{itemize}
    \item If $\eta_o^S=0$, then $R_3=0$ obviously.
    \item If $\eta_o^S<0$ and $\eta^S<0$, then \begin{eqnarray*}
        R_3&=&\eta_o^S\{\mathbb{E}\left[\pi_o^*(X)(p_1(X)-p_0(X))\}\right]-\mathbb{E}\left[\pi^*(X)(p_1(X)-p_0(X))\}\right]\}\ =\ 0,
    \end{eqnarray*}
    where the last equality follows from\eqref{eqn:proof-constraint} and \eqref{eqn:proof-constraint_knownm}.
    \item If $\eta_o^S<0$ and $\eta^S=0$, then \begin{eqnarray*}
        R_3&=&\eta_o^S\{\mathbb{E}\left[\pi_o^*(X)(p_1(X)-p_0(X))\}\right]-\mathbb{E}\left[\pi^*(X)(p_1(X)-p_0(X))\}\right]\}\\& =& -\eta_o^S\{Q(0)-\mathbb{E}\left[\varpi(X)\{p_1(X)-p_0(X)\}\right]\},
    \end{eqnarray*}
    where the last equality follows from\eqref{eqn:proof-constraint_excess} and \eqref{eqn:proof-constraint_knownm}.
\end{itemize}
Thus \begin{eqnarray}\label{eqn:thm5r3}
    R_3= -\bone\{\sgn(\eta^S)\ne\sgn(\eta_o^S)\}\eta_o^S\{Q(0)-\mathbb{E}\left[\varpi(X)\{p_1(X)-p_0(X)\}\right]\}.
\end{eqnarray}
Substituting \eqref{eqn:thm5r1}-\eqref{eqn:thm5r3} into \eqref{eqn:thm5_decomp} yields the result. \QEDB

\end{document}